\theoremstyle{plain}
\newtheorem{tm}{Theorem}[section]
\newtheorem{prop}[tm]{Proposition}
\newtheorem{lem}[tm]{Lemma}
\newtheorem{cor}[tm]{Corollary}
\newtheorem{conj}[tm]{Conjecture}
\theoremstyle{definition}
\newtheorem{defs}[tm]{Definitions}
\newtheorem{defi}[tm]{Definition}
\newtheorem{pro}[tm]{Property}
\newtheorem{cond}[tm]{Condition}
\theoremstyle{remark}
\newtheorem{rem}[tm]{Remark}
\newtheorem{nota}[tm]{Notation}
\newcommand{\R}{\mathbb{R}}
\newcommand{\Z}{\mathbb{Z}}
\newcommand{\N}{\mathbb{Z}_{+}}
\newcommand{\Nz}{\mathbb{Z}_{\geq0}}
\newcommand{\Hil}{\mathscr{H}}
\newcommand{\I}{\operatorname{I}}
\newcommand{\leb}{\operatorname{L}}
\newcommand{\sob}{\operatorname{H}}
\newcommand{\OO}{\mathcal{O}}
\newcommand{\id}{\mathcal{I}}
\newcommand{\D}{\mathcal{D}}
\newcommand{\Po}{\mathcal{P}}
\newcommand{\Qo}{\mathcal{Q}}
\newcommand{\T}{\mathcal{T}}
\newcommand{\Ro}{\mathcal{R}}
\newcommand{\Ri}{\mathscr{R}}
\newcommand{\So}{\mathcal{S}}
\newcommand{\Ko}{\mathcal{K}}
\newcommand{\Ki}{\mathscr{K}}
\newcommand{\Ci}{\mathscr{C}}
\newcommand{\Ha}{\mathcal{H}}
\newcommand{\F}{\operatorname{F}}
\newcommand{\tra}{\operatorname{tr}}
\newcommand{\de}{\operatorname{det}}
\newcommand{\ex}{\operatorname{exp}}
\newcommand{\lo}{\operatorname{log}}
\newcommand{\ipr}[2]{\left\langle#1,#2\right\rangle}
\newcommand{\Ai}{\operatorname{Ai}}
\newcommand{\ind}[1]{\mathds{1}_{#1}}
\newcommand{\compo}[2]{\left(#1\right)_{#2}}
\newcommand{\comps}[3]{\left(#1\right)_{#2,#3}}
\newcommand{\eref}[1]{~\eqref{#1}}
\newcommand{\sref}[1]{$\mathsection$\ref{#1}}
\newcommand{\af}[2]{\chi_{#1,#2}}
\newcommand{\Af}{\mathcal{X}}
\newcommand{\au}[2]{\mu_{#1,#2}}
\newcommand{\an}[2]{\nu_{#1,#2}}
\newcommand{\gao}{\bigg(\frac{\textbf{q}}{\psi}-1\bigg)}
\newcommand{\ga}[1]{\bigg(\frac{\textbf{q}}{\psi}-1\bigg)^{#1}}
\newcommand{\fa}[1]{\,\eta_{#1}}
\newcommand{\dl}[2][\I]{\big(\partial_#2\Pi_{#1}\big)}
\newcommand{\tj}{\tau_j}
\newcommand{\nth}{n^{\operatorname{th}}}
\newcommand{\ipro}[2]{\theta_{#1,#2}}
\newcommand{\qcv}{\textbf{q}}
\newcommand{\pcv}{\textbf{p}}
\newcommand{\pbn}[3][n]{\{#2,#3\}_{#1}}
\newcommand{\B}{\operatorname{B}}
\newcommand{\A}{\operatorname{A}}
\newcommand{\Ns}[2]{\operatorname{N}_{#1,#2}}
\newcommand{\ud}{\dot{u}_0}
\newcommand{\udd}{\ddot{u}_0}
\newcommand{\udf}{\frac{\dot{u}_0}{\gamma}}
\newcommand{\cdj}[1]{\frac{\A_{#1}}{\xi-\tau_{#1}}}
\newcommand{\cdx}{\left(\B_\xi+\sum_{i=1}^{2m}\frac{\A_{i}}{\xi-\tau_{i}}\right)}
\numberwithin{equation}{subsection}
\title{Fredholm Determinants from Schrödinger Type Equations,\\ and Deformation of Tracy-Widom Distribution}
\author{Taro Kimura and Xavier Navand%
\footnote{Institut de Mathématiques de Bourgogne, Université de Bourgogne, CNRS, UMR 5584, France}
}
\date{}
\begin{document}

\maketitle

\thispagestyle{empty}

\begin{abstract}
We undertake an analysis of Fredholm determinants arising from kernels whose defining functions satisfy a Schrödinger type equation. When this defining function is the Airy one, the evaluation of the corresponding Fredholm determinant yields the notorious Tracy-Widom distribution \cite{4}, which has found many applications in numerous domains. In this paper, we unveil a generalization of the Tracy-Widom distribution for a generic class of defining functions. Furthermore, we bring forth a direct application of our upshot and survey the relation between the framework which we employ and isomonodromic systems.
\end{abstract}

\tableofcontents

\setcounter{page}{0}

\section{Introduction and Summary}\label{s1}

In this section we begin with a brief introduction of the quantities which will have a central role in our discussion. We then state a few contexts in which these quantities arise and are relevant to compute. After what we give a summary of our main results and discuss possible directions for future investigations, followed by a presentation of the plan for the remaining sections which are devoted to the derivation of these results.

\subsection{Setup}
\label{s11}

Given a generic real-valued potential $v(x;\xi)$, our starting point is the Schrödinger type equation,
\begin{flalign}\label{1.1.1}
\begin{split}
\Big(\partial_x^2-v(x;\xi)\Big)\varphi_\xi(x)=\xi\varphi_\xi(x)\,.
\end{split}
\end{flalign}
Once we are provided with a solution $\varphi_\xi(x)$, which we refer to as the wave function, we can define the main quantities of interest for this text.

\begin{defs}\label{t1.1.1}
Denoting by $\leb^2(\operatorname{J})\coloneqq \leb^2(\operatorname{J};\R)$ the Hilbert space of real-valued square integrable functions on $\operatorname{J}$ and by $\D(\OO)\subseteq\leb^2(\R)$ the domain of any operator $\OO$, the integral operator
\begin{flalign}\label{1.1.2}
\begin{split}
\Ko_0:\D(\Ko_0)&\longrightarrow\leb^2(\R) \\
f&\longmapsto\int_\R\Ki(\cdot,\zeta)f(\zeta)\,d\zeta
\end{split}
\end{flalign}
is defined by the kernel
\begin{flalign}\label{1.1.3}
\begin{split}
\Ki(\xi,\zeta)\coloneqq \int_{\R_+}\varphi_\xi(x)\varphi_\zeta(x)\,dx\,.
\end{split}
\end{flalign}
And, given any interval $\I\subset\R$, we define the corresponding projector as
\begin{flalign}\label{1.1.4}
\begin{split}
\Pi_{\I}:\leb^2(\R)&\longrightarrow\leb^2(\R) \\
f&\longmapsto\ind{\I}(\cdot)f\,,
\end{split}
\end{flalign}
multiplication by the indicator function $\ind{\I}$. Then the Fredholm determinant is given by
\begin{flalign}\label{1.1.5}
\begin{split}
\F(\I)\coloneqq \de(\id-\Ko)_{\leb^2(\R)}\,,
\end{split}
\end{flalign}
where $\id$ is the identity operator and $\Ko\coloneqq \Ko_0\Pi_{\I}$, hence we have explicitly the series expansion,
\begin{flalign}
    \F(\I) = 1 + \sum_{k=1}^\infty \frac{(-1)^k}{k!} \int_{\I} d\xi_1 \cdots \int_{\I} d\xi_k \det_{1 \le i, j \le k} \Ki(\xi_i,\xi_j)\,.
\end{flalign}
\end{defs}

\begin{cond}\label{t1.1.2}
All along we assume that $\varphi_\xi(x)$ decreases sufficiently fast as $\xi\rightarrow\infty$, for any $x$, in order to ensure that $\Ko$ is bounded, so that the Fredholm determinant is well-defined. Furthermore, denoting by $\sob^n(\operatorname{J})$ the Sobolev spaces associated to $\leb^2(\operatorname{J})$, and by $\D(\Qo)$ the domain of multiplication by $\xi$, see Definition \ref{t2.2.7}, there are two precise $\xi$ regularity conditions that $\psi(\xi)\coloneqq \varphi_\xi(0)$ must satisfy, namely $\psi\in\sob^2(\R)$ and $\psi\in\D(\Qo)$.
\end{cond}

The incoming discussions are mostly devoted to the evaluation of these quantities under the following conditions.

\begin{cond}\label{t1.1.3}
We demand that the wave function is a real-valued square integrable function on $\R_+$ for any $\xi\in\R$, by which we mean
\begin{flalign}\label{1.1.7}
\begin{split}
\varphi_\xi\in\leb^2(\R_+)\,,\hspace{0.4cm}\forall \xi\in\R\,.
\end{split}
\end{flalign}
One may observe that this ensures $\xi\varphi_\xi\in\leb^2(\R_+)\,,\,\,\forall \xi\in\R$.
\end{cond}

\begin{cond}\label{t1.1.4}
We assume that there exist some functions $u$ and $\phi$ such that
\begin{flalign}\label{1.1.8}
\begin{split}
\varphi_\xi(x)=\phi\big(u(x)+\gamma \xi\big)\,,
\end{split}
\end{flalign}
with $\gamma\in\R\char`\\\{0\}$. The only restriction on $u$ and $\phi$ is that they must be twice differentiable, i.e. $u,\phi\in\operatorname{C}^2(\R)$, but these functions must still ensure that $\varphi_\xi$ obeys Condition \ref{t1.1.3}.
\end{cond}

\begin{cond}\label{t1.1.5}
We require that the function $u$ has a non-vanishing first derivative at $x=0$,
\begin{flalign}\label{1.1.6}
\begin{split}
\dot{u}_0\coloneqq \frac{d}{dx}\Big|_{x=0}u(x)\neq0\,.
\end{split}
\end{flalign}
\end{cond}

\begin{cond}\label{t1.1.6}
The potential evaluated at $x=0$ should not depend on $\xi$, namely $\partial_\xi v(0;\xi)=0$.
\end{cond}

\begin{rem}\label{t1.1.7}
In appendix \sref{sA}, we provide three generic cases in which Condition \ref{t1.1.4} is satisfied.
\end{rem}

\begin{rem}\label{t1.1.8}
In appendix \sref{sB}, we lay out a precise application of our endeavors, namely to the so-called finite temperature kernels introduced in the context of the Kardar-Parisi-Zhang (KPZ) equation \cite{28,29}. See also~\cite{Calabrese:2010EPL,6}.
\end{rem}

\subsection{Motivations}\label{s12}

We first describe how some Fredholm determinants have been expressed as functionals depending on an auxiliary function which is fixed by a boundary value problem, stating several established results and a few applications of theirs. After this, we present the generalization that we undertake in the present paper and also mention a different approach which enables one to assess the asymptotics of a broad class of Fredholm determinants. Lastly, we emphasize on an example of a specific domain in which generic methods for the computation of Fredholm determinants could find applications.

\paragraph{Functional Formulae of Fredholm Determinants.} Fredholm determinants (defined as in Definitions \ref{t1.1.1}) provide insights in various contexts ranging from integrable systems through their relation with isomonodromic $\tau$ functions \cite{1,3}, to hierarchies of differential equations (such as Painlevé transcendents \cite{4,5} or the KPZ equation \cite{28,29,Calabrese:2010EPL,6}), to combinatorics \cite{30,7} and many more. More generally, Fredholm determinants are relevant whenever determinantal point processes \cite{9} occur and have thereby gathered more and more attention over the last decades. See \cite{10} for a more in-depth list of domains in which Fredholm determinants related to determinantal processes arise.

In particular, Fredholm determinant methods unveiled important results in eigenvalues statistics of Random Matrix Theories (RMTs) \cite{15,11,13,14,24,12}, as a notorious instance of such a result, one may mention the Tracy-Widom distribution \cite{4} which has found various statistical applications.

Hence, having convened all of these interests, many evaluations of Fredholm determinants were carried out under the condition that there exists a function $\phi$ such that $\varphi_\xi(x)=\phi(\xi+x)$ satisfying several regularity conditions \cite{4,16,17,18,21,19,20,2}. Typically, for a generic differential equation satisfied by $\varphi_\xi(x)$, the associated Fredholm determinant can be generically expressed in terms of some solution to a given (integro-)differential equation, whose asymptotics are fixed by $\varphi_\xi(0)$. Let us illustrate this recalling the classical results of \cite{4}. For the Tracy-Widom distribution, we have $\varphi_\xi(x)=\Ai(\xi+x)$ with $\Ai(\cdot)$ denoting the Airy function, a solution to\eref{1.1.1} for $v(x;\xi)=x$, which leads to the Airy kernel,
\begin{align}
    \Ki_\text{Ai}(\xi,\zeta) = \int_{\mathbb{R}_+} \operatorname{Ai}(\xi +x) \operatorname{Ai}(\zeta +x) \, dx \, .
\end{align}
The associated Fredholm determinant is then expressed in terms of a solution to the Painlevé II equation asymptotic to the Airy function, namely the Hastings-McLeod solution \cite{31}. Besides, it has been clarified that a similar formalism is applicable also to the higher-order analog of the Tracy-Widom distribution~\cite{19,2}.

In this paper, considering\eref{1.1.1} as our generic differential equation for $\varphi_\xi(x)$, we relax the condition $\varphi_\xi(x)=\phi(\xi+x)$ and instead impose the less restrictive Condition \ref{t1.1.4}. Under this Condition as well as the regularity Conditions \ref{t1.1.2}, \ref{t1.1.3}, \ref{t1.1.5}, \ref{t1.1.6} and \ref{t3.1.6}, we still manage to obtain the functional formula of the Fredholm determinant in terms of an auxiliary function, whose asymptotics are governed by $\varphi_\xi(0)$, satisfying a second order integro-differential equation which we also explicitly derive. Moreover, the intermediate steps to do so involve not one but an infinite set of auxiliary functions, whose dynamics are expressible in terms of infinite size matrices which are formally related to an isomonodromic system through Schlesinger equations. And we motivate this formal relation by proving that some components of these matrices indeed satisfy a system of Schlesinger equations.

\paragraph{Riemann-Hilbert Problem (RHP) Characterization of Fredholm Determinants.} Besides, the asymptotic behaviour of numerous Fredholm determinants have been characterized with RHPs. This alternative method was first developed in the early 1990s too by \cite{34}, and this approach has been frequently appearing in recent works, including the four following examples in which some of the kernels considered also belong to the scope of the present paper. A specific finite temperature (see below) extension of the higher order Airy kernels is explored in \cite{36}, whilst \cite{37} focuses on a large class of finite temperature deformations of the Airy kernel. Besides, integrable kernels of the form \eref{1.1.3} (and their asymmetric version, i.e. replacing one the $\varphi$ with another function) with additive, $\varphi_\xi(x)=\phi(x+\xi)$, or multiplicative, $\varphi_\xi(x)=\phi(x\xi)$, compositions are considered in \cite{35}. And \cite{38} investigates a finite temperature version of the Bessel kernel.

The RHP process leads to a functional formula for the Fredholm determinant which is characterized by an operator-valued RHP and $\varphi_\xi$ does \emph{not} need to satisfy a differential equation, like\eref{1.1.1} for us. The payoff for this procedure is that the closed scalar boundary value problem completely determining the function for which the functional evaluates to the Fredholm determinant seems to be out of reach. Indeed, in order to establish this integro-differential equation, we will extensively use the closure relation of Proposition \ref{t2.2.36} which is essentially a consequence of\eref{1.1.1}.

Additionally, allow us to point out that a RHP characterization was carried out in \cite{40} for a kernel appearing as a universal limit in RMT which coincides with a kernel arising in the construction developed by \cite{39} for conditional thinned ensembles of point processes. And it turns out that an equation of the type\eref{1.1.1} is directly available for this kernel \cite[eq.(2.11)]{40}.

\paragraph{Eigenvalue Statistics and Two-Dimensional Quantum Gravity.} Our original motivation towards Fredholm determinants arising from a generic differential equation of the form\eref{1.1.1}, which does not necessarily admit solutions that may be written as $\varphi_\xi(x)=\phi(\xi+x)$, comes from the dual description of two-dimensional quantum gravity as a double scaled matrix integral \cite{15,23,22,25}. In this framework, the double scaling limit of random matrices maps the difference equation of the orthogonal polynomials to\eref{1.1.1}, entailing that the wave function $\varphi_\xi$ is provided with the interpretation of the \textit{double scaled orthogonal polynomial}, and along these lines, the continuous variable $x \in \mathbb{R}_+$ is to be regarded as the scaling of the discrete variable labeling the polynomials: The Christoffel-Darboux kernel of degree $N$,
\begin{align}
    \Ki_N(\xi,\zeta) = \sum_{n=0}^{N-1} f_n(\xi) f_n(\zeta) \, ,
\end{align}
is asymptotic to the kernel \eqref{1.1.3} in the double scaling limit, where $\{f_n\}_{n \in \mathbb{Z}_{\ge0}}$ is a family of orthonormal functions, $\left< f_n, f_m \right> = \delta_{n,m}$ (see Lemma~\ref{t2.1.1} for the definition of the inner product), constructed from the orthogonal polynomials arising in RMT.

According to these considerations, numerical computations of the related Fredholm determinant have been proven to yield insightful results on the non-perturbative behaviour of various models \cite{26,27}. Namely, it enables one to notice signs of an underlying discrete spectrum, which is highly relevant in the quantum gravity side of the duality. See, e.g., \cite{22} for a recent review. Whence the present paper can be deemed as a step towards an analytic derivation of these results, also bringing forth generic results in the hope that they can be fruitful in other domains such as the ones mentioned in the previous paragraphs.

\subsection{Results}\label{s13}

\paragraph{Functional formula of the Fredholm Determinant.} Our main result is the following one, all the quantities involved are defined in Definitions \ref{t1.1.1} and Notation \ref{t2.1.6}, apart from $u$ and $\gamma$ which are defined through Condition \ref{t1.1.4}.

\begin{tm}[Theorem \ref{t4.0.1}]\label{t1.3.1}
If Condition \ref{t1.1.4} as well as Conditions \ref{t1.1.2}, \ref{t1.1.3}, 
\ref{t1.1.5}, \ref{t1.1.6} and \ref{t3.1.6} are satisfied, then the Fredholm determinant for the interval $\I=[\tau,\infty)$ is explicitly given only in terms of $\psi(\xi)\coloneqq\varphi_\xi(0)$ and $\qcv(\xi)\coloneqq\big(\id-\Ko\big)^{-1}\psi(\xi)$ as follows,
\begin{flalign}\label{1.3.1}
\begin{split}
\F\Big([\tau,\infty)\Big)=\ex\left(\rule{0cm}{0.6444cm}\int_\tau^\infty\Bigg[\qcv_\sigma\Bigg(\udf\qcv''_\sigma-\qcv^3_\sigma+\frac{\udd}{\ud}\qcv_\sigma\bigg(\frac{\qcv'_\sigma}{\psi_\sigma}-\frac{\psi'_\sigma\qcv_\sigma}{\psi^2_\sigma}\bigg)\Bigg)-\udf\qcv'^2_\sigma\Bigg]d\sigma\right)\,,
\end{split}
\end{flalign}
where $f_\sigma\coloneqq f(\sigma)$, $f'_\sigma\coloneqq\frac{df}{d\sigma}(\sigma)$ and $f''_\sigma\coloneqq\frac{d^2f}{d\sigma^2}(\sigma)$ with $f$ representing either $\qcv$ or $\psi$. Moreover the auxiliary function $\qcv$ satisfies the following second order integro-differential equation,
\begin{flalign}\label{1.3.2}
\begin{split}
\qcv''_\tau=\frac{\gamma^2}{\ud^2}\big(v_0+\tau\big)\qcv_\tau+\frac{2\gamma}{\ud}\Bigg(\qcv^3_\tau-\frac{\gamma\udd}{\ud^2}\qcv_\tau\int_\tau^\infty\qcv^2_\sigma\,d\sigma\Bigg)-\frac{2\gamma^2\udd^2}{\ud^4}\Bigg(\frac{\qcv^3_\tau}{\psi^2_\tau}-\frac{\qcv^2_\tau}{\psi_\tau}\Bigg)+\frac{\gamma\udd}{\ud^2}\Bigg(\qcv'_\tau+2\frac{\qcv^2_\tau}{\psi^2_\tau}\psi'_\tau-4\frac{\qcv_\tau}{\psi_\tau}\qcv'_\tau\Bigg)\,.
\end{split}
\end{flalign}
Whenceforth, provided with $\psi$, $u$ and $\gamma$, the Fredholm determinant $\F\Big([\tau,\infty)\Big)$ is completely determined by the solution of\eref{1.3.2} whose behaviour as $\tau\rightarrow\infty$ coincides with $\psi$, more precisely $\qcv_\tau\xrightarrow{\tau\rightarrow\infty}\psi_\tau$.

Alternatively, the Fredholm determinant also admits the following explicit representation,
\begin{flalign}\label{1.3.3}
\begin{split}
\F\Big([\tau,\infty)\Big)=\ex\left(\rule{0cm}{0.65cm}-\frac{\gamma}{\ud}\int_\tau^\infty\big(\sigma-\tau\big)\left(\rule{0cm}{0.6444cm}\qcv^2_\sigma+\frac{\udd}{\gamma}\bigg(\frac{2\,\qcv_\sigma}{\psi_\sigma}-1\bigg)\Bigg[\qcv'^2_\sigma-\qcv_\sigma\Bigg(\qcv''_\sigma-\frac{\gamma}{\ud}\qcv^3_\sigma+\frac{\gamma\udd}{\ud^2}\qcv_\sigma\bigg(\frac{\qcv'_\sigma}{\psi_\sigma}-\frac{\psi'_\sigma\qcv_\sigma}{\psi^2_\sigma}\bigg)\Bigg)\Bigg]\right)d\sigma\right)\,.
\end{split}
\end{flalign}
\end{tm}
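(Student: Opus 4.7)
The plan is to establish the statement in three stages: first reduce the problem to computing the resolvent diagonal, then reformulate this quantity as an expression involving only $\qcv$ and $\psi$ via the closure relation of Proposition \ref{t2.2.36}, and finally derive the integro-differential equation \eqref{1.3.2} for $\qcv$, after which \eqref{1.3.1} and \eqref{1.3.3} follow by integration and one integration by parts.

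My starting point is the classical identity $\partial_\tau \log \F([\tau,\infty)) = -R(\tau,\tau)$, where $R$ is the kernel of the resolvent associated with $\Ko = \Ko_0 \Pi_\I$. Condition \ref{t1.1.4} provides the key differential link $\partial_x \varphi_\xi = (\dot u(x)/\gamma)\partial_\xi \varphi_\xi$; at $x=0$ this yields $\partial_x \varphi_\xi(0) = (\ud/\gamma)\psi'(\xi)$ and $\partial_x^2 \varphi_\xi(0) = (\udd/\gamma)\psi'(\xi) + (\ud^2/\gamma^2)\psi''(\xi)$, so that the Schrödinger equation \eqref{1.1.1} evaluated at $x=0$ turns into the second-order $\xi$-ODE
\begin{equation}
\psi''(\xi) = \frac{\gamma^2}{\ud^2}\bigl(v_0+\xi\bigr)\psi(\xi) - \frac{\gamma\udd}{\ud^2}\psi'(\xi)\,,
\end{equation}
where $v_0 = v(0;\xi)$ is $\xi$-independent by Condition \ref{t1.1.6}. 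The closure relation of Proposition \ref{t2.2.36}, itself a Christoffel-Darboux-type consequence of the Schrödinger structure of $\varphi_\xi$, then expresses $\Ki(\xi,\zeta)$ as a boundary combination of $\varphi_\xi$, $\varphi_\zeta$ and their first $x$-derivatives at $x=0$. Substituting into the resolvent recasts $R(\tau,\tau)$ as a polynomial in $\qcv_\tau$, $\qcv'_\tau$, $\psi_\tau$, $\psi'_\tau$ together with the nonlocal integral $\int_\tau^\infty \qcv^2_\sigma\,d\sigma$, matching exactly the structure of the integrand in \eqref{1.3.1}.

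To derive \eqref{1.3.2}, I would differentiate $(\id - \Ko)\qcv = \psi$ twice in $\tau$. Since $\Ko$ depends on $\tau$ only through $\Pi_\I = \Pi_{[\tau,\infty)}$, the commutator $[\partial_\tau,(\id-\Ko)^{-1}]$ is a rank-one operator built from $(\id-\Ko)^{-1}$ evaluated at $\tau$; this produces the boundary contributions $\qcv_\tau$ and $\qcv^2_\tau/\psi_\tau$ visible on the right-hand side of \eqref{1.3.2}. A second application of $\partial_\tau$ brings in $\qcv''_\tau$ on the left and, upon substitution of the $\xi$-ODE for $\psi$ displayed above, the potential term $(\gamma^2/\ud^2)(v_0+\tau)\qcv_\tau$ on the right. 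The cubic nonlinearity $\qcv^3_\tau$ and the nonlocal coupling $\qcv_\tau \int_\tau^\infty \qcv^2_\sigma\, d\sigma$ arise from combining these commutators with Proposition \ref{t2.2.36}, which in the language of the infinite tower of auxiliary functions $(\id-\Ko)^{-1}\partial_\xi^n\psi$ alluded to in \sref{s12} furnishes the closing identities that collapse the tower back onto $\qcv$ and its first derivative.

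Granting \eqref{1.3.2}, the formula \eqref{1.3.1} is obtained by integrating $-R(\tau,\tau)$ from $\tau$ to $\infty$ subject to the boundary condition $\qcv_\sigma \to \psi_\sigma$ as $\sigma \to \infty$, valid because $\Ko \to 0$ in that limit (Condition \ref{t1.1.2}). The alternative representation \eqref{1.3.3} then follows from \eqref{1.3.1} by one integration by parts in $\sigma$ using $(\sigma-\tau)$ as the antiderivative of $1$ vanishing at the endpoint, together with the substitution of \eqref{1.3.2} to trade $\qcv''_\sigma$ for algebraic expressions in $\qcv_\sigma$, $\psi_\sigma$ and their first derivatives. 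The main obstacle is the derivation of \eqref{1.3.2}: the bookkeeping between the $\partial_\tau$-commutators with $(\id - \Ko)^{-1}$, the $\xi$-derivatives introduced by Condition \ref{t1.1.4}, and the contribution of the Schrödinger potential via Condition \ref{t1.1.6} must be carried out consistently on the full auxiliary tower before elimination via Proposition \ref{t2.2.36} yields a closed equation for $\qcv$ alone, with the correct coefficients $\gamma/\ud$ and $\udd/\ud^2$.
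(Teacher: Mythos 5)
Your overall strategy is the paper's: relate $\frac{d}{d\tau}\lo\F$ to the resolvent diagonal (Proposition \ref{t3.1.2}), express that diagonal through $\qcv$ and $\psi$, derive a closed equation for $\qcv$, integrate with the boundary condition $\qcv_\sigma\rightarrow\psi_\sigma$, and obtain\eref{1.3.3} by an integration by parts against $(\sigma-\tau)$. But there is a genuine gap at the heart of the argument: the mechanism that closes the hierarchy when $\udd\neq0$. Differentiating $\qcv(\tau)$ along the diagonal forces you through the commutator $[\Po,\Ko]$, and because $[\Po,\Ko]$ contains the term $\frac{\udd}{\ud}\Ko$ (Lemma \ref{t2.2.24}), each derivative raises the \emph{resolvent power}: you generate $\af{1}{0},\af{1}{1},\af{2}{0}$ and the scalars $\au{n}{a}$, not merely higher $\xi$-derivatives of $\psi$. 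Proposition \ref{t2.2.36}, which you invoke as "the closing identity," only closes the derivative-order direction (it rewrites $\af{n}{2}$ in terms of $\af{k}{0},\af{k}{1}$); it does nothing to reduce the index $n$, so your tower "$(\id-\Ko)^{-1}\partial_\xi^n\psi$" is not the one that actually stays open. The indispensable ingredient is the second closure relation $\af{n}{a}(\tau)=\big(\frac{\qcv(\tau)}{\psi(\tau)}-1\big)^{n}\af{0}{a}(\tau)$ (Lemma \ref{t3.1.9}, which is why Condition \ref{t3.1.6} appears among the hypotheses), together with the scalar identities $\partial_\tau\au{0}{0}=-\qcv^2(\tau)$, $2\au{0}{1}=\frac{\gamma}{\ud}\big(\au{0}{0}^2+\frac{\udd}{\ud}\au{1}{0}\big)-\qcv^2(\tau)$ and $\au{0}{0}+\au{1}{0}=\int_\tau^\infty\qcv^2_\sigma\,d\sigma$. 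These are exactly what produce the terms $\frac{\qcv^3_\tau}{\psi^2_\tau}-\frac{\qcv^2_\tau}{\psi_\tau}$, $\frac{\qcv^2_\tau}{\psi^2_\tau}\psi'_\tau$, $\frac{\qcv_\tau}{\psi_\tau}\qcv'_\tau$, the cubic term and the nonlocal integral in\eref{1.3.2}, and (via $\Ha_2=2\big(\frac{\qcv}{\psi}-1\big)\Ha_1$) the factor $\frac{2\qcv_\sigma}{\psi_\sigma}-1$ in\eref{1.3.3}. Without an identity of this type your plan yields an infinite hierarchy, not a closed equation, and the $\psi$-dependent structure of\eref{1.3.2} cannot emerge.

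Two smaller points. First, your starting identity has the wrong sign: for $\I=[\tau,\infty)$ and $\Ro=\Ko(\id-\Ko)^{-1}$ one has $\frac{d}{d\tau}\lo\F\big([\tau,\infty)\big)=+\Ri_1(\tau,\tau)$, which is what makes the exponent in\eref{1.3.1} equal to $-\frac{\ud}{\gamma}\int_\tau^\infty\Ha_\sigma\,d\sigma$ with $\Ha_\sigma\geq0$-type structure; with your sign the determinant would exceed $1$. Second, you attribute the boundary (Christoffel--Darboux) evaluation of $\Ki$ at $x=0$ to Proposition \ref{t2.2.36}; that is Lemma \ref{t2.1.5} and Proposition \ref{t2.1.8}. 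Your route to\eref{1.3.3} (integrate\eref{1.3.1} by parts and eliminate $\qcv''_\sigma$, and then $\qcv'''_\sigma$, via\eref{1.3.2}) is plausible in principle but again requires the closed form of $\frac{d\Ha_\sigma}{d\sigma}$, which the paper obtains from $\Ha_1+\Ha_2$ and hence, once more, from the $n$-direction closure you have not supplied.
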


Section \sref{s4} is dedicated to the proof of this Theorem, which will require several results from sections \sref{s2} and \sref{s3}.

Notice that, setting $v(x;\xi)=x$ in\eref{1.1.1} and considering the solution $\varphi_\xi(x)=\Ai(x+\xi)$ we mentioned in \sref{s12}, we have $\ud=1=\gamma$ and $\udd=0$, so that\eref{1.3.3} reduces to
\begin{flalign}\label{1.3.4}
\begin{split}
\F\Big([\tau,\infty)\Big)=\ex\left(\rule{0cm}{0.6444cm}-\int_\tau^\infty\big(\sigma-\tau\big)\,\qcv^2_\sigma\,d\sigma\right)\,,
\end{split}
\end{flalign}
whilst\eref{1.3.2} becomes
\begin{flalign}\label{1.3.5}
\begin{split}
\qcv''_\tau=\tau\qcv_\tau+2\qcv^3_\tau\,,
\end{split}
\end{flalign}
which indeed agrees with \cite{4} since $v_0=0$ for $v(x;\xi)=x$. Hence our main result nicely reduces to the notorious Tracy-Widom distribution as one would expect for $\varphi_\xi=\Ai(\cdot+\xi)$.

Furthermore, observe that, for any $l\in\N$, if $u(x)=u_0+\ud x+\sum_{k=3}^lu_{0,k}x^k$, then
\begin{flalign}\label{1.3.6}
\begin{split}
&\F\Big([\tau,\infty)\Big)=\ex\left(\rule{0cm}{0.6444cm}-\frac{\gamma}{\ud}\int_\tau^\infty\big(\sigma-\tau\big)\,\qcv^2_\sigma\,d\sigma\right)\,, \\
&\qcv''_\tau=\frac{\gamma^2}{\ud^2}\big(v_0+\tau\big)\qcv_\tau+\frac{2\gamma}{\ud}\qcv^3_\tau\,.
\end{split}
\end{flalign}
Actually, this considerable simplification holds for any $u$ such that $\udd=0$, not necessarily polynomial. Therefore, for any $\gamma$, as long as $\udd=0$, it turns out that our expression for $\F\big([\tau,\infty)\big)$ coincides with the Tracy-Widom one up to a constant factor, and this is even if $u(x)\neq x$. Nonetheless, let us emphasize the fact that, if $(\gamma,\ud,\udd)=(1,1,0)$ but $u(x)\neq x$, then\eref{1.3.4} is indeed satisfied, but the actual value of the Fredholm determinant is not the same as the one for $u(x)=x$ because $\qcv$ then satisfies a differential equation involving $v_0$ and its asymptotic behaviour agrees with $\psi(\xi)\neq\Ai(\xi)$.

On the other hand, when we turn $\udd$ on, our generalization is more involved. But even so, our central result remains a continuous deformation of the Tracy-Widom distribution. Continuous in the sense that\eref{1.3.3} and\eref{1.3.2} respectively tend to\eref{1.3.4} and\eref{1.3.5} as $\big(\gamma,u,v\big)\rightarrow\big(1,x\mapsto x,(x;\xi)\mapsto x\big)$. Note that to comment in a more precise sense the "continuity" of this deformation, one would have to take an interest in the eventual bifurcations of\eref{1.1.1}.

As a last comment on our principal upshot, even if we consider $\big(\gamma,u\big)=\big(1,x\mapsto x\big)$, i.e. additive composition $\varphi_\xi=\phi(\cdot+\xi)$, but $v(x;\xi)\neq x$, then the functional formula is the Tracy-Widom one\eref{1.3.4} but the value of the Fredholm determinant is given by the evaluation of this functional for a different function than the Hastings-McLeod solution to Painlevé II because\eref{1.3.2} then involves $v_0$ and, since $\psi(\xi)\neq\Ai(\xi)$, $\qcv$ is not asymptotic to the Airy function. Also, it is worth emphasizing that our kernel\eref{1.1.3} with additive composition $\big(\gamma,u\big)=\big(1,x\mapsto x\big)$ is considered in \cite{35} with a RHP approach, and the resulting functional formula agrees with\eref{1.3.4} too, see \cite[eq.(17)]{35}.

\paragraph{Relation to Isomonodromic systems.} The second interest on which we focus our endeavors is the relation of the auxiliary functions, involved in the proof of Theorem \ref{t4.0.1}, to an isomonodromic system. To be more precise, in order to obtain a proof for Theorem \ref{t4.0.1}, the \textit{auxiliary wave functions} $\chi_{n,a}$ defined in Definition \ref{t2.2.2} shall be introduced, $n\in\Z_{\geq0}$ and $a\in\{0,1,2\}$. Then we discuss their relation to an eventual isomonodromic system employing the vector notation of Definition \ref{t3.2.1}, i.e. $\Af\coloneqq\begin{pmatrix}\af{0}{0}&\af{0}{1}&\af{1}{0}&\af{1}{1}&\cdots\end{pmatrix}^t$.

Our first step in this direction is to show that the dynamics of the auxiliary wave functions admit a Lax formulation. Namely, Propositions \ref{t3.2.3} and \ref{t3.2.7} explicitly determine two infinite size matrices $\A_j$ and $\B_\xi$ satisfying\eref{3.2.2}, i.e.
\begin{flalign}\label{1.3.7}
\begin{split}
\partial_j\Af(\xi)=-\frac{\A_j}{\xi-\tj}\Af(\xi)\,,\,\,\,\,\,\,\,\,\,\,\partial_\xi\Af(\xi)=\left(\B_\xi+\sum_{j=1}^{2m}\frac{\A_j}{\xi-\tj}\right)\Af(\xi)\,,
\end{split}
\end{flalign}
where the index $j$ is related to the interval $\I$ through\eref{2.2.6}, $\I=\bigsqcup_{j=1}^m[\tau_{2j-1};\tau_{2j}]\,,\,\,\tau_{2j - 1} < \tau_{2j}\,,\,\forall j$ and $\partial_j\coloneqq\partial_{\tj}$. This Fuchsian system can also be written in terms of the corresponding covariant derivatives,
\begin{align}\label{zce}
    D_j \Af(\xi) = 0 \, , \quad 
    D_\xi \Af(\xi) = 0
    \qquad \text{where} \qquad
    D_j = \partial_j + \frac{\A_j}{\xi - \tau_j}
    \, , \quad
    D_\xi = \partial_\xi - \left( \B_\xi + \sum_{j=1}^{2m} \frac{\A_j}{\xi - \tau_j} \right)
    \, .
\end{align}
Then, the compatibility condition of the auxiliary wave functions (also known to be the zero curvature equation), $[D_j,D_\xi] = 0$, implies the following differential system.
\begin{prop}[Proposition \ref{t3.2.16}, Schlesinger Equations]\label{t1.3.3}
Let $i,j\in\{1,...,2m\}\subset\N$ such that $i\neq j$. The infinite size matrices $\A_j$ and $\B_j\coloneqq\B_{\tj}$, that solve\eref{3.2.2} thereby describing the dynamics of the auxiliary wave functions, formally satisfy the following Schlesinger equations,
\begin{flalign}\label{1.3.9}
\begin{split}
\partial_j\A_i=\frac{[\A_i,\A_j]}{\tau_i-\tj}\,,\,\,\,\,\,\,\,\,\,\,\partial_j\A_j=\sum_{\substack{i=1\\i\neq j}}^{2m}\frac{[\A_i,\A_j]}{\tj-\tau_i}-[\A_j,\B_j]\,.
\end{split}
\end{flalign}
\end{prop}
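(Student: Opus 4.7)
The plan is to derive \eqref{1.3.9} directly from the zero-curvature condition $[D_j,D_\xi]=0$ already encoded in \eqref{zce}. Since $\Af$ satisfies both $D_j\Af=0$ and $D_\xi\Af=0$ by \eqref{1.3.7}, the commutator $[D_j,D_\xi]$ annihilates $\Af$; the Schlesinger system then emerges from the rational-in-$\xi$ decomposition of this commutator by matching principal parts at each puncture $\tau_i$. The one subtle point is the passage from the vector relation $[D_j,D_\xi]\Af=0$ to the operator identity $[D_j,D_\xi]=0$, which is precisely what is intended by the adverb ``formally'' in the statement, and which I return to at the end.

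Using $[\partial_j,\partial_\xi]=0$, $\partial_j(\xi-\tau_i)^{-1}=\delta_{ij}(\xi-\tau_j)^{-2}$, $[(\xi-\tau_j)^{-1},\partial_\xi]=(\xi-\tau_j)^{-2}$ and $\partial_\xi\A_j=0$, a direct expansion of $[D_j,D_\xi] = [\partial_j + \A_j/(\xi-\tau_j),\,\partial_\xi - \B_\xi - \sum_i \A_i/(\xi-\tau_i)]$ yields
\begin{align*}
[D_j,D_\xi] \;=\; -\partial_j\B_\xi \;-\; \sum_{i=1}^{2m}\frac{\partial_j\A_i}{\xi-\tau_i} \;-\; \frac{[\A_j,\B_\xi]}{\xi-\tau_j} \;-\; \sum_{\substack{i=1\\i\neq j}}^{2m}\frac{[\A_j,\A_i]}{(\xi-\tau_j)(\xi-\tau_i)},
\end{align*}
where the two double poles $\pm\A_j/(\xi-\tau_j)^2$ cancel and the $i=j$ term drops by $[\A_j,\A_j]=0$. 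Partial-fractioning $\frac{1}{(\xi-\tau_j)(\xi-\tau_i)}=\frac{1}{\tau_j-\tau_i}\big(\frac{1}{\xi-\tau_j}-\frac{1}{\xi-\tau_i}\big)$ and Taylor-expanding $\B_\xi=\B_j+O(\xi-\tau_j)$, the vanishing of each simple-pole residue produces: at $\xi=\tau_i$ with $i\neq j$, the condition $-\partial_j\A_i + \frac{[\A_j,\A_i]}{\tau_j-\tau_i}=0$, which is the first line of \eqref{1.3.9} after applying $[\A_j,\A_i]/(\tau_j-\tau_i)=[\A_i,\A_j]/(\tau_i-\tau_j)$; and at $\xi=\tau_j$, the condition $-\partial_j\A_j - [\A_j,\B_j] - \sum_{i\neq j}\frac{[\A_j,\A_i]}{\tau_j-\tau_i}=0$, which rearranges into the second line of \eqref{1.3.9}. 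The leftover regular piece $-\partial_j\B_\xi$ produces supplementary compatibility relations at the irregular singularity that lie outside the scope of the present claim.

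The expected main obstacle is precisely the step promoting the vector equation $[D_j,D_\xi]\Af=0$ to the matrix equation $[D_j,D_\xi]=0$. In the classical finite-rank Schlesinger setup this is a one-line consequence of the invertibility of a fundamental solution matrix; here one has only a single infinite column $\Af=(\af{0}{0},\af{0}{1},\af{1}{0},\af{1}{1},\ldots)^t$, so one must instead leverage the functional independence of the auxiliary wave functions $\af{n}{a}$ (Definition \ref{t2.2.2}) in the $n$-index to recover the matrix entries row by row. This is the content of the qualifier ``formally'' in the statement. Once this identification is accepted, the pole-matching computation above delivers \eqref{1.3.9}.
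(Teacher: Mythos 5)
Your proposal is correct and follows essentially the same route as the paper's proof of Proposition \ref{t3.2.16}: both start from the zero-curvature relation satisfied by $\Af$, expand $[D_j,D_\xi]$ as a rational function of $\xi$, and read off the two Schlesinger equations from the principal parts at $\xi=\tau_i$ and $\xi=\tj$ (the paper does this by multiplying through and taking the limits $\xi\rightarrow\tau_i$, $\xi\rightarrow\tj$, which is the same residue extraction), with the same admittedly formal promotion of the vector identity $[D_j,D_\xi]\Af=0$ to a matrix statement. The only minor difference is that you retain the $\xi$-regular term $-\partial_j\B_\xi$ (which the paper's expansion omits) and correctly note that it only yields supplementary relations not touching the pole parts, so the conclusion is unaffected.
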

We would like to emphasize again that the matrices $\A_j$ and $\B_j$ are of infinite size, hence the above-mentioned Schlesinger equations make sense only at the formal level.
Then, a considerable part of \sref{s3} is dedicated to the proof of the following result formulated with Notation \ref{t2.2.3} left implicit, i.e. $\alpha,\beta\in\{0,1\}$.

\begin{prop}[Proposition \ref{t3.2.14}]\label{t1.3.2}
Let $i,j\in\{1,...,2m\}\subset\N$ such that $i\neq j$, and $n,p\in\Z_{\geq0}$. The infinite size matrices $\A_j$ and $\B_j\coloneqq\B_{\tj}$, that solve\eref{3.2.2} thereby describing the dynamics of the auxiliary wave functions, satisfy the following set of equations,
\begin{flalign}\label{1.3.8}
\begin{split}
\comps{\partial_j\A_i}{2n+\alpha}{2p+\beta}\big|_{p\geq n}=\comps{\frac{[\A_i,\A_j]}{\tau_i-\tj}}{2n+\alpha}{2p+\beta}\,,\,\,\,\,\,\,\,\,\,\,\comps{\partial_j\A_j}{2n+\alpha}{2p+\beta}\big|_{p\geq n}=\comps{\sum_{\substack{i=1\\i\neq j}}^{2m}\frac{[\A_i,\A_j]}{\tj-\tau_i}-[\A_j,\B_j]}{2n+\alpha}{2p+\beta}\,.
\end{split}
\end{flalign}
\end{prop}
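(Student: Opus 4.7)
The plan is to combine the formal derivation of the Schlesinger equations from the zero-curvature condition $[D_j,D_\xi]\Af(\xi)=0$ with a rigorous component-wise verification based on the explicit expressions for $\A_j$ and $\B_\xi$ provided by Propositions~\ref{t3.2.3} and~\ref{t3.2.7}.

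I first expand $\partial_j\partial_\xi\Af=\partial_\xi\partial_j\Af$ using the Lax pair (\ref{1.3.7}). The double-pole contributions $\tfrac{\A_j}{(\xi-\tau_j)^2}\Af$ cancel between the two sides, leaving
\[
\Biggl\{\partial_j\B_\xi+\Bigl[\tfrac{\A_j}{\xi-\tau_j},\B_\xi\Bigr]+\sum_{i=1}^{2m}\tfrac{\partial_j\A_i}{\xi-\tau_i}+\sum_{\substack{i=1 \\ i\neq j}}^{2m}\Bigl[\tfrac{\A_j}{\xi-\tau_j},\tfrac{\A_i}{\xi-\tau_i}\Bigr]\Biggr\}\Af(\xi)=0.
\]
Reading off the residues at $\xi=\tau_i$ for $i\neq j$ and at $\xi=\tau_j$ heuristically suggests the Schlesinger equations (\ref{1.3.9}). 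However, since $\Af$ is an infinite column vector that cannot be inverted, these residue equations only produce identities between vectors, not between matrices, which is why Proposition~\ref{t1.3.3} must be stated at the formal level.

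To obtain the rigorous component-wise statement (\ref{1.3.8}) I would then compute both sides of each identity entry by entry. Propositions~\ref{t3.2.3} and~\ref{t3.2.7} express the entry $(\A_j)_{2n+\alpha,2p+\beta}$ as an explicit finite combination of $\af{n}{\alpha}(\tau_j)$, $\af{p}{\beta}(\tau_j)$, their $\tau_j$-derivatives, and the constants $\ud,\udd,v_0$. Differentiating this closed form with respect to $\tau_j$, while using the Lax relation $\partial_j\Af(\tau_i)=-\tfrac{\A_j}{\tau_i-\tau_j}\Af(\tau_i)$ to compute the $\tau_j$-derivatives of the $\af{n}{\alpha}(\tau_i)$'s, yields a closed-form expression for the $(2n+\alpha,2p+\beta)$ entry of $\partial_j\A_i$. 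In parallel, the same entry of $[\A_i,\A_j]/(\tau_i-\tau_j)$ expands as a sum over an intermediate index $2q+\gamma$, and for $p\ge n$ the closure relation of Proposition~\ref{t2.2.36} truncates this \emph{a priori} infinite sum to finitely many surviving terms, after which the required equality reduces to a finite algebraic identity that can be checked directly. The analogous computation at $i=j$, together with the contribution from $[\A_j,\B_j]$ coming from the regular part of $\B_\xi$ at $\xi=\tau_j$, yields the second identity of (\ref{1.3.8}).

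The principal obstacle is precisely this component-wise matching: the sum $\sum_{q,\gamma}(\A_i)_{2n+\alpha,2q+\gamma}(\A_j)_{2q+\gamma,2p+\beta}$ is a priori an infinite series that need not even converge, and the role of the condition $p\ge n$ is that it is exactly the range in which the closure relation of Proposition~\ref{t2.2.36}, together with the structural form of $\A_j$ and $\B_\xi$, cuts the sum down to a finite expression matching the closed-form derivative computed above. For $p<n$ the same manipulation would require either an inverse of $\Af$ or an extension of the auxiliary-wave-function expansions beyond their natural truncation, neither of which is available, which is exactly what keeps Proposition~\ref{t1.3.3} formal while Proposition~\ref{t1.3.2} can be established rigorously.
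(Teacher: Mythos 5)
Your high-level plan (verify the identities entry by entry from the explicit matrices of Propositions \ref{t3.2.3} and \ref{t3.2.7}) is indeed the paper's strategy, and the zero-curvature preamble is fine as motivation for the formal statement, but the mechanism you invoke to make the entry-wise computation work is not the correct one, and the actual verification is never carried out. The truncation of the sum $\sum_{q,\gamma}(\A_i)_{2n+\alpha,2q+\gamma}(\A_j)_{2q+\gamma,2p+\beta}$ has nothing to do with the closure relation of Proposition \ref{t2.2.36}: it follows from the block lower-triangular structure $\comps{\A_j}{2n+\alpha}{2p+\beta}\big|_{p>n}=0$, which restricts the intermediate index to $p\le q\le n$ and therefore makes \emph{every} entry of the commutators a finite sum, for all $(n,p)$, not just for $p\ge n$. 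Consequently your explanation of the role of the hypothesis $p\ge n$ is wrong on both ends: for $p>n$ both sides simply vanish (the commutator entries by the triangular structure, the left-hand sides because the zero entries of $\A_i$, $\A_j$ stay zero under $\partial_j$), and for $p<n$ the obstruction is neither divergence of a series nor the need to invert $\Af$ — Remark \ref{t3.2.17} states explicitly that all the information is available — but only that the expressions become very long and that $\B_j$ has no single closed formula for a generic entry, so the case-by-case bookkeeping explodes.

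What actually remains, and what your proposal leaves as an assertion, is the $p=n$ case, which is the entire content of the proof. There the key structural fact is that the diagonal blocks of $\A_j$ repeat, $\comps{\A_j}{2n+\alpha}{2n+\beta}=-\frac{\ud}{\gamma}(-1)^{j+\beta}\af{0}{-\beta}(\tj)\af{0}{\alpha}(\tj)$, so only zeroth-order AWFs need to be differentiated. For $i\neq j$ one uses $\partial_j\af{0}{a}(\tau_i)=(-1)^j\Ri_1(\tau_i,\tj)\af{0}{a}(\tj)$ together with the Christoffel--Darboux form of $\Ri_1$ and matches against the explicit diagonal-block commutator (Lemma \ref{t3.2.9}). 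For $i=j$ the Lax relation you quote is singular at $\xi=\tj$, so one must instead use the total-derivative formula of Corollary \ref{t2.2.28}; this brings in $\af{0}{\alpha+1}$, and it is precisely here — when $\alpha=1$ produces $\af{0}{2}$ — that the closure relation (Lemma \ref{t2.2.34}) enters, allowing the result to be compared with the four explicit evaluations of $[\B_j,\A_j]_{2n+\alpha,2n+\beta}$ (Lemmas \ref{t3.2.10}--\ref{t3.2.13}). Since your proposal misplaces the role of the closure relation, does not address the singular $i=j$ evaluation, and reduces the nontrivial block-by-block matching to ``can be checked directly,'' it does not yet constitute a proof of \eqref{1.3.8}.
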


\if0
Once established, this Proposition serves as a reasonable motivation for the following Proposition.

\begin{conj}[Proposition \ref{t3.2.16}, Schlesinger Equations]\label{t1.3.3}
Let $i,j\in\{1,...,2m\}\subset\N$ such that $i\neq j$. The infinite size matrices $\A_j$ and $\B_j\coloneqq\B_{\tj}$, that solve\eref{3.2.2} thereby describing the dynamics of the auxiliary wave functions, satisfy the following Schlesinger equations,
\begin{flalign}\label{1.3.9}
\begin{split}
\partial_j\A_i=\frac{[\A_i,\A_j]}{\tau_i-\tj}\,,\,\,\,\,\,\,\,\,\,\,\partial_j\A_j=\sum_{\substack{i=1\\i\neq j}}^{2m}\frac{[\A_i,\A_j]}{\tj-\tau_i}-[\A_j,\B_j]\,.
\end{split}
\end{flalign}
\end{conj}
\fi

One may observe that, since the matrices $\A_j$ and $\B_j$ are completely determined, there is nothing conceptual preventing one from computing\eref{1.3.9} for an arbitrary component. But, as shall be seen in \sref{s3}, even the components considered in\eref{1.3.8} are rather lengthy to equate, and it turns out that the components that remain unverified rapidly become even longer and thereby difficult to handle without a formal calculator. Especially since there is apparently no general expression for an arbitrary component of $\B_j$, Proposition \ref{t3.2.7}, but twelve relations determine the matrix instead, whence many cases would have to be treated in order to formulate a proof for an arbitrary component.

One may further notice that Proposition \ref{t3.2.16} extends the results of \cite{1} for the Schlesinger equations associated to such systems. Nevertheless it is not yet clear how to generalize the trace properties of \cite{1} to our matrices.

\paragraph{Byproducts.} Now we mention two notable auxiliary results that we derive with the methods developed in order to prove Theorem \ref{t4.0.1} and Proposition \ref{t3.2.14}.

In the framework of RMT, the Fredholm determinant yields the \textit{gap probability}, such that no eigenvalue is found in the interval $\I$. Hereby associating $\F\big([s,\infty)\big)$ to the probability that the largest eigenvalue is found in $(-\infty,s)$, which is highly relevant in the context of two-dimensional quantum gravity. And the generalization to $\I=\sqcup_{j=1}^m[\tau_{2j-1};\tau_{2j}]$ would also provide relevant insights, therefore, in sections \sref{s2} and \sref{s3}, the entirety of the upshots which are obtained hold for any interval $\I=\sqcup_{j=1}^m[\tau_{2j-1};\tau_{2j}]$, providing a large basis of results for an eventual future generalization of Theorem \ref{t4.0.1} to a generic $\I$.

As a further observation, we manage to obtain the kernel $\Ki(\xi,\zeta)$ with a Christoffel-Darboux type formula, Proposition \ref{t2.1.8}, which is consistent as a double scaling limit of the discrete case given in \cite{11,14,12,24}.

Besides, still along the lines of double scaled random matrices, the diagonal values of the kernel $\Ki(\xi,\xi)$ bear relevance as they represent the spectral density \cite{15,27}. The following hereby notable upshot is derived in section \sref{s2}.

\begin{prop}[Proposition \ref{t2.1.10}]\label{t1.3.4}
The diagonal values of the kernel are
\begin{flalign}\label{1.3.10}
\begin{split}
\Ki(\xi,\xi)=\frac{\dot{u}_0}{\gamma}\big(\psi'(\xi)\big)^2-\frac{\gamma}{\dot{u}_0}\big(v_0+\xi\big)\big(\psi(\xi)\big)^2+\frac{\ddot{u}_0}{\dot{u}_0}\psi(\xi)\psi'(\xi)\,.
\end{split}
\end{flalign}
\end{prop}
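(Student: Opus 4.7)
The plan is to start from the Christoffel-Darboux type formula for the off-diagonal kernel $\Ki(\xi,\zeta)$ asserted in Proposition \ref{t2.1.8} and then take the diagonal limit $\zeta\to\xi$. Concretely, the Schrödinger equation \eref{1.1.1} for $\varphi_\xi$ and $\varphi_\zeta$ gives, after the standard Lagrange manipulation, $\partial_x(\varphi_\zeta\varphi_\xi'-\varphi_\xi\varphi_\zeta')=(\xi-\zeta)\varphi_\xi\varphi_\zeta$; integrating over $\R_+$ (the boundary term at infinity vanishes by Condition \ref{t1.1.3}) yields
\begin{equation*}
\Ki(\xi,\zeta)=\frac{\varphi_\xi(0)\,\varphi_\zeta'(0)-\varphi_\zeta(0)\,\varphi_\xi'(0)}{\xi-\zeta}\,,
\end{equation*}
which is exactly the content of Proposition \ref{t2.1.8}. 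Using the composition Condition \ref{t1.1.4}, $\varphi_\xi(0)=\phi(u_0+\gamma\xi)=\psi(\xi)$ and $\varphi_\xi'(0)=\dot u_0\,\phi'(u_0+\gamma\xi)=\frac{\dot u_0}{\gamma}\,\psi'(\xi)$, so the kernel becomes $\Ki(\xi,\zeta)=\frac{\dot u_0}{\gamma}\cdot\frac{\psi(\xi)\psi'(\zeta)-\psi(\zeta)\psi'(\xi)}{\xi-\zeta}$.

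Next I would take $\zeta\to\xi$ by l'Hôpital's rule, which gives the diagonal kernel as $\Ki(\xi,\xi)=\frac{\dot u_0}{\gamma}\bigl[(\psi'(\xi))^2-\psi(\xi)\psi''(\xi)\bigr]$. This already matches the first term of the claimed identity, so all that remains is to eliminate $\psi''(\xi)$ in favour of $\psi(\xi)$ and $\psi'(\xi)$, reintroducing the potential.

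To do so, I would compute $\varphi_\xi''(0)$ two different ways. On the one hand, evaluating \eref{1.1.1} at $x=0$ and invoking Condition \ref{t1.1.6} to replace $v(0;\xi)$ by the $\xi$-independent constant $v_0$ yields $\varphi_\xi''(0)=(v_0+\xi)\psi(\xi)$. On the other hand, differentiating $\varphi_\xi(x)=\phi(u(x)+\gamma\xi)$ twice in $x$ and setting $x=0$ gives $\varphi_\xi''(0)=\ddot u_0\,\phi'(u_0+\gamma\xi)+\dot u_0^{\,2}\,\phi''(u_0+\gamma\xi)=\frac{\ddot u_0}{\gamma}\psi'(\xi)+\frac{\dot u_0^{\,2}}{\gamma^2}\psi''(\xi)$. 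Equating the two expressions and solving for $\psi''(\xi)$ produces
\begin{equation*}
\psi''(\xi)=\frac{\gamma^2(v_0+\xi)}{\dot u_0^{\,2}}\,\psi(\xi)-\frac{\gamma\,\ddot u_0}{\dot u_0^{\,2}}\,\psi'(\xi)\,.
\end{equation*}

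Substituting this back into $\Ki(\xi,\xi)=\frac{\dot u_0}{\gamma}\bigl[(\psi'(\xi))^2-\psi(\xi)\psi''(\xi)\bigr]$ gives the three announced terms after elementary simplification, completing the proof. The only non-routine step is the translation between $\xi$-derivatives of $\psi$ and $x$-derivatives of $\varphi_\xi$ at the boundary, which is where Condition \ref{t1.1.4} (the $\phi(u(x)+\gamma\xi)$ ansatz, giving the $\gamma$ and $\dot u_0,\ddot u_0$ factors) and Condition \ref{t1.1.6} (ensuring $v(0;\xi)=v_0$ and therefore the clean form of $\varphi_\xi''(0)$) are both essential; everything else is the classical Christoffel-Darboux computation adapted to the continuous setting of $\R_+$.
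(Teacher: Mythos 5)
Your proof is correct and takes essentially the same route as the paper: the diagonal limit of the Christoffel--Darboux formula of Proposition \ref{t2.1.8}, followed by elimination of $\psi''(\xi)$ via the relation of Lemma \ref{t2.1.9}, which you re-derive exactly as the paper does from\eref{1.1.1} together with Conditions \ref{t1.1.4} and \ref{t1.1.6}. The only cosmetic difference is that you re-obtain Proposition \ref{t2.1.8} through a direct Lagrange--Wronskian identity integrated over $\R_+$ instead of the paper's operator-commutator argument (Lemmas \ref{t2.1.1} and \ref{t2.1.5}); that input could equally well have been cited.
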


Whilst on the side of isomonodromic systems, it was shown in \cite{19,2} that, considering the higher order Airy kernel, the dynamics of the finitely many analogues of the $\chi_{n,a}$ also admit a Hamiltonian description, which is related to the Lax formalism through trace properties of the analogues of $\A_j$ and $\B_\xi$. As we already mentioned, this relation through trace properties becomes unclear in our case. Nevertheless we manage to establish a set of Hamilton's equations obeyed by the $\chi_{p,a}$, $\forall p\in\{0,...,n-1\}\subset\Z_{\geq0}$, Corollary \ref{t3.1.5}. Thus we end up with an infinite number of Hamiltonians $\Ha_n$, each with an associated Poisson structure $\pbn{\cdot}{\cdot}$.

Whence \sref{s31} is mostly devoted to the evaluation of the Poisson brackets of the Hamiltonians, leading to the notable outcomes presented in Propositions \ref{t3.1.17} and \ref{t3.1.18}. Namely, setting $n,r\in\N$, $n>r$, we have $\pbn{\Ha_r}{\Ha_n}=0$, but the converse is not true in general as we shall show with a counter example
\begin{flalign}\label{1.3.11}
\begin{split}
\pbn[1]{\Ha_1}{\Ha_n}=\frac{n(n-1)}{\psi}\bigg(\frac{\qcv}{\psi}\big(\pcv'\partial_\pcv\psi+\qcv'\partial_\qcv\psi\big)-\qcv'\bigg)\fa{n-2}\Ha_1(\qcv,\pcv)\,,
\end{split}
\end{flalign}
where $\pcv\,(\coloneqq\chi_{0,1})$ and $\eta_n$ are defined in Notation \ref{t3.1.14}.

As a final note on these byproducts, let us emphasize the fact that the Hamiltonian formulations play an important role in the establishment of Theorem \ref{t4.0.1} for at least two reasons. Firstly because the basic idea is to express $\F(\I)$ in terms of $\Ha_1$ utilizing Proposition \ref{t3.1.2}, and then explicit $\Ha_1$ in terms of $\qcv$ only. Secondly because, in order to obtain\eref{1.3.11}, we shall need to prove Lemma \ref{t3.1.9} which in turn plays a crucial role in the proof of Theorem \ref{t4.0.1}, most notably for the derivation of\eref{1.3.2}.

\paragraph{Applications : Finite Temperature Kernels.} There is also an upshot of appendix \sref{sB} concerning the range of applicability of our results which we think is worth pointing out.

\begin{prop}[Proposition \ref{tb1}]\label{t1.3.5}
Let $\phi$ and $u$ be any twice differentiable functions, i.e. $u,\phi\in\operatorname{C}^2(\R)$, and $\gamma\in\R\char`\\\{0\}$. Upon probing the regularity Conditions \ref{t1.1.2}, \ref{t1.1.3}, \ref{t1.1.5}, \ref{t1.1.6} and \ref{t3.1.6} for $\varphi_\xi(x)\coloneqq\phi\big(u(x)+\gamma\xi\big)$ satisfying\eref{1.1.1}, Theorem \ref{t4.0.1} may be applied for any Fredholm determinant arising from a kernel of the following generic form,
\begin{flalign}\label{1.3.12}
\begin{split}
\Ki(\xi,\zeta)=\lim_{a\rightarrow\infty}\int_{u(0)}^{u(a)}\frac{\phi\big(\lambda+\gamma\xi\big)\phi\big(\lambda+\gamma\zeta\big)}{f(\lambda)}\,d\lambda\,,
\end{split}
\end{flalign}
where $f(u)\coloneqq\frac{du}{dx}(x)$ depends solely on $u$ and parameters not related to the integration.
\end{prop}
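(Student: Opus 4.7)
The plan is to reduce the kernel~\eqref{1.3.12} to the standard form~\eqref{1.1.3} by a single change of variables, after which Theorem~\ref{t1.3.1} applies directly. Setting $\varphi_\xi(x)\coloneqq\phi\bigl(u(x)+\gamma\xi\bigr)$ automatically fulfils Condition~\ref{t1.1.4}, while the Schrödinger equation~\eqref{1.1.1} determines the associated potential $v(x;\xi)$ algebraically in terms of $\phi$, $u$ and their first two derivatives; the remaining regularity Conditions~\ref{t1.1.2}, \ref{t1.1.3}, \ref{t1.1.5}, \ref{t1.1.6} and~\ref{t3.1.6} are assumed to hold by hypothesis of the proposition. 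Consequently, the only statement left to establish is the identity
\[
\Ki(\xi,\zeta)=\int_{\R_+}\varphi_\xi(x)\,\varphi_\zeta(x)\,dx\,.
\]

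The crucial observation is that the definition $f(u)\coloneqq\frac{du}{dx}(x)$ realises $u$ as a solution of the autonomous ODE $u'=f(u)$. Condition~\ref{t1.1.5} gives $f(u(0))=\ud\neq 0$, and because $f$ depends on $u$ alone, the sign of $u'$ cannot change along the trajectory; accordingly $u$ is monotone on $\R_+$, hence a $\operatorname{C}^2$ diffeomorphism from $[0,a]$ onto the segment with endpoints $u(0)$ and $u(a)$, for every finite $a>0$. Performing $\lambda=u(x)$ with $dx=d\lambda/f(\lambda)$ then yields
\[
\int_0^a\phi\bigl(u(x)+\gamma\xi\bigr)\,\phi\bigl(u(x)+\gamma\zeta\bigr)\,dx=\int_{u(0)}^{u(a)}\frac{\phi(\lambda+\gamma\xi)\,\phi(\lambda+\gamma\zeta)}{f(\lambda)}\,d\lambda\,,
\]
and letting $a\to\infty$ both sides converge: the left-hand side by $\varphi_\xi,\varphi_\zeta\in\leb^2(\R_+)$ (Condition~\ref{t1.1.3}) together with Cauchy-Schwarz, and the right-hand side by the equality just obtained. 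Identifying the right-hand side with the integral appearing in~\eqref{1.3.12} and the left-hand side with $\Ki(\xi,\zeta)$ as in~\eqref{1.1.3} closes the argument, so Theorem~\ref{t1.3.1} applies verbatim.

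The only mildly delicate point I anticipate is keeping track of the orientation of the integral when $\ud<0$ and of the nature of the endpoint $\lim_{a\to\infty}u(a)$, which may be finite — if $f$ possesses a zero in the closure of the range of $u$, so that the trajectory is asymptotic to a critical point of $f$ — or infinite, depending on the pair $(\phi,u)$. In either case the identity above and its limit hold term by term, so this is a bookkeeping issue rather than a genuine obstruction, and the final proof will merely need to record these distinctions cleanly together with the explicit algebraic expression of $v(x;\xi)$ induced by substituting $\varphi_\xi(x)=\phi(u(x)+\gamma\xi)$ into~\eqref{1.1.1}.
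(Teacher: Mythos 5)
Your proposal is correct and follows essentially the same route as the paper: the identity between \eqref{1.3.12} and the kernel \eqref{1.1.3} is obtained by the change of variables $\lambda=u(x)$ with $dx=d\lambda/f(\lambda)$, Condition \ref{t1.1.4} holds by construction, and Theorem \ref{t4.0.1} then applies. The only difference is that you justify the invertibility of $u$ via the autonomous equation $u'=f(u)$ and discuss the limiting endpoint, points the paper simply assumes by stating "if $u$ is invertible, then this procedure is direct."
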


Such kernels arise in the context of the KPZ equation and are known under the moniker of finite temperature kernels. A specific example in which this upshot is applicable and bears relevance is also given in \sref{sB}. Essentially Proposition \ref{tb1} states that Condition \ref{t1.1.4} relates finite temperature kernels\eref{1.3.12} to kernels of the form\eref{1.1.3}.

\subsection{Outlook}\label{sD}

In the wake of our discussions and derivations, we mention seven yet uncharted directions beyond the ones we undertook which could be interesting to explore. The first two ones are directed towards a universal evaluation of Fredholm determinants arising from kernels of the form $\Ki(\xi,\zeta)=\int_{\R_+}f(x,\xi)\,g(x,\zeta)\,dx\,$, with $f$ and $g$ any two functions satisfying regularity properties and a differential equation. Then the third and fourth ones would characterize further the Fredholm determinants that we considered in this manuscript, whilst the fifth and sixth ones are related to the Lax formalism. Finally the seventh one is more specific and technical, it is more of a comment than a future direction to be explored.

\paragraph{1.} Relax Condition \ref{t1.1.4}. This amounts to the generalization of Theorem \ref{t4.0.1} to any kernel of the form\eref{1.1.3} with $\varphi_\xi$ satisfying the appropriate regularity conditions and a differential equation of the type\eref{1.1.1}. To go down this road, one would have to replace\eref{2.1.8} by a general chain rule $\partial_x\varphi_\xi(x)=\operatorname{ch}(x,\xi)\,\partial_\xi\varphi_\xi(x)$.

\paragraph{2.} Do not impose a Schrödinger-type equation. This amounts to replace\eref{1.1.1} by any differential equation $\So\varphi_\xi(x)=\xi\varphi_\xi(x)$ where $\So$ is any differential operator. A first stride along these lines would be to replace\eref{1.1.1} by $\big(\partial_x^{2p}-v(x;\xi)\big)\varphi_\xi(x)=\xi\varphi_\xi(x)$. Note that, when $\So$ contains odd powers in the derivatives, the operator looses self-adjointness requiring one to introduce new sets of functions in a similar fashion to \cite{2}. This topic could be related to RHP methods, since the kernels we consider are integrable kernels, as can be seen from\eref{2.1.1}, one could expect that this approach would yield a functional formula even without any differential equation. For instance, this is achieved for $\varphi_\xi(x)=\phi(x+\xi)$ or $\varphi_\xi(x)=\phi(x\xi)$ in \cite{35}. But, as we previously mentioned, the generalization of\eref{1.3.2} along these lines, without any differential equation, is seemingly not achievable.

\paragraph{3.} Generalize Theorem \ref{t4.0.1} to a generic $\I=\sqcup_{j=1}^m[\tau_{2j-1};\tau_{2j}]$. This would be highly relevant in the context of two-dimensional quantum gravity and more generally double scaled random matrices as this would yield the gap probability for any interval.

\paragraph{4.} Compute $\lim_{\tau\rightarrow-\infty}\F\big([\tau,\infty)\big)$. Amongst other interests, this would notably allow for a check of the compatibility with the Forrester–Chen–Eriksen–Tracy Proposition \cite{16,33} about the large gap behavior of the Fredholm determinant.

\paragraph{5.} Try to formulate a Schlesinger system of equations of finite size associated to $\af{0}{0}$ and $\af{0}{1}$ only. The systematic appearance of $\af{n+1}{\alpha}$ in $\af{n}{\alpha}'$ through Proposition \ref{t2.2.33} is what motivated our infinite size Lax matrices. Nonetheless the same could be said for the appearance of $\af{n}{\alpha+1}$, but we dealt with this employing Proposition \ref{t2.2.36}, thus it should also be possible to close the system "in the $n$ direction" utilizing Lemma \ref{t3.1.9}. This brings forth the following question. Does the resulting system with finite size matrices, accounting for $\af{0}{0}$ and $\af{0}{1}$ solely, satisfy a set of Schlesinger equations?

\paragraph{6.} Prove Proposition \ref{t3.2.16} for an arbitrary component without any formal argument and derive the associated trace properties \cite{1}. This would establish the connection with isomonodromic systems.

\paragraph{7.} Can Condition \ref{t3.1.6} be lifted? This amounts to a careful analysis of the behaviour of $\af{0}{0}$ at any $\tj$ where $\psi$ vanishes. In particular, it would be interesting to know whether or not it is possible to show that, wherever $\psi$ vanishes, $\af{0}{0}$ vanishes more rapidly.

\subsection{Outline}\label{s14}

To conclude this introduction, we turn to the plan for the remaining part of the present paper.

To begin with, section \sref{s2} aims at the derivation of several results which will most notably provide us with the essential technical tools in order to tackle sections \sref{s3} and \sref{s4}. In particular \sref{s21} focuses on properties of the kernel whilst \sref{s22} introduces in details the quantities which shall be at the heart of our main derivations.

In the wake of these discussions, section \sref{s3} undertakes the Hamiltonian and Lax formalisms, providing the justifications for the upshots mentioned above but also bringing forth several outcomes which will turn out to bear relevance in \sref{s4} too.

Ultimately the endeavor of section \sref{s4} is to provide a proof to Theorem \ref{t4.0.1}, constructing on results from the previous sections in order to yield our principal upshot. Specifically, \sref{s41} focuses on\eref{1.3.2} whilst \sref{s42} leads us to\eref{1.3.1} as well as\eref{1.3.3}, further commenting on the asymptotic behaviour of the auxiliary wave functions.

For the appendices, \sref{sA} and \sref{sB} respectively lay out generic and specific examples satisfying Condition \ref{t1.1.4}, whilst \sref{sC} brings forth a generalization of\eref{1.3.2} for any $\af{n}{0}\,,\,\,n\in\N\,$, briefly sketching the derivation.

Before we begin with the derivations we would like to emphasize the fact that Notations \ref{t2.1.6}, \ref{t2.2.3} and \ref{t2.2.8} are extensively used and are therefore worth bearing in mind.

\subsection*{Acknowledgements}

We are grateful to Thomas Bothner and Guilherme L. F. Silva for useful discussions about the project.
This work was supported by CNRS through MITI interdisciplinary programs, EIPHI Graduate School (No.~ANR-17-EURE-0002), and Bourgogne-Franche-Comté region.

\section{On the Integral Operator and its Resolvent}\label{s2}

\subsection{Evaluation of the Kernel}\label{s21}

First we evaluate the kernel $\Ki(\xi,\zeta)$ in terms of the wave function and its $x$-derivative, all evaluated at $x=0$. Then we translate these $x$-derivatives into $\xi$ and $\zeta$-derivatives using Condition \ref{t1.1.4}. Finally this allows us to express the derivatives of the kernel in a form which will be useful later on.

\subsubsection{Christoffel-Darboux Type Formula}\label{s211}

For the sake of conciseness we introduce the differential operator $\So f(x)\coloneqq \partial_x^2f(x)-v(x;\xi)f(x)$. Then Condition \ref{t1.1.3} together with\eref{1.1.1} entail $\So\varphi_\xi=\xi\varphi_\xi\in\leb^2(\R_+)$. And this leads us to the following result.

\begin{lem}\label{t2.1.1}
The kernel of $\Ko_0$ satisfies
\begin{flalign}\label{2.1.1}
\begin{split}
\Ki(\xi,\zeta)=\frac{\ipr{\varphi_\xi}{[\So,\Pi_{\R_+}]\varphi_\zeta}}{\xi-\zeta}\,,
\end{split}
\end{flalign}
where $\ipr{\cdot}{\cdot}$ denotes the $\leb^2(\R)$ inner product and $[\cdot,\cdot]$ the commutator.
\end{lem}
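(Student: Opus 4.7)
The plan is to establish the identity by multiplying through by $(\xi-\zeta)$ and recognizing it as a Wronskian (Christoffel--Darboux type) boundary relation realized intrinsically through the commutator $[\So,\Pi_{\R_+}]$.

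For the right-hand side, I would observe that $\Pi_{\R_+}$ is multiplication by $\ind{\R_+}$ and therefore commutes with multiplication by the potential $v(x;\xi)$, so $[\So,\Pi_{\R_+}]=[\partial_x^2,\Pi_{\R_+}]$ independently of $v$. Then, using the distributional identity $\partial_x\ind{\R_+}=\delta_0$, a direct computation gives $[\partial_x^2,\Pi_{\R_+}]f(x)=f(0)\,\delta_0'(x)+f'(0)\,\delta_0(x)$ for smooth enough $f$, and pairing with $\varphi_\xi$ via $\int_\R\varphi_\xi\delta_0'=-\varphi_\xi'(0)$ and $\int_\R\varphi_\xi\delta_0=\varphi_\xi(0)$ yields
\[
\ipr{\varphi_\xi}{[\So,\Pi_{\R_+}]\varphi_\zeta}=\varphi_\xi(0)\varphi_\zeta'(0)-\varphi_\xi'(0)\varphi_\zeta(0).
\]

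For the left-hand side, I would use\eref{1.1.1} to substitute $\partial_x^2\varphi_\xi=(\xi+v)\varphi_\xi$ and $\partial_x^2\varphi_\zeta=(\zeta+v)\varphi_\zeta$; the potential terms cancel between the two, leaving $(\xi-\zeta)\varphi_\xi\varphi_\zeta=\varphi_\zeta\partial_x^2\varphi_\xi-\varphi_\xi\partial_x^2\varphi_\zeta=-\partial_x W$ with $W\coloneqq\varphi_\xi\varphi_\zeta'-\varphi_\xi'\varphi_\zeta$. Integrating over $\R_+$ and invoking Condition \ref{t1.1.3} (together with the fact that $\So\varphi_\xi=\xi\varphi_\xi\in\leb^2(\R_+)$) to conclude that $W(x)\to 0$ as $x\to\infty$, I obtain
\[
(\xi-\zeta)\Ki(\xi,\zeta)=W(0)=\varphi_\xi(0)\varphi_\zeta'(0)-\varphi_\xi'(0)\varphi_\zeta(0),
\]
which coincides with the right-hand side, and dividing by $(\xi-\zeta)$ gives the lemma.

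The main technical obstacle is that $\Pi_{\R_+}\varphi_\zeta$ is discontinuous at $x=0$, so $\So\Pi_{\R_+}\varphi_\zeta$ contains genuine $\delta_0'$ and $\delta_0$ contributions and is not an element of $\leb^2(\R)$; the inner product on the right-hand side must therefore be interpreted as a distributional pairing. Equivalently, one can avoid distributions by splitting $\R=\R_-\sqcup\R_+$ and applying integration by parts on each half-line separately, tracking the boundary values at $0^\pm$ that arise from the jump of $\ind{\R_+}$. A secondary point to verify is that the square-integrability of $\varphi_\xi,\varphi_\zeta$ on $\R_+$, combined with\eref{1.1.1}, suffices to make both $\varphi_\xi\partial_x\varphi_\zeta$ and $\varphi_\xi'\varphi_\zeta$ vanish as $x\to\infty$, so that only the boundary contribution at $x=0$ survives in the Wronskian identity.
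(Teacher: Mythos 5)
Your argument is correct, but it takes a genuinely different route from the paper. The paper's proof is a two-line formal-adjoint manipulation: since $\So^*\varphi_\xi=\So\varphi_\xi$, it writes $\ipr{\varphi_\xi}{[\So,\Pi_{\R_+}]\varphi_\zeta}=\ipr{\So\varphi_\xi}{\Pi_{\R_+}\varphi_\zeta}-\ipr{\varphi_\xi}{\Pi_{\R_+}\So\varphi_\zeta}$ and then applies $\So\varphi_\xi=\xi\varphi_\xi$, $\So\varphi_\zeta=\zeta\varphi_\zeta$ to land directly on $(\xi-\zeta)\Ki(\xi,\zeta)$; the explicit evaluation of the commutator as a boundary (Wronskian) term at $x=0$ is deferred to Lemma \ref{t2.1.5}, via Properties \ref{t2.1.3} and \ref{t2.1.4}. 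You instead evaluate both sides concretely: the distributional identity $[\So,\Pi_{\R_+}]f=f(0)\delta_0'+f'(0)\delta_0$ (the potential dropping out of the commutator, exactly as the paper also observes) gives the right-hand side as $W(0)=\varphi_\xi(0)\varphi_\zeta'(0)-\varphi_\xi'(0)\varphi_\zeta(0)$, while Green's identity on $\R_+$ together with\eref{1.1.1} gives $(\xi-\zeta)\Ki(\xi,\zeta)=W(0)-\lim_{x\to\infty}W(x)$. In effect you prove Lemma \ref{t2.1.1} and the content of Lemma \ref{t2.1.5} in one stroke, and you make visible the structure the paper keeps implicit: the discontinuity of $\Pi_{\R_+}\varphi_\zeta$ at $0$ (so the pairing is distributional) and, more importantly, the Wronskian boundary term at $x=+\infty$, which is precisely what the paper's "formal adjoint" step silently discards. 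One caution on that last point: square-integrability of $\varphi_\xi,\varphi_\zeta$ guarantees (via $\partial_x W=-(\xi-\zeta)\varphi_\xi\varphi_\zeta\in\leb^1(\R_+)$) that $\lim_{x\to\infty}W(x)$ \emph{exists}, but not by itself that it vanishes — in limit-circle situations it need not — so strictly this step rests on the paper's standing "sufficiently fast decrease" assumption (Condition \ref{t1.1.2}) rather than on Condition \ref{t1.1.3} alone; this places your proof at the same level of rigor as the paper's, with the advantage of making the hidden boundary assumption explicit and the disadvantage of being less modular.
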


\begin{proof}
Denoting by $\OO^*$ the formal adjoint of any operator $\OO$, we notice that $\So^*\varphi_\xi=\So\varphi_\xi$. Therefore
\begin{flalign}\label{2.1.2}
\begin{split}
\ipr{\varphi_\xi}{[\So,\Pi_{\R_+}]\varphi_\zeta}=\ipr{\So\varphi_\xi}{\Pi_{\R_+}\varphi_\zeta}-\ipr{\varphi_\xi}{\Pi_{\R_+}\So\varphi_\zeta}=(\xi-\zeta)\ipr{\varphi_\xi}{\Pi_{\R_+}\varphi_\zeta}=(\xi-\zeta)\Ko(\xi,\zeta)\,,
\end{split}
\end{flalign}
which completes the proof.
\end{proof}

Hence we turn to the evaluation of the commutator of $\So$ and $\Pi_{\R_+}$. Let us introduce an operator which will be extensively used throughout this text.

\begin{defi}\label{t2.1.2}
For any $\operatorname{J}\subseteq\R$, we define the differentiation operator as
\begin{flalign}\label{2.1.3}
\begin{split}
\Po:\sob^1(\operatorname{J})\longrightarrow\leb^2(\operatorname{J})\,;\,\,\,\,\,\Po f(x)\coloneqq\partial_x f(x)\,.
\end{split}
\end{flalign}
\end{defi}

\begin{pro}\label{t2.1.3}
For any $f\in\sob^1(\operatorname{J})$, we have $\Po^*f=-\Po f$.
\end{pro}

\begin{pro}\label{t2.1.4}
Given $\I\coloneqq \bigsqcup_{j=1}^m[\tau_{2j-1};\tau_{2j}]$, the following relation holds for all $f\in\sob^1(\R)$ and $g\in\leb^2(\R)$,
\begin{flalign}\label{2.1.4}
\begin{split}
\int_\R g(x)\Po\Pi_{\I}f(x)\,dx=-\sum_{j=1}^{2m}(-1)^jg(\tj)f(\tj)+\int_\R g(x)\Pi_{\I}\Po f(x)\,dx\,.
\end{split}
\end{flalign}
\end{pro}

\begin{lem}\label{t2.1.5}
The kernel of $\Ko_0$ is given by
\begin{flalign}\label{2.1.5}
\begin{split}
\Ki(\xi,\zeta)=\frac{1}{\xi-\zeta}\Big(\varphi_\xi(x)\big(\partial_x\varphi_\zeta(x)\big)-\big(\partial_x\varphi_\xi(x)\big)\varphi_\zeta(x)\Big)\Big|_{x=0}\,.
\end{split}
\end{flalign}
\end{lem}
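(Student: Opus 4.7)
The plan is to start from the Christoffel-Darboux-type identity of Lemma \ref{t2.1.1} and evaluate the commutator $[\So,\Pi_{\R_+}]$ by extracting boundary terms via integration by parts, leaving a Wronskian at $x=0$. The first observation is that, since $v(x;\xi)$ acts as a multiplication operator in $x$, it commutes with the projector $\Pi_{\R_+}$; hence $[\So,\Pi_{\R_+}] = [\Po^2,\Pi_{\R_+}]$ and only the kinetic piece contributes. The task thus reduces to evaluating $\ipr{\varphi_\xi}{[\Po^2,\Pi_{\R_+}]\varphi_\zeta} = \ipr{\varphi_\xi}{\Po^2\Pi_{\R_+}\varphi_\zeta} - \ipr{\varphi_\xi}{\Pi_{\R_+}\Po^2\varphi_\zeta}$, which is morally a Green's identity on the half-line.

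To carry this out concretely, I would transport $\Po^2$ across $\Pi_{\R_+}$ in the first inner product by two successive applications of Property \ref{t2.1.4} together with Property \ref{t2.1.3} (which gives $\Po^*=-\Po$). Each application of Property \ref{t2.1.4} with $\I=\R_+$ generates boundary contributions $-\sum_{j=1}^{2}(-1)^j g(\tj)f(\tj)$ at the endpoints $\tau_1=0$ and $\tau_2=\infty$; the contribution at $\tau_2=\infty$ vanishes thanks to Condition \ref{t1.1.3}, which gives $\varphi_\xi,\xi\varphi_\xi\in\leb^2(\R_+)$ and, via\eref{1.1.1}, also controls $\Po\varphi_\xi$ at infinity. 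The second inner product $\ipr{\varphi_\xi}{\Pi_{\R_+}\Po^2\varphi_\zeta}=\int_0^\infty\varphi_\xi\Po^2\varphi_\zeta\,dx$ requires no further manipulation. Collecting everything, the bulk term $\int_0^\infty\varphi_\xi\Po^2\varphi_\zeta\,dx$ cancels and the only surviving contributions are the boundary terms at $x=0$, which combine into the Wronskian $\varphi_\xi(0)\Po\varphi_\zeta(0)-\Po\varphi_\xi(0)\varphi_\zeta(0)$. Dividing by $\xi-\zeta$ via Lemma \ref{t2.1.1} then yields\eref{2.1.5}.

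The main obstacle is purely bookkeeping: carefully tracking the sign convention $-(-1)^j$ in Property \ref{t2.1.4} at the endpoints $\tau_1=0$ and $\tau_2=\infty$, ensuring that the two boundary contributions at $0$ generated by the successive integrations by parts do not accidentally cancel against each other, and verifying that the decay built into Condition \ref{t1.1.3}, once combined with\eref{1.1.1}, is strong enough to discard every contribution at infinity (including those involving $\Po\varphi_\xi$ and $\Po\varphi_\zeta$). Conceptually, the statement is nothing more than the familiar assertion that the Wronskian of two eigenfunctions of a Sturm-Liouville operator with distinct eigenvalues is proportional to the difference of their eigenvalues, so the only real work is aligning the boundary algebra with the sign conventions of the formalism set up in \sref{s21}.
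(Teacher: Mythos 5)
Your proposal is correct and follows essentially the same route as the paper: start from Lemma \ref{t2.1.1}, drop the potential term since multiplication by $v$ commutes with $\Pi_{\R_+}$, and reduce $[\Po^2,\Pi_{\R_+}]$ to boundary terms at $x=0$ via Properties \ref{t2.1.3} and \ref{t2.1.4}, yielding the Wronskian. The paper merely organizes the integration by parts through the commutator identity $[\Po^2,\Pi_{\R_+}]=[\Po,\Pi_{\R_+}]\Po+\Po[\Po,\Pi_{\R_+}]$ and the evaluation rule $\int_\R g\,[\Po,\Pi_{\R_+}]f\,dx=g(0)f(0)$, which is the same Green's-identity computation you describe.
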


\begin{proof}
First we note that Property \ref{t2.1.4} entails $\int_\R g(x)[\Po,\Pi_{\R_+}]f(x)\,dx=g(0)f(0)$, for any $f\in\sob^1(\R)$ and $g\in\leb^2(\R)$. Thus, since $\Pi_{\R_+}$ commutes with multiplication by $v$, we obtain
\begin{flalign}\label{2.1.6}
\begin{split}
\ipr{\varphi_\xi}{[\So,\Pi_{\R_+}]\varphi_\zeta}&=\ipr{\varphi_\xi}{[\Po^2,\Pi_{\R_+}]\varphi_\zeta}=\ipr{\varphi_\xi}{\Big([\Po,\Pi_{\R_+}]\Po+\Po[\Po,\Pi_{\R_+}]\Big)\varphi_\zeta} \\
&=\int_\R\Big(\varphi_\xi(x)[\Po,\Pi_{\R_+}]\big(\partial_x\varphi_\zeta(x)\big)-\big(\partial_x\varphi_\xi(x)\big)[\Po,\Pi_{\R_+}]\varphi_\zeta(x)\Big)dx \\
&=\Big(\varphi_\xi(x)\big(\partial_x\varphi_\zeta(x)\big)-\big(\partial_x\varphi_\xi(x)\big)\varphi_\zeta(x)\Big)\Big|_{x=0}\,,
\end{split}
\end{flalign}
where we also used Property \ref{t2.1.3}. Finally, injecting this in Lemma \ref{t2.1.1} yields the result.
\end{proof}

Finally we convert the $x$-derivatives into $\xi$ and $\zeta$-derivatives. And to do so, since all the contributions to the kernel are evaluated at $x=0$, we introduce the following quantities.

\begin{nota}\label{t2.1.6}
We write $u_0\coloneqq u(0)$, $\ud\coloneqq \partial_x\big|_{x=0}u(x)$ as well as $\udd\coloneqq \partial_x^2\big|_{x=0}u(x)$, and similar for the potential $v(x;\xi)$, i.e. $v_0\coloneqq v(0;\xi)$, which does not depend on $\xi$ according to Condition \ref{t1.1.6}.
\end{nota}


\begin{defi}\label{t2.1.7}
Let $\psi(\xi)$ denote the wave function at $x=0$, i.e. $\psi(\xi)\coloneqq \varphi_\xi(0)$, and $\psi^{(l)}(\xi)\coloneqq \partial_\xi^l\psi(\xi)$.
\end{defi}

\begin{prop}\label{t2.1.8}
The kernel of $\Ko_0$ obeys a Christoffel-Darboux type formula, namely
\begin{flalign}\label{2.1.7}
\begin{split}
\Ki(\xi,\zeta)=\frac{\dot{u}_0}{\gamma}\frac{\psi(\xi)\psi'(\zeta)-\psi'(\xi)\psi(\zeta)}{\xi-\zeta}\,.
\end{split}
\end{flalign}
\end{prop}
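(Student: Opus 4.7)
The plan is to use Lemma \ref{t2.1.5} as the starting point and convert the $x$-derivatives appearing there into $\xi$- and $\zeta$-derivatives by exploiting the separable structure imposed by Condition \ref{t1.1.4}. Indeed, writing $\varphi_\xi(x) = \phi\big(u(x) + \gamma\xi\big)$ and applying the chain rule gives
\begin{flalign*}
\begin{split}
\partial_x\varphi_\xi(x) = \dot{u}(x)\,\phi'\big(u(x) + \gamma\xi\big)\,, \qquad \partial_\xi\varphi_\xi(x) = \gamma\,\phi'\big(u(x) + \gamma\xi\big)\,,
\end{split}
\end{flalign*}
so that, at $x = 0$, the two derivatives are proportional. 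Concretely, by Condition \ref{t1.1.5} one has $\dot{u}_0 \neq 0$ and $\gamma \neq 0$, hence
\begin{flalign*}
\begin{split}
\partial_x\varphi_\xi(x)\big|_{x=0} = \frac{\dot{u}_0}{\gamma}\,\partial_\xi\varphi_\xi(x)\big|_{x=0} = \frac{\dot{u}_0}{\gamma}\,\psi'(\xi)\,,
\end{split}
\end{flalign*}
where the last equality uses Definition \ref{t2.1.7}.

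Next I substitute this identity (and its $\zeta$-analog) into the formula of Lemma \ref{t2.1.5}. Since $\varphi_\xi(0) = \psi(\xi)$, the bracket becomes
\begin{flalign*}
\begin{split}
\varphi_\xi(x)\big(\partial_x\varphi_\zeta(x)\big) - \big(\partial_x\varphi_\xi(x)\big)\varphi_\zeta(x)\Big|_{x=0} = \frac{\dot{u}_0}{\gamma}\Big(\psi(\xi)\psi'(\zeta) - \psi'(\xi)\psi(\zeta)\Big)\,,
\end{split}
\end{flalign*}
and dividing by $\xi - \zeta$ yields precisely\eref{2.1.7}.

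There is no real obstacle here: once Lemma \ref{t2.1.5} is in hand the proposition is a one-line consequence of the chain rule applied to Condition \ref{t1.1.4}. The only subtlety worth flagging is that the dependence on $\xi$ enters $\varphi_\xi$ exclusively through $\gamma\xi$, which is what makes the $x$- and $\xi$-derivatives at $x=0$ rigid multiples of each other; without Condition \ref{t1.1.4} this step would fail, and one would have to replace the proportionality by a nontrivial chain-rule relation $\partial_x\varphi_\xi(x) = \operatorname{ch}(x,\xi)\,\partial_\xi\varphi_\xi(x)$ as anticipated in the outlook of \sref{sD}.
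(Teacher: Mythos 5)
Your argument is correct and is essentially identical to the paper's own proof: both derive the chain-rule relation $\partial_x\varphi_\xi(x)=\frac{\dot{u}(x)}{\gamma}\partial_\xi\varphi_\xi(x)$ from Condition \ref{t1.1.4}, evaluate it at $x=0$ to obtain $\partial_x\big|_{x=0}\varphi_\xi(x)=\frac{\dot{u}_0}{\gamma}\psi'(\xi)$, and substitute into Lemma \ref{t2.1.5}. No gap to report.
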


\begin{proof}
To begin with, under Condition \ref{t1.1.4}, we equate $\partial_x\varphi_\xi(x)=\dot{u}(x)\phi'\big(u(x)+\gamma \xi\big)$ with $\partial_\xi\varphi_\xi(x)=\gamma\phi'\big(u(x)+\gamma \xi\big)$, yielding
\begin{flalign}\label{2.1.8}
\begin{split}
\partial_x\varphi_\xi(x)=\frac{\dot{u}(x)}{\gamma}\partial_\xi\varphi_\xi(x)\,.
\end{split}
\end{flalign}
Thereby $\partial_x\big|_{x=0}\varphi_\xi(x)=\frac{\dot{u}_0}{\gamma}\psi'(\xi)$ and hence the result follows from Lemma \ref{t2.1.5}.
\end{proof}

\subsubsection{Diagonal Values and Derivatives of the Kernel}\label{s212}

So we are now equipped with a practical expression for the kernel. However this expression looks singular for the diagonal values of the kernel, whence we evaluate the limit as $\xi\rightarrow \zeta$.

\begin{lem}\label{t2.1.9}
The second derivative of $\psi$ reads
\begin{flalign}\label{2.1.9}
\begin{split}
\psi''(\xi)=\frac{\gamma^2}{\dot{u}_0^2}\bigg(\big(v_0+\xi\big)\psi(\xi)-\frac{\ddot{u}_0}{\gamma}\psi'(\xi)\bigg)\,.
\end{split}
\end{flalign}
\end{lem}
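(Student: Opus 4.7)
The plan is to mimic the strategy employed in the proof of Proposition \ref{t2.1.8}, but now going one order higher in $x$. Namely, I would first derive a second-order analogue of the chain-rule identity\eref{2.1.8}, expressing $\partial_x^2 \varphi_\xi(x)$ in terms of $\partial_\xi \varphi_\xi(x)$ and $\partial_\xi^2 \varphi_\xi(x)$, and then evaluate both sides of the Schrödinger equation\eref{1.1.1} at $x=0$ to equate the two resulting expressions for $\partial_x^2|_{x=0}\varphi_\xi(x)$.

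For the first step, differentiating $\varphi_\xi(x) = \phi(u(x)+\gamma\xi)$ twice in $x$ yields $\partial_x^2 \varphi_\xi(x) = \ddot{u}(x)\,\phi'(u(x)+\gamma\xi) + \dot{u}(x)^2\,\phi''(u(x)+\gamma\xi)$, which makes sense because $u,\phi\in C^2$ under Condition \ref{t1.1.4}. Using $\partial_\xi \varphi_\xi(x) = \gamma\,\phi'(u(x)+\gamma\xi)$ and $\partial_\xi^2 \varphi_\xi(x) = \gamma^2\,\phi''(u(x)+\gamma\xi)$ to convert the $\phi'$ and $\phi''$ into $\xi$-derivatives, one obtains $\partial_x^2 \varphi_\xi(x) = \tfrac{\ddot{u}(x)}{\gamma}\partial_\xi\varphi_\xi(x) + \tfrac{\dot{u}(x)^2}{\gamma^2}\partial_\xi^2\varphi_\xi(x)$. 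Evaluating at $x=0$ then provides $\partial_x^2|_{x=0}\varphi_\xi(x) = \tfrac{\ddot{u}_0}{\gamma}\psi'(\xi) + \tfrac{\dot{u}_0^2}{\gamma^2}\psi''(\xi)$.

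For the second step, evaluating\eref{1.1.1} at $x=0$ gives $\partial_x^2|_{x=0}\varphi_\xi(x) = (v_0+\xi)\psi(\xi)$, where Condition \ref{t1.1.6} is crucial to guarantee that $v_0$ carries no $\xi$-dependence and therefore no extra terms appear. Equating the two expressions for $\partial_x^2|_{x=0}\varphi_\xi(x)$ and solving for $\psi''(\xi)$ produces exactly\eref{2.1.9}.

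There is essentially no conceptual obstacle here: the derivation is a direct two-line computation. The only point that requires a bit of attention is bookkeeping the $\gamma$ factors in the chain-rule conversion and making sure that Condition \ref{t1.1.6} is invoked at the right moment so that $v_0$ is legitimately treated as $\xi$-independent when extracting $\psi''(\xi)$. Beyond that, the lemma is a purely algebraic consequence of\eref{1.1.1} and Condition \ref{t1.1.4}.
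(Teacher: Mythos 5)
Your proposal is correct and follows essentially the same route as the paper: differentiate $\varphi_\xi(x)=\phi\big(u(x)+\gamma\xi\big)$ twice in $x$, convert $\phi'$ and $\phi''$ into $\xi$-derivatives via $\partial_\xi\varphi_\xi=\gamma\phi'$ and $\partial_\xi^2\varphi_\xi=\gamma^2\phi''$, use\eref{1.1.1} at $x=0$, and solve for $\psi''$. The only small gloss is that the final step of solving for $\psi''(\xi)$ divides by $\dot{u}_0^2/\gamma^2$, which is where the paper explicitly invokes Condition \ref{t1.1.5} ($\dot{u}_0\neq0$) in addition to Condition \ref{t1.1.6}.
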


\begin{proof}
Observe that, under Condition \ref{t1.1.4} with $\omega\coloneqq u(x)+\gamma \xi$, we have $\partial_\xi\varphi_\xi(x)=\gamma\phi'(\omega)$ and $\partial_\xi^2\varphi_\xi(x)=\gamma^2\phi''(\omega)$. And this leads to
\begin{flalign}\label{2.1.10}
\begin{split}
\big(v(x;\xi)+\xi\big)\varphi_\xi(x)=\partial_x^2\varphi_\xi(x)=\big(\dot{u}(x)\big)^2\phi''(\omega)+\ddot{u}(x)\phi'(\omega)=\frac{\big(\dot{u}(x)\big)^2}{\gamma^2}\partial_\xi^2\varphi_\xi(x)+\frac{\ddot{u}(x)}{\gamma}\partial_\xi\varphi_\xi(x)\,,
\end{split}
\end{flalign}
where the first equality is\eref{1.1.1}. Thus, setting $x=0$, we obtain $(v_0+\xi)\psi(\xi)=\frac{\dot{u}_0^2}{\gamma^2}\psi''(\xi)+\frac{\ddot{u}_0}{\gamma}\psi'(\xi)$, and therefore, under Condition \ref{t1.1.5}, we reach the desired result.
\end{proof}

\begin{prop}\label{t2.1.10}
The diagonal values of the kernel are
\begin{flalign}\label{2.1.11}
\begin{split}
\Ki(\xi,\xi)=\frac{\dot{u}_0}{\gamma}\big(\psi'(\xi)\big)^2-\frac{\gamma}{\dot{u}_0}\big(v_0+\xi\big)\big(\psi(\xi)\big)^2+\frac{\ddot{u}_0}{\dot{u}_0}\psi(\xi)\psi'(\xi)\,.
\end{split}
\end{flalign}
\end{prop}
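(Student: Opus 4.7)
The plan is to obtain $\Ki(\xi,\xi)$ by taking the limit $\zeta\to\xi$ of the Christoffel--Darboux type formula \eref{2.1.7} from Proposition \ref{t2.1.8}, and then substitute the expression for $\psi''(\xi)$ given by Lemma \ref{t2.1.9} to arrive at the stated form.

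More concretely, I would first note that the numerator $N(\zeta)\coloneqq\psi(\xi)\psi'(\zeta)-\psi'(\xi)\psi(\zeta)$ appearing in \eref{2.1.7} vanishes at $\zeta=\xi$ (and is smooth enough since $\psi\in\sob^2(\R)$ by Condition \ref{t1.1.2}), so that an application of L'Hôpital's rule in the variable $\zeta$ gives
\begin{flalign*}
\Ki(\xi,\xi)=\lim_{\zeta\to\xi}\frac{\dot{u}_0}{\gamma}\frac{N(\zeta)}{\xi-\zeta}=-\frac{\dot{u}_0}{\gamma}N'(\xi)=\frac{\dot{u}_0}{\gamma}\Big(\big(\psi'(\xi)\big)^2-\psi(\xi)\psi''(\xi)\Big)\,.
\end{flalign*}
Equivalently, one could expand $\psi(\zeta)$ and $\psi'(\zeta)$ to first order in $(\zeta-\xi)$ around $\xi$ and observe that the $(\zeta-\xi)$ terms cancel against the denominator, giving the same combination.

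At this stage the second step is purely algebraic: plug the expression \eref{2.1.9} for $\psi''(\xi)$ from Lemma \ref{t2.1.9} into the product $\psi(\xi)\psi''(\xi)$, namely
\begin{flalign*}
\psi(\xi)\psi''(\xi)=\frac{\gamma^2}{\dot{u}_0^2}\big(v_0+\xi\big)\big(\psi(\xi)\big)^2-\frac{\gamma\,\ddot{u}_0}{\dot{u}_0^2}\psi(\xi)\psi'(\xi)\,,
\end{flalign*}
and simplify the resulting prefactors $\dot{u}_0/\gamma$ against $\gamma^2/\dot{u}_0^2$ and $\gamma\ddot{u}_0/\dot{u}_0^2$. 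This yields exactly \eref{1.3.10}.

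I do not expect any real obstacle here: Condition \ref{t1.1.5} guarantees $\dot{u}_0\neq 0$, so dividing by $\dot{u}_0$ in Lemma \ref{t2.1.9} is legitimate, and the regularity assumptions on $\psi$ in Condition \ref{t1.1.2} are more than enough to justify the Taylor expansion (or L'Hôpital step). The only point worth double-checking is the sign bookkeeping in the limit, since the denominator is $\xi-\zeta$ rather than $\zeta-\xi$, which is why the minus sign appears in front of $N'(\xi)$ and produces $+(\psi')^2$.
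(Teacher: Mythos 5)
Your proposal is correct and follows essentially the same route as the paper: the paper also evaluates the diagonal limit of the Christoffel--Darboux formula (written as difference quotients in $h$, i.e. $\Ki(\xi+h,\xi)$ with $h\to0$, which is your L'Hôpital/Taylor step in disguise) to get $\Ki(\xi,\xi)=\frac{\dot{u}_0}{\gamma}\big((\psi')^2-\psi\psi''\big)$, and then substitutes Lemma \ref{t2.1.9}. Your sign bookkeeping and the final simplification of the prefactors are both correct.
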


\begin{proof}
Let $h\in\R\char`\\\{0\}$, then Proposition \ref{t2.1.8} implies
\begin{flalign}\label{2.1.12}
\begin{split}
\Ki(\xi+h,\xi)=\frac{\dot{u}_0}{\gamma h}\Big(\psi(\xi+h)\psi'(\xi)-\psi'(\xi+h)\psi(\xi)\Big)=\frac{\dot{u}_0}{\gamma}\bigg(\frac{\psi(\xi+h)-\psi(\xi)}{h}\psi'(\xi)-\frac{\psi'(\xi+h)-\psi'(\xi)}{h}\psi(\xi)\bigg)\,.
\end{split}
\end{flalign}
Then taking the $h\rightarrow0$ limit yields $\Ki(\xi,\xi)=\frac{\dot{u}_0}{\gamma}\big(\psi'(\xi)\big)^2-\frac{\dot{u}_0}{\gamma}\psi''(\xi)\psi(\xi)$, so that Lemma \ref{t2.1.9} finalizes the derivation.
\end{proof}

Besides, another practical result is the following one.

\begin{prop}\label{t2.1.11}
The derivatives of the kernel obey
\begin{flalign}\label{2.1.13}
\begin{split}
\big(\partial_\xi+\partial_\zeta\big)\Ki(\xi,\zeta)=-\frac{\gamma}{\dot{u}_0}\bigg(\psi(\xi)\psi(\zeta)+\frac{\ddot{u}_0}{\dot{u}_0}\Ki(\xi,\zeta)\bigg)\,.
\end{split}
\end{flalign}
\end{prop}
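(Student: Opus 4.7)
The plan is to start from the Christoffel-Darboux type formula of Proposition \ref{t2.1.8}, apply the differential operator $\partial_\xi + \partial_\zeta$ directly, and then use Lemma \ref{t2.1.9} to convert the resulting second derivatives of $\psi$ back into an expression involving $\Ki(\xi,\zeta)$ and $\psi(\xi)\psi(\zeta)$.

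The first key observation is that $\partial_\xi + \partial_\zeta$ annihilates the singular factor $1/(\xi-\zeta)$, since $\partial_\xi\bigl(\xi-\zeta\bigr)^{-1}=-\partial_\zeta\bigl(\xi-\zeta\bigr)^{-1}$. Consequently, writing $N(\xi,\zeta)\coloneqq\psi(\xi)\psi'(\zeta)-\psi'(\xi)\psi(\zeta)$, I only need to compute $(\partial_\xi+\partial_\zeta)N$. The cross terms $\psi'(\xi)\psi'(\zeta)$ appear with opposite signs and cancel, leaving
\begin{flalign*}
\begin{split}
(\partial_\xi+\partial_\zeta)N(\xi,\zeta)=\psi(\xi)\psi''(\zeta)-\psi''(\xi)\psi(\zeta)\,.
\end{split}
\end{flalign*}

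Next, I would substitute the expression for $\psi''$ furnished by Lemma \ref{t2.1.9}, namely $\psi''(\xi)=\tfrac{\gamma^2}{\ud^2}\bigl((v_0+\xi)\psi(\xi)-\tfrac{\udd}{\gamma}\psi'(\xi)\bigr)$. The $v_0$ contributions cancel between the two pieces while the $\xi$ and $\zeta$ contributions combine into $(\zeta-\xi)\psi(\xi)\psi(\zeta)$, and the $\psi'$ pieces reassemble into $-\tfrac{\udd}{\gamma}N(\xi,\zeta)$. After dividing by $\xi-\zeta$ and multiplying by $\tfrac{\ud}{\gamma}$ to reconstruct $\Ki(\xi,\zeta)$, the factor $(\xi-\zeta)\psi(\xi)\psi(\zeta)$ produces the term $-\tfrac{\gamma}{\ud}\psi(\xi)\psi(\zeta)$, while the $N(\xi,\zeta)/(\xi-\zeta)$ piece regenerates $\Ki(\xi,\zeta)$ with prefactor $-\tfrac{\gamma}{\ud}\cdot\tfrac{\udd}{\ud}$, yielding exactly\eref{2.1.13}.

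There is no genuine obstacle here: the calculation is essentially one well-chosen substitution followed by bookkeeping, and the only thing to keep track of is that the two occurrences of $v_0$ must cancel (as they do, since $v_0$ is $\xi$-independent by Condition \ref{t1.1.6}) and that the numerator reorganizes cleanly into a multiple of $N(\xi,\zeta)$ plus a multiple of $(\xi-\zeta)\psi(\xi)\psi(\zeta)$. The mild subtlety is simply to recognize this reorganization so as to avoid handling the $1/(\xi-\zeta)$ factor as a genuine singularity.
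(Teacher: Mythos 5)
Your proposal is correct and follows essentially the same route as the paper: apply $\partial_\xi+\partial_\zeta$ to the Christoffel--Darboux formula of Proposition \ref{t2.1.8} (the $1/(\xi-\zeta)$ factor is annihilated and the cross terms cancel), then substitute Lemma \ref{t2.1.9} for $\psi''$, after which the $v_0$ terms cancel and the remainder reassembles into $(\zeta-\xi)\psi(\xi)\psi(\zeta)$ plus a multiple of the original numerator, reproducing\eref{2.1.13} with the correct prefactors.
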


\begin{proof}
Using Proposition \ref{t2.1.8} and then Lemma \ref{t2.1.9}, we obtain
\begin{flalign}\label{2.1.14}
\begin{split}
\big(\partial_\xi+\partial_\zeta\big)\Ki(\xi,\zeta)&=\frac{\dot{u}_0}{\gamma(\xi-\zeta)}\Big(\psi(\xi)\psi''(\zeta)-\psi''(\xi)\psi(\zeta)\Big) \\
&=\frac{\gamma}{\dot{u}_0(\xi-\zeta)}\bigg(\psi(\xi)\psi(\zeta)\big(\zeta-\xi\big)-\frac{\ddot{u}_0}{\gamma}\Big(\psi(\xi)\psi'(\zeta)-\psi'(\xi)\psi(\zeta)\Big)\bigg)\,,
\end{split}
\end{flalign}
completing the proof.
\end{proof}

\subsection{Corresponding Resolvent}\label{s22}

Here we focus on the following quantities which will play a central role in the derivation of the results of sections \sref{s3} and \sref{s4}.

\begin{defi}\label{t2.2.1}
We refer to the integral operator
\begin{flalign}\label{2.2.1}
\begin{split}
\Ro:\leb^2(\R)&\longrightarrow\leb^2(\R) \\
f&\longmapsto\Ko\big(\id-\Ko\big)^{-1}f
\end{split}
\end{flalign}
as the resolvent.
\end{defi}

\begin{defi}\label{t2.2.2}
The $(n,k)\in\Z_{\geq0}\times\Z_{\geq0}$ Auxiliary Wave Function (AWF) is defined by
\begin{flalign}\label{2.2.2}
\begin{split}
\af{n}{k}\coloneqq \Ro^n\big(\id-\Ko\big)^{-1}\Po^k\psi\,.
\end{split}
\end{flalign}
And, for a fixed $n$, we refer to $\{\af{n}{\alpha}\,\,;\,\,\alpha\in\{0,1\}\,\}$ as the $n^{\operatorname{th}}$ order AWF.
\end{defi}

\begin{nota}\label{t2.2.3}
From now on $a,b\in\{0,1,2\}$ and $\alpha,\beta\in\{0,1\}$.
\end{nota}

\begin{pro}\label{t2.2.4}
According to Condition \ref{t1.1.2}, $\af{n}{a}\in\leb^2(\R)$, $\forall n\in\Z_{\geq0}$.
\end{pro}

Our first main interest will be the kernels of $\Ro^n$, $\forall n\in\N$. Whence we first determine, in terms of the AWF, the resolvent kernel, $n=1$, and then obtain by induction the kernel of $\Ro^n$ for any $n$. These results will be at the heart of section \sref{s3}. Afterwards our interest will shift to the AWF, first evaluating their derivatives and then expressing $\af{n}{2}$ in terms of $\af{n}{0}$ and $\af{n}{1}$. Doing so shall involve quite a few inductions too for $n\in\N$, and thereby the case $n=0$ will be treated first so that we can get a feeling on how to tackle the derivations for higher $n$. And in turn these results will be at the heart of section \sref{s4}.

\subsubsection{Resolvent Kernels}\label{s221}

\begin{defi}\label{t2.2.5}
We denote by $\Ri_n:\R^2\rightarrow\R$ the kernel of $\Ro^n$, namely, $\forall f\in\leb^2(\R)$,
\begin{flalign}\label{2.2.3}
\begin{split}
\Ro^n:f\longmapsto\int_\R\Ri_n(\cdot,\zeta)f(\zeta)\,d\zeta\,.
\end{split}
\end{flalign}
And we call $\Ri_n$ the $n^{\operatorname{th}}$ order resolvent kernel.
\end{defi}

We now relate $\Ri_n$ to the AWF. Achieving this will require the following mappings.

\begin{defi}\label{t2.2.6}
Given any two functions $g\in\leb^2(\R)$ and $h\in\Hil$ for some Hilbert space $\Hil$, we define
\begin{flalign}\label{2.2.4}
\begin{split}
\T_{h,g}:\leb^2(\R)&\longrightarrow\Hil \\
f&\longmapsto\ipr{g}{f}h\,.
\end{split}
\end{flalign}
And if $g=h$, then we write $\T_g\coloneqq \T_{g,g}$.
\end{defi}

\begin{defi}\label{t2.2.7}
Let $\D(\Qo)\coloneqq \{f\in\leb^2(\R)\,\,;\,\,\leb^2(\R)\ni g:\xi\mapsto\xi f(\xi)\,\}$, then we define the multiplication operator as
\begin{flalign}\label{2.2.5}
\begin{split}
\Qo:\D(\Qo)&\longrightarrow\leb^2(\R)\,;\,\,\,\,\,\Qo f(\xi)\coloneqq \xi f(\xi)\,.
\end{split}
\end{flalign}
\end{defi}

\begin{nota}\label{t2.2.8}
Let us write $\rho\coloneqq \big(\id-\Ko\big)^{-1}$.
\end{nota}

\begin{pro}\label{t2.2.9}
Neumann series entail $\rho=\id+\Ro$.
\end{pro}

Our first step is to determine $\Ri_1$ in terms of the $0^{\operatorname{th}}$ order AWF. During this section and the next one, \sref{s3}, we consider
\begin{flalign}\label{2.2.6}
\begin{split}
\I=\bigsqcup_{j=1}^m[\tau_{2j-1};\tau_{2j}]\,.
\end{split}
\end{flalign}

\begin{lem}\label{t2.2.10}
The commutator of $\Qo$ and $\Ko$ admits the following representation,
\begin{flalign}\label{2.2.7}
\begin{split}
[\Qo,\Ko]=\frac{\dot{u}_0}{\gamma}\big(\T_{\psi,\psi'}-\T_{\psi',\psi}\big)\Pi_{\I}\,.
\end{split}
\end{flalign}
\end{lem}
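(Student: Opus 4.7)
The plan is to compute the action of $[\Qo,\Ko]$ on an arbitrary test function $f \in \leb^2(\R)$ and read off the operator identity from the resulting integral representation. Since $\Ko = \Ko_0 \Pi_\I$ and $\Qo$ is multiplication by the argument, a direct calculation gives
\begin{flalign*}
\big([\Qo,\Ko]f\big)(\xi) = \int_\I (\xi-\zeta)\,\Ki(\xi,\zeta)\,f(\zeta)\,d\zeta\,,
\end{flalign*}
so the entire point is to understand the factor $(\xi-\zeta)\Ki(\xi,\zeta)$, and here the Christoffel-Darboux type formula of Proposition \ref{t2.1.8} does all the work: the denominator $\xi-\zeta$ cancels and one obtains the polynomial-type expression
\begin{flalign*}
(\xi-\zeta)\,\Ki(\xi,\zeta) = \frac{\dot{u}_0}{\gamma}\big(\psi(\xi)\psi'(\zeta) - \psi'(\xi)\psi(\zeta)\big)\,.
\end{flalign*}

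Next I would substitute this factorized expression back into the integral and split it into two pieces. The piece proportional to $\psi(\xi)$ reads $\psi(\xi)\int_\I \psi'(\zeta)f(\zeta)\,d\zeta = \psi(\xi)\,\ipr{\psi'}{\Pi_\I f}$, which by Definition \ref{t2.2.6} is precisely $\big(\T_{\psi,\psi'}\Pi_\I f\big)(\xi)$. The other piece analogously yields $-\big(\T_{\psi',\psi}\Pi_\I f\big)(\xi)$. Collecting these two contributions with the overall prefactor $\dot{u}_0/\gamma$ gives the claimed identity, since $f$ was arbitrary.

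The only technical point worth verifying, rather than a genuine obstacle, is that both sides are well-defined as operators on $\D(\Qo)\cap\leb^2(\R)$: on the right-hand side, Condition \ref{t1.1.2} ensures $\psi,\psi'\in\leb^2(\R)$ so the rank-two operators $\T_{\psi,\psi'}$ and $\T_{\psi',\psi}$ make sense on all of $\leb^2(\R)$, while on the left the cancellation $(\xi-\zeta)\Ki(\xi,\zeta)$ removes the potentially singular $\xi$-factor before $\Qo$ is applied, so the composition is well-defined for $f\in\leb^2(\R)$ (not merely $\D(\Qo)$). This is essentially why the commutator admits a closed form: although neither $\Qo\Ko$ nor $\Ko\Qo$ is bounded in isolation, the Christoffel-Darboux cancellation makes the difference bounded, and in fact of rank at most two.
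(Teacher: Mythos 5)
Your proposal is correct and follows essentially the same route as the paper: apply the commutator to a test function, invoke the Christoffel--Darboux formula of Proposition \ref{t2.1.8} to cancel the factor $\xi-\zeta$, and identify the two resulting integrals with $\T_{\psi,\psi'}\Pi_{\I}$ and $\T_{\psi',\psi}\Pi_{\I}$ via Definition \ref{t2.2.6}. Your closing remark on domains and the rank-two structure is a harmless addition beyond what the paper records.
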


\begin{proof}
Noticing that $\Ko:f\mapsto\int_\R\Ki(\cdot,\zeta)\ind{\I}(\zeta)f(\zeta)\,d\zeta$, we then observe that, $\forall f\in\D(\Qo)$,
\begin{flalign}\label{2.2.8}
\begin{split}
[\Qo,\Ko]f(\xi)&=\int_{\I}\Big(\xi\Ki(\xi,\zeta)f(\zeta)-\Ki(\xi,\zeta)\zeta f(\zeta)\Big)\,d\zeta=\frac{\dot{u}_0}{\gamma}\int_{\I}\Big(\psi(\xi)\psi'(\zeta)-\psi'(\xi)\psi(\zeta)\Big)f(\zeta)\,d\zeta \\
&=\frac{\dot{u}_0}{\gamma}\Big(\ipr{\psi'}{\Pi_{\I}f}\psi(\xi)-\ipr{\psi}{\Pi_{\I}f}\psi'(\xi)\Big)=\frac{\dot{u}_0}{\gamma}\big(\T_{\psi,\psi'}-\T_{\psi',\psi}\big)\Pi_{\I}f(\xi)\,,
\end{split}
\end{flalign}
where we used Proposition \ref{t2.1.8}.
\end{proof}

\begin{lem}\label{t2.2.11}
Given any operator $\OO$, its commutator with $\Ro$ agrees with its commutator with $\rho$, namely
\begin{flalign}\label{2.2.9}
\begin{split}
[\OO,\Ro]=\rho[\OO,\Ko]\rho=[\OO,\rho]\,.
\end{split}
\end{flalign}
\end{lem}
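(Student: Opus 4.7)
The plan is to establish both equalities in a short algebraic argument, using only Proposition~\ref{t2.2.9} (which identifies $\Ro = \rho - \id$) and the fact that $\rho$ is by construction the inverse of $\id - \Ko$. No new analytic ingredient is needed, only careful bookkeeping of the two identities.

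First I would deduce the outer equality $[\OO,\Ro] = [\OO,\rho]$ as an immediate consequence of Proposition~\ref{t2.2.9}: since $\Ro = \rho - \id$ and $\id$ commutes with everything, the commutator of $\OO$ with $\Ro$ coincides with the commutator of $\OO$ with $\rho$. This disposes of one of the two equalities at essentially no cost, and reduces the lemma to proving the middle identity $[\OO,\rho] = \rho[\OO,\Ko]\rho$.

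For the middle identity I would use the standard trick for commutators with inverses. Writing $\rho^{-1} = \id - \Ko$, I insert $\rho\rho^{-1}$ on the left and $\rho^{-1}\rho$ on the right of the appropriate term:
\begin{flalign*}
[\OO,\rho] = \OO\rho - \rho\OO = \rho\rho^{-1}\OO\rho - \rho\OO\rho^{-1}\rho = -\rho[\OO,\rho^{-1}]\rho = -\rho[\OO,\id-\Ko]\rho = \rho[\OO,\Ko]\rho\,,
\end{flalign*}
where in the last step $[\OO,\id]=0$. Combined with the first step, this yields the chain $[\OO,\Ro] = [\OO,\rho] = \rho[\OO,\Ko]\rho$, which is exactly the statement.

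The only delicate point is operator-theoretic: the manipulation assumes that the various compositions $\OO\rho$, $\rho\OO$, $\rho\OO\rho^{-1}$, etc. are all defined on a sufficiently large common domain, which is why the lemma is stated with the qualifier ``given any operator $\OO$'' understood on the appropriate domain; since $\rho$ is bounded on $\leb^2(\R)$ under Condition~\ref{t1.1.2}, the only genuine constraint comes from $\OO$ itself (in the applications in \sref{s22}, $\OO$ will be either $\Qo$ or $\Po^2$, acting on $\D(\Qo)$ or $\sob^2(\R)$), and the identity is understood on that domain. There is no real obstacle here; the lemma is essentially a formal consequence of $\rho = (\id-\Ko)^{-1}$.
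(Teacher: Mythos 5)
Your proposal is correct and follows essentially the same elementary route as the paper: both rest on the identity $\rho^{-1}=\id-\Ko$ and the sandwich manipulation $\rho\rho^{-1}\,\cdot\,\rho^{-1}\rho$, the only cosmetic difference being that you dispose of $[\OO,\Ro]=[\OO,\rho]$ at once via $\Ro=\rho-\id$ (Property~\ref{t2.2.9}), whereas the paper computes $[\OO,\Ro]$ directly from $\Ro=\Ko\rho$ and observes the same trick handles $[\OO,\rho]$. Your closing remark on domains is a reasonable (and harmless) addition that the paper leaves implicit.
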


\begin{proof}
Using $\rho^{-1}=\big(\id-\Ko\big)$ and $\Ro=\Ko\rho$, we compute
\begin{flalign}\label{2.2.10}
\begin{split}
[\OO,\Ro]=\rho\Big(\big(\id-\Ko\big)\OO\Ko-\Ko\OO\big(\id-\Ko\big)\Big)\rho\,,
\end{split}
\end{flalign}
and observe that the same trick can be applied for $[\OO,\rho]$ and arrives at the same result.
\end{proof}

\begin{lem}\label{t2.2.12}
The operators $\Ko$, $\rho$, and $\Ro$ are related to their adjoint by the same relation, namely
\begin{flalign}\label{2.2.11}
\begin{split}
\Pi_{\I}\Ko=\Ko^*\Pi_{\I}\,,\,\,\,\,\,\Pi_{\I}\rho=\rho^*\Pi_{\I}\,,\,\,\,\,\,\Pi_{\I}\Ro=\Ro^*\Pi_{\I}\,.
\end{split}
\end{flalign}
\end{lem}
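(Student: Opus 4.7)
The plan is to exploit the symmetry $\Ki(\xi,\zeta)=\Ki(\zeta,\xi)$ of the kernel, which is apparent from the defining formula\eref{1.1.3} (or equivalently from the Christoffel–Darboux type expression of Proposition \ref{t2.1.8}). This symmetry together with real-valuedness means that the integral operator $\Ko_0$ acting on $\leb^2(\R)$ is self-adjoint, $\Ko_0=\Ko_0^*$. Combined with the obvious self-adjointness of the projector $\Pi_{\I}$, the three assertions should reduce to elementary algebraic manipulations.

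First I would treat the $\Ko$ identity. Since $\Ko=\Ko_0\Pi_{\I}$, taking adjoints yields $\Ko^*=\Pi_{\I}^*\Ko_0^*=\Pi_{\I}\Ko_0$. Then, using $\Pi_{\I}^2=\Pi_{\I}$, one computes
\begin{flalign*}
\Ko^*\Pi_{\I}=\Pi_{\I}\Ko_0\Pi_{\I}=\Pi_{\I}(\Ko_0\Pi_{\I})=\Pi_{\I}\Ko\,,
\end{flalign*}
establishing the first equality. Next, for $\rho=(\id-\Ko)^{-1}$, the previous identity rewrites as $\Pi_{\I}(\id-\Ko)=(\id-\Ko^*)\Pi_{\I}$; left-multiplying by $(\id-\Ko^*)^{-1}=\rho^*$ and right-multiplying by $\rho$ gives $\rho^*\Pi_{\I}=\Pi_{\I}\rho$, which is the second equality.

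Finally, for the resolvent $\Ro=\Ko\rho$, I would use the fact that $\Ko$ commutes with $\rho$ (being a function of itself), so $\Ro=\Ko\rho=\rho\Ko$ and hence $\Ro^*=\rho^*\Ko^*=\Ko^*\rho^*$. Then
\begin{flalign*}
\Pi_{\I}\Ro=\Pi_{\I}\Ko\rho=\Ko^*\Pi_{\I}\rho=\Ko^*\rho^*\Pi_{\I}=\Ro^*\Pi_{\I}\,,
\end{flalign*}
where I successively invoked the first equality, the second equality, and the commutation just noted.

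There is no substantial obstacle here: the only conceptual input is that $\Ko_0$ is self-adjoint, which follows directly from the symmetry of $\Ki(\xi,\zeta)$ in its two arguments. Once that observation is in place, everything else is routine operator algebra, and the three identities are obtained in a cascading fashion, each using only the preceding ones and the self-adjointness of $\Pi_{\I}$.
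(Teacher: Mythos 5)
Your proof is correct and follows essentially the same route as the paper: symmetry of $\Ki$ gives $\Ko_0^*=\Ko_0$, hence $\Ko^*=\Pi_{\I}\Ko_0$ and the $\Ko$ identity, after which the $\rho$ and $\Ro$ identities cascade. The only difference is the $\rho$ step: the paper deduces $\Pi_{\I}\rho=\rho^*\Pi_{\I}$ by applying the $\Ko$ relation termwise to the Neumann series $\rho=\sum_{j\geq0}(\Ko_0\Pi_{\I})^j$, whereas you rewrite the first identity as $\Pi_{\I}(\id-\Ko)=(\id-\Ko^*)\Pi_{\I}$ and multiply by $\rho^*$ on the left and $\rho$ on the right; your version uses only the invertibility of $\id-\Ko$ (and $(\rho)^*=(\id-\Ko^*)^{-1}$ for bounded operators), so it does not rely on convergence of the Neumann expansion, which is a marginally more robust justification of the same fact.
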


\begin{proof}
The definition of $\Ki$,\eref{1.1.3}, directly entails $\Ki(\xi,\zeta)=\Ki(\zeta,\xi)$, ensuring $\Ko_0^*=\Ko_0$. Therefore, since $\Pi_{\I}^*=\Pi_{\I}$, it follows that $\Ko^*=\Pi_{\I}\Ko_0$, which is yielding
\begin{flalign}\label{2.2.12}
\begin{split}
\Pi_{\I}\Ko=\Pi_{\I}\Ko_0\Pi_{\I}=\Ko^*\Pi_{\I}\,.
\end{split}
\end{flalign}
Furthermore this leads to same relation for $\rho$ as a consequence of Neumann series, $\rho=\sum_{j=0}^\infty\big(\Ko_0\Pi_{\I}\big)^j$. And, since $\rho$ commutes with $\Ko$, this in turn leads us to the same relation for $\Ro$.
\end{proof}

\begin{nota}\label{t2.2.13}
For any two operators $\OO_1$ and $\OO_2$ such that their commutator is an integral operator, we denote its kernel by $\Ci\big(\OO_1,\OO_2\,;\,\xi,\zeta\big)$, i.e.
\begin{flalign}\label{2.2.13}
\begin{split}
[\OO_1,\OO_2]:f\longmapsto\int_\R\Ci\big(\OO_1,\OO_2\,;\,\,\cdot\,,\zeta\big)f(\zeta)\,d\zeta\,.
\end{split}
\end{flalign}
\end{nota}

\begin{lem}\label{t2.2.14}
The first order resolvent kernel can be expressed in terms of the zeroth order AWF as follows,
\begin{flalign}\label{2.2.14}
\begin{split}
\Ri_1(\xi,\zeta)=\frac{\dot{u}_0}{\gamma}\frac{\af{0}{0}(\xi)\af{0}{1}(\zeta)-\af{0}{1}(\xi)\af{0}{0}(\zeta)}{\xi-\zeta}\ind{\I}(\zeta)\,.
\end{split}
\end{flalign}
\end{lem}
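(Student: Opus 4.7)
The target identity has exactly the same Christoffel-Darboux shape as the kernel formula for $\Ki$ obtained in Proposition \ref{t2.1.8}, with $\psi,\psi'$ now replaced by $\af{0}{0},\af{0}{1}$. This suggests reproducing the mechanism of that earlier proof, but lifted one level up: instead of reading off the kernel of $\Ko$ from $[\Qo,\Ko]$, I would read off the kernel of $\Ro$ from $[\Qo,\Ro]$. The fact that the commutator of $\Qo$ with the resolvent stays essentially as simple as its commutator with $\Ko$ is exactly what Lemma \ref{t2.2.11} provides.

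The plan is therefore as follows. First, invoke Lemma \ref{t2.2.11} with $\OO=\Qo$ to get $[\Qo,\Ro]=\rho[\Qo,\Ko]\rho$, and substitute Lemma \ref{t2.2.10} to obtain
\begin{flalign*}
[\Qo,\Ro]=\frac{\ud}{\gamma}\,\rho\big(\T_{\psi,\psi'}-\T_{\psi',\psi}\big)\Pi_{\I}\,\rho\,.
\end{flalign*}
Next, push the projector through to the right using the adjoint relation $\Pi_{\I}\rho=\rho^*\Pi_{\I}$ from Lemma \ref{t2.2.12}, yielding an expression of the form $\rho\big(\T_{\psi,\psi'}-\T_{\psi',\psi}\big)\rho^*\Pi_{\I}$. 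Then, establish the elementary identity $\rho\,\T_{h,g}\,\rho^*=\T_{\rho h,\rho g}$, which follows directly from the definition of $\T_{h,g}$ by a one-line calculation: for any $f$, $\rho\,\T_{h,g}\rho^*f=\rho\big(\ipr{g}{\rho^* f}h\big)=\ipr{\rho g}{f}\rho h=\T_{\rho h,\rho g}f$. Applying this to the two terms and recognizing, via Definition \ref{t2.2.2}, that $\rho\psi=\af{0}{0}$ and $\rho\psi'=\af{0}{1}$, one arrives at
\begin{flalign*}
[\Qo,\Ro]=\frac{\ud}{\gamma}\big(\T_{\af{0}{0},\af{0}{1}}-\T_{\af{0}{1},\af{0}{0}}\big)\Pi_{\I}\,.
\end{flalign*}

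Finally, compare integral kernels. Since $\Qo$ is multiplication by $\xi$, the kernel of $[\Qo,\Ro]$ is $(\xi-\zeta)\,\Ri_1(\xi,\zeta)$, whereas the kernel of the right-hand side is, by construction of $\T$ and $\Pi_{\I}$, exactly $\big(\af{0}{0}(\xi)\af{0}{1}(\zeta)-\af{0}{1}(\xi)\af{0}{0}(\zeta)\big)\ind{\I}(\zeta)$. Dividing by $\xi-\zeta$ recovers the claimed formula off the diagonal, and the diagonal value is then obtained by continuity as in the proof of Proposition \ref{t2.1.10}. The only real technicality is ensuring that the adjoint-shifting identity $\rho\,\T_{h,g}\,\rho^*=\T_{\rho h,\rho g}$ is applied only on $\leb^2(\R)$ vectors it makes sense on, which is guaranteed by Condition \ref{t1.1.2} (giving $\psi,\psi'\in\leb^2(\R)$) together with Property \ref{t2.2.4} (giving $\af{0}{0},\af{0}{1}\in\leb^2(\R)$). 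No step beyond this bookkeeping should present an obstacle; the argument is essentially the Christoffel-Darboux proof of Proposition \ref{t2.1.8} conjugated by $\rho$.
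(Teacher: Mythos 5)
Your proposal is correct and follows essentially the same route as the paper: compute $[\Qo,\Ro]=\rho[\Qo,\Ko]\rho$ via Lemmas \ref{t2.2.11} and \ref{t2.2.10}, move the projector with $\Pi_{\I}\rho=\rho^*\Pi_{\I}$ from Lemma \ref{t2.2.12}, identify $\rho\psi=\af{0}{0}$, $\rho\psi'=\af{0}{1}$, and compare with the kernel $(\xi-\zeta)\Ri_1(\xi,\zeta)$ of $[\Qo,\Ro]$. The only difference is cosmetic: you package the inner-product manipulation as the operator identity $\rho\,\T_{h,g}\,\rho^*=\T_{\rho h,\rho g}$, whereas the paper carries it out pointwise against a test function $f$.
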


\begin{proof}
To begin with we have, $\forall f\in\D(\Qo)$,
\begin{flalign}\label{2.2.15}
\begin{split}
[\Qo,\Ro]f(\xi)=\int_\R\big(\xi-\zeta\big)\Ri_1(\xi,\zeta)f(\zeta)\,d\zeta\,.
\end{split}
\end{flalign}
Hence $[\Qo,\Ro]$ is an integral operator and its kernel satisfies the following kernel relation,
\begin{flalign}\label{2.2.16}
\begin{split}
\Ri_1(\xi,\zeta)=\frac{\Ci\big(\Qo,\Ro\,;\,\xi,\zeta\big)}{\xi-\zeta}\,.
\end{split}
\end{flalign}
Besides, using Lemmas \ref{t2.2.11} and \ref{t2.2.10}, we are led to, $\forall f\in\D(\Qo)$,
\begin{flalign}\label{2.2.17}
\begin{split}
[\Qo,\Ro]f(\xi)&=\frac{\dot{u}_0}{\gamma}\rho\big(\T_{\psi,\psi'}-\T_{\psi',\psi}\big)\Pi_{\I}\rho f(\xi)=\frac{\dot{u}_0}{\gamma}\Big(\ipr{\psi'}{\Pi_{\I}\rho f}\rho\psi(\xi)-\ipr{\psi}{\Pi_{\I}\rho f}\rho\psi'(\xi)\Big) \\
&=\frac{\dot{u}_0}{\gamma}\Big(\ipr{\rho\psi'}{\Pi_{\I}f}\rho\psi(\xi)-\ipr{\rho\psi}{\Pi_{\I}f}\rho\psi'(\xi)\Big)=\frac{\dot{u}_0}{\gamma}\Big(\ipr{\af{0}{1}}{\Pi_{\I}f}\af{0}{0}(\xi)-\ipr{\af{0}{0}}{\Pi_{\I}f}\af{0}{1}(\xi)\Big) \\
&=\frac{\dot{u}_0}{\gamma}\int_\R\Big(\af{0}{0}(\xi)\af{0}{1}(\zeta)-\af{0}{1}(\xi)\af{0}{0}(\zeta)\Big)\ind{\I}(\zeta)f(\zeta)\,d\zeta\,,
\end{split}
\end{flalign}
where we also used Lemma \ref{t2.2.12} and Definition \ref{t2.2.2}. Finally, injecting this in\eref{2.2.16} yields the result.
\end{proof}

Once again this is a Christoffel-Darboux type formula which looks singular on the diagonal, whence we still need to evaluate the limit $\xi\rightarrow\zeta$.

\begin{lem}\label{t2.2.15}
The diagonal values of the first order resolvent kernel can be expressed as
\begin{flalign}\label{2.2.18}
\begin{split}
\Ri_1(\xi,\xi)=\frac{\dot{u}_0}{\gamma}\bigg(\big(\partial_\xi\af{0}{0}(\xi)\big)\af{0}{1}(\xi)-\big(\partial_\xi\af{0}{1}(\xi)\big)\af{0}{0}(\xi)\bigg)\ind{\I}(\xi)\,.
\end{split}
\end{flalign}
\end{lem}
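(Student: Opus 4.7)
The plan is to mimic exactly the strategy used for Proposition \ref{t2.1.10}: evaluate the off-diagonal formula of Lemma \ref{t2.2.14} at $\zeta=\xi+h$ with $h\in\R\char`\\\{0\}$, rewrite the numerator as a genuine finite difference by adding and subtracting the identically vanishing quantity $\af{0}{0}(\xi)\af{0}{1}(\xi)-\af{0}{1}(\xi)\af{0}{0}(\xi)=0$, and then pass to the limit $h\to 0$.

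Concretely, I would start from
\begin{flalign*}
\Ri_1(\xi,\xi+h)=\frac{\dot{u}_0}{-\gamma h}\Big(\af{0}{0}(\xi)\af{0}{1}(\xi+h)-\af{0}{1}(\xi)\af{0}{0}(\xi+h)\Big)\ind{\I}(\xi+h)\,,
\end{flalign*}
subtract the vanishing term mentioned above inside the parentheses, and regroup the result as
\begin{flalign*}
\Ri_1(\xi,\xi+h)=\frac{\dot{u}_0}{\gamma}\bigg(\af{0}{1}(\xi)\,\frac{\af{0}{0}(\xi+h)-\af{0}{0}(\xi)}{h}-\af{0}{0}(\xi)\,\frac{\af{0}{1}(\xi+h)-\af{0}{1}(\xi)}{h}\bigg)\ind{\I}(\xi+h)\,.
\end{flalign*}
Provided $\af{0}{0}$ and $\af{0}{1}$ are differentiable at $\xi$, letting $h\to 0$ immediately yields the claimed Christoffel-Darboux-type diagonal formula. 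The indicator factor presents no genuine issue: for $\xi$ in the interior of $\I$ the factor $\ind{\I}(\xi+h)$ equals $1$ for all sufficiently small $h$, while for $\xi\notin\overline{\I}$ both the right- and left-hand sides vanish; at the boundary points $\xi\in\partial\I$ the kernel of $\Ro$ is only a.e.\ defined, so the stated identity holds almost everywhere, which is what is needed.

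The only subtle point is ensuring the pointwise differentiability of the zeroth-order AWF used to take the limit. This is where I would invoke the regularity hypotheses gathered in Condition \ref{t1.1.2} (namely $\psi\in\sob^2(\R)\cap\D(\Qo)$) together with the fact that $\rho=(\id-\Ko)^{-1}$ is a bounded operator mapping sufficiently regular data to sufficiently regular output: since $\af{0}{0}=\rho\psi$ and $\af{0}{1}=\rho\Po\psi=\rho\psi'$ with $\psi,\psi'\in\sob^1(\R)$, the AWF inherit enough regularity in $\xi$ to justify the $h\to 0$ limit on the interior of $\I$. This is essentially the same mild technicality that already underpinned the derivation in Proposition \ref{t2.1.10} and is consistent with the framework adopted throughout Section~\sref{s22}, so I would not expect any real obstacle here beyond citing the relevant regularity conditions.
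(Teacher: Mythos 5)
Your proposal is correct and follows essentially the same route as the paper: evaluate the Christoffel--Darboux-type formula of Lemma \ref{t2.2.14} at arguments separated by $h$, rewrite the numerator as difference quotients, and let $h\to0$, exactly mirroring Proposition \ref{t2.1.10}. The only cosmetic difference is that the paper shifts the first argument, writing $\Ri_1(\xi+h,\xi)$, which keeps the indicator fixed at $\ind{\I}(\xi)$ and so avoids the small boundary discussion your choice $\Ri_1(\xi,\xi+h)$ forces you to include.
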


\begin{proof}
We proceed similarly to the proof of Proposition \ref{t2.1.10}, setting $h\in\R\char`\\\{0\}$, Lemma \ref{t2.2.14} implies
\begin{flalign}\label{2.2.19}
\begin{split}
\Ri_1(\xi+h,\xi)=\frac{\dot{u}_0}{\gamma}\bigg(\frac{\af{0}{0}(\xi+h)-\af{0}{0}(\xi)}{h}\af{0}{1}(\xi)-\frac{\af{0}{1}(\xi+h)-\af{0}{1}(\xi)}{h}\af{0}{0}(\xi)\bigg)\ind{\I}(\xi)\,.
\end{split}
\end{flalign}
And the $h\rightarrow0$ limit completes the proof.
\end{proof}

We turn to the evaluation of $\Ri_n$, proceeding by induction.

\begin{rem}\label{t2.2.16}
For any $F$ and $G$ we have
\begin{flalign}\label{2.2.20}
\begin{split}
\sum_{k=0}^{n-1}F(n-k-1)\,\,G(k)=\sum_{k=1}^nF(n-k)\,\,G(k-1)=\sum_{k=1}^nF(k-1)\,\,G(n-k)\,.
\end{split}
\end{flalign}
\end{rem}

\begin{lem}\label{t2.2.17}
The commutator of $\Qo$ and $\Ro^n$ can be evaluated by
\begin{flalign}\label{2.2.21}
\begin{split}
[\Qo,\Ro^n]=\frac{\dot{u}_0}{\gamma}\rho\Bigg(\sum_{k=0}^{n-1}\Ro^{n-k-1}\big(\T_{\psi,\psi'}-\T_{\psi',\psi}\big)\Pi_{\I}\Ro^k\Bigg)\rho\,.
\end{split}
\end{flalign}
\end{lem}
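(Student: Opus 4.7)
The plan is to prove the identity by induction on $n$, using the $n=1$ case (which combines Lemmas \ref{t2.2.11} and \ref{t2.2.10}) as the base case, together with the observation that $\rho$ commutes with $\Ro$. More specifically, since $\Ko$ commutes with $\id-\Ko$, it also commutes with $\rho=(\id-\Ko)^{-1}$, and hence $\Ro=\Ko\rho=\rho\Ko$ commutes with $\rho$; in particular, $\Ro^j\rho=\rho\Ro^j$ for every $j\in\N$. This remark is what will let the outer copies of $\rho$ be factored out of the sum.

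For the base case $n=1$, Lemma \ref{t2.2.11} yields $[\Qo,\Ro]=\rho[\Qo,\Ko]\rho$, into which we inject Lemma \ref{t2.2.10} to obtain exactly the stated formula with the single term $k=0$. For the induction step, I would expand
\begin{flalign*}
[\Qo,\Ro^{n+1}]=[\Qo,\Ro^n]\,\Ro+\Ro^n\,[\Qo,\Ro]\,,
\end{flalign*}
substitute the induction hypothesis in the first term and the $n=1$ formula in the second term, and then use the commutativity $\Ro^n\rho=\rho\Ro^n$ to pull $\rho$ to the left in the second term. After shifting the summation index $k\mapsto k-1$ in the first term, the two pieces assemble into a single sum $\sum_{k=0}^{n}\Ro^{n-k}\big(\T_{\psi,\psi'}-\T_{\psi',\psi}\big)\Pi_{\I}\Ro^k$ sandwiched between $\rho$ on the left and $\rho$ on the right, which is the desired expression at order $n+1$.

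Alternatively, one can proceed directly by writing the telescoping identity
\begin{flalign*}
[\Qo,\Ro^n]=\sum_{k=0}^{n-1}\Ro^{n-k-1}\,[\Qo,\Ro]\,\Ro^k\,,
\end{flalign*}
substituting $[\Qo,\Ro]=\tfrac{\ud}{\gamma}\,\rho\big(\T_{\psi,\psi'}-\T_{\psi',\psi}\big)\Pi_{\I}\rho$ into each summand, and then sliding $\Ro^{n-k-1}\rho\mapsto\rho\Ro^{n-k-1}$ on the left and $\rho\Ro^k\mapsto\Ro^k\rho$ on the right (again thanks to $[\rho,\Ro]=0$), which factorizes a single $\rho$ out on each side of the sum and produces the claimed formula in one stroke.

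There is no real obstacle here; the only subtle point to verify carefully is the commutativity $[\rho,\Ro]=0$, which ensures that the outer factors of $\rho$ arising from each individual commutator $[\Qo,\Ro]$ collapse to a single pair of outer $\rho$'s. Both routes above are essentially equivalent; I would favor the direct telescoping approach since it yields the result in a single line of algebra once the commutativity lemma is invoked, whereas the inductive argument requires an explicit index shift.
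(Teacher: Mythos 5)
Your proposal is correct and follows essentially the same route as the paper: induction on $n$ with the base case $[\Qo,\Ro]=\frac{\dot{u}_0}{\gamma}\rho\big(\T_{\psi,\psi'}-\T_{\psi',\psi}\big)\Pi_{\I}\rho$ from Lemmas \ref{t2.2.11} and \ref{t2.2.10}, a Leibniz splitting of $[\Qo,\Ro^{n}]$, and the commutativity $[\rho,\Ro]=0$ (which the paper uses implicitly) to merge the outer factors of $\rho$; the only cosmetic difference is that the paper splits as $\Ro[\Qo,\Ro^{n-1}]+[\Qo,\Ro]\Ro^{n-1}$ while you split on the other side. Your telescoping alternative is a valid repackaging of the same computation and would work equally well.
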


\begin{proof}
Lemmas \ref{t2.2.10} and \ref{t2.2.11} initiate the induction, $[\Qo,\Ro]=\frac{\dot{u}_0}{\gamma}\rho\big(\T_{\psi,\psi'}-\T_{\psi',\psi}\big)\Pi_{\I}\rho$. Then assuming that the result holds for $n-1$ allows us to compute
\begin{flalign}\label{2.2.22}
\begin{split}
[\Qo,\Ro^n]=\Ro[\Qo,\Ro^{n-1}]+[\Qo,\Ro]\Ro^{n-1}=\frac{\dot{u}_0}{\gamma}\rho\Bigg(\sum_{k=0}^{n-2}\Ro^{n-k-1}\big(\T_{\psi,\psi'}-\T_{\psi',\psi}\big)\Pi_{\I}\Ro^k\Bigg)\rho+\frac{\dot{u}_0}{\gamma}\rho\big(\T_{\psi,\psi'}-\T_{\psi',\psi}\big)\Pi_{\I}\Ro^{n-1}\rho\,,
\end{split}
\end{flalign}
where we also used the initiation again.
\end{proof}

\begin{prop}\label{t2.2.18}
The $n^{\operatorname{th}}$ order resolvent kernel can be expressed as
\begin{flalign}\label{2.2.23}
\begin{split}
\Ri_n(\xi,\zeta)=\frac{\dot{u}_0\ind{\I}(\zeta)}{\gamma(\xi-\zeta)}\sum_{k=1}^n\Big(\af{n-k}{0}(\xi)\af{k-1}{1}(\zeta)-\af{n-k}{1}(\xi)\af{k-1}{0}(\zeta)\Big)\,.
\end{split}
\end{flalign}
\end{prop}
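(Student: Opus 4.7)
The plan is to run the same argument as in the proof of Lemma \ref{t2.2.14} but with Lemma \ref{t2.2.17} in place of Lemmas \ref{t2.2.10} and \ref{t2.2.11}. First I would observe that, exactly as in \eref{2.2.15}, for any $f\in\D(\Qo)$ one has $[\Qo,\Ro^n]f(\xi)=\int_\R(\xi-\zeta)\Ri_n(\xi,\zeta)f(\zeta)\,d\zeta$, so that
\[\Ri_n(\xi,\zeta)=\frac{\Ci\big(\Qo,\Ro^n\,;\,\xi,\zeta\big)}{\xi-\zeta}\,,\]
and the entire task reduces to identifying the kernel of $[\Qo,\Ro^n]$.

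Next I would unpack the right-hand side of Lemma \ref{t2.2.17}. Using $\T_{h,g}f=\ipr{g}{f}h$ together with the fact that $\rho$ and $\Ro$ commute (since $\Ro=\Ko\rho=\rho\Ko$), the outer factor $\rho\Ro^{n-k-1}\psi^{(a)}$ becomes $\Ro^{n-k-1}\rho\psi^{(a)}=\af{n-k-1}{a}$ by Definition \ref{t2.2.2}, for $a\in\{0,1\}$. For the inner product factor, I would apply Lemma \ref{t2.2.12} iteratively, i.e. $\Pi_{\I}\Ro^k=(\Ro^*)^k\Pi_{\I}$ together with $\Pi_{\I}\rho=\rho^*\Pi_{\I}$, in order to transfer the operators back onto $\psi^{(a)}$:
\[\ipr{\psi^{(a)}}{\Pi_{\I}\Ro^k\rho f}=\ipr{\psi^{(a)}}{(\Ro^*)^k\rho^*\Pi_{\I}f}=\ipr{\rho\Ro^k\psi^{(a)}}{\Pi_{\I}f}=\int_{\I}\af{k}{a}(\zeta)f(\zeta)\,d\zeta\,.\]

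Combining these two pieces yields
\[[\Qo,\Ro^n]f(\xi)=\frac{\ud}{\gamma}\int_\R\ind{\I}(\zeta)\sum_{k=0}^{n-1}\Big(\af{n-k-1}{0}(\xi)\af{k}{1}(\zeta)-\af{n-k-1}{1}(\xi)\af{k}{0}(\zeta)\Big)f(\zeta)\,d\zeta\,,\]
from which the commutator kernel can be read off and inserted into the quotient above. A final reindexing $k\mapsto k-1$ shifts the summation range to $\{1,\ldots,n\}$ and matches \eref{2.2.23} verbatim.

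No real obstacle is anticipated, since Lemma \ref{t2.2.17} has already done the structural work. The only care required is the bookkeeping with the adjoint relations of Lemma \ref{t2.2.12} and with the commutation $[\rho,\Ro]=0$, so that the AWF get correctly identified on both the $\xi$ and the $\zeta$ sides; Remark \ref{t2.2.16} could equivalently be invoked to rewrite the sum in any of the forms suggested there. An inductive route exploiting $\Ri_n(\xi,\zeta)=\int_\R\Ri_1(\xi,\eta)\Ri_{n-1}(\eta,\zeta)\,d\eta$ would also be available, but it appears computationally heavier than the direct calculation sketched above.
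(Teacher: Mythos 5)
Your proposal is correct and follows essentially the same route as the paper's own proof: identify $\Ri_n(\xi,\zeta)$ with $\Ci\big(\Qo,\Ro^n\,;\,\xi,\zeta\big)/(\xi-\zeta)$, expand the commutator via Lemma \ref{t2.2.17}, use Lemma \ref{t2.2.12} (together with the commutation of $\rho$ and $\Ro$) to recognize the AWF on both arguments, and reindex the sum as in Remark \ref{t2.2.16}. No gaps.
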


\begin{proof}
We proceed similarly to Lemma \ref{t2.2.14}, in particular\eref{2.2.15} and\eref{2.2.16} generalize to
\begin{flalign}\label{2.2.24}
\begin{split}
\Ri_n(\xi,\zeta)=\frac{\Ci\big(\Qo,\Ro^n\,;\,\xi,\zeta\big)}{\xi-\zeta}\,.
\end{split}
\end{flalign}
And then Lemma \ref{t2.2.17} enables us to determine $\Ci\big(\Qo,\Ro^n\,;\,\xi,\zeta\big)$ as follows, $\forall f\in\D(\Qo)$,
\begin{flalign}\label{2.2.25}
\begin{split}
[\Qo,\Ro^n]f(\xi)&=\frac{\dot{u}_0}{\gamma}\sum_{k=0}^{n-1}\Ro^{n-k-1}\rho\Big(\psi(\xi)\ipr{\psi'}{\Pi_{\I}\Ro^k\rho f}-\psi'(\xi)\ipr{\psi}{\Pi_{\I}\Ro^k\rho f}\Big) \\
&=\frac{\dot{u}_0}{\gamma}\sum_{k=0}^{n-1}\Ro^{n-k-1}\rho\Big(\psi(\xi)\ipr{\Ro^k\rho\psi'}{\Pi_{\I}f}-\psi'(\xi)\ipr{\Ro^k\rho\psi}{\Pi_{\I}f}\Big) \\
&=\frac{\dot{u}_0}{\gamma}\sum_{k=0}^{n-1}\Big(\af{n-k-1}{0}(\xi)\ipr{\af{k}{1}}{\Pi_{\I}f}-\af{n-k-1}{1}(\xi)\ipr{\af{k}{0}}{\Pi_{\I}f}\Big) \\
&=\frac{\dot{u}_0}{\gamma}\int_\R\sum_{k=0}^{n-1}\Big(\af{n-k-1}{0}(\xi)\af{k}{1}(\zeta)-\af{n-k-1}{1}(\xi)\af{k}{0}(\zeta)\Big)\ind{\I}(\zeta)f(\zeta)\,d\zeta\,,
\end{split}
\end{flalign}
where we used Lemma \ref{t2.2.12}. Therefore
\begin{flalign}\label{2.2.26}
\begin{split}
\Ri_n(\xi,\zeta)=\frac{\dot{u}_0\ind{\I}(\zeta)}{\gamma(\xi-\zeta)}\sum_{k=0}^{n-1}\Big(\af{n-k-1}{0}(\xi)\af{k}{1}(\zeta)-\af{n-k-1}{1}(\xi)\af{k}{0}(\zeta)\Big)\,,
\end{split}
\end{flalign}
so that Remark \ref{t2.2.16} yields the result.
\end{proof}

\begin{cor}\label{t2.2.19}
Remark \ref{t2.2.16} also entails that\eref{2.2.23} can be rewritten as
\begin{flalign}\label{2.2.27}
\begin{split}
\Ri_n(\xi,\zeta)=\frac{\dot{u}_0\ind{\I}(\zeta)}{\gamma(\xi-\zeta)}\sum_{k=1}^n\Big(\af{n-k}{0}(\xi)\af{k-1}{1}(\zeta)-\af{k-1}{1}(\xi)\af{n-k}{0}(\zeta)\Big)\,.
\end{split}
\end{flalign}
\end{cor}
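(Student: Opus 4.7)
The plan is to start from the expression of $\Ri_n(\xi,\zeta)$ given by Proposition \ref{t2.2.18} and apply the reindexing identity of Remark \ref{t2.2.16} only to the second summand. More precisely, the first term $\af{n-k}{0}(\xi)\af{k-1}{1}(\zeta)$ is already in the desired form, so nothing needs to be done for it. For the second term, I would set
\begin{flalign*}
F(j)\coloneqq\af{j}{1}(\xi)\,,\qquad G(j)\coloneqq\af{j}{0}(\zeta)\,,
\end{flalign*}
so that
\begin{flalign*}
\sum_{k=1}^n\af{n-k}{1}(\xi)\af{k-1}{0}(\zeta)=\sum_{k=1}^nF(n-k)\,G(k-1)\,.
\end{flalign*}

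Then I would invoke the second equality of Remark \ref{t2.2.16}, namely $\sum_{k=1}^nF(n-k)G(k-1)=\sum_{k=1}^nF(k-1)G(n-k)$, to rewrite this sum as $\sum_{k=1}^n\af{k-1}{1}(\xi)\af{n-k}{0}(\zeta)$. Substituting back into the expression of Proposition \ref{t2.2.18} would then yield exactly \eqref{2.2.27}, completing the derivation.

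There is essentially no conceptual obstacle here: the whole content of the corollary is the combinatorial reindexing already prepared in Remark \ref{t2.2.16}, and the only care required is to make sure that the swap is applied to the summand whose indices involve the pair $(\xi,\zeta)$ in the second factor, rather than to the first summand where such a swap would instead produce $\af{k-1}{0}(\xi)\af{n-k}{1}(\zeta)$, which is an equally valid but different rewriting. Hence the proof reduces to a single application of Remark \ref{t2.2.16} to one of the two pieces in \eqref{2.2.23}.
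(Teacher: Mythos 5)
Your proposal is correct and matches the paper's intent exactly: the paper gives no separate proof and simply asserts that the rewriting follows from Remark \ref{t2.2.16}, which is precisely the single reindexing $\sum_{k=1}^nF(n-k)G(k-1)=\sum_{k=1}^nF(k-1)G(n-k)$ applied to the second summand, as you do. Your additional remark about which summand to relabel is a sensible clarification but introduces nothing beyond the paper's argument.
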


\begin{prop}\label{t2.2.20}
The diagonal values of the $n^{\operatorname{th}}$ order resolvent kernel are given by
\begin{flalign}\label{2.2.28}
\begin{split}
\Ri_n(\xi,\xi)=\frac{\dot{u}_0}{\gamma}\sum_{k=1}^{n}\bigg(\big(\partial_\xi\af{n-k}{0}(\xi)\big)\af{k-1}{1}(\xi)-\big(\partial_\xi\af{k-1}{1}(\xi)\big)\af{n-k}{0}(\xi)\bigg)\ind{\I}(\xi)\,.
\end{split}
\end{flalign}
\end{prop}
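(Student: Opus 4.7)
The plan is to mimic the passage from Lemma \ref{t2.2.14} to Lemma \ref{t2.2.15} (which itself is the analogue of how Proposition \ref{t2.1.10} was obtained from Proposition \ref{t2.1.8}), but applied term by term to the finite sum in Corollary \ref{t2.2.19} rather than to a single antisymmetric product. The starting point is the representation
\begin{flalign*}
\Ri_n(\xi,\zeta) = \frac{\dot{u}_0\,\ind{\I}(\zeta)}{\gamma(\xi-\zeta)} \sum_{k=1}^n \Big( \af{n-k}{0}(\xi)\af{k-1}{1}(\zeta) - \af{k-1}{1}(\xi)\af{n-k}{0}(\zeta) \Big),
\end{flalign*}
whose crucial feature is that each summand is antisymmetric under $(\xi,\zeta)\mapsto(\zeta,\xi)$ and therefore vanishes on the diagonal, cancelling the apparent $(\xi-\zeta)^{-1}$ singularity.

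Concretely, I would set $\zeta = \xi + h$ with $h\in\R\char`\\\{0\}$ and, for each $k\in\{1,\dots,n\}$, add and subtract $\af{n-k}{0}(\xi)\af{k-1}{1}(\xi)$ in the numerator of the $k$-th term in order to exhibit it as a combination of difference quotients,
\begin{flalign*}
\frac{\af{n-k}{0}(\xi)\af{k-1}{1}(\xi+h) - \af{k-1}{1}(\xi)\af{n-k}{0}(\xi+h)}{-h} = \af{k-1}{1}(\xi)\,\frac{\af{n-k}{0}(\xi+h) - \af{n-k}{0}(\xi)}{h} - \af{n-k}{0}(\xi)\,\frac{\af{k-1}{1}(\xi+h) - \af{k-1}{1}(\xi)}{h}.
\end{flalign*}
Passing to the limit $h\to 0$ then replaces each difference quotient by the corresponding $\partial_\xi$ derivative, producing the contribution $\big(\partial_\xi\af{n-k}{0}(\xi)\big)\af{k-1}{1}(\xi) - \big(\partial_\xi\af{k-1}{1}(\xi)\big)\af{n-k}{0}(\xi)$ of the $k$-th term. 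Summing over $k$ and multiplying by $\dot{u}_0/\gamma$, with $\ind{\I}(\xi+h)\to\ind{\I}(\xi)$ in the limit (the only ambiguity being at the finite set $\{\tau_j\}$, where the diagonal value is in any case not the main object of interest), reproduces the claimed expression.

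There is no genuine obstacle here: the argument is structurally identical to that of Lemma \ref{t2.2.15}, merely carried out inside the sum over $k$. The only prerequisite that must be tacitly invoked is that the AWF $\af{n-k}{0}$ and $\af{k-1}{1}$ admit the $\xi$-derivatives appearing on the right-hand side; this follows from the $\xi$-regularity of $\psi$ built into Condition \ref{t1.1.2}, together with the fact that $\rho=\id+\Ro$ (Proposition \ref{t2.2.9}) and that $\Ro$ acts on $\leb^2(\R)$ with a smooth enough kernel by Proposition \ref{t2.2.18}, so that differentiation with respect to $\xi$ can be pushed inside the integrals defining $\af{n}{a}$ via Definition \ref{t2.2.2}.
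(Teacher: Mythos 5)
Your argument is correct and coincides with the paper's own proof: both start from the per-term antisymmetric form of Corollary \ref{t2.2.19}, rewrite each summand as difference quotients, and let $h\to0$, exactly as in Lemma \ref{t2.2.15}. The only cosmetic difference is that the paper evaluates $\Ri_n(\xi+h,\xi)$ so the indicator is $\ind{\I}(\xi)$ from the outset, sidestepping the (harmless) endpoint ambiguity you mention for $\ind{\I}(\xi+h)$.
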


\begin{proof}
With the same trick as before for $h\in\R\char`\\\{0\}$ and using Corollary \ref{t2.2.19}, we obtain
\begin{flalign}\label{2.2.29}
\begin{split}
\Ri_n(\xi+h,\xi)=\frac{\dot{u}_0}{\gamma}\sum_{k=1}^n\bigg(\frac{\af{n-k}{0}(\xi+h)-\af{n-k}{0}(\xi)}{h}\af{k-1}{1}(\xi)-\frac{\af{k-1}{1}(\xi+h)-\af{k-1}{1}(\xi)}{h}\af{n-k}{0}(\xi)\bigg)\ind{\I}(\xi)\,.
\end{split}
\end{flalign}
And the $h\rightarrow0$ limit indeed leads us to\eref{2.2.28}.
\end{proof}

\subsubsection{Derivatives of the Auxiliary Wave Functions}\label{s222}

Notice that, in the Definition \ref{t2.2.2} of the AWF, $\Ko$ acts on $\psi$ and hence, since $\Ko$ contains $\Pi_{\I}$, the AWF depend on our choice of the interval $\I$, in particular they depend on the parameters $\tau_j$ appearing in\eref{2.2.6}. Thus we evaluate both $\partial_j\af{n}{a}(\xi)\coloneqq \partial_{\tau_j}\af{n}{a}(\xi)$ and $\partial_\xi\af{n}{a}(\xi)$. And to achieve this, we begin with the zeroth order AWF, then generalizing the results to $\af{n}{a}$, $\forall n\in\N$.

\begin{pro}\label{t2.2.21}
Recall that, similarly to Property \ref{t2.1.4}, we have $\int_\R g(x)\big(\partial_j\Pi_{\I}\big)f(x)\,dx=(-1)^jg(\tau_j)f(\tau_j)$, which holds for any $j\in\{1,2,...,2m\}\subset\N$, and $f,g\in\leb^2(\R)$.
\end{pro}

\begin{lem}\label{t2.2.22}
The following relations hold $\forall j\in\{1,2,...,2m\}\subset\N$,
\begin{flalign}\label{2.2.30}
\begin{split}
\partial_j\Ro=\partial_j\rho=\Ro\big(\partial_j\Pi_{\I}\big)\rho\,.
\end{split}
\end{flalign}
\end{lem}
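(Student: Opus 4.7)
The first equality $\partial_j\Ro=\partial_j\rho$ is immediate from Proposition~\ref{t2.2.9}, which asserts $\rho=\id+\Ro$: differentiating both sides with respect to $\tau_j$ and using $\partial_j\id=0$ yields the claim. For the second equality, I would start from the defining relation $(\id-\Ko)\rho=\id$ and apply $\partial_j$ to obtain $-(\partial_j\Ko)\rho+(\id-\Ko)\partial_j\rho=0$, hence $(\id-\Ko)\partial_j\rho=(\partial_j\Ko)\rho$. Since $\Ko_0$ does not depend on the endpoints of $\I$ (only $\Pi_\I$ does), the product rule gives $\partial_j\Ko=\Ko_0(\partial_j\Pi_\I)$. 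Left-multiplying by $\rho$ then yields $\partial_j\rho=\rho\Ko_0(\partial_j\Pi_\I)\rho$.

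To recast this in the desired form, I would invoke the commutation $\rho\Ko=\Ko\rho$, itself a consequence of $(\id-\Ko)\rho=\rho(\id-\Ko)=\id$, which can be rewritten as $\rho\Ko_0\Pi_\I=\Ko_0\Pi_\I\rho$. It then remains to insert $\Pi_\I$ between $\Ko_0$ and $\partial_j\Pi_\I$, i.e., to verify the distributional identity $\Pi_\I(\partial_j\Pi_\I)=\partial_j\Pi_\I$. Using Property~\ref{t2.2.21}, pairing with any $g\in\leb^2(\R)$ reduces this to the convention $\ind_\I(\tau_j)=1$ (closed interval): the left side gives $(-1)^j\ind_\I(\tau_j)g(\tau_j)f(\tau_j)$, matching the right side $(-1)^jg(\tau_j)f(\tau_j)$. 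With this in hand, I would chain $\rho\Ko_0(\partial_j\Pi_\I)\rho=\rho\Ko_0\Pi_\I(\partial_j\Pi_\I)\rho=\Ko_0\Pi_\I\rho(\partial_j\Pi_\I)\rho=\Ko\rho(\partial_j\Pi_\I)\rho=\Ro(\partial_j\Pi_\I)\rho$.

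The main obstacle is the careful handling of the boundary-point behaviour of $\ind_\I$, which requires the closed-interval convention to make sense of $\Pi_\I(\partial_j\Pi_\I)$; beyond this distributional subtlety, everything reduces to direct consequences of Proposition~\ref{t2.2.9} and the basic commutation between $\rho$ and $\Ko$.
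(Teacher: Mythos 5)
Your proposal is correct and follows essentially the same route as the paper: differentiate $(\id-\Ko)^{-1}$ to get $\partial_j\rho=\rho\Ko_0(\partial_j\Pi_\I)\rho$, use $\ind_\I(\tau_j)=1$ to upgrade $\Ko_0$ to $\Ko$ next to $\partial_j\Pi_\I$, and then commute $\rho$ with $\Ko$ to produce $\Ro(\partial_j\Pi_\I)\rho$. The only cosmetic difference is that you insert $\Pi_\I$ directly via $\Pi_\I(\partial_j\Pi_\I)=\partial_j\Pi_\I$, whereas the paper performs the same insertion after pairing through $\Ko_0^*$, which is the identical fact at the same level of rigor.
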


\begin{proof}
The first equality follows from Property \ref{t2.2.9}. Whilst for the second one, recall that, in general $\partial_a\OO^{-1}=-\OO^{-1}\big(\partial_a\OO\big)\OO^{-1}$ for any $\OO$ depending on $a$, and thereby
\begin{flalign}\label{2.2.31}
\begin{split}
\partial_j\rho=-\rho\Big(\partial_j\big(\id-\Ko\big)\Big)\rho=\rho\Ko_0\big(\partial_j\Pi_{\I}\big)\rho\,.
\end{split}
\end{flalign}
Then observe that, $\forall f,g\in\leb^2(\R)$,
\begin{flalign}\label{2.2.32}
\begin{split}
\int_\R g(\zeta)\Ko_0\big(\partial_j\Pi_{\I}\big)f(\zeta)\,d\zeta=(-1)^j\Ko_0^*g(\tau_j)f(\tau_j)=(-1)^j\Pi_{\I}\Ko_0^*g(\tau_j)f(\tau_j)=\int_\R g(\zeta)\Ko_0\Pi_{\I}\big(\partial_j\Pi_{\I}\big)f(\zeta)\,d\zeta\,,
\end{split}
\end{flalign}
where we made use of the fact that $\ind{\I}(\tau_j)=1$. So that $\Ko_0\big(\partial_j\Pi_{\I}\big)=\Ko\big(\partial_j\Pi_{\I}\big)$, completing the proof.
\end{proof}

\begin{lem}\label{t2.2.23}
The parameter dependence of the zeroth order AWF is given by
\begin{flalign}\label{2.2.33}
\begin{split}
\partial_j\af{0}{a}(\xi)=(-1)^j\Ri_1(\xi,\tau_j)\af{0}{a}(\tau_j)\,.
\end{split}
\end{flalign}
\end{lem}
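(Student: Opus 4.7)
The plan is to differentiate the definition $\af{0}{a}=\rho\Po^a\psi$ directly. Since $\Po^a\psi$ carries no dependence on $\tau_j$, the Leibniz rule reduces the computation to $\partial_j\af{0}{a}(\xi)=\big(\partial_j\rho\big)\Po^a\psi(\xi)$, and Lemma \ref{t2.2.22} immediately converts this into $\Ro\big(\partial_j\Pi_{\I}\big)\rho\Po^a\psi(\xi)=\Ro\big(\partial_j\Pi_{\I}\big)\af{0}{a}(\xi)$, where in the last step I recognize $\rho\Po^a\psi$ as $\af{0}{a}$ again through Definition \ref{t2.2.2}.

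Next I would rewrite $\Ro$ in its integral form from Definition \ref{t2.2.5} to obtain
\begin{flalign*}
\partial_j\af{0}{a}(\xi)=\int_\R\Ri_1(\xi,\zeta)\,\big(\partial_j\Pi_{\I}\big)\af{0}{a}(\zeta)\,d\zeta\,.
\end{flalign*}
The crucial input is now Property \ref{t2.2.21}: taking $g(\zeta)=\Ri_1(\xi,\zeta)$ and $f=\af{0}{a}$ in that identity collapses the integral to the evaluation at $\zeta=\tau_j$ weighted by $(-1)^j$, which is exactly the claimed expression $(-1)^j\Ri_1(\xi,\tau_j)\af{0}{a}(\tau_j)$.

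There is essentially no obstacle in this argument; the entire content of the lemma is packaged into the preceding Lemma \ref{t2.2.22} (which handles the parameter derivative of the resolvent) and Property \ref{t2.2.21} (which handles the delta-like action of $\partial_j\Pi_{\I}$). One should only verify that Property \ref{t2.2.21} applies with $g(\zeta)=\Ri_1(\xi,\zeta)$, which is an $\leb^2$ function of $\zeta$ for each fixed $\xi$ (since $\Ri_1(\xi,\cdot)$ is supported on $\I$ by the factor $\ind{\I}(\zeta)$ visible in Lemma \ref{t2.2.14}, and the zeroth order AWF lie in $\leb^2(\R)$ by Property \ref{t2.2.4}), and that $\af{0}{a}\in\leb^2(\R)$, which is again Property \ref{t2.2.4}. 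The pointwise evaluations $\Ri_1(\xi,\tau_j)$ and $\af{0}{a}(\tau_j)$ are meaningful thanks to the regularity built into Condition \ref{t1.1.2} and the explicit formula of Lemma \ref{t2.2.14}.
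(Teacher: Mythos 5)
Your argument is correct and is essentially identical to the paper's proof: differentiate the definition via Lemma \ref{t2.2.22}, write $\Ro$ as an integral operator with kernel $\Ri_1$, and collapse the integral with Property \ref{t2.2.21}. The extra checks you mention (square-integrability of $\Ri_1(\xi,\cdot)$ and $\af{0}{a}$) are a harmless refinement that the paper leaves implicit.
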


\begin{proof}
Using Lemma \ref{t2.2.22}, we compute
\begin{flalign}\label{2.2.34}
\begin{split}
\partial_j\af{0}{a}(\xi)=\Ro\big(\partial_j\Pi_{\I}\big)\rho\psi^{(a)}(\xi)=\int_\R\Ri_1(\xi,\zeta)\big(\partial_j\Pi_{\I}\big)\af{0}{a}(\zeta)\,d\zeta\,,
\end{split}
\end{flalign}
and hence Property \ref{t2.2.21} finalizes the derivation.
\end{proof}

\begin{lem}\label{t2.2.24}
According to Definition \ref{t2.2.6}, the commutator of $\Po$ and $\Ko$ obeys
\begin{flalign}\label{2.2.35}
\begin{split}
[\Po,\Ko]=-\frac{\gamma}{\dot{u}_0}\bigg(\T_{\psi}\Pi_{\I}+\frac{\ddot{u}_0}{\dot{u}_0}\Ko\bigg)-\sum_{j=1}^{2m}\Ko\big(\partial_j\Pi_{\I}\big)\,.
\end{split}
\end{flalign}
\end{lem}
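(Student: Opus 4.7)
The natural split
\[
[\Po,\Ko]=[\Po,\Ko_0]\,\Pi_{\I}+\Ko_0\,[\Po,\Pi_{\I}]
\]
isolates the two mechanisms producing the right-hand side: the kernel identity of Proposition \ref{t2.1.11} will generate the $\T_{\psi}\Pi_{\I}$ and $\Ko$ contributions, while the boundary rule of Property \ref{t2.1.4} will generate the endpoint terms involving $\partial_j\Pi_{\I}$. I would treat the two commutators independently and then assemble.

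For $[\Po,\Ko_0]$, I would differentiate under the integral to get $\Po\Ko_0 f(\xi)=\int_\R\partial_\xi\Ki(\xi,\zeta)f(\zeta)\,d\zeta$, and integrate by parts in $\zeta$ (the boundary contributions at $\zeta=\pm\infty$ vanish thanks to the decay imposed by Conditions \ref{t1.1.2} and \ref{t1.1.3}) to get $\Ko_0\Po f(\xi)=-\int_\R\partial_\zeta\Ki(\xi,\zeta)f(\zeta)\,d\zeta$. The sum of these integrands is $(\partial_\xi+\partial_\zeta)\Ki(\xi,\zeta)$, which Proposition \ref{t2.1.11} equates to $-\frac{\gamma}{\ud}\bigl(\psi(\xi)\psi(\zeta)+\frac{\udd}{\ud}\Ki(\xi,\zeta)\bigr)$. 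Recognising the two resulting integrals as $\T_\psi f(\xi)$ and $\Ko_0 f(\xi)$ yields
\[
[\Po,\Ko_0]=-\frac{\gamma}{\ud}\,\T_{\psi}-\frac{\gamma\udd}{\ud^{2}}\,\Ko_0,
\]
and postcomposition with $\Pi_{\I}$ produces the first bracketed term of the claim.

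For $\Ko_0[\Po,\Pi_{\I}]$, Property \ref{t2.1.4} gives $\ipr{g}{\Po\Pi_{\I}f}-\ipr{g}{\Pi_{\I}\Po f}=-\sum_{j=1}^{2m}(-1)^j g(\tau_j)f(\tau_j)$, so distributionally $[\Po,\Pi_{\I}]f$ equals $-\sum_j(-1)^j\delta_{\tau_j}f(\tau_j)$. Comparing with Property \ref{t2.2.21} identifies this expression with $-\sum_j(\partial_j\Pi_{\I})f$. Composing with $\Ko_0$ on the left and invoking the identity $\Ko_0(\partial_j\Pi_{\I})=\Ko(\partial_j\Pi_{\I})$ that was already established inside the proof of Lemma \ref{t2.2.22} (and which holds because $\ind_{\I}(\tau_j)=1$), one reaches $\Ko_0[\Po,\Pi_{\I}]=-\sum_j\Ko(\partial_j\Pi_{\I})$, i.e.\ the remaining term of \eqref{2.2.35}. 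Assembling the two pieces yields the lemma. I do not foresee any substantive obstacle: the argument is essentially bookkeeping around the $(-1)^j$ boundary signs and the applicability of integration by parts, both of which are controlled by Conditions \ref{t1.1.2}--\ref{t1.1.3} and by the Sobolev regularity under which Property \ref{t2.1.4} is stated.
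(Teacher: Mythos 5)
Your proposal is correct and follows essentially the same route as the paper: the same split $[\Po,\Ko]=[\Po,\Ko_0]\Pi_{\I}+\Ko_0[\Po,\Pi_{\I}]$, with Proposition \ref{t2.1.11} handling the first commutator via $(\partial_\xi+\partial_\zeta)\Ki$, Properties \ref{t2.1.4} and \ref{t2.2.21} handling the boundary commutator, and the identity $\Ko_0(\partial_j\Pi_{\I})=\Ko(\partial_j\Pi_{\I})$ from\eref{2.2.32} converting $\Ko_0$ into $\Ko$ in the endpoint terms. No gaps to report.
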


\begin{proof}
First we note that, $\forall f\in\sob^1(\R)\cap\D(\Ko_0)$,
\begin{flalign}\label{2.2.36}
\begin{split}
[\Po,\Ko_0]f(\xi)=\int_\R\Big(\partial_\xi\Ki(\xi,\zeta)-\Ki(\xi,\zeta)\partial_\zeta\Big)f(\zeta)\,d\zeta\,,
\end{split}
\end{flalign}
therefore, integrating by parts, we obtain $\Ci\big(\Po,\Ko_0\,;\,\xi,\zeta\big)=\big(\partial_\xi+\partial_\zeta\big)\Ki(\xi,\zeta)$. Henceforth, using Proposition \ref{t2.1.11}, we can rewrite the above commutator as follows, $\forall f\in\sob^1(\R)\cap\D(\Ko_0)$,
\begin{flalign}\label{2.2.37}
\begin{split}
-\frac{\gamma}{\dot{u}_0}\bigg(\T_{\psi}+\frac{\ddot{u}_0}{\dot{u}_0}\Ko_0\bigg)f(\xi)=-\frac{\gamma}{\dot{u}_0}\int_\R\bigg(\psi(\xi)\psi(\zeta)f(\zeta)+\frac{\ddot{u}_0}{\dot{u}_0}\Ki(\xi,\zeta)f(\zeta)\bigg)d\zeta=\int_\R f(\zeta)\big(\partial_\xi+\partial_\zeta\big)\Ki(\xi,\zeta)\,d\zeta\,.
\end{split}
\end{flalign}
Yielding $[\Po,\Ko_0]=-\frac{\gamma}{\dot{u}_0}\Big(\T_{\psi}+\frac{\ddot{u}_0}{\dot{u}_0}\Ko_0\Big)$. Besides, Properties \ref{t2.1.4} and \ref{t2.2.21} are leading to, $\forall f,g\in\leb^2(\R)$,
\begin{flalign}\label{2.2.38}
\begin{split}
\int_\R g(x)[\Po,\Pi_{\I}]f(x)\,dx=-\sum_{j=1}^{2m}(-1)^jg(\tau_j)f(\tau_j)=-\sum_{j=1}^{2m}\int_\R g(x)\big(\partial_j\Pi_{\I}\big)f(x)\,dx\,.
\end{split}
\end{flalign}
So that we end up with
\begin{flalign}\label{2.2.39}
\begin{split}
[\Po,\Ko]=[\Po,\Ko_0]\Pi_{\I}+\Ko_0[\Po,\Pi_{\I}]=-\frac{\gamma}{\dot{u}_0}\bigg(\T_{\psi}\Pi_{\I}+\frac{\ddot{u}_0}{\dot{u}_0}\Ko\bigg)-\sum_{j=1}^{2m}\Ko_0\big(\partial_j\Pi_{\I}\big)\,,
\end{split}
\end{flalign}
and injecting\eref{2.2.32} into this relation leads us to\eref{2.2.35}.
\end{proof}

\begin{defi}\label{t2.2.25}
We introduce the following quantities, $\au{n}{a}\coloneqq \ipr{\psi}{\Pi_{\I}\Ro^n\rho\Po^a\psi}$, $\forall n\in\Z_{\geq0}$.
\end{defi}

\begin{pro}\label{t2.2.26}
Lemma \ref{t2.2.12} entails $\ipr{\Po^a\psi}{\Pi_{\I}\Ro^n\rho\psi}=\au{n}{a}$.
\end{pro}

\begin{lem}\label{t2.2.27}
The zeroth order AWF satisfy
\begin{flalign}\label{2.2.40}
\begin{split}
\partial_\xi\af{0}{\alpha}(\xi)=\af{0}{\alpha+1}(\xi)-\frac{\gamma}{\dot{u}_0}\bigg(\au{0}{\alpha}\af{0}{0}(\xi)+\frac{\ddot{u}_0}{\dot{u}_0}\af{1}{\alpha}(\xi)\bigg)-\sum_{j=1}^{2m}\partial_j\af{0}{\alpha}(\xi)\,.
\end{split}
\end{flalign}
\end{lem}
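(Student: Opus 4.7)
The strategy is to write $\partial_\xi \af{0}{\alpha}(\xi) = \Po \af{0}{\alpha}(\xi) = \Po \rho \Po^\alpha \psi(\xi)$, commute $\Po$ past $\rho$, and then translate the resulting commutator into the three terms on the right-hand side using the previously established identities. Concretely, I would first write $\Po \rho \Po^\alpha \psi = \rho \Po^{\alpha+1} \psi + [\Po,\rho] \Po^\alpha \psi$. The first summand is immediately $\af{0}{\alpha+1}(\xi)$ by Definition \ref{t2.2.2}. For the second, Lemma \ref{t2.2.11} gives $[\Po,\rho] = \rho [\Po,\Ko] \rho$, so that
\begin{flalign*}
\begin{split}
[\Po,\rho]\Po^\alpha \psi = \rho\,[\Po,\Ko]\,\af{0}{\alpha}\,.
\end{split}
\end{flalign*}

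At this point I would plug in the explicit formula for $[\Po,\Ko]$ from Lemma \ref{t2.2.24} and handle the three resulting contributions one by one. The $\T_\psi \Pi_\I$ term produces $\ipr{\psi}{\Pi_\I \af{0}{\alpha}}\psi = \au{0}{\alpha}\psi$ by Definition \ref{t2.2.25}, and applying $\rho$ yields $\au{0}{\alpha}\af{0}{0}(\xi)$. The $\Ko$ term gives $\rho \Ko \af{0}{\alpha}$; using the identity $\rho \Ko = \Ko \rho = \Ro$ together with $\af{0}{\alpha} = \rho \Po^\alpha \psi$, this collapses to $\Ro \rho \Po^\alpha \psi = \af{1}{\alpha}(\xi)$, producing the $\frac{\ddot u_0}{\dot u_0}\af{1}{\alpha}(\xi)$ contribution after restoring the prefactor $-\frac{\gamma}{\dot u_0}$.

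The only step that is not completely mechanical is the $\tau_j$-dependent piece $-\sum_j \rho\,\Ko(\partial_j \Pi_\I)\,\af{0}{\alpha}$. Here I would invoke Lemma \ref{t2.2.22} in the rearranged form $\Ro(\partial_j \Pi_\I)\rho = \partial_j \rho$: since $\rho \Ko = \Ro$, one gets
\begin{flalign*}
\begin{split}
\rho\,\Ko\,(\partial_j \Pi_\I)\,\af{0}{\alpha} = \Ro\,(\partial_j \Pi_\I)\,\rho\,\Po^\alpha \psi = (\partial_j \rho)\,\Po^\alpha \psi = \partial_j \af{0}{\alpha}(\xi)\,,
\end{split}
\end{flalign*}
where the last equality uses that $\Po^\alpha \psi$ is independent of $\tau_j$. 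Summing over $j$ yields the remaining contribution with the correct sign.

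Assembling the three pieces gives exactly \eref{2.2.40}. The main (small) obstacle is recognizing that the boundary contribution from $[\Po,\Pi_\I]$ that was absorbed into $\Ko(\partial_j \Pi_\I)$ in Lemma \ref{t2.2.24} is precisely what implements the $\tau_j$-derivatives of $\af{0}{\alpha}$ once acted upon by $\rho$; everything else is a direct unwinding of the definitions.
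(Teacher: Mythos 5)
Your proof is correct and follows essentially the same route as the paper: split off $\af{0}{\alpha+1}$, use Lemma \ref{t2.2.11} to write $[\Po,\rho]\Po^\alpha\psi=\rho[\Po,\Ko]\af{0}{\alpha}$, and then evaluate the three contributions of Lemma \ref{t2.2.24}. The only (harmless) difference is that for the boundary piece you invoke Lemma \ref{t2.2.22} directly to get $\Ro(\partial_j\Pi_{\I})\rho\Po^\alpha\psi=\partial_j\af{0}{\alpha}$, whereas the paper passes through the kernel form via Property \ref{t2.2.21} and Lemma \ref{t2.2.23}; both identifications are equivalent since Lemma \ref{t2.2.23} is itself a consequence of Lemma \ref{t2.2.22}.
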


\begin{proof}
Firstly we notice that Lemma \ref{t2.2.11} ensures
\begin{flalign}\label{2.2.41}
\begin{split}
\Po\,\af{0}{\alpha}(\xi)=\af{0}{\alpha+1}(\xi)+[\Po,\rho]\Po^\alpha\psi(\xi)=\af{0}{\alpha+1}(\xi)+\rho[\Po,\Ko]\af{0}{\alpha}(\xi)\,.
\end{split}
\end{flalign}
Whilst Lemma \ref{t2.2.24} leads to
\begin{flalign}\label{2.2.42}
\begin{split}
\rho[\Po,\Ko]\af{0}{\alpha}(\xi)&=-\frac{\gamma}{\dot{u}_0}\rho\bigg(\psi(\xi)\ipr{\psi}{\Pi_{\I}\rho\Po^\alpha\psi}+\frac{\ddot{u}_0}{\dot{u}_0}\Ko\rho\Po^\alpha\psi(\xi)\bigg)-\sum_{j=1}^{2m}\Ro\big(\partial_j\Pi_{\I}\big)\af{0}{\alpha}(\xi) \\
&=-\frac{\gamma}{\dot{u}_0}\bigg(\af{0}{0}(\xi)\au{0}{\alpha}+\frac{\ddot{u}_0}{\dot{u}_0}\af{1}{\alpha}(\xi)\bigg)-\sum_{j=1}^{2m}(-1)^j\Ri_1(\xi,\tau_j)\af{0}{\alpha}(\tau_j)\,,
\end{split}
\end{flalign}
where we used Property \ref{t2.2.21}. Finally, inserting Lemma \ref{t2.2.23} in this expression and in turn injecting the result in\eref{2.2.41}, we indeed obtain\eref{2.2.40}.
\end{proof}

\begin{cor}\label{t2.2.28}
Lemma \ref{t2.2.27} imply
\begin{flalign}\label{2.2.43}
\begin{split}
\frac{d}{d\tau_j}\af{0}{\alpha}(\tau_j)=\big(\partial_\xi+\partial_j\big)\big|_{\xi=\tau_j}\af{0}{\alpha}(\xi)=\af{0}{\alpha+1}(\tau_j)-\frac{\gamma}{\dot{u}_0}\bigg(\au{0}{\alpha}\af{0}{0}(\tau_j)+\frac{\ddot{u}_0}{\dot{u}_0}\af{1}{\alpha}(\tau_j)\bigg)-\sum_{\substack{i=1\\i\neq j}}^{2m}\partial_i\af{0}{\alpha}(\tau_j)\,.
\end{split}
\end{flalign}
\end{cor}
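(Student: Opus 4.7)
The plan is to derive Corollary \ref{t2.2.28} as a direct consequence of Lemma \ref{t2.2.27} by invoking the chain rule for the double dependence of $\af{0}{\alpha}$ on $\tau_j$: once as its evaluation point and once as a parameter of the projector $\Pi_{\I}$ appearing in $\Ko$ (and hence in $\rho$).

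First, I would make this double dependence explicit. Viewing $\af{0}{\alpha}$ as a function of both the spectral variable $\xi$ and the endpoints $\{\tau_i\}_{i=1}^{2m}$, the standard multivariable chain rule yields
\begin{flalign*}
\frac{d}{d\tau_j}\af{0}{\alpha}(\tau_j)=\big(\partial_\xi+\partial_j\big)\big|_{\xi=\tau_j}\af{0}{\alpha}(\xi)\,,
\end{flalign*}
which is exactly the first equality claimed in \eqref{2.2.43}. This is the only conceptual input; everything else is substitution.

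Next, I would substitute Lemma \ref{t2.2.27} evaluated at $\xi=\tau_j$. Since the sum appearing there runs over a dummy index (named $j$ in the statement of the lemma), I would first relabel it to $i$ to avoid collision with the fixed index $j$ of the corollary. This gives
\begin{flalign*}
\partial_\xi\af{0}{\alpha}(\xi)\big|_{\xi=\tau_j}=\af{0}{\alpha+1}(\tau_j)-\frac{\gamma}{\dot{u}_0}\bigg(\au{0}{\alpha}\af{0}{0}(\tau_j)+\frac{\ddot{u}_0}{\dot{u}_0}\af{1}{\alpha}(\tau_j)\bigg)-\sum_{i=1}^{2m}\partial_i\af{0}{\alpha}(\tau_j)\,.
\end{flalign*}
Adding the remaining chain-rule contribution $\partial_j\af{0}{\alpha}(\tau_j)$ cancels precisely the $i=j$ term of the above sum, so the summation collapses to $\sum_{i\neq j}$, yielding \eqref{2.2.43}.

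There is no real obstacle: the corollary is a one-line consequence of Lemma \ref{t2.2.27} combined with the chain rule, and the only care needed is notational — the dummy summation index in Lemma \ref{t2.2.27} must be renamed so that the single term generated by the $\partial_j$ part of the total derivative can be identified and cancelled against the $i=j$ contribution of that sum.
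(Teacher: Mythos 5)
Your proposal is correct and matches the paper's intent: the corollary is stated without proof precisely because it is the chain-rule identity $\frac{d}{d\tau_j}\af{0}{\alpha}(\tau_j)=\big(\partial_\xi+\partial_j\big)\big|_{\xi=\tau_j}\af{0}{\alpha}(\xi)$ combined with Lemma \ref{t2.2.27} evaluated at $\xi=\tau_j$, with the $\partial_j$ term cancelling the $i=j$ contribution of the (relabelled) sum. This is the same one-line argument the paper relies on here and reuses later (e.g.\ in the proof of Lemma \ref{t3.1.4}).
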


Henceforth we are ready to discuss the generalization of these results for $\af{n}{a}$, $\forall n\in\Z_{\geq0}$. Since our expressions will become somewhat lengthy, we introduce the following notation which shall allow us to shorten our results.

\begin{nota}\label{t2.2.29}
We write $\Gamma\coloneqq \T_{\psi}\Pi_{\I}+\frac{\ddot{u}_0}{\dot{u}_0}\Ko$.
\end{nota}

So that, for instance, Lemma \ref{t2.2.24} becomes $[\Po,\Ko]=-\frac{\gamma}{\dot{u}_0}\Gamma-\sum_{j=1}^{2m}\Ko\dl[\I]{j}$.

\begin{lem}\label{t2.2.30}
The following relation is satisfied, $\forall n\in\Z_{\geq0}$\,,
\begin{flalign}\label{2.2.44}
\begin{split}
\partial_j\Ro^n\rho=\Ro\big(\partial_j\Pi_{\I}\big)\Ro^n\rho+\sum_{k=0}^{n-1}\bigg(\Ro^{n-k+1}\big(\partial_j\Pi_{\I}\big)+\Ro^{n-k}\big(\partial_j\Pi_{\I}\big)\bigg)\Ro^k\rho\,.
\end{split}
\end{flalign}
\end{lem}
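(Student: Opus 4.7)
My plan is a direct computation rather than an induction, leaning on the product rule, Lemma \ref{t2.2.22} which identifies $\partial_j\Ro = \partial_j\rho = \Ro(\partial_j\Pi_{\I})\rho$, and Property \ref{t2.2.9} which provides the key decomposition $\rho = \id + \Ro$. To begin, I would apply the Leibniz rule twice to expand
\[
\partial_j\big(\Ro^n\rho\big) = \sum_{k=0}^{n-1}\Ro^{k}\big(\partial_j\Ro\big)\Ro^{n-1-k}\rho + \Ro^n\big(\partial_j\rho\big)\,,
\]
and then substitute the expressions for $\partial_j\Ro$ and $\partial_j\rho$ provided by Lemma \ref{t2.2.22}. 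This yields a sum of terms of the form $\Ro^{k+1}(\partial_j\Pi_{\I})\rho\,\Ro^{n-1-k}\rho$ together with the isolated contribution $\Ro^{n+1}(\partial_j\Pi_{\I})\rho$ coming from $\Ro^n\partial_j\rho$.

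The next step is to remove the internal $\rho$ that remains trapped inside each term of the sum using Property \ref{t2.2.9}, via
\[
\rho\,\Ro^{n-1-k}\rho = \big(\id+\Ro\big)\Ro^{n-1-k}\rho = \Ro^{n-1-k}\rho + \Ro^{n-k}\rho\,,
\]
which cleanly splits the original sum into two sums. After reindexing (sending $k\mapsto n-1-k$ in the first and $k\mapsto n-k$ in the second), both sums take the shape $\Ro^{m}(\partial_j\Pi_{\I})\Ro^{k}\rho$, with the outer exponent appearing as $m = n-k$ and $m = n-k+1$ respectively, matching precisely the two prefactors that appear side by side in the statement.

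The final bookkeeping step is to absorb the isolated $\Ro^{n+1}(\partial_j\Pi_{\I})\rho$ contribution into the second reindexed sum as its missing $k=0$ term, thereby extending its range to $\{0,\ldots,n\}$, and then to peel off the $k=n$ term $\Ro(\partial_j\Pi_{\I})\Ro^n\rho$ in order to isolate the boundary piece of the claimed identity. The only real obstacle is combinatorial rather than conceptual: one must keep the two index shifts consistent so that the paired prefactors $\Ro^{n-k+1}+\Ro^{n-k}$ assemble correctly over $k\in\{0,\dots,n-1\}$, and confirm that the boundary term $\Ro(\partial_j\Pi_{\I})\Ro^n\rho$ arises precisely because no matching $\Ro^{0}(\partial_j\Pi_{\I})$ contribution is produced by the $\rho\Ro^{n-1-k}\rho$ expansion on the other side.
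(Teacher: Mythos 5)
Your argument is correct, and it reaches\eref{2.2.44} by a genuinely different organization than the paper. The paper proves the lemma by induction on $n$: it peels off a single factor, $\partial_j\Ro^n\rho=\Ro\,\partial_j\Ro^{n-1}\rho+\big(\partial_j\Ro\big)\Ro^{n-1}\rho$, then inserts the induction hypothesis together with Lemma \ref{t2.2.22} and $\rho=\id+\Ro$ (Property \ref{t2.2.9}). You instead unroll the induction: a full Leibniz expansion $\partial_j(\Ro^n\rho)=\sum_{k=0}^{n-1}\Ro^{k}(\partial_j\Ro)\Ro^{n-1-k}\rho+\Ro^{n}(\partial_j\rho)$, substitution of Lemma \ref{t2.2.22} into every term, flattening of the trapped $\rho$ via $\rho\,\Ro^{n-1-k}\rho=\Ro^{n-1-k}\rho+\Ro^{n-k}\rho$, and two reindexings. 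The ingredients are identical (Lemma \ref{t2.2.22} and Property \ref{t2.2.9}), so neither route is stronger than the other; the trade-off is purely expository. The induction keeps each step short and hides the combinatorics in the hypothesis, while your direct computation makes explicit where the paired prefactors $\Ro^{n-k+1}\big(\partial_j\Pi_{\I}\big)+\Ro^{n-k}\big(\partial_j\Pi_{\I}\big)$ come from and why the single unpaired boundary term $\Ro\big(\partial_j\Pi_{\I}\big)\Ro^{n}\rho$ appears — your bookkeeping is right: the isolated $\Ro^{n+1}\big(\partial_j\Pi_{\I}\big)\rho$ is exactly the missing $k=0$ term of the second reindexed sum, and peeling its $k=n$ term yields the boundary piece, with the $n=0$ case collapsing to Lemma \ref{t2.2.22} itself. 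The implicit use of the operator product rule is unobjectionable here, since the paper itself differentiates products of these parameter-dependent operators in the same formal way.
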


\begin{proof}
The induction is initiated, for $n=0$, by Lemma \ref{t2.2.22}. And then, assuming that this holds for $n-1$, we may obtain
\begin{flalign}\label{2.2.45}
\begin{split}
\partial_j\Ro^n\rho&=\Ro\partial_j\Ro^{n-1}\rho+\big(\partial_j\Ro\big)\Ro^{n-1}\rho \\
&=\Ro^2\big(\partial_j\Pi_{\I}\big)\Ro^{n-1}\rho+\sum_{k=0}^{n-2}\bigg(\Ro^{n-k+1}\big(\partial_j\Pi_{\I}\big)+\Ro^{n-k}\big(\partial_j\Pi_{\I}\big)\bigg)\Ro^k\rho+\Ro\big(\partial_j\Pi_{\I}\big)\big(\id+\Ro\big)\Ro^{n-1}\rho\,,
\end{split}
\end{flalign}
where we used Lemma \ref{t2.2.22} and Property \ref{t2.2.9} together.
\end{proof}

\begin{prop}\label{t2.2.31}
The parameter dependence of the AWF, $\forall n\in\Z_{\geq0}$, is given by
\begin{flalign}\label{2.2.46}
\begin{split}
\partial_j\af{n}{a}(\xi)=(-1)^j\Ri_1(\xi,\tau_j)\af{n}{a}(\tau_j)+(-1)^j\sum_{k=0}^{n-1}\bigg(\Ri_{n-k+1}(\xi,\tau_j)+\Ri_{n-k}(\xi,\tau_j)\bigg)\af{k}{a}(\tau_j)\,.
\end{split}
\end{flalign}
\end{prop}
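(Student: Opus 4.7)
The proof should proceed as a direct generalization of Lemma~\ref{t2.2.23} using the operator identity already established in Lemma~\ref{t2.2.30}. The plan is to rewrite $\partial_j\af{n}{a}$ as $\partial_j(\Ro^n\rho)\Po^a\psi$, apply Lemma~\ref{t2.2.30} to expand $\partial_j(\Ro^n\rho)$ as a sum of operator products of the form $\Ro^m(\partial_j\Pi_{\I})\Ro^k\rho$, and then reduce each such product to a pointwise evaluation at $\tau_j$ by means of Property~\ref{t2.2.21}.

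More precisely, I would first note that by Definition~\ref{t2.2.2} one has $\af{n}{a}(\xi) = \Ro^n\rho\,\Po^a\psi(\xi)$, so
\begin{flalign*}
\partial_j\af{n}{a}(\xi) = \big(\partial_j\Ro^n\rho\big)\Po^a\psi(\xi)\,.
\end{flalign*}
Substituting the expression from Lemma~\ref{t2.2.30} produces one "boundary" term $\Ro(\partial_j\Pi_{\I})\Ro^n\rho\,\Po^a\psi$ together with the sum $\sum_{k=0}^{n-1}\big(\Ro^{n-k+1}(\partial_j\Pi_{\I})+\Ro^{n-k}(\partial_j\Pi_{\I})\big)\Ro^k\rho\,\Po^a\psi$. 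Re-applying Definition~\ref{t2.2.2} inside each block, the residual pieces $\Ro^k\rho\,\Po^a\psi$ are simply the AWF $\af{k}{a}$.

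The remaining step is to evaluate operators of the generic form $\Ro^m(\partial_j\Pi_{\I})\af{k}{a}$ at the point $\xi$. Since $\Ro^m$ is an integral operator with kernel $\Ri_m(\cdot,\zeta)$, one has
\begin{flalign*}
\Ro^m(\partial_j\Pi_{\I})\af{k}{a}(\xi) = \int_{\R}\Ri_m(\xi,\zeta)\big(\partial_j\Pi_{\I}\big)\af{k}{a}(\zeta)\,d\zeta\,,
\end{flalign*}
and the identity of Property~\ref{t2.2.21}, applied with $g(\zeta)=\Ri_m(\xi,\zeta)$ and $f=\af{k}{a}$, collapses this integral to $(-1)^j\Ri_m(\xi,\tau_j)\af{k}{a}(\tau_j)$. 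Summing the contributions with $m=1$ (from the boundary term, multiplying $\af{n}{a}(\tau_j)$) and with $m=n-k+1$ and $m=n-k$ (from the sum, multiplying $\af{k}{a}(\tau_j)$) yields exactly the claimed formula~\eqref{2.2.46}.

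There is no real analytic obstacle here: the derivation reduces to the bookkeeping of indices and the straightforward interplay between the kernel representation of $\Ro^m$ and Property~\ref{t2.2.21}. The only subtlety to check is that one may legitimately pass $\partial_j$ inside the composition $\Ro^n\rho$ and that $\Po^a\psi$ does not carry any $\tj$-dependence, both of which are immediate from the definitions (and from the fact that $\psi$ and its derivatives are fixed data independent of the interval $\I$). Consequently the proof is essentially a one-line application of Lemma~\ref{t2.2.30} followed by a kernel-wise use of Property~\ref{t2.2.21}, so the presentation can be kept very short.
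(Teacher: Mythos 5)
Your proposal is correct and follows essentially the same route as the paper: the paper's proof consists precisely of applying the operator identity of Lemma~\ref{t2.2.30} to $\psi^{(a)}$ and then collapsing each term of the form $\Ro^m\big(\partial_j\Pi_{\I}\big)\af{k}{a}$ via Property~\ref{t2.2.21}, exactly as you describe. The only difference is presentational, since the paper states the result in one displayed equation without spelling out the kernel-wise use of Property~\ref{t2.2.21}.
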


\begin{proof}
This is a consequence of Lemma \ref{t2.2.30},
\begin{flalign}\label{2.2.47}
\begin{split}
\partial_j\Ro^n\rho\psi^{(a)}(\xi)=(-1)^j\Ri_1(\xi,\tau_j)\Ro^n\rho\psi^{(a)}(\tau_j)+(-1)^j\sum_{k=0}^{n-1}\bigg(\Ri_{n-k+1}(\xi,\tau_j)+\Ri_{n-k}(\xi,\tau_j)\bigg)\Ro^k\rho\psi^{(a)}(\tau_j)\,,
\end{split}
\end{flalign}
where Property \ref{t2.2.21} is used again.
\end{proof}

\begin{lem}\label{t2.2.32}
The commutator of $\Po$ and $\Ro^n\rho$, $\forall n\in\Z_{\geq0}$, can be written as
\begin{flalign}\label{2.2.48}
\begin{split}
[\Po,\Ro^n\rho]=-\frac{\gamma}{\dot{u}_0}\rho\Bigg(\Gamma\Ro^n+\sum_{k=0}^{n-1}\Big(\Ro^{n-k}+\Ro^{n-k-1}\Big)\Gamma\Ro^k\Bigg)\rho-\sum_{j=1}^{2m}\partial_j\Ro^n\rho\,.
\end{split}
\end{flalign}
\end{lem}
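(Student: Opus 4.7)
The plan is to prove the formula by induction on $n$, building on an explicit computation of the $n=0$ case. Starting from Lemma \ref{t2.2.11}, which gives $[\Po, \rho] = \rho[\Po, \Ko]\rho$, and substituting Lemma \ref{t2.2.24}, I would reduce using $\rho\Ko = \Ro$ (since $\rho$ and $\Ko$ commute) together with Lemma \ref{t2.2.22} (which states $\partial_j\rho = \Ro(\partial_j\Pi_\I)\rho$), obtaining
\[
[\Po, \rho] = -\frac{\gamma}{\dot{u}_0}\rho\Gamma\rho - \sum_{j=1}^{2m}\partial_j\rho.
\]
Under the empty-sum convention $\sum_{k=0}^{-1} = 0$, this is precisely the target statement at $n=0$.

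For the inductive step, I would apply the Leibniz-type decomposition
\[
[\Po, \Ro^n\rho] = \Ro\,[\Po, \Ro^{n-1}\rho] + [\Po, \Ro]\,\Ro^{n-1}\rho,
\]
substituting the inductive hypothesis into the first summand and the base-case identity (with $[\Po, \Ro] = [\Po, \rho]$ by Lemma \ref{t2.2.11}) into the second. Separating $\Gamma$-bearing from $\partial_j$-bearing contributions, the $\Gamma$-part coming from $\Ro[\Po, \Ro^{n-1}\rho]$ reorganizes, after commuting $\rho$ past $\Ro$, into $-\frac{\gamma}{\dot{u}_0}\rho\bigl[\Ro\Gamma\Ro^{n-1} + \sum_{k=0}^{n-2}(\Ro^{n-k}+\Ro^{n-1-k})\Gamma\Ro^k\bigr]\rho$, while $[\Po, \Ro]\Ro^{n-1}\rho$ yields $-\frac{\gamma}{\dot{u}_0}\rho\Gamma\rho\Ro^{n-1}\rho$. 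The pivotal manipulation is to move $\rho$ past the intervening $\Ro^{n-1}$ so that $\rho\Ro^{n-1}\rho = \Ro^{n-1}\rho^2$, then invoke Property \ref{t2.2.9}, $\rho = \id + \Ro$, to split $\Ro^{n-1}\rho^2 = \Ro^{n-1}\rho + \Ro^n\rho$: the $\Ro^n\rho$ piece produces the leading $\Gamma\Ro^n$ of the target, whereas $\Gamma\Ro^{n-1} + \Ro\Gamma\Ro^{n-1} = (\id+\Ro)\Gamma\Ro^{n-1}$ is precisely the $k=n-1$ summand $(\Ro^{n-k}+\Ro^{n-1-k})\Gamma\Ro^k$ absent from the inductive sum, thereby extending it to $\sum_{k=0}^{n-1}$.

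For the $\partial_j$-part, the contributions $-\sum_j \Ro\,\partial_j\Ro^{n-1}\rho$ (from the inductive hypothesis) and $-\sum_j(\partial_j\rho)\Ro^{n-1}\rho$ (from $[\Po, \Ro]\Ro^{n-1}\rho$) combine via the ordinary Leibniz rule, using $\partial_j\Ro = \partial_j\rho$ (Property \ref{t2.2.9}), into $-\sum_j\partial_j(\Ro^n\rho)$, matching the target. The main obstacle is combinatorial rather than conceptual: correctly tracking the index shift through the $\rho = \id + \Ro$ expansion so that the boundary $k=n-1$ contribution fits flush with the inductive sum, leaving no surplus terms beyond the leading $\Gamma\Ro^n$ and the extended sum $\sum_{k=0}^{n-1}(\Ro^{n-k}+\Ro^{n-1-k})\Gamma\Ro^k$.
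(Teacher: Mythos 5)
Your proposal is correct and follows essentially the same route as the paper: the same $n=0$ base case via Lemmas \ref{t2.2.11}, \ref{t2.2.24} and \ref{t2.2.22}, the same Leibniz splitting $[\Po,\Ro^n\rho]=\Ro[\Po,\Ro^{n-1}\rho]+[\Po,\Ro]\Ro^{n-1}\rho$, and the same $\rho=\id+\Ro$ expansion absorbing the boundary term into the $k=n-1$ summand while producing the leading $\Gamma\Ro^n$, with the $\partial_j$-pieces recombining by the Leibniz rule. No gaps.
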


\begin{proof}
Using Lemma \ref{t2.2.22}, Lemmas \ref{t2.2.11} and \ref{t2.2.24} initiate the induction for $n=0$. So that, assuming that this relation holds for $n-1$, the result is yielded by
\begin{flalign}\label{2.2.49}
\begin{split}
[\Po,\Ro^n\rho]&=\Ro[\Po,\Ro^{n-1}\rho]+[\Po,\Ro]\Ro^{n-1}\rho \\
&=-\frac{\gamma}{\dot{u}_0}\rho\Bigg(\Ro\Gamma\Ro^{n-1}+\sum_{k=0}^{n-2}\Big(\Ro^{n-k}+\Ro^{n-k-1}\Big)\Gamma\Ro^k\Bigg)\rho-\sum_{j=1}^{2m}\Ro\partial_j\Ro^{n-1}\rho-\frac{\gamma}{\dot{u}_0}\rho\Gamma\rho\Ro^{n-1}\rho-\sum_{j=1}^{2m}\Ro\dl{j}\rho\Ro^{n-1}\rho \\
&=-\frac{\gamma}{\dot{u}_0}\rho\Bigg(\Gamma\Ro^{n}+\sum_{k=0}^{n-1}\Big(\Ro^{n-k}+\Ro^{n-k-1}\Big)\Gamma\Ro^k\Bigg)\rho-\sum_{j=1}^{2m}\bigg(\Ro\partial_j\Ro^{n-1}\rho+\big(\partial_j\Ro\big)\Ro^{n-1}\rho\bigg)\,,
\end{split}
\end{flalign}
where we used again Lemmas \ref{t2.2.11} and \ref{t2.2.24} to get to the second line, whilst Lemma \ref{t2.2.22} and Property \ref{t2.2.9} result in the third line.
\end{proof}

\begin{prop}\label{t2.2.33}
The AWF obey, $\forall n\in\Z_{\geq0}$,
\begin{flalign}\label{2.2.50}
\begin{split}
\partial_\xi\af{n}{\alpha}(\xi)=\af{n}{\alpha+1}(\xi)-\frac{\gamma}{\dot{u}_0}\Bigg(&\au{0}{\alpha}\af{n}{0}(\xi)+\sum_{k=0}^{n-1}\big(\au{n-k}{\alpha}+\au{n-k-1}{\alpha}\big)\af{k}{0}(\xi)\Bigg) \\
&-\frac{\gamma\ddot{u}_0}{\dot{u}_0^2}\bigg(n\af{n}{\alpha}(\xi)+(n+1)\af{n+1}{\alpha}(\xi)\bigg)-\sum_{j=1}^{2m}\partial_j\af{n}{\alpha}(\xi)\,.
\end{split}
\end{flalign}
\end{prop}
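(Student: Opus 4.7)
The plan is to mirror the strategy of Lemma \ref{t2.2.27}, exploiting the generalized commutator formula of Lemma \ref{t2.2.32} in place of Lemma \ref{t2.2.24}. First I would write
\begin{flalign*}
\partial_\xi\af{n}{\alpha}(\xi) = \Po\,\Ro^{n}\rho\,\psi^{(\alpha)}(\xi) = \Ro^{n}\rho\,\psi^{(\alpha+1)}(\xi) + [\Po,\Ro^{n}\rho]\psi^{(\alpha)}(\xi) = \af{n}{\alpha+1}(\xi) + [\Po,\Ro^{n}\rho]\psi^{(\alpha)}(\xi)\,,
\end{flalign*}
isolating the $\af{n}{\alpha+1}$ contribution and reducing the task to evaluating the commutator piece by piece using Lemma \ref{t2.2.32}.

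Next, I would substitute $\Gamma = \T_{\psi}\Pi_{\I} + \frac{\ddot{u}_0}{\dot{u}_0}\Ko$ from Notation \ref{t2.2.29} into the right-hand side of Lemma \ref{t2.2.32}, and treat the two summands separately. For the $\T_{\psi}\Pi_{\I}$ piece, the key identity is $\rho\,\T_\psi\Pi_\I\,\Ro^{k}\rho\,\psi^{(\alpha)} = \ipr{\psi}{\Pi_\I \Ro^{k}\rho\,\psi^{(\alpha)}}\,\rho\psi = \au{k}{\alpha}\,\af{0}{0}$, by Definition \ref{t2.2.25}; then pushing $\Ro^{n-k}$ or $\Ro^{n-k-1}$ in from the left (using that $\Ro$ commutes with $\rho$) turns $\af{0}{0}$ into $\af{n-k}{0}$ or $\af{n-k-1}{0}$ respectively. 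For the $\frac{\ddot{u}_0}{\dot{u}_0}\Ko$ piece, I would use $\Ko\rho = \Ro$ together with the commutation of $\rho$ with $\Ro$ to collapse $\rho\,\Ro^{n-k}\,\Ko\,\Ro^{k}\rho\,\psi^{(\alpha)} = \rho\,\Ro^{n+1}\psi^{(\alpha)} = \af{n+1}{\alpha}$ and analogously $\rho\,\Ro^{n-k-1}\,\Ko\,\Ro^{k}\rho\,\psi^{(\alpha)} = \af{n}{\alpha}$, so that summing over $k\in\{0,\dots,n-1\}$ and adding the isolated $k=n$-like contribution from $\Gamma\Ro^n$ yields exactly $(n+1)\af{n+1}{\alpha}(\xi) + n\,\af{n}{\alpha}(\xi)$, matching the second line of \eqref{2.2.50} up to the prefactor $-\gamma\ddot{u}_0/\dot{u}_0^2$.

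The last term $-\sum_{j=1}^{2m}\partial_j\Ro^{n}\rho\,\psi^{(\alpha)}$ in Lemma \ref{t2.2.32} is by definition $-\sum_{j=1}^{2m}\partial_j\af{n}{\alpha}(\xi)$, producing the final sum in \eqref{2.2.50}. The only non-routine bookkeeping is a re-indexing argument (essentially Remark \ref{t2.2.16}) needed to collect the $\T_\psi\Pi_\I$ contributions into the form appearing in the proposition: the isolated $\Gamma\Ro^{n}$ term gives $\au{n}{\alpha}\af{0}{0}$, while the sum generates $\au{k}{\alpha}\af{n-k}{0} + \au{k}{\alpha}\af{n-k-1}{0}$ for $k=0,\dots,n-1$. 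Substituting $j=n-k$ in the first family and $j=n-k-1$ in the second, the total rearranges into $\au{0}{\alpha}\af{n}{0} + \sum_{k=0}^{n-1}\big(\au{n-k}{\alpha}+\au{n-k-1}{\alpha}\big)\af{k}{0}$, producing the first parenthesis of \eqref{2.2.50}.

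The main obstacle, and the only genuinely delicate step, is this last re-indexing: the asymmetric treatment of the first term $\Gamma\Ro^{n}$ versus the symmetric-looking sum $\sum_{k}(\Ro^{n-k}+\Ro^{n-k-1})\Gamma\Ro^{k}$ inside Lemma \ref{t2.2.32} must be traced carefully so that the lone $\au{n}{\alpha}\af{0}{0}$ contribution combines with one half of the sum to produce precisely $\au{0}{\alpha}\af{n}{0}$ as the isolated term in \eqref{2.2.50}. Everything else reduces to the commutation $[\Ro,\rho]=0$, the identity $\Ko\rho=\Ro$, Lemma \ref{t2.2.12}, and the definitions of $\af{n}{a}$ and $\au{n}{\alpha}$, which are all already available.
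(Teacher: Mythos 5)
Your proposal is correct and follows essentially the same route as the paper: isolate $\af{n}{\alpha+1}$ from $\Po\Ro^n\rho\psi^{(\alpha)}$, evaluate $[\Po,\Ro^n\rho]$ via Lemma \ref{t2.2.32} by splitting $\Gamma=\T_\psi\Pi_\I+\frac{\udd}{\ud}\Ko$, use Definition \ref{t2.2.25}, $\Ko\rho=\Ro$ and $[\Ro,\rho]=0$, and identify $-\sum_j\partial_j\Ro^n\rho\psi^{(\alpha)}$ with $-\sum_j\partial_j\af{n}{\alpha}$. The only cosmetic difference is that the paper performs the re-indexing at the operator level first (its identity\eref{2.2.51}, moving $\Gamma$ past the powers of $\Ro$) so the coefficients $\big(\au{n-k}{\alpha}+\au{n-k-1}{\alpha}\big)\af{k}{0}$ appear directly, whereas you apply Lemma \ref{t2.2.32} as stated and re-index the resulting scalar sums afterwards — the bookkeeping you flag as delicate, which indeed checks out.
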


\begin{proof}
Firstly we observe that
\begin{flalign}\label{2.2.51}
\begin{split}
\Gamma\Ro^{n}+\sum_{k=0}^{n-1}\Big(\Ro^{n-k}+\Ro^{n-k-1}\Big)\Gamma\Ro^k=\Gamma\Ro^{n}+\sum_{k=1}^{n}\Ro^k\Gamma\Ro^{n-k}+\sum_{k=0}^{n-1}\Ro^k\Gamma\Ro^{n-k-1}=\Ro^n\Gamma+\sum_{k=0}^{n-1}\Ro^k\Gamma\Big(\Ro^{n-k}+\Ro^{n-k-1}\Big)\,.
\end{split}
\end{flalign}
Furthermore we notice $\Ro^n\Ro+\sum_{k=0}^{n-1}\Ro^k\Ro\Big(\Ro^{n-k}+\Ro^{n-k-1}\Big)=n\Ro^n+(n+1)\Ro^{n+1}$, and, using $\rho\Gamma=\rho\T_{\psi}\Pi_{\I}+\frac{\ddot{u}_0}{\dot{u}_0}\Ro$, this leads to
\begin{flalign}\label{2.2.52}
\begin{split}
\partial_\xi\af{n}{\alpha}(\xi)&=\Po\Ro^n\rho\Po^\alpha\psi(\xi)=\af{n}{\alpha+1}(\xi)+[\Po,\Ro^n\rho]\psi^{(\alpha)}(\xi) \\
&=\af{n}{\alpha+1}(\xi)-\frac{\gamma}{\dot{u}_0}\rho\Bigg(\Ro^n\Gamma+\sum_{k=0}^{n-1}\Ro^k\Gamma\Big(\Ro^{n-k}+\Ro^{n-k-1}\Big)\Bigg)\rho\psi^{(\alpha)}(\xi)-\sum_{j=1}^{2m}\partial_j\Ro^n\rho\psi^{(\alpha)}(\xi) \\
&=\af{n}{\alpha+1}(\xi)-\frac{\gamma}{\dot{u}_0}\Bigg(\au{n}{\alpha}\Ro^n\rho\psi(\xi)+\sum_{k=0}^{n-1}\Big(\au{n-k}{\alpha}+\au{n-k-1}{\alpha}\Big)\Ro^k\rho\psi(\xi)\Bigg) \\
&\,\,\,\,\,\,\,\,\,\,\,\,\,\,\,\,\,\,\,\,\,\,\,\,\,\,\,\,\,\,\,\,-\frac{\gamma\ddot{u}_0}{\dot{u}_0^2}\bigg(n\Ro^n\rho\psi^{(\alpha)}(\xi)+(n+1)\Ro^{n+1}\rho\psi^{(\alpha)}(\xi)\bigg)-\sum_{j=1}^{2m}\partial_j\Ro^n\rho\psi^{(\alpha)}(\xi)\,,
\end{split}
\end{flalign}
which is a consequence of Lemma \ref{t2.2.32}.
\end{proof}

\subsubsection{Closure Relations}\label{s223}

We now demonstrate that $\af{n}{2}(\xi)$ is related to $\af{n}{0}(\xi)$ and $\af{n}{1}(\xi)$ through Lemma \ref{t2.1.9}, $\forall n\in\Z_{\geq0}$.

\begin{lem}\label{t2.2.34}
The zeroth order AWFs are related by
\begin{flalign}\label{2.2.53}
\begin{split}
\af{0}{2}(\xi)=\frac{\gamma^2}{\dot{u}_0^2}\big(v_0+\xi\big)\af{0}{0}(\xi)-\frac{\gamma\ddot{u}_0}{\dot{u}_0^2}\af{0}{1}(\xi)+\frac{\gamma}{\dot{u}_0}\Big(\au{0}{0}\af{0}{1}(\xi)-\au{0}{1}\af{0}{0}(\xi)\Big)\,.
\end{split}
\end{flalign}
\end{lem}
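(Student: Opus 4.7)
The key idea is to write $\af{0}{2}(\xi)=\rho\,\psi''(\xi)$ from Definition~\ref{t2.2.2} and then convert $\psi''$ into an expression involving the multiplication operator $\Qo$ via the Schrödinger-type identity of Lemma~\ref{t2.1.9}. Specifically, Lemma~\ref{t2.1.9} rewrites
\[
\psi''=\frac{\gamma^2}{\dot u_0^2}(v_0+\xi)\psi-\frac{\gamma\ddot u_0}{\dot u_0^2}\psi'=\frac{\gamma^2}{\dot u_0^2}\bigl(v_0\,\psi+\Qo\psi\bigr)-\frac{\gamma\ddot u_0}{\dot u_0^2}\psi',
\]
so applying $\rho$ gives
\[
\af{0}{2}=\frac{\gamma^2}{\dot u_0^2}\bigl(v_0\,\af{0}{0}+\rho\,\Qo\psi\bigr)-\frac{\gamma\ddot u_0}{\dot u_0^2}\af{0}{1}.
\]
The term $-\frac{\gamma\ddot u_0}{\dot u_0^2}\af{0}{1}$ already matches the corresponding piece of the lemma's right-hand side, and $v_0\,\af{0}{0}$ is part of the prefactor $(v_0+\xi)\af{0}{0}$. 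Hence everything boils down to evaluating $\rho\,\Qo\,\psi$.

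The plan for this remaining piece is to split $\rho\,\Qo=\Qo\,\rho+[\rho,\Qo]$. Acting on $\psi$, the first summand contributes $\Qo\,\af{0}{0}(\xi)=\xi\,\af{0}{0}(\xi)$, which combines with $v_0\,\af{0}{0}(\xi)$ to produce the advertised factor $(v_0+\xi)\af{0}{0}(\xi)$ up to the overall $\frac{\gamma^2}{\dot u_0^2}$. For the commutator, Lemma~\ref{t2.2.11} gives $[\Qo,\rho]=\rho\,[\Qo,\Ko]\,\rho$, and Lemma~\ref{t2.2.10} provides the explicit form
\[
[\Qo,\Ko]=\frac{\dot u_0}{\gamma}\bigl(\T_{\psi,\psi'}-\T_{\psi',\psi}\bigr)\Pi_{\I}.
\]
Thus $[\Qo,\Ko]\,\rho\,\psi=\frac{\dot u_0}{\gamma}\bigl(\ipr{\psi'}{\Pi_{\I}\af{0}{0}}\psi-\ipr{\psi}{\Pi_{\I}\af{0}{0}}\psi'\bigr)$, and applying the outer $\rho$ turns the two functions $\psi$ and $\psi'$ into $\af{0}{0}$ and $\af{0}{1}$.

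The last step is to identify the two scalar inner products with the $\au{0}{a}$ quantities. By Definition~\ref{t2.2.25} one has $\au{0}{0}=\ipr{\psi}{\Pi_{\I}\rho\psi}=\ipr{\psi}{\Pi_{\I}\af{0}{0}}$, and by Property~\ref{t2.2.26} (which is simply Lemma~\ref{t2.2.12} applied to swap $\rho$ onto the other slot) one has $\ipr{\psi'}{\Pi_{\I}\af{0}{0}}=\au{0}{1}$. Plugging these in yields
\[
\rho\,\Qo\,\psi(\xi)=\xi\,\af{0}{0}(\xi)+\frac{\dot u_0}{\gamma}\bigl(\au{0}{0}\af{0}{1}(\xi)-\au{0}{1}\af{0}{0}(\xi)\bigr),
\]
and multiplying by $\gamma^2/\dot u_0^2$ and assembling all contributions gives exactly\eref{2.2.53}.

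This proof is almost entirely bookkeeping once the right identities are invoked; the only non-trivial move is recognizing that the multiplicative variable $\xi$ appearing in Lemma~\ref{t2.1.9} should be treated as the operator $\Qo$ so that the machinery of \sref{s221} (Lemmas~\ref{t2.2.10}--\ref{t2.2.12}) becomes applicable. The main place where one must be careful is keeping track of which slot of the inner product carries $\rho$, since this is what turns the $\T_{\psi,\psi'}-\T_{\psi',\psi}$ structure into the antisymmetric combination $\au{0}{0}\af{0}{1}-\au{0}{1}\af{0}{0}$ rather than, say, $\au{0}{1}\af{0}{1}-\au{0}{0}\af{0}{0}$; Property~\ref{t2.2.26} is the precise tool that settles this unambiguously.
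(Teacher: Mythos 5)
Your proof is correct and follows essentially the same route as the paper: rewrite Lemma \ref{t2.1.9} in operator form as $\Po^2\psi=\frac{\gamma^2}{\dot u_0^2}\big(v_0+\Qo-\frac{\ddot u_0}{\gamma}\Po\big)\psi$, apply $\rho$, treat the $\Qo$ term via $[\Qo,\rho]=\rho[\Qo,\Ko]\rho$ (Lemmas \ref{t2.2.10} and \ref{t2.2.11}), and identify the resulting inner products as $\au{0}{0}$ and $\au{0}{1}$ through Definition \ref{t2.2.25} and Property \ref{t2.2.26}. The only cosmetic remark is that you silently pass from $[\Qo,\rho]$ to $[\rho,\Qo]$ when assembling $\rho\,\Qo\,\psi$; the sign flip is handled correctly in your final display, but it would be worth making explicit.
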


\begin{proof}
We rewrite Lemma \ref{t2.1.9} as
\begin{flalign}\label{2.2.54}
\begin{split}
\Po^2\psi=\frac{\gamma^2}{\dot{u}_0^2}\bigg(v_0+\Qo-\frac{\ddot{u}_0}{\gamma}\Po\bigg)\psi\,,
\end{split}
\end{flalign}
and then Lemmas \ref{t2.2.10} and \ref{t2.2.11} entail
\begin{flalign}\label{2.2.55}
\begin{split}
\af{0}{2}(\xi)&=\frac{\gamma^2}{\dot{u}_0^2}\rho\bigg(v_0+\Qo-\frac{\ddot{u}_0}{\gamma}\Po\bigg)\psi(\xi)=\frac{\gamma^2}{\dot{u}_0^2}\big(v_0+\xi\big)\af{0}{0}(\xi)-\frac{\gamma\ddot{u}_0}{\dot{u}_0^2}\af{0}{1}(\xi)-\frac{\gamma^2}{\dot{u}_0^2}[\Qo,\rho]\psi(\xi) \\
&=\frac{\gamma^2}{\dot{u}_0^2}\big(v_0+\xi\big)\af{0}{0}(\xi)-\frac{\gamma\ddot{u}_0}{\dot{u}_0^2}\af{0}{1}(\xi)-\frac{\gamma}{\dot{u}_0}\bigg(\ipr{\Po\psi}{\Pi_{\I}\rho\psi}\af{0}{0}(\xi)-\ipr{\psi}{\Pi_{\I}\rho\psi}\af{0}{1}(\xi)\bigg)\,,
\end{split}
\end{flalign}
completing the proof.
\end{proof}

It remains to generalize this result for any AWF.

\begin{lem}\label{t2.2.35}
One may express the commutator of $\Qo$ and $\Ro^n\rho$ for any $n\in\Z_{\geq0}$ as follows,
\begin{flalign}\label{2.2.56}
\begin{split}
[\Ro^n\rho,\Qo]=\frac{\dot{u}_0}{\gamma}\rho\Bigg(\big(\T_{\psi',\psi}-\T_{\psi,\psi'}\big)\Pi_{\I}\Ro^n+\sum_{k=0}^{n-1}\big(\Ro^{n-k}+\Ro^{n-k-1}\big)\big(\T_{\psi',\psi}-\T_{\psi,\psi'}\big)\Pi_{\I}\Ro^k\Bigg)\rho\,.
\end{split}
\end{flalign}
\end{lem}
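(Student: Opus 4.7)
The plan is to prove this by induction on $n \in \Z_{\geq 0}$, mirroring the strategy of Lemma \ref{t2.2.32}. The only external inputs needed are Lemma \ref{t2.2.10} (which evaluates $[\Qo, \Ko]$) and Lemma \ref{t2.2.11} (which sandwiches any commutator with $\Ko$ between two copies of $\rho$).

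For the base case $n = 0$, the sum in \eqref{2.2.56} is empty and the claim reduces to $[\rho, \Qo] = \frac{\dot{u}_0}{\gamma}\rho(\T_{\psi',\psi}-\T_{\psi,\psi'})\Pi_{\I}\rho$. Applying Lemma \ref{t2.2.11} with $\OO = \Qo$ gives $[\Qo,\rho] = \rho[\Qo,\Ko]\rho$, and Lemma \ref{t2.2.10} supplies $[\Qo,\Ko]$ explicitly; flipping signs yields the base case. The very same computation with $\Ro$ in place of $\rho$ produces $[\Ro,\Qo] = \frac{\dot{u}_0}{\gamma}\rho(\T_{\psi',\psi}-\T_{\psi,\psi'})\Pi_{\I}\rho$, which I will reuse below.

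For the inductive step I would write $\Ro^n\rho = \Ro\,(\Ro^{n-1}\rho)$ and split
\[
[\Ro^n\rho,\Qo] \;=\; \Ro\,[\Ro^{n-1}\rho,\Qo] \;+\; [\Ro,\Qo]\,\Ro^{n-1}\rho.
\]
The first term is handled by the inductive hypothesis, noting that $\Ro$ commutes with $\rho$ (immediate from Property \ref{t2.2.9}, $\rho = \id + \Ro$), so that the external $\Ro$ slides past the outer $\rho$ and lands inside the sandwich. The second term is treated via the $n=1$ identity just recorded combined with Property \ref{t2.2.9}, which converts $\rho\,\Ro^{n-1}$ into $\Ro^{n-1}+\Ro^n$ and thereby contributes two new summands $(\T_{\psi',\psi}-\T_{\psi,\psi'})\Pi_{\I}\Ro^{n-1}$ and $(\T_{\psi',\psi}-\T_{\psi,\psi'})\Pi_{\I}\Ro^n$, each flanked by the outer $\rho$'s.

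The main obstacle is the bookkeeping of the summation index, which is the same mechanical issue as in the proof of Lemma \ref{t2.2.32}. I would verify the step by peeling the $k = n-1$ term off the target sum as $(\Ro+\id)(\T_{\psi',\psi}-\T_{\psi,\psi'})\Pi_{\I}\Ro^{n-1}$; together with the new leading term $(\T_{\psi',\psi}-\T_{\psi,\psi'})\Pi_{\I}\Ro^n$ this exactly absorbs the two extra summands from $[\Ro,\Qo]\Ro^{n-1}\rho$, while the remaining $\sum_{k=0}^{n-2}(\Ro^{n-k}+\Ro^{n-1-k})(\T_{\psi',\psi}-\T_{\psi,\psi'})\Pi_{\I}\Ro^k$ matches precisely what the inductive hypothesis produces after the $\Ro$-shift through the outer $\rho$.
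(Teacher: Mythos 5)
Your proposal is correct and follows essentially the same route as the paper: induction on $n$ with the split $[\Ro^n\rho,\Qo]=\Ro[\Ro^{n-1}\rho,\Qo]+[\Ro,\Qo]\Ro^{n-1}\rho$, the base case from Lemmas \ref{t2.2.10} and \ref{t2.2.11}, and Property \ref{t2.2.9} to convert $\rho\,\Ro^{n-1}$ into $\Ro^{n-1}+\Ro^{n}$. Your explicit reconciliation of the $k=n-1$ term is exactly the bookkeeping the paper leaves implicit by referring back to the proof of Lemma \ref{t2.2.32}.
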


\begin{proof}
For $n=0$ this result is ensured by Lemmas \ref{t2.2.10} and \ref{t2.2.11} together. And then, assuming\eref{2.2.56} holds for $n-1$, we evaluate
\begin{flalign}\label{2.2.57}
\begin{split}
[\Ro^n\rho,\Qo]=\frac{\dot{u}_0}{\gamma}\rho\Bigg(\Ro\big(\T_{\psi',\psi}-\T_{\psi,\psi'}\big)\Pi_{\I}\Ro^{n-1}+\sum_{k=0}^{n-2}\big(\Ro^{n-k}&+\Ro^{n-k-1}\big)\big(\T_{\psi',\psi}-\T_{\psi,\psi'}\big)\Pi_{\I}\Ro^k\Bigg)\rho \\
&+\frac{\dot{u}_0}{\gamma}\rho\big(\T_{\psi',\psi}-\T_{\psi,\psi'}\big)\Pi_{\I}\big(\id+\Ro\big)\Ro^{n-1}\rho\,.
\end{split}
\end{flalign}
This argument is highly similar to the proof of Lemma \ref{t2.2.32} to which we refer for more detailed intermediate steps.
\end{proof}

\begin{prop}\label{t2.2.36}
Defining for the sake of conciseness $\au{-p}{a}\coloneqq 0$, $\forall p\in\N$, one may show that the following relation is satisfied $\forall n\in\Z_{\geq0}\,$,
\begin{flalign}\label{2.2.58}
\begin{split}
\af{n}{2}(\xi)=\frac{\gamma^2}{\dot{u}_0^2}\big(v_0+\xi\big)\af{n}{0}(\xi)-\frac{\gamma\ddot{u}_0}{\dot{u}_0^2}\af{n}{1}(\xi)+\frac{\gamma}{\dot{u}_0}\sum_{k=0}^n\bigg(\big(\au{n-k}{0}+\au{n-k-1}{0}\big)\af{k}{1}(\xi)-\big(\au{n-k}{1}+\au{n-k-1}{1}\big)\af{k}{0}(\xi)\bigg)\,.
\end{split}
\end{flalign}
\end{prop}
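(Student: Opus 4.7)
The plan is a direct computation using Lemma~\ref{t2.1.9} in its operator form \eqref{2.2.54} together with the commutator formula of Lemma~\ref{t2.2.35}; no induction on $n$ is needed, and Lemma~\ref{t2.2.34} will reappear simply as the $n=0$ instance. I would begin by writing $\af{n}{2}(\xi)=\Ro^n\rho\,\Po^2\psi(\xi)$ and substituting $\Po^2\psi=\frac{\gamma^2}{\dot{u}_0^2}\bigl(v_0+\Qo-\frac{\ddot{u}_0}{\gamma}\Po\bigr)\psi$. The $v_0\psi$ contribution immediately yields $\frac{\gamma^2 v_0}{\dot{u}_0^2}\af{n}{0}(\xi)$ and the $\Po\psi$ contribution yields $-\frac{\gamma\ddot{u}_0}{\dot{u}_0^2}\af{n}{1}(\xi)$, matching the first two explicit terms on the right-hand side of \eqref{2.2.58}.

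The remaining term $\frac{\gamma^2}{\dot{u}_0^2}\Ro^n\rho\,\Qo\psi(\xi)$ is the interesting one. I would move $\Qo$ across $\Ro^n\rho$ by writing $\Ro^n\rho\,\Qo\,\psi(\xi)=\xi\,\af{n}{0}(\xi)+[\Ro^n\rho,\Qo]\psi(\xi)$, so that the diagonal part combines with the $v_0$ contribution to produce the factor $(v_0+\xi)\,\af{n}{0}(\xi)$. I would then expand $[\Ro^n\rho,\Qo]\psi(\xi)$ using Lemma~\ref{t2.2.35}. Each occurrence of $\T_{\psi,\psi'}\Pi_{\I}\Ro^k\rho\,\psi$ and $\T_{\psi',\psi}\Pi_{\I}\Ro^k\rho\,\psi$ reduces, by Definition~\ref{t2.2.6}, Definition~\ref{t2.2.25} and Property~\ref{t2.2.26}, to $\au{k}{1}\psi$ and $\au{k}{0}\psi'$ respectively; the remaining factors $\rho$ and $\Ro^{n-k}+\Ro^{n-k-1}$ commute and convert $\psi$ and $\psi'$ into $\af{n-k}{0}+\af{n-k-1}{0}$ and $\af{n-k}{1}+\af{n-k-1}{1}$. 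At this stage the whole contribution equals $\frac{\gamma}{\dot{u}_0}$ times an isolated piece $\au{n}{0}\af{0}{1}(\xi)-\au{n}{1}\af{0}{0}(\xi)$ plus a sum over $k=0,\ldots,n-1$.

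The only slightly delicate step is the final reindexing, which I would carry out by means of Remark~\ref{t2.2.16} together with the convention $\au{-1}{a}\coloneqq 0$: after swapping the dummy variable on each half of the sum via $k'=n-k$ and $k''=n-k-1$, the isolated $\au{n}{0}\af{0}{1}(\xi)-\au{n}{1}\af{0}{0}(\xi)$ piece is absorbed into the $k=0$ boundary term of the first shifted sum, and the second shifted sum is extended from $k=n-1$ up to $k=n$ at no cost, so that the two shifted sums assemble into the single $\sum_{k=0}^{n}$ carrying the coefficients $\au{n-k}{0}+\au{n-k-1}{0}$ and $\au{n-k}{1}+\au{n-k-1}{1}$, as required by \eqref{2.2.58}. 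The main obstacle is purely combinatorial bookkeeping — checking that every endpoint matches and that the convention $\au{-1}{a}=0$ is invoked precisely where needed — since all the analytic content is already encoded in Lemmas~\ref{t2.1.9} and \ref{t2.2.35}.
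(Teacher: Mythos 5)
Your proposal is correct and follows essentially the paper's own route: both rewrite $\Po^2\psi$ via \eqref{2.2.54}, apply $\Ro^n\rho$, and evaluate $[\Ro^n\rho,\Qo]\psi$ with Lemma \ref{t2.2.35} and Property \ref{t2.2.26}, with the convention $\au{-1}{a}=0$ absorbing the boundary term into the $\sum_{k=0}^{n}$. The only (cosmetic) difference is that the paper performs the reindexing at the operator level first, via the rearrangement \eqref{2.2.59} analogous to \eqref{2.2.51}, whereas you evaluate the raw commutator and reindex the resulting scalar sums afterwards; both bookkeepings check out.
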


\begin{proof}
Similarly to\eref{2.2.51}, we have
\begin{flalign}\label{2.2.59}
\begin{split}
\big(\T_{\psi',\psi}-\T_{\psi,\psi'}\big)\Pi_{\I}\Ro^n+\sum_{k=0}^{n-1}\big(\Ro^{n-k}+&\Ro^{n-k-1}\big)\big(\T_{\psi',\psi}-\T_{\psi,\psi'}\big)\Pi_{\I}\Ro^k \\
&=\Ro^n\big(\T_{\psi',\psi}-\T_{\psi,\psi'}\big)\Pi_{\I}+\sum_{k=0}^{n-1}\Ro^k\big(\T_{\psi',\psi}-\T_{\psi,\psi'}\big)\Pi_{\I}\big(\Ro^{n-k}+\Ro^{n-k-1}\big)\,,
\end{split}
\end{flalign}
which is yielding
\begin{flalign}\label{2.2.60}
\begin{split}
\af{n}{2}(\xi)&=\frac{\gamma^2}{\dot{u}_0^2}\Ro^n\rho\bigg(v_0+\Qo-\frac{\ddot{u}_0}{\gamma}\Po\bigg)\psi(\xi)=\frac{\gamma^2}{\dot{u}_0^2}\big(v_0+\xi\big)\af{n}{0}(\xi)-\frac{\gamma\ddot{u}_0}{\dot{u}_0^2}\af{n}{1}(\xi)+\frac{\gamma^2}{\dot{u}_0^2}[\Ro^n\rho,\Qo]\psi(\xi) \\
&=\frac{\gamma^2}{\dot{u}_0^2}\big(v_0+\xi\big)\af{n}{0}(\xi)-\frac{\gamma\ddot{u}_0}{\dot{u}_0^2}\af{n}{1}(\xi) \\
&\,\,\,\,\,\,\,\,\,\,\,\,\,\,\,\,\,\,\,\,\,\,\,\,\,\,\,\,\,\,+\frac{\gamma}{\dot{u}_0}\Bigg(\au{0}{0}\af{n}{1}(\xi)-\au{0}{1}\af{n}{0}(\xi)+\sum_{k=0}^{n-1}\Big((\au{n-k}{0}+\au{n-k-1}{0})\af{k}{1}(\xi)-(\au{n-k}{1}+\au{n-k-1}{1})\af{k}{0}(\xi)\Big)\Bigg)\,,
\end{split}
\end{flalign}
where\eref{2.2.54} and Lemma \ref{t2.2.35} have been used.
\end{proof}

\section{Associated Hamiltonian System and Lax Formalism}\label{s3}

\subsection{Hamiltonian Formulation}\label{s31}

We now relate the diagonal values of the resolvent kernel to the Fredholm determinant, and then discuss the Hamiltonian formulations of the dynamics of the AWF that the $n^{\operatorname{th}}$ order resolvent kernel gives rise to, in particular we ask whether or not the Hamiltonians commute in a Poisson brackets sense. Whilst this discussion aims solely at providing insights, it turns out that it will require technical results which shall play an important role in section \sref{s4}. Throughout this section we still consider a generic interval\eref{2.2.6}, and we define the main quantity of interest as follows.

\begin{defi}\label{t3.1.1}
We refer to $\Ha_n(\tau_j)\coloneqq \frac{\gamma}{\dot{u}_0}\Ri_n(\tau_j,\tau_j)$ as the $n^{\operatorname{th}}$ order Hamiltonian, for any $n\in\Z_{\geq0}$.
\end{defi}

Since $\Ri_n$ is the kernel associated with the $\nth$ power of the resolvent, we would like to point out an eventual analogy with the Lax framework, in which one may construct the commuting Hamiltonians as $\Ha_n=\frac{1}{n}\tra\operatorname{L}^n$, where their conservation is ensured by the fact that $\operatorname{L}$ satisfies a Lax equation. Although, strictly speaking, the Hamiltonians of Definition \ref{t3.1.1} do not commute with each other.

\subsubsection{Hamilton's Equations for the Auxiliary Wave Functions}\label{s311}

\begin{prop}\label{t3.1.2}
For any $j\in\{1,...,2m\}\subset\N$, the Fredholm determinant is related to the first order Hamiltonian by
\begin{flalign}\label{3.1.1}
\begin{split}
\frac{d}{d\tj}\lo\F(\I)=\frac{\dot{u}_0}{\gamma}\Ha_1(\tj)\,.
\end{split}
\end{flalign}
\end{prop}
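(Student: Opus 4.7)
The plan is to apply Jacobi's formula to $\F(\I)=\de(\id-\Ko)$ and manipulate the resulting trace using the machinery of Section~\ref{s2}. Writing $\lo\F(\I)=\tra\lo(\id-\Ko)$ and differentiating with respect to $\tj$ gives
\[
\frac{d}{d\tj}\lo\F(\I)=-\tra\big(\rho\,\partial_j\Ko\big)\,.
\]
Since only the projector in $\Ko=\Ko_0\Pi_\I$ depends on $\tj$, we have $\partial_j\Ko=\Ko_0\,\partial_j\Pi_\I$. Mirroring the computation at\eref{2.2.32}, the fact that $\ind{\I}(\tj)=1$ allows one to insert a copy of $\Pi_\I$ and convert $\Ko_0\,\partial_j\Pi_\I$ into $\Ko\,\partial_j\Pi_\I$. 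Combining this with $\rho\Ko=\Ro$ (Property~\ref{t2.2.9} and Definition~\ref{t2.2.1}), the problem reduces to evaluating
\[
\frac{d}{d\tj}\lo\F(\I)=-\tra\big(\Ro\,\partial_j\Pi_\I\big)\,.
\]

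I would then extract $\Ri_1(\tj,\tj)$ from this trace via Property~\ref{t2.2.21}, which represents $\partial_j\Pi_\I$ as a rank-one object concentrated at $\tj$: for any $f,g\in\leb^2(\R)$, $\int_\R g(x)\,\partial_j\Pi_\I\,f(x)\,dx=(-1)^j g(\tj)f(\tj)$. Formally this identifies $\partial_j\Pi_\I$ with $(-1)^j\kb{\delta_{\tj}}{\delta_{\tj}}$, and tracing against $\Ro$ produces $(-1)^j\Ri_1(\tj,\tj)$, where the diagonal value of $\Ri_1$ is unambiguous by Lemma~\ref{t2.2.15}. Comparison with Definition~\ref{t3.1.1}, $\Ha_1(\tj)=\frac{\gamma}{\ud}\Ri_1(\tj,\tj)$, then delivers the claimed identity, with the remaining orientation sign absorbed into the convention distinguishing left from right endpoints of $\I$.

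The main obstacle is the rigorous handling of the distributional object $\partial_j\Pi_\I$ inside a trace of trace-class operators. A clean way forward is a regularization: replace $\Pi_\I$ by a smooth cutoff $\Pi_\I^\varepsilon$, carry out Jacobi's formula and the trace cyclicity inside this approximation, then pass to $\varepsilon\to 0$ using the continuity of $\Ri_1$ at $(\tj,\tj)$ provided by Lemma~\ref{t2.2.15}. An alternative route is to differentiate the Fredholm series of Definitions~\ref{t1.1.1} term-by-term with respect to $\tj$; each differentiation pulls out a boundary contribution via Property~\ref{t2.2.21}, and the resulting sum can be reorganized into the diagonal resolvent kernel $\Ri_1(\tj,\tj)$. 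Both routes rely only on mild regularity of $\Ri_1$ at $\partial\I$, which is guaranteed by the Christoffel-Darboux type formula of Lemma~\ref{t2.2.14}.
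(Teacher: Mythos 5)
Your argument is essentially the paper's own proof: differentiate $\lo\de(\id-\Ko)$ via the trace--log identity, use $\partial_j\Ko=\Ko_0\dl{j}$ together with\eref{2.2.32} to trade $\Ko_0$ for $\Ko$ and reduce to $\tra\big(\Ro\dl{j}\big)$, then evaluate this trace through Property \ref{t2.2.21} as the diagonal resolvent value $\Ri_1(\tj,\tj)$ and conclude with Definition \ref{t3.1.1}. Even your treatment of the residual $(-1)^j$ orientation sign (and of the distributional nature of $\dl{j}$ inside the trace) is at the same level of care as the paper's proof, which lets it cancel silently against the sign from the logarithmic derivative, so no substantive difference remains.
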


\begin{proof}
We indeed compute
\begin{flalign}\label{3.1.2}
\begin{split}
\frac{d}{d\tj}\lo\de\big(\id-\Ko\big)_{\leb^2(\R)}&=\tra\bigg(\frac{d}{d\tj}\lo\big(\id-\Ko\big)\bigg)_{\leb^2(\R)}=\tra\bigg(\big(\id-\Ko\big)^{-1}\Ko_0\dl{j}\bigg)_{\leb^2(\R)} \\
&=\tra\bigg(\Ro\dl{j}\bigg)_{\leb^2(\R)}=\int_{\I}\Ri_1(\xi,\xi)\dl{j}\,d\xi=\Ri_1(\tj,\tj)=\frac{\dot{u}_0}{\gamma}\Ha_1(\tj)\,,
\end{split}
\end{flalign}
where we used\eref{2.2.32} and Property \ref{t2.2.21}.
\end{proof}

\begin{rem}\label{t3.1.3}
This result will play a crucial role in the derivation of an explicit expression for the Fredholm determinant as we shall see in section \sref{s4}. However for the moment we focus on the intrinsic properties of the $\Ha_n$, in particular on the Hamiltonian structure that they provide.
\end{rem}

\begin{lem}\label{t3.1.4}
The $\nth$ order Hamiltonian admits the following expression,
\begin{flalign}\label{3.1.3}
\begin{split}
\Ha_n(\tj)=\sum_{k=1}^n\Big(\af{n-k}{0}'(\tj)\af{k-1}{1}(\tj)-\af{k-1}{1}'(\tj)\af{n-k}{0}(\tj)\Big)\,,
\end{split}
\end{flalign}
where $\af{n}{a}^{(l)}(\tj)\coloneqq \frac{d^l}{d\tj^l}\af{n}{a}(\tj)$.
\end{lem}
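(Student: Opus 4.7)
The plan is to specialise Proposition \ref{t2.2.20} at $\xi=\tj$ and then to convert the partial $\xi$-derivatives it produces into the total $\tj$-derivatives appearing in \eqref{3.1.3}, the discrepancy being controlled by Proposition \ref{t2.2.31}. Since $\tj$ is an endpoint of $\I$, we have $\ind{\I}(\tj)=1$, and Proposition \ref{t2.2.20} evaluated at $\xi=\tj$, combined with Definition \ref{t3.1.1}, yields at once
\begin{flalign*}
\Ha_n(\tj)=\sum_{k=1}^n\Big(\big(\partial_\xi\af{n-k}{0}(\xi)\big)\big|_{\xi=\tj}\,\af{k-1}{1}(\tj)-\big(\partial_\xi\af{k-1}{1}(\xi)\big)\big|_{\xi=\tj}\,\af{n-k}{0}(\tj)\Big).
\end{flalign*}
Since $\af{m}{a}(\xi)$ depends on $\tj$ both through the argument $\xi$ and through the projection $\Pi_\I$ hidden in $\Ko$, the chain rule yields $\af{m}{a}'(\tj)=\big(\partial_\xi+\partial_j\big)\big|_{\xi=\tj}\af{m}{a}(\xi)$, exactly as in Corollary \ref{t2.2.28}. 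Consequently, proving \eqref{3.1.3} is equivalent to establishing that the correction
\begin{flalign*}
D\coloneqq\sum_{k=1}^n\Big(\big(\partial_j\af{n-k}{0}(\xi)\big)\big|_{\xi=\tj}\,\af{k-1}{1}(\tj)-\big(\partial_j\af{k-1}{1}(\xi)\big)\big|_{\xi=\tj}\,\af{n-k}{0}(\tj)\Big)
\end{flalign*}
vanishes identically.

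To prove $D=0$, I would substitute the explicit formula of Proposition \ref{t2.2.31} (specialised to $\xi=\tj$) into each of the two contributions to $D$, producing two kinds of terms. The diagonal $\Ri_1(\tj,\tj)$ contributions cancel pairwise between the two halves of $D$, since both couple the same symmetric product $\af{n-k}{0}(\tj)\af{k-1}{1}(\tj)$ with opposite signs. For the surviving double sums, the substitutions $(p,q)=(l,k-1)$ in the first half and $(p,q)=(n-k,l)$ in the second both map their respective triangular index sets bijectively onto $\{(p,q):p,q\geq 0,\,p+q\leq n-2\}$, and the associated $\Ri_\bullet$-indices both take the common values $n-p-q$ and $n-p-q-1$. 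Both sums therefore collapse to the single expression
\begin{flalign*}
(-1)^j\sum_{\substack{p,q\geq 0\\ p+q\leq n-2}}\Big(\Ri_{n-p-q}(\tj,\tj)+\Ri_{n-p-q-1}(\tj,\tj)\Big)\af{p}{0}(\tj)\af{q}{1}(\tj),
\end{flalign*}
so that $D=0$ and \eqref{3.1.3} follows.

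The main obstacle is not analytic but purely combinatorial: one has to check carefully that the two reindexings indeed enumerate the same triangle $\{p+q\leq n-2\}$ and that the indices of $\Ri_\bullet$ match up after the changes of variables. It is helpful to verify the mechanism by hand for $n=1,2$, where the triangle is respectively empty and reduced to a single point, before writing the general argument. Everything else is a direct application of Propositions \ref{t2.2.20} and \ref{t2.2.31} together with the chain rule for $\af{m}{a}'(\tj)$.
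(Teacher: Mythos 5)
Your proposal is correct and follows essentially the same route as the paper: specialise Proposition \ref{t2.2.20} at $\xi=\tj$, use $\af{m}{a}'(\tj)=(\partial_\xi+\partial_j)\big|_{\xi=\tj}\af{m}{a}(\xi)$, and kill the $\partial_j$-correction via Proposition \ref{t2.2.31}, with the $\Ri_1(\tj,\tj)$ terms cancelling pairwise. The only (cosmetic) difference is that you cancel the two remaining double sums by mapping both bijectively onto the triangle $\{p+q\le n-2\}$ with matching $\Ri$-indices, whereas the paper reindexes one sum, combines them into an antisymmetrized product and annihilates the inner sum by the symmetry of its summation index — the two reindexings are equivalent.
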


\begin{proof}
Firstly, since $\af{n}{a}'(\tj)=\big(\partial_\xi+\partial_j\big)\big|_{\xi=\tj}\af{n}{a}(\xi)$, Proposition \ref{t2.2.20} implies
\begin{flalign}\label{3.1.4}
\begin{split}
\sum_{k=1}^n\Big(\af{n-k}{0}'\af{k-1}{1}-\af{k-1}{1}'\af{n-k}{0}\Big)=\Ha_n(\tj)+\sum_{k=1}^n\Big(\af{k-1}{1}\partial_j\big|_{\xi=\tj}\af{n-k}{0}(\xi)-\af{n-k}{0}\partial_j\big|_{\xi=\tj}\af{k-1}{1}(\xi)\Big)\,.
\end{split}
\end{flalign}
Where $\af{n}{a}^{(l)}\coloneqq \af{n}{a}^{(l)}(\tj)$. And then Proposition \ref{t2.2.31} entails
\begin{flalign}\label{3.1.5}
\begin{split}
\sum_{k=1}^n&\Big(\af{k-1}{1}\partial_j\big|_{\xi=\tj}\af{n-k}{0}(\xi)-\af{n-k}{0}\partial_j\big|_{\xi=\tj}\af{k-1}{1}(\xi)\Big) \\
&=\sum_{k=1}^n(-1)^j\Bigg(\sum_{l=0}^{n-k-1}\Big(\Ri_{n-k-l+1}(\tj,\tj)+\Ri_{n-k-l}(\tj,\tj)\Big)\af{l}{0}\af{k-1}{1}-\sum_{l=0}^{k-2}\Big(\Ri_{k-l}(\tj,\tj)+\Ri_{k-l-1}(\tj,\tj)\Big)\af{n-k}{0}\af{l}{1}\Bigg) \\
&=-\sum_{k=1}^n\sum_{l=0}^{k-2}\Big(\Ri_{k-l}(\tj,\tj)+\Ri_{k-l-1}(\tj,\tj)\Big)\Big(\af{n-k}{0}\af{l}{1}-\af{l}{0}\af{n-k}{1}\Big) \\
&=-\sum_{k=2}^n\sum_{l'=2}^k\Big(\Ri_{l'}(\tj,\tj)+\Ri_{l'-1}(\tj,\tj)\Big)\Big(\af{n-k}{0}\af{j-l'}{1}-\af{j-l'}{0}\af{n-k}{1}\Big)\,.
\end{split}
\end{flalign}
One notices that the third line is obtained using Remark \ref{t2.2.16}, whilst the fourth one is reached with $l'=k-l$. Besides, defining for any $k\in\Z_{\geq2}$, $\Ns{2}{k}\coloneqq\{2,...,k\}\subset\N$, we may observe that, for any $F$,
\begin{flalign}\label{3.1.6}
\begin{split}
\sum_{k=2}^n\sum_{l=2}^kF(l,k,n)=\sum_{l=2}^n\sum_{k=2}^nF(l,k,n)\ind{\Ns{2}{k}}(l)=\sum_{l=2}^n\sum_{k=l}^nF(l,k,n)\,.
\end{split}
\end{flalign}
And therefore
\begin{flalign}\label{3.1.7}
\begin{split}
\sum_{k=1}^n\Big(\af{k-1}{1}\partial_j\big|_{\xi=\tj}\af{n-k}{0}(\xi)-&\af{n-k}{0}\partial_j\big|_{\xi=\tj}\af{k-1}{1}(\xi)\Big) \\
&=-\sum_{l=2}^n\Big(\Ri_{l}(\tj,\tj)+\Ri_{l-1}(\tj,\tj)\Big)\sum_{k=l}^n\Big(\af{n-k}{0}\af{j-l}{1}-\af{j-l}{0}\af{n-k}{1}\Big)=0\,,
\end{split}
\end{flalign}
yielding the result when injected in\eref{3.1.4}.
\end{proof}

\begin{cor}\label{t3.1.5}
The AWF satisfy the following Hamilton's equations, $\forall n\in\N$ and $\forall k\in\{1,...,n\}\subset\N$,
\begin{flalign}\label{3.1.8}
\begin{split}
\af{n-k}{0}'=\frac{\partial\Ha_n}{\partial\af{k-1}{1}}\,,\,\,\,\,\,\,\,\,\,\,\,\,\,\,\af{k-1}{1}'=-\frac{\partial\Ha_n}{\partial\af{n-k}{0}}\,.
\end{split}
\end{flalign}
\end{cor}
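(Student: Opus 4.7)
The plan is to read off the Hamilton's equations directly from the explicit formula provided by Lemma \ref{t3.1.4}, treating the $2n$ functions $\{\af{n-k}{0},\af{k-1}{1}\}_{k=1}^{n}$ as independent phase-space coordinates, canonically paired as $\af{n-k}{0}\leftrightarrow\af{k-1}{1}$ for each fixed $k\in\{1,\dots,n\}$. The content of the corollary is essentially a re-reading of Lemma \ref{t3.1.4} in this symplectic language.

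First, I would rewrite Lemma \ref{t3.1.4} as
\begin{flalign*}
\Ha_n(\tj)=\sum_{l=1}^{n}\Big(q_l' p_l - p_l' q_l\Big)\,,\qquad q_l\coloneqq\af{n-l}{0}(\tj)\,,\,\, p_l\coloneqq\af{l-1}{1}(\tj)\,,
\end{flalign*}
which is manifestly in the canonical template for the generator of Hamiltonian flow. As $l$ ranges over $\{1,\dots,n\}$, the labels $(n-l,0)$ and $(l-1,1)$ run through $2n$ pairwise distinct index pairs, so each AWF appearing in the sum does so in exactly one term. Hence the formal partial derivatives $\partial/\partial\af{n-k}{0}$ and $\partial/\partial\af{k-1}{1}$, understood in the usual Lagrangian sense of acting only on explicit undifferentiated occurrences, are unambiguously defined.

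Next, I would differentiate this expression term by term. For fixed $k\in\{1,\dots,n\}$, the variable $\af{k-1}{1}$ appears explicitly (as a factor, not as a derivative) only in the $l=k$ summand, via the product $\af{n-k}{0}'\,\af{k-1}{1}$, so $\partial\Ha_n/\partial\af{k-1}{1}=\af{n-k}{0}'$. Similarly, $\af{n-k}{0}$ appears explicitly only in the $l=k$ summand, via $-\af{k-1}{1}'\,\af{n-k}{0}$, giving $\partial\Ha_n/\partial\af{n-k}{0}=-\af{k-1}{1}'$. Rearranging these two identities yields precisely \eref{3.1.8}.

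The only conceptually delicate point, rather than a technical obstacle, is the convention that the derivatives $\af{n-k}{0}'$ and $\af{k-1}{1}'$ appearing inside $\Ha_n$ are treated as independent of the undifferentiated AWFs when computing the partial derivatives; this is the standard phase-space interpretation underlying any Hamiltonian formalism of this type, and once it is agreed upon the corollary is an immediate consequence of Lemma \ref{t3.1.4}.
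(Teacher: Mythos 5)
Your proposal is correct and coincides with the paper's (implicit) argument: the corollary is stated as an immediate consequence of Lemma \ref{t3.1.4}, read off exactly as you do by pairing $\big(\af{n-k}{0},\af{k-1}{1}\big)$ canonically (cf.\ Definition \ref{t3.1.10}) and differentiating only the explicit undifferentiated occurrences. Your remark on the phase-space convention for the primed quantities is the same understanding the paper adopts, so nothing is missing.
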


Since each Hamiltonian has a different set of associated canonical variables, evaluating the Poisson brackets of the Hamiltonians will demand to relate the AWF order by order. In turn this shall be highly relevant in \sref{s4}. And in order to go any further, we need the following condition to be satisfied.

\begin{cond}\label{t3.1.6}
We ask that the ratio $\frac{\af{0}{0}(\tj)}{\psi(\tj)}$ is well defined $\forall j\in\{1,...,2m\}\subset\N$.
\end{cond}

\begin{defi}\label{t3.1.7}
We define $\ipro{n}{a}\coloneqq\ipr{\psi}{\Pi_{\operatorname{J}}\Ro^n\rho\Po^a\psi}\in\R$, $\forall n\in\Z_{\geq0}$ and $\operatorname{J}\coloneqq\bigsqcup_{j=1}^m(\tau_{2j-1};\tau_{2j})\subset\I$.
\end{defi}

\begin{lem}\label{t3.1.8}
We have, $\forall n\in\Z_{\geq0}\,$,
\begin{flalign}\label{3.1.9}
\begin{split}
\ipro{n}{a}=\int_{\operatorname{J}}\,\Bigg((-1)^n\psi(\sigma)\af{0}{a}(\sigma)-\sum_{k=0}^{n-1}(-1)^{n-k}\af{k}{a}(\sigma)\af{0}{0}(\sigma)\Bigg)d\sigma\,.
\end{split}
\end{flalign}
\end{lem}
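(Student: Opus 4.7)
My plan is to prove\eref{3.1.9} by induction on $n$, using the recursive structure of the auxiliary wave functions together with the adjoint symmetry of $\Ro$ established in Lemma~\ref{t2.2.12}.

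First, I would dispatch the base case $n=0$ directly: the displayed sum is empty, and unwinding the definitions of $\ipro{0}{a}$ and $\af{0}{a}$ gives $\ipro{0}{a}=\int_\operatorname{J}\psi(\sigma)\af{0}{a}(\sigma)\,d\sigma$, which matches the right-hand side of\eref{3.1.9}. Throughout the argument I will tacitly use that $\operatorname{J}$ and $\I$ differ only on the finite set $\{\tau_j\}_{j=1}^{2m}$, so inside any $\leb^2(\R)$ pairing one may freely replace $\Pi_\operatorname{J}$ by $\Pi_\I$.

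For the inductive step, the key move is to derive the one-step recursion
\begin{equation*}
\ipro{n}{a}=\int_\operatorname{J}\af{0}{0}(\sigma)\,\af{n-1}{a}(\sigma)\,d\sigma-\ipro{n-1}{a}\,.
\end{equation*}
To obtain this I would split off a single $\Ro$ by writing $\Ro^n\rho\Po^a\psi=\Ro\,\af{n-1}{a}$, push it onto the left slot of the inner product via Lemma~\ref{t2.2.12} (so that $\ipr{\psi}{\Pi_\I\Ro\,\af{n-1}{a}}=\ipr{\Ro\psi}{\Pi_\I\,\af{n-1}{a}}$), and finally use the Neumann series identity $\rho=\id+\Ro$ from Property~\ref{t2.2.9} to rewrite $\Ro\psi=\af{0}{0}-\psi$.

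Substituting the inductive hypothesis into this recursion then leaves only sign bookkeeping: the $(-1)^{n-1}\psi\af{0}{a}$ contribution flips to $(-1)^n\psi\af{0}{a}$, and the extra $\af{0}{0}\af{n-1}{a}$ produced by the recursion supplies precisely the missing $k=n-1$ summand (with coefficient $-(-1)^{n-(n-1)}=+1$), thereby extending the sum $\sum_{k=0}^{n-2}(-1)^{n-1-k}\af{k}{a}\af{0}{0}$ to the full range $k=0,\dots,n-1$ with the advertised signs $-(-1)^{n-k}$. There is no serious obstacle here: the substantive ingredients are entirely Lemma~\ref{t2.2.12} and Property~\ref{t2.2.9}, both already at our disposal, and the remaining work is just the short sign check in the induction.
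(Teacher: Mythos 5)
Your proposal is correct and follows essentially the same route as the paper: an induction on $n$ whose inductive step establishes the recursion $\ipro{n}{a}=\int_{\operatorname{J}}\af{0}{0}(\sigma)\af{n-1}{a}(\sigma)\,d\sigma-\ipro{n-1}{a}$ from Lemma \ref{t2.2.12} and Property \ref{t2.2.9} (the paper splits $\Ro^n=(\rho-\id)\Ro^{n-1}$ before moving $\rho$ across the pairing, whereas you move a single $\Ro$ and then write $\Ro\psi=\af{0}{0}-\psi$, a trivial rearrangement). The sign bookkeeping you describe matches the paper's conclusion.
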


\begin{proof}
We proceed by induction on $n$, with the initiation for $n=0$ being direct, we assume that this relation stands for $n-1$ in order to derive
\begin{flalign}\label{3.1.10}
\begin{split}
\ipro{n}{a}&=\ipr{\psi}{\Pi_{\operatorname{J}}\big(\rho-\id\big)\Ro^{n-1}\rho\Po^a\psi}=\ipr{\psi}{\Pi_{\operatorname{J}}\rho\af{n-1}{a}}-\ipro{n-1}{a}=\int_{\operatorname{J}}\rho\psi(\sigma)\,\af{n-1}{a}(\sigma)\,d\sigma-\ipro{n-1}{a} \\
&=\int_{\operatorname{J}}\Bigg(\af{0}{0}(\sigma)\af{n-1}{a}(\sigma)+(-1)^n\psi(\sigma)\af{0}{a}(\sigma)-\sum_{k=0}^{n-2}(-1)^{n-k}\af{k}{a}(\sigma)\af{0}{0}(\sigma)\Bigg)d\sigma\,,
\end{split}
\end{flalign}
where Property \ref{t2.2.9} and Lemma \ref{t2.2.12} have been used.
\end{proof}

Henceforth we are able to relate the AWF order by order.

\begin{lem}\label{t3.1.9}
For any $n,p\in\Z_{\geq0}$, $n\geq p$, the $\nth$ order AWF are related to the $p^{\operatorname{th}}$ order AWF by
\begin{flalign}\label{3.1.11}
\begin{split}
\af{n}{a}(\tj)=\bigg(\frac{\af{0}{0}(\tj)}{\psi(\tj)}-1\bigg)^{n-p}\af{p}{a}(\tj)\,,\,\,\,\,\,\,\,\,\forall j\in\{1,...,2m\}\subset\N\,.
\end{split}
\end{flalign}
\end{lem}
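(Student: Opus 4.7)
The plan is induction on $d := n - p \geq 0$. The case $d = 0$ is trivial since the exponent vanishes. For the inductive step, it suffices to establish the single one-step identity
\begin{equation*}
\af{n}{a}(\tj) \;=\; \bigg(\frac{\af{0}{0}(\tj)}{\psi(\tj)} - 1\bigg)\af{n-1}{a}(\tj) \qquad (n \geq 1),
\end{equation*}
since iterating it from $n$ down to $p$ telescopes to the desired formula and simultaneously makes the stated independence of the intermediate index $p$ transparent.

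Two reformulations clear the ground. Property \ref{t2.2.9} together with Definition \ref{t2.2.2} with $n = k = 0$ gives $\af{0}{0} = \rho\psi = \psi + \Ro\psi$, hence $\af{0}{0}(\tj) - \psi(\tj) = \Ro\psi(\tj)$; while Definition \ref{t2.2.2} yields $\af{n}{a} = \Ro^n\rho\Po^a\psi = \Ro\af{n-1}{a}$ for every $n \geq 1$. Multiplying the desired identity by $\psi(\tj)$ and using these two facts turns it into the symmetric statement
\begin{equation*}
\psi(\tj)\,\Ro\af{n-1}{a}(\tj) \;=\; \Ro\psi(\tj)\,\af{n-1}{a}(\tj),
\end{equation*}
i.e. the ratio $\Ro f(\tj)/f(\tj)$, well-defined by Condition \ref{t3.1.6}, must take the same value for $f=\psi$ and for every $f=\af{n-1}{a}$ ; the case $f = \psi$ is tautological, so the whole content is the equality for $f = \af{n-1}{a}$.

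For the core step I would combine the factorisation $\Ro = \Ko\rho$ with the Fredholm-type relation $(\id - \Ko)\af{n}{a} = \Ko\af{n-1}{a}$, obtained by applying $\id - \Ko$ to $\af{n}{a} = \Ko\rho\af{n-1}{a}$ and using $(\id - \Ko)\rho = \id$. Substituting these into both sides and writing $\Ki(\tj, \cdot)$ in the Christoffel-Darboux form of Proposition \ref{t2.1.8}, the claim collapses to the vanishing of
\begin{equation*}
\int_{\I} \Ki(\tj, \zeta)\Big[\psi(\tj)\big(\af{n-1}{a}(\zeta) + \af{n}{a}(\zeta)\big) - \af{n-1}{a}(\tj)\,\af{0}{0}(\zeta)\Big]\,d\zeta.
\end{equation*}

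The main obstacle is precisely this integral identity: the bracket does not vanish pointwise in $\zeta$, so the cancellation must be extracted from the Christoffel-Darboux cross-structure of $\Ki(\tj, \cdot)$ together with the inductive hypothesis $\af{k}{a}(\tj) = (\af{0}{0}(\tj)/\psi(\tj) - 1)^k\af{0}{a}(\tj)$ for $k < n$. I expect the route to be: substitute $\af{0}{0}(\zeta) = \psi(\zeta) + \af{1}{0}(\zeta)$ into the bracket, then iteratively re-express $\af{n}{a}(\zeta)$ via $(\id - \Ko)\af{n}{a} = \Ko\af{n-1}{a}$ to move the $n$-dependence outside the integral and collapse the remainder to the already-established instance $\Ko\af{0}{0}(\tj) = \Ro\psi(\tj) = \af{0}{0}(\tj) - \psi(\tj)$. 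Condition \ref{t3.1.6} intervenes precisely to make the ratio sensible in the degenerate case $\psi(\tj)=0$, where one reads the identity as $\psi(\tj)\af{n}{a}(\tj) = \Ro\psi(\tj)\,\af{n-1}{a}(\tj)$ with both sides vanishing.
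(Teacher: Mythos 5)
Your reductions are correct as far as they go: the induction on $d=n-p$, the identification $\af{0}{0}(\tj)-\psi(\tj)=\Ro\psi(\tj)$, the relation $(\id-\Ko)\af{n}{a}=\Ko\af{n-1}{a}$, and the rewriting of the one-step claim as the vanishing of $\int_{\I}\Ki(\tj,\zeta)\big[\psi(\tj)(\af{n-1}{a}(\zeta)+\af{n}{a}(\zeta))-\af{n-1}{a}(\tj)\af{0}{0}(\zeta)\big]d\zeta$ are all valid. But this last integral is, by exactly the substitutions you used to derive it, nothing other than $\psi(\tj)\af{n}{a}(\tj)-\af{n-1}{a}(\tj)\big(\af{0}{0}(\tj)-\psi(\tj)\big)$: you have restated the lemma, not advanced it. The part you flag as "the main obstacle" is the entire content, and the sketched route to it does not work as described. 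In particular, $\af{0}{0}(\zeta)=\psi(\zeta)+\af{1}{0}(\zeta)$ is false: $\af{1}{0}=\Ro\rho\psi=\Ro\af{0}{0}$, whereas $\af{0}{0}-\psi=\Ro\psi=\Ko\af{0}{0}$. Moreover, "iteratively re-express $\af{n}{a}(\zeta)$ via $(\id-\Ko)\af{n}{a}=\Ko\af{n-1}{a}$" only shuffles factors of $\Ko$ and reproduces the same circular identity, and the inductive hypothesis gives you values only \emph{at} $\tj$, so it cannot be fed into an integrand evaluated at generic $\zeta$.

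The missing idea is a mechanism that localizes the identity at the endpoints $\tau_j$ — note that\eref{3.1.11} is an endpoint statement and fails at generic $\xi$, so no pointwise kernel manipulation alone can yield it. The paper obtains this localization by introducing $\ipro{n}{a}=\ipr{\psi}{\Pi_{\operatorname{J}}\Ro^n\rho\Po^a\psi}$ (Definition \ref{t3.1.7}), proving by induction the alternating-sum representation of Lemma \ref{t3.1.8} via $\Ro^n\rho=(\rho-\id)\Ro^{n-1}\rho$ and the symmetry Lemma \ref{t2.2.12}, and then differentiating the two expressions for $\ipro{n}{a}$ with respect to $\tj$; the boundary terms of the $\operatorname{J}$-integrals produce the pointwise relation $\psi(\tj)\af{n}{a}(\tj)=(-1)^n\psi(\tj)\af{0}{a}(\tj)-\sum_{k=0}^{n-1}(-1)^{n-k}\af{k}{a}(\tj)\af{0}{0}(\tj)$, from which the one-step recursion follows under Condition \ref{t3.1.6}. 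Some device of this kind — exploiting that $\tj$ is an endpoint of $\I$, e.g.\ through $\partial_j\Pi_{\I}$ boundary terms — is what your proposal lacks, and without it the integral identity you arrived at remains exactly as hard as the lemma itself.
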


\begin{proof}
Firstly one observes that, according to Definition \ref{t3.1.7},
\begin{flalign}\label{3.1.12}
\begin{split}
\ipro{n}{a}=\int_{\operatorname{J}}\psi(\sigma)\af{n}{a}(\sigma)\,d\sigma\,.
\end{split}
\end{flalign}
Then one notes that equating this with Lemma \ref{t3.1.8} and taking the $\tj$-derivative yields
\begin{flalign}\label{3.1.13}
\begin{split}
\psi(\tj)\af{n}{a}(\tj)=(-1)^n\psi(\tj)\af{0}{a}(\tj)-\sum_{k=0}^{n-1}(-1)^{n-k}\af{k}{a}(\tj)\af{0}{0}(\tj)\,.
\end{split}
\end{flalign}
Therefore, under Condition \ref{t3.1.6}, we obtain
\begin{flalign}\label{3.1.14}
\begin{split}
\af{n}{a}(\tj)&=(-1)^n\af{0}{a}(\tj)-\sum_{k=0}^{n-2}(-1)^{n-k}\frac{\af{0}{0}(\tj)}{\psi(\tj)}\af{k}{a}(\tj)+\frac{\af{0}{0}(\tj)}{\psi(\tj)}\af{n-1}{a}(\tj) \\
&=-\af{n-1}{a}(\tj)+\frac{\af{0}{0}(\tj)}{\psi(\tj)}\af{n-1}{a}(\tj)=\bigg(\frac{\af{0}{0}(\tj)}{\psi(\tj)}-1\bigg)\af{n-1}{a}(\tj)\,.
\end{split}
\end{flalign}
Finally, a direct iteration on $n$ leads to the result.
\end{proof}

\subsubsection{Associated Poisson Structure}\label{s312}

\begin{defi}\label{t3.1.10}
Let $n\in\N$. We shall refer to $\big(\qcv_{n,k},\pcv_k\big)\coloneqq\big(\af{n-k}{0},\af{k-1}{1}\big)$, $\forall k\in\{1,...,n\}\subset\N$, as the $\nth$ order canonical variables (associated to the $\nth$ order Hamiltonian).
\end{defi}

\begin{defi}\label{t3.1.11}
We define the $\nth$ order Poisson bracket (associated to the $\nth$ order Hamiltonian) as
\begin{flalign}\label{3.1.15}
\begin{split}
\pbn{f}{g}\coloneqq\sum_{k=1}^n\Bigg(\frac{\partial f}{\partial\qcv_{n,k}}\frac{\partial g}{\partial\pcv_k}-\frac{\partial f}{\partial\pcv_k}\frac{\partial g}{\partial\qcv_{n,k}}\Bigg)\,,
\end{split}
\end{flalign}
for any differentiable mappings $f$ and $g$ depending on the $\nth$ order canonical variables.
\end{defi}

\begin{nota}\label{t3.1.12}
Throughout \sref{s312}, we let $n,r\in\N\,,\,\,n>r\,,\,\,k\in\{1,...,r\}\subset\N$, and $d\coloneqq n-r\geq1$.
\end{nota}

\begin{pro}\label{t3.1.13}
The $r^{\operatorname{th}}$ and $\nth$ order canonical variables are related by
\begin{flalign}\label{3.1.16}
\begin{split}
\qcv_{r,k}=\af{n-(k+n-r)}{0}=\qcv_{n,k+d}\,.
\end{split}
\end{flalign}
\end{pro}

\begin{nota}\label{t3.1.14}
We write $\qcv\coloneqq\qcv_{n,n}=\af{0}{0}\,$ and $\,\pcv\coloneqq\pcv_1=\af{0}{1}$. Moreover, since this quantity will frequently appear, we also denote $\eta_n(\tj)\coloneqq\bigg(\frac{\qcv(\tj)}{\psi(\tj)}-1\bigg)^n$, $\forall n\in\Z_{\geq0}$.
\end{nota}

\begin{lem}\label{t3.1.15}
One may express the $\nth$ order Hamiltonian in terms of the first order Hamiltonian as follows,
\begin{flalign}\label{3.1.17}
\begin{split}
\Ha_n(\qcv,\pcv)=n\,\eta_{n-1}\Ha_1(\qcv,\pcv)\,,
\end{split}
\end{flalign}
where $\big(\qcv,\pcv,\eta_{n-1}\big)\coloneqq\big(\qcv(\tj),\pcv(\tj),\eta_{n-1}(\tj)\big)$, and this result holds for any $j\in\{1,...,2m\}\subset\N$.
\end{lem}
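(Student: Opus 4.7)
The plan is to combine Lemma \ref{t3.1.4}, which expresses $\Ha_n(\tj)$ as the alternating sum
\[
\Ha_n(\tj)=\sum_{k=1}^n\bigl(\af{n-k}{0}{}'(\tj)\,\af{k-1}{1}(\tj)-\af{k-1}{1}{}'(\tj)\,\af{n-k}{0}(\tj)\bigr),
\]
with Lemma \ref{t3.1.9}, which, under Condition \ref{t3.1.6}, collapses the full tower of AWFs at $\tj$ onto the zeroth-order ones. Specifically, setting $\varepsilon\coloneqq\qcv/\psi-1$ so that $\eta_m=\varepsilon^m$, Lemma \ref{t3.1.9} yields $\af{n-k}{0}(\tj)=\eta_{n-k}(\tj)\,\qcv(\tj)$ and $\af{k-1}{1}(\tj)=\eta_{k-1}(\tj)\,\pcv(\tj)$ for every $k\in\{1,\dots,n\}$. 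Substituting these into Lemma \ref{t3.1.4} turns the lemma into a purely algebraic identity in $\qcv$, $\pcv$ and the scalars $\eta_m$, $\eta_m'$.

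The second step is the Leibniz expansion. A direct computation gives, for each $k$,
\[
\af{n-k}{0}{}'\,\af{k-1}{1}-\af{k-1}{1}{}'\,\af{n-k}{0}=\eta_{n-k}\eta_{k-1}\bigl(\qcv'\pcv-\pcv'\qcv\bigr)+\qcv\,\pcv\bigl((\eta_{n-k})'\eta_{k-1}-(\eta_{k-1})'\eta_{n-k}\bigr).
\]
I would then invoke two elementary algebraic facts: first, $\eta_{n-k}\eta_{k-1}=\varepsilon^{n-1}=\eta_{n-1}$ is independent of $k$; second, $(\eta_m)'=m\,\varepsilon^{m-1}\varepsilon'$, so that $(\eta_{n-k})'\eta_{k-1}-(\eta_{k-1})'\eta_{n-k}=(n-2k+1)\,\varepsilon'\varepsilon^{n-2}$.

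The final step is to sum over $k$. The diagonal contribution produces $n\,\eta_{n-1}(\qcv'\pcv-\pcv'\qcv)$, which by the $n=1$ case of Lemma \ref{t3.1.4} equals $n\,\eta_{n-1}\,\Ha_1(\qcv,\pcv)$. The cross contribution is proportional to $\sum_{k=1}^n(n-2k+1)=n^2-n(n+1)+n=0$, so it vanishes. This establishes $\Ha_n=n\,\eta_{n-1}\Ha_1$ at $\tj$, and the argument applies verbatim at every $\tj$, $j\in\{1,\dots,2m\}$. The degenerate case $n=1$ is consistent since $\eta_0=1$ and the vanishing of the cross contribution is then the trivial identity $0=0$; for $n=2$ one must just be slightly careful that the formal $\varepsilon^{n-2}=\varepsilon^0=1$ is the correct limit, which it is.

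There is no substantive obstacle: the hard work has already been done in Lemma \ref{t3.1.9}, which is what allows the entire resolvent tower to be expressed in terms of $(\qcv,\pcv)$ at the endpoints $\tj$. The only delicate point is tracking the two algebraic identities about $\eta_m$ and verifying that the telescoping coefficient $\sum_{k=1}^n(n-2k+1)$ indeed vanishes, both of which are immediate.
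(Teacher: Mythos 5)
Your proposal is correct and follows essentially the same route as the paper's proof: substitute Lemma \ref{t3.1.9} into Lemma \ref{t3.1.4}, expand by Leibniz using $\eta_m\eta_k=\eta_{m+k}$ and $\eta_m'=m\,\eta_{m-1}\,\big(\tfrac{\qcv}{\psi}-1\big)'$, and kill the cross terms via $\sum_{k=1}^n(n+1-2k)=0$. The only cosmetic difference is your abbreviation $\varepsilon=\qcv/\psi-1$, which matches the paper's explicit formula\eref{3.1.18}.
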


\begin{proof}
As before we denote the $l^{\operatorname{th}}$ total derivative with respect to $\tj$ by a superscript $(l)$. First we need to compute
\begin{flalign}\label{3.1.18}
\begin{split}
\eta_n'=\frac{n}{\psi}\bigg(\qcv'-\frac{\psi'}{\psi}\qcv\bigg)\,\eta_{n-1}\,,
\end{split}
\end{flalign}
then, observing that $\eta_n\eta_k=\eta_{n+k}$, Lemmas \ref{t3.1.4} and \ref{t3.1.9} are leading to
\begin{flalign}\label{3.1.19}
\begin{split}
\Ha_n(\qcv,\pcv)=\sum_{k=1}^n\Bigg(\big(\eta_{n-k}\qcv\big)'\eta_{k-1}\pcv-\big(\eta_{k-1}\pcv\big)'\eta_{n-k}\qcv\Bigg)=\sum_{k=1}^n\Bigg(\Big(\qcv'\pcv-\pcv'\qcv\Big)\fa{n-1}+\frac{n-2k+1}{\psi}\bigg(\qcv'\qcv\pcv-\frac{\psi'}{\psi}\qcv^2\pcv\bigg)\fa{n-2}\Bigg)\,.
\end{split}
\end{flalign}
Besides, we notice that $\sum_{k=1}^nk=n(n+1)/2$ entails $\sum_{k=1}^n(n+1-2k)=0$, and hence
\begin{flalign}\label{3.1.20}
\begin{split}
\Ha_n(\qcv,\pcv)=n\Big(\qcv'\pcv-\pcv'\qcv\Big)\fa{n-1}\,,
\end{split}
\end{flalign}
which completes the proof.
\end{proof}

\begin{nota}\label{t3.1.16}
We keep on writing $f\coloneqq f(\tj)$ for any function depending on $\tj$, and a superscript $(l)$ denotes the $l^{\operatorname{th}}$ total derivative with respect to $\tj$.
\end{nota}

\begin{prop}\label{t3.1.17}
Utilizing the conventions of Notation \ref{t3.1.12}, the Hamiltonians $\Ha_r$ and $\Ha_n$ commute with respect to the $\nth$ order Poisson bracket, namely
\begin{flalign}\label{3.1.21}
\begin{split}
\pbn{\Ha_r}{\Ha_n}=0\,.
\end{split}
\end{flalign}
\end{prop}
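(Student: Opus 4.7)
The plan is to rewrite every partial derivative appearing in $\pbn{\Ha_r}{\Ha_n}$ as a $\tj$-derivative of an AWF, by applying Hamilton's equations (Corollary \ref{t3.1.5}) to each Hamiltonian \emph{in its own canonical variables}, after which the claim follows from a single index shift.

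Setting $d \coloneqq n - r \geq 1$, I first observe that Proposition \ref{t3.1.13} identifies the $r^{\operatorname{th}}$ order canonical pairs $(\qcv_{r,k}, \pcv_k)_{k=1}^{r}$ with the subset $(\qcv_{n,k+d}, \pcv_k)_{k=1}^{r}$ of the $n^{\operatorname{th}}$ order variables. Consequently, when viewed on the $n^{\operatorname{th}}$ order phase space, $\Ha_r$ has trivial dependence on $\qcv_{n,j}$ for $j \in \{1, \ldots, d\}$ and on $\pcv_k$ for $k \in \{r+1, \ldots, n\}$, so the corresponding partials vanish. For the non-trivial ones, I invoke Corollary \ref{t3.1.5} with $n$ replaced by $r$, giving $\partial \Ha_r / \partial \pcv_k = \af{r-k}{0}'$ and $\partial \Ha_r / \partial \qcv_{n,k+d} = -\af{k-1}{1}'$ for $k \in \{1, \ldots, r\}$. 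The analogous formulae for $\Ha_n$ follow directly from Corollary \ref{t3.1.5} and involve the same pair of AWFs but with indices running over $\{1, \ldots, n\}$.

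Injecting these expressions into Definition \ref{t3.1.11}, $\pbn{\Ha_r}{\Ha_n}$ reduces to the difference of two sums,
\[
-\sum_{k=d+1}^{n}\af{k-d-1}{1}'\,\af{n-k}{0}' + \sum_{k=1}^{r}\af{r-k}{0}'\,\af{k-1}{1}' \, .
\]
The substitution $k \mapsto k + d$ in the first sum, combined with $n - d = r$, turns it into the opposite of the second, and the two sums cancel.

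The only subtle point, which I view as the main obstacle, is conceptual rather than computational: $\Ha_r$ must be regarded as a function on the larger $n^{\operatorname{th}}$ order phase space, on which it retains trivial dependence on the extra coordinates, before Hamilton's equations for $\Ha_r$ can be invoked in a consistent manner. The reconciliation of the two sets of canonical variables supplied by Proposition \ref{t3.1.13} is exactly what legitimises this passage, after which the identity collapses to the combinatorial re-indexing above. I expect this same observation to clarify why the converse $\pbn{\Ha_n}{\Ha_r} \neq 0$ in general, since restricting to the smaller $r^{\operatorname{th}}$ order bracket truncates precisely the summation range that enabled the cancellation above.
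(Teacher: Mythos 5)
Your proposal is correct and takes essentially the same route as the paper: express $\Ha_r$ on the $\nth$ order phase space via Property \ref{t3.1.13}, read off its partial derivatives through Hamilton's equations (Corollary \ref{t3.1.5}) while noting the trivial dependence on the extra coordinates, and cancel the two surviving sums by the shift $k\mapsto k+d$. The paper's computation in\eref{3.1.22}--\eref{3.1.23} is precisely this argument, written in the canonical-variable notation rather than directly in terms of the AWFs.
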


\begin{proof}
To begin with we shall express $\Ha_r$ in terms of the $\nth$ order canonical variables. And, according to Property \ref{t3.1.13}, Lemma \ref{t3.1.4} yields
\begin{flalign}\label{3.1.22}
\begin{split}
\Ha_r=\sum_{k=1}^r\Big(\qcv_{r,k}'\pcv_k-\pcv_k'\qcv_{r,k}\Big)=\sum_{k=1}^r\Big(\qcv_{n,k+d}'\pcv_k-\pcv_k'\qcv_{n,k+d}\Big)\,.
\end{split}
\end{flalign}
Henceforth we can compute
\begin{flalign}\label{3.1.23}
\begin{split}
\pbn{\Ha_r}{\Ha_n}=\sum_{k=1}^n\Bigg(\frac{\partial\Ha_r}{\partial\qcv_{n,k}}\qcv_{n,k}'+\frac{\partial\Ha_r}{\partial\pcv_k}\pcv_k'\Bigg)=\sum_{k=1-d}^r\frac{\partial\Ha_r}{\partial\qcv_{n,k+d}}\qcv_{n,k+d}'+\sum_{k=1}^r\qcv_{n,k+d}'\pcv_k'=\sum_{k=1}^r\bigg(\qcv_{n,k+d}'\pcv_k'-\pcv_k'\qcv_{n,k+d}'\bigg)\,,
\end{split}
\end{flalign}
hereby completing the argument.
\end{proof}

\begin{prop}\label{t3.1.18}
The first order Poisson bracket of $\Ha_1$ and $\Ha_n$ is given by
\begin{flalign}\label{3.1.24}
\begin{split}
\pbn[1]{\Ha_1}{\Ha_n}=\frac{n(n-1)}{\psi}\bigg(\frac{\qcv}{\psi}\big(\pcv'\partial_\pcv\psi+\qcv'\partial_\qcv\psi\big)-\qcv'\bigg)\fa{n-2}\Ha_1(\qcv,\pcv)\,.
\end{split}
\end{flalign}
\end{prop}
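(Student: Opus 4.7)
The plan is to reduce the bracket to a computation involving only $\Ha_1$ and $\eta_{n-1}$ via the factorization of Lemma \ref{t3.1.15}, then evaluate a single Poisson bracket by direct partial differentiation.

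First I would recall that the first order Poisson bracket uses the canonical pair $(\qcv_{1,1},\pcv_1)=(\qcv,\pcv)$, so that for any differentiable $f,g$ of these variables, $\pbn[1]{f}{g}=\partial_\qcv f\,\partial_\pcv g-\partial_\pcv f\,\partial_\qcv g$. By Lemma \ref{t3.1.15} one may write $\Ha_n=n\,\eta_{n-1}\Ha_1$, so that Leibniz's rule and antisymmetry give
\begin{flalign*}
\pbn[1]{\Ha_1}{\Ha_n}=n\pbn[1]{\Ha_1}{\eta_{n-1}\Ha_1}=n\Ha_1\pbn[1]{\Ha_1}{\eta_{n-1}}+n\eta_{n-1}\pbn[1]{\Ha_1}{\Ha_1}=n\Ha_1\pbn[1]{\Ha_1}{\eta_{n-1}}\,,
\end{flalign*}
since $\pbn[1]{\Ha_1}{\Ha_1}=0$. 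It therefore remains to compute $\pbn[1]{\Ha_1}{\eta_{n-1}}$.

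Next I would use Corollary \ref{t3.1.5} with $n=1$, which yields the first order Hamilton equations $\qcv'=\partial_\pcv\Ha_1$ and $\pcv'=-\partial_\qcv\Ha_1$, and differentiate $\eta_{n-1}=\big(\qcv/\psi-1\big)^{n-1}$ by the chain rule, treating $\psi$ as a function of $(\qcv,\pcv)$. This gives
\begin{flalign*}
\partial_\qcv\eta_{n-1}=\frac{(n-1)\eta_{n-2}}{\psi^2}\big(\psi-\qcv\,\partial_\qcv\psi\big)\,,\qquad \partial_\pcv\eta_{n-1}=-\frac{(n-1)\qcv\,\eta_{n-2}}{\psi^2}\partial_\pcv\psi\,.
\end{flalign*}
Substituting these together with the Hamilton equations in the bracket produces
\begin{flalign*}
\pbn[1]{\Ha_1}{\eta_{n-1}}=-\pcv'\partial_\pcv\eta_{n-1}-\qcv'\partial_\qcv\eta_{n-1}=\frac{(n-1)\eta_{n-2}}{\psi}\Bigg(\frac{\qcv}{\psi}\big(\pcv'\partial_\pcv\psi+\qcv'\partial_\qcv\psi\big)-\qcv'\Bigg)\,,
\end{flalign*}
after collecting terms and factoring out $1/\psi$.

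Inserting this into the reduction from the first paragraph yields precisely \eref{3.1.24}. The only conceptual subtlety is to make sure the partial derivatives of $\psi$ are interpreted consistently: $\psi$ is a function of $\tj$, and one views $(\qcv,\pcv)$ as coordinates through which this $\tj$-dependence is expressed, so $\partial_\qcv\psi$ and $\partial_\pcv\psi$ remain well-defined symbolic quantities that do not need to be evaluated further. No lengthy calculation is required beyond these two partial derivatives, so I do not anticipate a real obstacle; the main care is in applying Leibniz correctly and keeping the signs from Hamilton's equations straight.
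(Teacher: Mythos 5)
Your proposal is correct and follows essentially the same route as the paper: both rely on the factorization $\Ha_n=n\,\eta_{n-1}\Ha_1$ from Lemma \ref{t3.1.15}, the derivatives of $\eta_{n-1}$ accounting for the implicit $\psi(\qcv,\pcv)$ dependence (the paper's\eref{3.1.25}), and the identifications $\partial_\qcv\Ha_1=-\pcv'$, $\partial_\pcv\Ha_1=\qcv'$ from Corollary \ref{t3.1.5}. The only difference is cosmetic: you invoke the Leibniz rule so that the $\pbn[1]{\Ha_1}{\Ha_1}=0$ term drops out by antisymmetry, whereas the paper expands $\partial_\qcv\Ha_n$ and $\partial_\pcv\Ha_n$ directly in\eref{3.1.26} and sees the corresponding $n\eta_{n-1}\qcv'\pcv'$ terms cancel in\eref{3.1.27}.
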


\begin{proof}
Lemma \ref{t3.1.15} enables us to write $\Ha_n$ only in terms of $\qcv$ and $\pcv$, but in\eref{3.1.17} one has to notice that $\psi$ is related to $\qcv$ and $\pcv$. To be more precise, from the definition of $\fa{n-1}$, Notation \ref{t3.1.16}, it follows that
\begin{flalign}\label{3.1.25}
\begin{split}
\partial_\qcv\fa{n-1}=\frac{n-1}{\psi}\bigg(1-\frac{\qcv}{\psi}\partial_\qcv\psi\bigg)\fa{n-2}\,,\,\,\,\,\,\,\,\,\,\,\partial_\pcv\fa{n-1}=-\frac{n-1}{\psi}\frac{\qcv}{\psi}\big(\partial_\pcv\psi\big)\fa{n-2}\,.
\end{split}
\end{flalign}
Injecting this in the $\qcv$ and $\pcv$ derivatives of\eref{3.1.17} leads to
\begin{flalign}\label{3.1.26}
\begin{split}
&\partial_\qcv\Ha_n(\qcv,\pcv)=\frac{n(n-1)}{\psi}\bigg(1-\frac{\qcv}{\psi}\partial_\qcv\psi\bigg)\fa{n-2}\Ha_1(\qcv,\pcv)-n\fa{n-1}\pcv'\,,\\
&\partial_\pcv\Ha_n(\qcv,\pcv)=-\frac{n(n-1)}{\psi}\frac{\qcv}{\psi}\big(\partial_\pcv\psi\big)\fa{n-2}\Ha_1(\qcv,\pcv)+n\fa{n-1}\qcv'\,.
\end{split}
\end{flalign}
And henceforth we are able to compute
\begin{flalign}\label{3.1.27}
\begin{split}
\pbn[1]{\Ha_1}{\Ha_n}&=-\pcv'\partial_\pcv\Ha_n(\qcv,\pcv)-\qcv'\partial_\qcv\Ha_n(\qcv,\pcv) \\
&=\frac{n(n-1)}{\psi}\frac{\qcv}{\psi}\big(\pcv'\partial_\pcv\psi\big)\fa{n-2}\Ha_1(\qcv,\pcv)+\frac{n(n-1)}{\psi}\bigg(\frac{\qcv}{\psi}\qcv'\partial_\qcv\psi-\qcv'\bigg)\fa{n-2}\Ha_1(\qcv,\pcv)\,,
\end{split}
\end{flalign}
which is indeed leading to\eref{3.1.24}.
\end{proof}

Hence, recalling that we set $n>r$, it turns out that $\Ha_n$ and $\Ha_r$ commute with respect to $\pbn{\cdot}{\cdot}$, but in general they do not commute with respect to $\pbn[r]{\cdot}{\cdot}$.

\subsection{Schlesinger Equations}\label{s32}

Turning to the Lax formulation, we provide evidence that the dynamics of the AWF are determined by a set of infinite size matrices which formally satisfy a Schlesinger system of equations. Once again the aim of this discussion is to provide insights, nonetheless we shall derive technical results following in particular from the closure relations which will turn out to be useful in section \sref{s4}. Throughout \sref{s32}, we let $\xi\in\I$.

\subsubsection{Infinite Size Lax Matrices}\label{s321}

\begin{defi}\label{t3.2.1}
Let $\Af\coloneqq\begin{pmatrix}\af{0}{0}&\af{0}{1}&\af{1}{0}&\af{1}{1}&\cdots\end{pmatrix}^t$, more precisely any component of $\Af$ is given by
\begin{flalign}\label{3.2.1}
\begin{split}
\compo{\Af}{2n+\alpha}(\xi)\coloneqq\af{n}{\alpha}(\xi)\,,
\end{split}
\end{flalign}
where $n\in\Z_{\geq0}$ and $\alpha\in\{0,1\}$ as was emphasized by Notation \ref{t2.2.3}.
\end{defi}

So, in \sref{s321}, we try to find infinite size matrices $\A_j$ and $\B_\xi$ such that, $\forall j\in\{1,...,2m\}\subset\N$,
\begin{flalign}\label{3.2.2}
\begin{split}
\partial_j\Af(\xi)=-\frac{\A_j}{\xi-\tj}\Af(\xi)\,,\,\,\,\,\,\,\,\,\,\,\partial_\xi\Af(\xi)=\Bigg(\B_\xi+\sum_{j=1}^{2m}\frac{\A_j}{\xi-\tj}\Bigg)\Af(\xi)\,.
\end{split}
\end{flalign}

\begin{nota}\label{t3.2.2}
For all $n\in\Z_{\geq0}$, let $\af{n}{-0}\coloneqq\af{n}{1}$ and $\af{n}{-1}\coloneqq\af{n}{0}$. Whilst $\forall p\in\N$, $\af{-p}{a}\coloneqq0$.
\end{nota}

\begin{prop}\label{t3.2.3}
For any $j\in\{1,...,2m\}\subset\N$, the infinite size matrix whose components are, $\forall n,p\in\Z_{\geq0}$,
\begin{flalign}\label{3.2.3}
\begin{split}
\comps{\A_j}{2n+\alpha}{2p+\beta}=-\udf(-1)^{j+\beta}\sum_{k=0}^{n-p}\big(\af{n-p-k}{-\beta}+\af{n-p-k-1}{-\beta}\big)\af{k}{\alpha}\,,
\end{split}
\end{flalign}
satisfies the first equality of\eref{3.2.2}, where we use Notations \ref{t3.1.16} and \ref{t3.2.2}.
\end{prop}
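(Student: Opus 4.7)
My strategy is to combine the two ingredients already established: Proposition~\ref{t2.2.31}, which expresses $\partial_j\af{n}{\alpha}(\xi)$ as a linear combination of $\Ri_m(\xi,\tau_j)$ weighted by $\af{k}{\alpha}(\tau_j)$, and Proposition~\ref{t2.2.18}, which provides the explicit Christoffel--Darboux-type formula for $\Ri_m$ in terms of the AWF. The key observation is that, upon setting $\zeta=\tau_j$ with $\ind{\I}(\tau_j)=1$, every $\Ri_m(\xi,\tau_j)$ carries the common prefactor $\udf/(\xi-\tau_j)$, which matches the singular structure on the right-hand side of~\eqref{3.2.2}.

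First I would unify the two contributions appearing in Proposition~\ref{t2.2.31} by adopting the convention $\Ri_0 \equiv 0$, thereby absorbing the singled-out term $(-1)^j\Ri_1(\xi,\tau_j)\af{n}{\alpha}(\tau_j)$ into the sum so that $\partial_j\af{n}{\alpha}(\xi) = (-1)^j\sum_{k=0}^{n}\big(\Ri_{n-k+1}(\xi,\tau_j)+\Ri_{n-k}(\xi,\tau_j)\big)\af{k}{\alpha}(\tau_j)$ (the boundary case $k=n$ correctly reproduces $\Ri_1 + \Ri_0 = \Ri_1$). Injecting Proposition~\ref{t2.2.18} and rewriting the inner summation with Notation~\ref{t3.2.2}, each kernel $\Ri_m(\xi,\tau_j)$ takes the form $\frac{\udf}{\xi-\tau_j}\sum_{q=0}^{m-1}\sum_{\beta=0}^{1}(-1)^\beta\af{m-q-1}{\beta}(\xi)\af{q}{-\beta}(\tau_j)$.

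Next I would reindex the $\xi$-slot by $p=m-q-1$, turning each $\Ri_m$ into a sum over $\af{p}{\beta}(\xi)\af{m-p-1}{-\beta}(\tau_j)$, and then merge the $\Ri_{n-k+1}$ and $\Ri_{n-k}$ contributions into a single sum over $p\in\{0,\dots,n-k\}$ with bracketed weight $\af{n-p-k}{-\beta}(\tau_j)+\af{n-p-k-1}{-\beta}(\tau_j)$ (the terminal $p=n-k$ term from $\Ri_{n-k}$ vanishes since $\af{-1}{-\beta}=0$ by Notation~\ref{t3.2.2}). A swap of summations $\sum_{k=0}^{n}\sum_{p=0}^{n-k} = \sum_{p=0}^{n}\sum_{k=0}^{n-p}$ then collects the coefficient of $\af{p}{\beta}(\xi)/(\xi-\tau_j)$, which reads off as exactly $-\comps{\A_j}{2n+\alpha}{2p+\beta}$ defined by~\eqref{3.2.3}. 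For $p>n$ the $k$-sum is empty and one gets zero, consistent with the corresponding entries of $\A_j$ vanishing.

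The main obstacle is purely bookkeeping: one must carefully align Notation~\ref{t3.2.2} for $\af{\cdot}{-\beta}$ with the $(-1)^\beta$ sign pattern hidden in the Christoffel--Darboux structure, and correctly track the two consecutive reindexings---once within each kernel $\Ri_m$ and once in the outer $k$-summation. No new analytic input is required beyond the two propositions invoked, so the verification reduces to an algebraic identity between formal sums in $\xi$.
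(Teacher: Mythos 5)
Your proposal is correct and follows essentially the same route as the paper's proof: substitute the resolvent-kernel formula of Proposition~\ref{t2.2.18} into Proposition~\ref{t2.2.31}, absorb the standalone $\Ri_1$ term into the sum (the paper extends the $k$-range to $n$ rather than introducing $\Ri_0\equiv 0$, a purely cosmetic difference), use Notation~\ref{t3.2.2} to merge the two kernel contributions, swap the double summation exactly as in\eref{3.2.6}, and read off the components by matching coefficients of $\af{p}{\beta}(\xi)/(\xi-\tj)$, including the vanishing entries for $p>n$. No gap.
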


\begin{proof}
We let $j$ be any integer between $1$ and $2m$ and $n\in\Z_{\geq0}$. We begin with the following observation, Propositions \ref{t2.2.31} and \ref{t2.2.18} (in particular written as\eref{2.2.26}) yield
\begin{flalign}\label{3.2.4}
\begin{split}
\partial_j\compo{\Af}{2n+\alpha}(\xi)&=(-1)^j\Ri_1(\xi,\tau_j)\af{n}{a}+(-1)^j\sum_{k=0}^{n-1}\bigg(\Ri_{n-k+1}(\xi,\tau_j)+\Ri_{n-k}(\xi,\tau_j)\bigg)\af{k}{a} \\
&=\udf\frac{(-1)^j}{\xi-\tj}\Bigg[\big(\af{0}{0}(\xi)\af{0}{1}-\af{0}{1}(\xi)\af{0}{0}\big)\af{n}{\alpha}+\sum_{k=0}^{n-1}\Bigg(\sum_{l=0}^{n-k}\big(\af{n-k-l}{0}(\xi)\af{l}{1}-\af{n-k-l}{1}(\xi)\af{l}{0}\big)\af{k}{\alpha} \\
&\,\,\,\,\,\,\,\,\,\,\,\,\,\,\,\,\,\,\,\,\,\,\,\,\,\,\,\,\,\,\,\,\,\,\,\,\,\,\,\,\,\,\,\,\,\,\,\,\,\,\,\,\,\,\,\,\,\,\,\,\,\,\,\,\,\,\,\,\,\,\,\,\,\,\,\,\,\,\,\,\,\,\,\,\,\,\,\,\,\,\,\,\,\,\,\,\,\,\,\,\,\,\,\,\,\,\,\,\,\,\,\,\,\,\,\,\,\,\,\,\,\,\,\,+\sum_{l=0}^{n-k-1}\big(\af{n-k-l-1}{0}(\xi)\af{l}{1}-\af{n-k-l-1}{1}(\xi)\af{l}{0}\big)\af{k}{\alpha}\Bigg)\Bigg] \\
&=\udf\frac{(-1)^j}{\xi-\tj}\sum_{k=0}^{n}\Bigg(\sum_{l=0}^{n-k}\big(\af{l}{0}(\xi)\af{n-k-l}{1}-\af{l}{1}(\xi)\af{n-k-l}{0}\big)\af{k}{\alpha} \\
&\,\,\,\,\,\,\,\,\,\,\,\,\,\,\,\,\,\,\,\,\,\,\,\,\,\,\,\,\,\,\,\,\,\,\,\,\,\,\,\,\,\,\,\,\,\,\,\,\,\,\,\,\,\,\,\,\,\,\,\,\,\,\,\,\,\,\,\,\,\,\,\,\,\,\,\,\,\,\,\,\,\,\,\,\,\,\,\,\,\,\,\,\,\,\,\,\,\,\,\,\,\,\,\,\,\,\,\,\,\,\,\,\,\,\,\,\,\,\,\,\,\,\,\,\,\,\,+\sum_{l=0}^{n-k-1}\big(\af{l}{0}(\xi)\af{n-k-l-1}{1}-\af{l}{1}(\xi)\af{n-k-l-1}{0}\big)\af{k}{\alpha}\Bigg) \\
&=\udf\frac{(-1)^j}{\xi-\tj}\sum_{k=0}^{n}\sum_{l=0}^{n-k}\Bigg(\big(\af{l}{0}(\xi)\af{n-k-l}{1}-\af{l}{1}(\xi)\af{n-k-l}{0}\big)\af{k}{\alpha} \\
&\,\,\,\,\,\,\,\,\,\,\,\,\,\,\,\,\,\,\,\,\,\,\,\,\,\,\,\,\,\,\,\,\,\,\,\,\,\,\,\,\,\,\,\,\,\,\,\,\,\,\,\,\,\,\,\,\,\,\,\,\,\,\,\,\,\,\,\,\,\,\,\,\,\,\,\,\,\,\,\,\,\,\,\,\,\,\,\,\,\,\,\,\,\,\,\,\,\,\,\,\,\,\,\,\,\,\,\,\,\,\,\,\,\,\,\,\,\,\,\,\,\,\,\,\,\,\,\,\,\,\,\,\,\,\,\,\,\,\,+\big(\af{l}{0}(\xi)\af{n-k-l-1}{1}-\af{l}{1}(\xi)\af{n-k-l-1}{0}\big)\af{k}{\alpha}\Bigg)\,,
\end{split}
\end{flalign}
where Notation \ref{t3.2.2} is used in the last line. Whilst it follows from Definition \ref{t3.2.1} that
\begin{flalign}\label{3.2.5}
\begin{split}
\compo{\A_j\Af}{2n+\alpha}(\xi)&=\sum_{p=0}^\infty\bigg(\comps{\A_j}{2n+\alpha}{2p}\compo{\Af}{2p}(\xi)+\comps{\A_j}{2n+\alpha}{2p+1}\compo{\Af}{2p+1}(\xi)\bigg) \\
&=\sum_{p=0}^\infty\bigg(\comps{\A_j}{2n+\alpha}{2p}\af{p}{0}(\xi)+\comps{\A_j}{2n+\alpha}{2p+1}\af{p}{1}(\xi)\bigg)\,.
\end{split}
\end{flalign}
Finally it remains to equate these two expressions. For the sake of clarity we achieve this considering two separate cases. Firstly we note that the highest order AWF which depends on $\xi$ that appears in\eref{3.2.4} is $\af{n}{\alpha}(\xi)$, and therefore we obtain $\comps{\A_j}{2n+\alpha}{2p+\beta}\big|_{p>n}=0$, which is indeed consistent with\eref{3.2.3}. Secondly we begin with a procedure similar to\eref{3.1.6}, defining $\Ns{a}{b}\coloneqq\{a,...,b\}\subset\Z$ we notice that, for any $F$,
\begin{flalign}\label{3.2.6}
\begin{split}
\sum_{k=0}^{n}\sum_{l=0}^{n-k}F(l,k,n)=\sum_{l=0}^{n}\sum_{k=0}^{n}F(l,k,n)\ind{\Ns{0}{n-k}}(l)=\sum_{l=0}^{n}\sum_{k=0}^{n}F(l,k,n)\ind{\Ns{k}{n}}(n-l)=\sum_{l=0}^{n}\sum_{k=0}^{n-l}F(l,k,n)\,,
\end{split}
\end{flalign}
enabling us to rewrite\eref{3.2.4} as
\begin{flalign}\label{3.2.7}
\begin{split}
\partial_j\compo{\Af}{2n+\alpha}(\xi)=\udf\frac{(-1)^j}{\xi-\tj}\sum_{l=0}^{n}\sum_{k=0}^{n-l}\bigg(\af{l}{0}(\xi)\big(\af{n-k-l}{1}+\af{n-k-l-1}{1}\big)\af{k}{\alpha}-\af{l}{1}(\xi)\big(\af{n-k-l}{0}+\af{n-k-l-1}{0}\big)\af{k}{\alpha}\bigg)\,.
\end{split}
\end{flalign}
And a comparison with\eref{3.2.5} accordingly to the first equality of\eref{3.2.2} allows us to identify
\begin{flalign}\label{3.2.8}
\begin{split}
&\comps{\A_j}{2n+\alpha}{2p}\big|_{p\leq n}=-\udf(-1)^j\sum_{k=0}^{n-l}\big(\af{n-k-l}{1}+\af{n-k-l-1}{1}\big)\af{k}{\alpha}\,, \\
&\comps{\A_j}{2n+\alpha}{2p+1}\big|_{p\leq n}=\udf(-1)^j\sum_{k=0}^{n-l}\big(\af{n-k-l}{0}+\af{n-k-l-1}{0}\big)\af{k}{\alpha}\,.
\end{split}
\end{flalign}
So that, according to Notation \ref{t3.2.2}, all the components indeed agree with\eref{3.2.3}.
\end{proof}

\begin{rem}\label{t3.2.4}
According to Proposition \ref{t3.2.3}, the first few components of $\A_j$ are
\begin{flalign}\label{3.2.9}
\begin{split}
\A_j=-\udf(-1)^j\begin{pmatrix}\af{0}{1}\af{0}{0}&-\af{0}{0}^2&0&0&0&\cdots\\\af{0}{1}^2&-\af{0}{1}\af{0}{0}&0&0&0&\cdots\\\af{0}{1}\af{0}{0}+\af{0}{1}\af{1}{0}+\af{0}{0}\af{1}{1}&-\af{0}{0}^2-2\af{0}{0}\af{1}{0}&\af{0}{1}\af{0}{0}&-\af{0}{0}^2&0&\cdots\\\af{0}{1}^2+2\af{0}{1}\af{1}{1}&-\af{0}{1}\af{0}{0}-\af{0}{1}\af{1}{0}-\af{0}{0}\af{1}{1}&\af{0}{1}^2&-\af{0}{1}\af{0}{0}&0&\cdots\\\vdots&\vdots&\vdots&\vdots&\vdots&\ddots\end{pmatrix}\,.
\end{split}
\end{flalign}
One may notice that this matrix has a $2\times2$ block lower triangular structure. And one may further observe that this matrix is formally traceless, which would coincide with \cite{1}, with the diagonal blocks repeating themselves.
\end{rem}

The computation of $\B_\xi$ is more involved, and it requires the following Lemma which will in turn play a central role in \sref{s4}. We begin with a notation allowing for the shortening of the incoming computations.

\begin{nota}\label{t3.2.5}
We write $\an{n}{a}\coloneqq\au{n}{a}+\au{n-1}{a}$, for any $n\in\Z_{\geq0}$.
\end{nota}

In terms of which we recall Proposition \ref{t2.2.33},
\begin{flalign}\label{3.2.10}
\begin{split}
\partial_\xi\af{n}{\alpha}(\xi)=\af{n}{\alpha+1}(\xi)-\frac{\gamma}{\dot{u}_0^2}\Bigg(\ud\au{0}{\alpha}\af{n}{0}(\xi)+\udd n\af{n}{\alpha}(\xi)&+\udd(n+1)\af{n+1}{\alpha}(\xi)\Bigg) \\
&-\frac{\gamma}{\dot{u}_0}\sum_{k=0}^{n-1}\an{n-k}{\alpha}\af{k}{0}(\xi)-\sum_{j=1}^{2m}\partial_j\af{n}{\alpha}(\xi)\,.
\end{split}
\end{flalign}

\begin{lem}\label{t3.2.6}
The $\xi$-derivative of $\af{n}{1}(\xi)$, $\forall n\in\Z_{\geq0}$, can alternatively be written as
\begin{flalign}\label{3.2.11}
\begin{split}
\partial_\xi\af{n}{1}(\xi)=\frac{\gamma}{\ud^2}\Bigg(\Big(\gamma\big(v_0+\xi\big)-2\ud\au{0}{1}\Big)\af{n}{0}(\xi)+&\Big(\ud\au{0}{0}-\udd(n+1)\Big)\af{n}{1}(\xi)-\udd(n+1)\af{n+1}{1}(\xi)\Bigg) \\
&+\frac{\gamma}{\ud}\sum_{k=0}^{n-1}\Bigg(\an{n-k}{0}\af{k}{1}(\xi)-2\an{n-k}{1}\af{k}{0}(\xi)\Bigg)-\sum_{j=1}^{2m}\partial_j\af{n}{1}(\xi)\,.
\end{split}
\end{flalign}
\end{lem}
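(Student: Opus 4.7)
The plan is straightforward: combine the $\alpha = 1$ case of Proposition \ref{t2.2.33} (rewritten as \eqref{3.2.10} using Notation \ref{t3.2.5}) with the closure relation of Proposition \ref{t2.2.36} in order to eliminate the term $\af{n}{2}(\xi)$, which is the only obstacle preventing \eqref{3.2.10} from expressing $\partial_\xi \af{n}{1}(\xi)$ purely in terms of the AWF with upper index in $\{0,1\}$.

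First I would set $\alpha = 1$ in \eqref{3.2.10}, producing
\begin{flalign*}
\partial_\xi \af{n}{1}(\xi) = \af{n}{2}(\xi) - \frac{\gamma}{\ud^{2}}\Big(\ud \au{0}{1}\af{n}{0}(\xi) + \udd\, n\,\af{n}{1}(\xi) + \udd(n+1)\af{n+1}{1}(\xi)\Big) - \frac{\gamma}{\ud}\sum_{k=0}^{n-1}\an{n-k}{1}\af{k}{0}(\xi) - \sum_{j=1}^{2m}\partial_j\af{n}{1}(\xi)\,.
\end{flalign*}
Then, rewriting Proposition \ref{t2.2.36} with Notation \ref{t3.2.5},
\begin{flalign*}
\af{n}{2}(\xi) = \frac{\gamma^{2}}{\ud^{2}}(v_0+\xi)\af{n}{0}(\xi) - \frac{\gamma\udd}{\ud^{2}}\af{n}{1}(\xi) + \frac{\gamma}{\ud}\sum_{k=0}^{n}\Big(\an{n-k}{0}\af{k}{1}(\xi) - \an{n-k}{1}\af{k}{0}(\xi)\Big)\,,
\end{flalign*}
I would substitute this expression into the previous one.

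The only bookkeeping step is to isolate the $k = n$ contribution from the sum $\sum_{k=0}^{n}$ introduced by the closure relation: since $\au{-1}{a} = 0$ (Proposition \ref{t2.2.36}), one has $\an{0}{a} = \au{0}{a}$, so the $k = n$ term contributes $\frac{\gamma}{\ud}\au{0}{0}\af{n}{1}(\xi) - \frac{\gamma}{\ud}\au{0}{1}\af{n}{0}(\xi)$, while the remainder $\sum_{k=0}^{n-1}$ merges with the analogous sum already present in \eqref{3.2.10} to produce the $-2\an{n-k}{1}$ factor.

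Collecting the coefficients in front of $\af{n}{0}(\xi)$ gives $\frac{\gamma^{2}}{\ud^{2}}(v_0+\xi) - \frac{\gamma}{\ud}\au{0}{1} - \frac{\gamma}{\ud}\au{0}{1} = \frac{\gamma}{\ud^{2}}\big(\gamma(v_0+\xi) - 2\ud\au{0}{1}\big)$; those in front of $\af{n}{1}(\xi)$ yield $-\frac{\gamma\udd}{\ud^{2}} + \frac{\gamma}{\ud}\au{0}{0} - \frac{\gamma\udd}{\ud^{2}}n = \frac{\gamma}{\ud^{2}}\big(\ud\au{0}{0} - \udd(n+1)\big)$; the $\af{n+1}{1}(\xi)$ and the $\partial_j$ contributions are unchanged; and the two remaining finite sums combine as $\frac{\gamma}{\ud}\sum_{k=0}^{n-1}\big(\an{n-k}{0}\af{k}{1}(\xi) - 2\an{n-k}{1}\af{k}{0}(\xi)\big)$, yielding exactly \eqref{3.2.11}. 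There is no genuine obstacle here; the only mild care required is the convention $\au{-1}{a}=0$ that controls the boundary term of the closure sum.
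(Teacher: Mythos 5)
Your proposal is correct and follows exactly the paper's own argument: take \eqref{3.2.10} at $\alpha=1$, substitute the closure relation of Proposition \ref{t2.2.36} (with Notation \ref{t3.2.5}), peel off the $k=n$ term using $\au{-1}{a}=0$ so that $\an{0}{a}=\au{0}{a}$, and collect coefficients. All the bookkeeping (the $-2\ud\au{0}{1}$ and $-\udd(n+1)$ coefficients, the factor $-2\an{n-k}{1}$ in the merged sum) matches the paper's computation.
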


\begin{proof}
Recall that, using Notation \ref{t3.2.5}, Proposition \ref{t2.2.36} provides us with the following closure relation, $\forall n\in\Z_{\geq0}$,
\begin{flalign}\label{3.2.12}
\begin{split}
\af{n}{2}(\xi)&=\frac{\gamma}{\dot{u}_0^2}\Bigg(\gamma\big(v_0+\xi\big)\af{n}{0}(\xi)-\ddot{u}_0\af{n}{1}(\xi)\Bigg)+\frac{\gamma}{\dot{u}_0}\sum_{k=0}^n\Bigg(\an{n-k}{0}\af{k}{1}(\xi)-\an{n-k}{1}\af{k}{0}(\xi)\Bigg) \\
&=\frac{\gamma}{\dot{u}_0^2}\Bigg(\gamma\big(v_0+\xi\big)\af{n}{0}(\xi)-\ddot{u}_0\af{n}{1}(\xi)\Bigg)+\frac{\gamma}{\dot{u}_0}\sum_{k=0}^{n-1}\Bigg(\an{n-k}{0}\af{k}{1}(\xi)-\an{n-k}{1}\af{k}{0}(\xi)\Bigg) \\
&\,\,\,\,\,\,\,\,\,\,\,\,\,\,\,\,\,\,\,\,\,\,\,\,\,\,\,\,\,\,\,\,\,\,\,\,\,\,\,\,\,\,\,\,\,\,\,\,\,\,\,\,\,\,\,\,\,\,\,\,\,\,\,\,\,\,\,\,\,\,\,\,\,\,\,\,\,\,\,\,\,\,\,\,\,\,\,\,\,\,\,\,\,\,\,\,\,\,\,\,\,\,\,+\frac{\gamma}{\dot{u}_0^2}\Bigg(\ud\au{0}{0}\af{n}{1}(\xi)-\ud\au{0}{1}\af{n}{0}(\xi)\Bigg)\,.
\end{split}
\end{flalign}
And then injecting this into\eref{3.2.10} for $\alpha=1$ yields the result.
\end{proof}

Henceforth we can evaluate $\B_\xi$, but there is no formula for a generic component as\eref{3.2.3} for $\A_j$, instead we end up with twelve relations which completely determine the matrix.

\begin{prop}\label{t3.2.7}
The infinite size matrix whose components are, $\forall n,p\in\Z_{\geq0}$,
\begin{flalign}\label{3.2.13}
\begin{split}
&\comps{\B}{2n}{2p}\big|_{p>n+1}=0\,,\,\,\,\,\,\,\,\,\comps{\B}{2n}{2p+1}=\delta_{n,p}\,,\,\,\,\,\,\,\,\,\comps{\B}{2n}{2n}=\frac{\gamma}{\ud^2}\Big(-\ud\au{0}{0}-\udd n\Big)\,,\,\,\,\,\,\,\,\,\comps{\B}{2n}{2(n+1)}=\frac{\gamma}{\ud^2}\Big(-\udd(n+1)\Big)\,, \\
&\comps{\B}{2n}{2p}=\frac{\gamma}{\ud^2}\Big(-\ud\an{n-p}{0}\Big)\,,\,\,\,\,\,\,\,\,\comps{\B}{2n+1}{2p}\big|_{p>n}=0\,,\,\,\,\,\,\,\,\,\comps{\B}{2n+1}{2p+1}\big|_{p>n+1}=0\,, \\
&\comps{\B}{2n+1}{2n}=\frac{\gamma}{\ud^2}\Big(\gamma\big(v_0+\xi\big)-2\ud\au{0}{1}\Big)\,,\,\,\,\,\,\,\,\,\comps{\B}{2n+1}{2n+1}=\frac{\gamma}{\ud^2}\Big(\ud\au{0}{0}-\udd(n+1)\Big)\,, \\
&\comps{\B}{2n+1}{2(n+1)+1}=\frac{\gamma}{\ud^2}\Big(-\udd(n+1)\Big)\,,\,\,\,\,\,\,\,\,\comps{\B}{2n+1}{2p}=\frac{\gamma}{\ud^2}\Big(-2\ud\an{n-p}{1}\Big)\,,\,\,\,\,\,\,\,\,\comps{\B}{2n+1}{2p+1}=\frac{\gamma}{\ud^2}\Big(\ud\an{n-p}{0}\Big)\,,
\end{split}
\end{flalign}
satisfies the second equality of\eref{3.2.2}. Where we wrote $\B\coloneqq\B_\xi$ for compactness.
\end{prop}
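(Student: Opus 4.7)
The plan is to read off the components of $\B_\xi$ by rearranging \eqref{3.2.2} into the form $\B_\xi \Af(\xi) = \partial_\xi \Af(\xi) + \sum_{j=1}^{2m} \partial_j \Af(\xi)$, which is legitimate because the first equality of \eqref{3.2.2} has already been established in Proposition \ref{t3.2.3} and yields $\sum_j \frac{\A_j}{\xi-\tau_j} \Af(\xi) = -\sum_j \partial_j \Af(\xi)$. So the task reduces to computing $\partial_\xi \af{n}{\alpha}(\xi) + \sum_j \partial_j \af{n}{\alpha}(\xi)$ for each $n \in \Z_{\geq 0}$ and $\alpha \in \{0,1\}$, and matching the result, expanded as a linear combination of the $\af{p}{0}(\xi)$'s and $\af{p}{1}(\xi)$'s, against
\begin{flalign*}
\compo{\B_\xi \Af(\xi)}{2n+\alpha} = \sum_{p=0}^\infty \Bigl( \comps{\B_\xi}{2n+\alpha}{2p} \af{p}{0}(\xi) + \comps{\B_\xi}{2n+\alpha}{2p+1} \af{p}{1}(\xi) \Bigr).
\end{flalign*}

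For the even rows ($\alpha = 0$), I would apply Proposition \ref{t2.2.33} directly with $\alpha = 0$. The terms involving $\sum_j \partial_j \af{n}{0}(\xi)$ cancel with those coming from the rearrangement, leaving an expression which is already a clean linear combination of $\af{n}{1}(\xi)$, $\af{n}{0}(\xi)$, $\af{n+1}{0}(\xi)$, and $\af{k}{0}(\xi)$ for $0 \le k < n$. Reading off coefficients yields $\comps{\B_\xi}{2n}{2p+1} = \delta_{n,p}$ together with the four formulas for the $\comps{\B_\xi}{2n}{2p}$'s stated in \eqref{3.2.13}; in particular no $\af{p}{1}(\xi)$ terms with $p \ne n$ appear, so the remaining odd-indexed components in this row vanish, and no $\af{p}{0}(\xi)$ with $p > n+1$ appears, so $\comps{\B_\xi}{2n}{2p}\big|_{p>n+1}=0$.

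For the odd rows ($\alpha = 1$), the direct application of Proposition \ref{t2.2.33} would produce an unwanted $\af{n}{2}(\xi)$ contribution that cannot be expanded on the $\af{p}{\alpha}(\xi)$ basis; this is precisely where the closure relation of Proposition \ref{t2.2.36} intervenes, and its consequence is already packaged in Lemma \ref{t3.2.6}. Using that Lemma and again absorbing $\sum_j \partial_j \af{n}{1}(\xi)$ by the rearrangement, I obtain a linear combination of $\af{n}{0}(\xi)$, $\af{n}{1}(\xi)$, $\af{n+1}{1}(\xi)$, $\af{k}{1}(\xi)$ and $\af{k}{0}(\xi)$ for $k<n$, from which the remaining seven formulas of \eqref{3.2.13} can be extracted. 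The $\xi$-dependent $\gamma(v_0+\xi)$ contribution in $\comps{\B_\xi}{2n+1}{2n}$ is exactly the trace of the closure relation in the $x$-variable Schrödinger equation.

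The verification is mostly bookkeeping: once the two alternative forms of $\partial_\xi \af{n}{\alpha}(\xi)$ are in hand, matching coefficients against $\compo{\B_\xi \Af(\xi)}{2n+\alpha}$ is term-by-term. The only subtlety worth flagging, which I would highlight in the write-up, is the asymmetry between the even and odd rows of $\B_\xi$: it originates entirely from the closure relation of Proposition \ref{t2.2.36}, which is what allows the $\af{n}{2}(\xi)$ term emerging in the $\alpha=1$ case to be re-expanded in terms of $\af{n}{0}(\xi)$ and $\af{n}{1}(\xi)$, and is what breaks the would-be uniform structure of the matrix.
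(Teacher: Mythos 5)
Your proposal is correct and follows essentially the same route as the paper's proof: rewrite the second Lax equation as $\B_\xi\Af(\xi)=\partial_\xi\Af(\xi)+\sum_j\partial_j\Af(\xi)$ via the already-established first equality, use Proposition \ref{t2.2.33} for the even rows and Lemma \ref{t3.2.6} (i.e.\ the closure relation of Proposition \ref{t2.2.36}) for the odd rows, and read off the components by matching coefficients in the expansion $\compo{\B_\xi\Af}{2n+\alpha}(\xi)=\sum_p\big(\comps{\B_\xi}{2n+\alpha}{2p}\af{p}{0}(\xi)+\comps{\B_\xi}{2n+\alpha}{2p+1}\af{p}{1}(\xi)\big)$. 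Your sign bookkeeping for the cancellation of the $\sum_j\partial_j\af{n}{\alpha}$ terms is in fact the consistent one, matching the final expressions \eref{3.2.16}--\eref{3.2.17}.
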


\begin{proof}
Rewriting the second equality of\eref{3.2.2} as
\begin{flalign}\label{3.2.14}
\begin{split}
\B\Af(\xi)=\partial_\xi\Af(\xi)-\sum_{j=1}^{2m}\frac{\A_j}{\xi-\tj}\Af(\xi)=\partial_\xi\Af(\xi)-\sum_{j=1}^{2m}\partial_j\Af(\xi)\,,
\end{split}
\end{flalign}
we obtain for each $n\in\Z_{\geq0}$
\begin{flalign}\label{3.2.15}
\begin{split}
\compo{\B\Af}{2n+\alpha}(\xi)=\partial_\xi\af{n}{\alpha}(\xi)-\sum_{j=1}^{2m}\partial_j\af{n}{\alpha}(\xi)\,.
\end{split}
\end{flalign}
Hence, writing the left hand side in a similar fashion to\eref{3.2.5} and expressing explicitly the right hand side through\eref{3.2.10} for $\alpha=0$, this leads to
\begin{flalign}\label{3.2.16}
\begin{split}
\sum_{p=0}^\infty\Bigg(\comps{\B}{2n}{2p}\af{p}{0}(\xi)+\comps{\B}{2n}{2p+1}\af{p}{1}(\xi)\Bigg)=\af{n}{1}(\xi)-\frac{\gamma}{\dot{u}_0^2}\Bigg(\Big(\ud&\au{0}{0}+\udd n\Big)\af{n}{0}(\xi) \\
&+\udd(n+1)\af{n+1}{0}(\xi)\Bigg)-\frac{\gamma}{\dot{u}_0}\sum_{k=0}^{n-1}\an{n-k}{0}\af{k}{0}(\xi)\,.
\end{split}
\end{flalign}
Whilst for $\alpha=1$ we have to use Lemma \ref{t3.2.6} in order to explicit the right hand side, yielding
\begin{flalign}\label{3.2.17}
\begin{split}
\compo{\B\Af}{2n+1}(\xi)=\frac{\gamma}{\ud^2}\Bigg(\Big(\gamma\big(v_0+\xi\big)-2\ud\au{0}{1}\Big)\af{n}{0}(\xi)+\Big(\ud\au{0}{0}-\udd(n+&1)\Big)\af{n}{1}(\xi)-\udd(n+1)\af{n+1}{1}(\xi)\Bigg) \\
&+\frac{\gamma}{\ud}\sum_{k=0}^{n-1}\Bigg(\an{n-k}{0}\af{k}{1}(\xi)-2\an{n-k}{1}\af{k}{0}(\xi)\Bigg)\,.
\end{split}
\end{flalign}
Finally, solving\eref{3.2.16} for the components of $\B$, we obtain the first five relations of\eref{3.2.13}. Whilst the seven remaining ones are obtained by expanding
\begin{flalign}\label{3.2.18}
\begin{split}
\compo{\B\Af}{2n+1}(\xi)=\sum_{p=0}^\infty\Bigg(\comps{\B}{2n+1}{2p}\af{p}{0}(\xi)+\comps{\B}{2n+1}{2p+1}\af{p}{1}(\xi)\Bigg)
\end{split}
\end{flalign}
in\eref{3.2.17}, and then again solving for the components of $\B$.
\end{proof}

\begin{rem}\label{t3.2.8}
In a matrix form, the first few components yielded by Proposition \ref{t3.2.7} are
\begin{flalign}\label{3.2.19}
\begin{split}
\B_\xi=\frac{\gamma}{\ud}\begin{pmatrix}-\au{0}{0}&\frac{\ud}{\gamma}&-\frac{\udd}{\ud}&0&0&0&\cdots\\\frac{\gamma}{\ud}w(\xi)-2\au{0}{1}&\au{0}{0}-\frac{\udd}{\ud}&0&-\frac{\udd}{\ud}&0&0&\cdots\\-\an{1}{0}&0&-\au{0}{0}-\frac{\udd}{\ud}&\frac{\ud}{\gamma}&-2\frac{\udd}{\ud}&0&\cdots\\-2\an{1}{1}&\an{1}{0}&\frac{\gamma}{\ud}w(\xi)-2\au{0}{1}&\au{0}{0}-2\frac{\udd}{\ud}&0&-2\frac{\udd}{\ud}&\cdots\\-\an{2}{0}&0&-\an{1}{0}&0&-\au{0}{0}-2\frac{\udd}{\ud}&\frac{\ud}{\gamma}&\cdots\\-2\an{2}{1}&\an{2}{0}&-2\an{1}{1}&\an{1}{0}&\frac{\gamma}{\ud}w(\xi)-2\au{0}{1}&\au{0}{0}-3\frac{\udd}{\ud}&\cdots\\\vdots&\vdots&\vdots&\vdots&\vdots&\vdots&\ddots\end{pmatrix}\,,
\end{split}
\end{flalign}
where we wrote $w(\xi)\coloneqq v_0+\xi$ for compactness.
\end{rem}

\subsubsection{Dynamics of the Lax Matrices}\label{s322}

We try to evaluate the $\tj$-derivatives of the matrices $\A_i$, $\forall j,i\in\{1,...,2m\}$, in terms of the commutators of these matrices and $\B_j\coloneqq\B_{\tj}$. It turns out that, for the components that we manage to compute, we are led to a Schlesinger system of equations, which we then formally formulate from the zero curvature equation presented in \sref{s13}. To be more precise, the components that we manage to compute are all the components labeled by $\big(n,p\big)\in\Z_{\geq0}^2$ such that $p\geq n$, i.e. the $4 \times 4$ blocks on the diagonal and the zeros.

\begin{lem}\label{t3.2.9}
For any $i,j\in\{1,...,2m\}\subset\N$ and $n,p\in\Z_{\geq0}\,$, the following expressions hold, $[\A_i,\A_j]_{2n+\alpha,2p+\beta}\big|_{p>n}=0\,$, and
\begin{flalign}\label{3.2.20}
\begin{split}
[\A_i,\A_j]_{2n+\alpha,2n+\beta}=-\frac{\ud^2}{\gamma^2}(-1)^{i+j+\beta}\bigg(\af{0}{0}(\tau_i)\af{0}{1}(\tj)-\af{0}{1}(\tau_i)\af{0}{0}(\tj)\bigg)\bigg(\af{0}{\alpha}(\tau_i)\af{0}{-\beta}(\tj)+\af{0}{-\beta}(\tau_i)\af{0}{\alpha}(\tj)\bigg)\,,
\end{split}
\end{flalign}
where we used Notation \ref{t3.2.2}.
\end{lem}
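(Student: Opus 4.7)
The plan is to exploit the block lower triangular structure of $\A_j$ to dispatch both claims at once. By Proposition \ref{t3.2.3}, the defining sum $\sum_{k=0}^{n-p}$ is empty whenever $p>n$, so $\comps{\A_j}{2n+\alpha}{2p+\beta}=0$ for $p>n$. Expanding $\comps{\A_i\A_j}{2n+\alpha}{2p+\beta}=\sum_{q=0}^{\infty}\sum_{\gamma\in\{0,1\}}\comps{\A_i}{2n+\alpha}{2q+\gamma}\comps{\A_j}{2q+\gamma}{2p+\beta}$ with $p>n$, the first factor forces $q\leq n$ and the second forces $q\geq p>n$, so the sum is empty. The same vanishing holds for $\A_j\A_i$, yielding the first claim.

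For the diagonal blocks, only the $k=0$ term survives in Proposition \ref{t3.2.3} when $p=n$, since $\af{-1}{-\beta}=0$ by Notation \ref{t3.2.2}. This produces the $n$-independent $2\times 2$ block $\comps{\A_j}{2n+\alpha}{2n+\beta}=-\udf(-1)^{j+\beta}\af{0}{-\beta}(\tj)\af{0}{\alpha}(\tj)$. The same intermediate-sum argument as above shows that the $(n,n)$ block of $\A_i\A_j$ equals the product of the $(n,n)$ blocks of $\A_i$ and $\A_j$, reducing $[\A_i,\A_j]_{2n+\alpha,2n+\beta}$ to the commutator of two $n$-independent $2\times 2$ matrices.

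These diagonal blocks are rank one. Setting $[u_j]_\alpha\coloneqq\af{0}{\alpha}(\tj)$ and $[w_j]_\beta\coloneqq-\udf(-1)^{j+\beta}\af{0}{-\beta}(\tj)$, the block factorizes as $u_jw_j^t$, so the commutator collapses to $(w_i^tu_j)u_iw_j^t-(w_j^tu_i)u_jw_i^t$. A direct computation gives $w_i^tu_j=\udf(-1)^iW$ and $w_j^tu_i=-\udf(-1)^jW$ with $W\coloneqq\af{0}{0}(\tau_i)\af{0}{1}(\tj)-\af{0}{1}(\tau_i)\af{0}{0}(\tj)$. Substituting and reading off the $(\alpha,\beta)$ component via $[w_k^t]_\beta=-\udf(-1)^{k+\beta}\af{0}{-\beta}(\tau_k)$ produces precisely the asserted factorized expression.

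The computation is essentially bookkeeping; the only place to be cautious is the $-\beta$ convention, which swaps $\af{0}{0}$ and $\af{0}{1}$ inside the $w_j$ vector, together with keeping the alternating signs $(-1)^{i+\beta}$ and $(-1)^{j+\beta}$ straight. Handling the commutator at the level of rank-one tensors, rather than expanding the four cases of $(\alpha,\beta)$ separately, should keep the argument compact and should make the symmetrization $\af{0}{\alpha}(\tau_i)\af{0}{-\beta}(\tj)+\af{0}{-\beta}(\tau_i)\af{0}{\alpha}(\tj)$ visible as the sum of the two rank-one terms obtained by antisymmetrization in the $(i,j)$ indices of the scalar factor.
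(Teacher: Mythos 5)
Your proof is correct and follows essentially the same route as the paper: the block lower-triangular structure of the $\A_j$ kills all entries with $p>n$ and reduces the $(n,n)$ block of the commutator to a commutator of the $n$-independent $2\times 2$ diagonal blocks, exactly as in the paper's argument. The only difference is cosmetic — you evaluate that final $2\times 2$ commutator via the rank-one factorization $u_jw_j^t$ and the scalars $w_i^tu_j$, $w_j^tu_i$, whereas the paper expands the four component products explicitly and then factorizes; both land on~\eqref{3.2.20}.
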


\begin{proof}
Let $n,p\in\Z_{\geq0}$ and $i,j\in\{1,...,2m\}$. Using $\comps{\A_j}{2n+\alpha}{2p+\beta}\big|_{p>n}=0$, we observe
\begin{flalign}\label{3.2.21}
\begin{split}
[\A_i,\A_j]_{2n+\alpha,2p+\beta}&=\sum_{q=0}^\infty\Bigg(\comps{\A_i}{2n+\alpha}{2q}\comps{\A_j}{2q}{2p+\beta}+\comps{\A_i}{2n+\alpha}{2q+1}\comps{\A_j}{2q+1}{2p+\beta} \\
&\,\,\,\,\,\,\,\,\,\,\,\,\,\,\,\,\,\,\,\,\,\,\,\,\,\,\,\,\,\,\,\,\,\,\,\,\,\,\,\,\,\,\,\,\,\,\,\,\,\,\,\,\,\,\,\,\,\,\,\,\,\,\,\,\,\,\,\,\,\,\,\,\,\,\,\,\,\,\,\,\,\,\,\,-\comps{\A_j}{2n+\alpha}{2q}\comps{\A_i}{2q}{2p+\beta}-\comps{\A_j}{2n+\alpha}{2q+1}\comps{\A_i}{2q+1}{2p+\beta}\Bigg) \\
&=\sum_{q=p}^n\Bigg(\comps{\A_i}{2n+\alpha}{2q}\comps{\A_j}{2q}{2p+\beta}+\comps{\A_i}{2n+\alpha}{2q+1}\comps{\A_j}{2q+1}{2p+\beta} \\
&\,\,\,\,\,\,\,\,\,\,\,\,\,\,\,\,\,\,\,\,\,\,\,\,\,\,\,\,\,\,\,\,\,\,\,\,\,\,\,\,\,\,\,\,\,\,\,\,\,\,\,\,\,\,\,\,\,\,\,\,\,\,\,\,\,\,\,\,\,\,\,\,\,\,\,\,\,\,\,\,\,\,\,\,-\comps{\A_j}{2n+\alpha}{2q}\comps{\A_i}{2q}{2p+\beta}-\comps{\A_j}{2n+\alpha}{2q+1}\comps{\A_i}{2q+1}{2p+\beta}\Bigg)\,,
\end{split}
\end{flalign}
where we used again an expansion similar to\eref{3.2.5}. Thereby it follows that $[\A_i,\A_j]_{2n+\alpha,2p+\beta}\big|_{p>n}=0$, moreover this also entails
\begin{flalign}\label{3.2.22}
\begin{split}
[\A_i,\A_j]_{2n+\alpha,2n+\beta}&=\comps{\A_i}{2n+\alpha}{2n}\comps{\A_j}{2n}{2n+\beta}+\comps{\A_i}{2n+\alpha}{2n+1}\comps{\A_j}{2n+1}{2n+\beta} \\
&\,\,\,\,\,\,\,\,\,\,\,\,\,\,\,\,\,\,\,\,\,\,\,\,\,\,\,\,\,\,\,\,\,\,\,\,\,\,\,\,\,\,\,\,\,\,\,\,\,\,\,\,\,\,\,\,\,\,\,\,\,\,\,\,\,\,\,\,\,\,\,\,\,\,-\comps{\A_j}{2n+\alpha}{2n}\comps{\A_i}{2n}{2n+\beta}-\comps{\A_j}{2n+\alpha}{2n+1}\comps{\A_i}{2n+1}{2n+\beta} \\
&=\comps{\A_i}{\alpha}{0}\comps{\A_j}{0}{\beta}+\comps{\A_i}{\alpha}{1}\comps{\A_j}{1}{\beta}-\comps{\A_j}{\alpha}{0}\comps{\A_i}{0}{\beta}-\comps{\A_j}{\alpha}{1}\comps{\A_i}{1}{\beta}\,,
\end{split}
\end{flalign}
which is a consequence of Proposition \ref{t3.2.3}, in particular
\begin{flalign}\label{3.2.23}
\begin{split}
\comps{\A_j}{2n+\alpha}{2n+\beta}=-\udf(-1)^{j+\beta}\af{0}{-\beta}(\tj)\af{0}{\alpha}(\tj)=\comps{\A_j}{\alpha}{\beta}\,.
\end{split}
\end{flalign}
Hence\eref{3.2.22} explicitly gives
\begin{flalign}\label{3.2.24}
\begin{split}
[\A_i,\A_j]_{2n+\alpha,2n+\beta}=\frac{\ud^2}{\gamma^2}(-1)^{i+j+\beta}\Bigg(\af{0}{1}(\tau_i)&\af{0}{\alpha}(\tau_i)\af{0}{-\beta}(\tj)\af{0}{0}(\tj)-\af{0}{0}(\tau_i)\af{0}{\alpha}(\tau_i)\af{0}{-\beta}(\tj)\af{0}{1}(\tj) \\
&-\af{0}{1}(\tj)\af{0}{\alpha}(\tj)\af{0}{-\beta}(\tau_i)\af{0}{0}(\tau_i)+\af{0}{0}(\tj)\af{0}{\alpha}(\tj)\af{0}{-\beta}(\tau_i)\af{0}{1}(\tau_i)\Bigg)\,,
\end{split}
\end{flalign}
which can indeed be factorized as\eref{3.2.20}.
\end{proof}

\begin{lem}\label{t3.2.10}
Let $j\in\{1,...,2m\}\subset\N\,$, then, for $n,p\in\Z_{\geq0}\,$, we have $[\B_j,\A_j]_{2n,2p}\big|_{p>n}=0\,$, and
\begin{flalign}\label{3.2.25}
\begin{split}
-(-1)^j[\B_j,\A_j]_{2n,2n}=\udf\af{0}{1}^2+\bigg(\frac{\gamma}{\ud}\big(v_0+\tj\big)-2\au{0}{1}\bigg)\af{0}{0}^2-\frac{\udd}{\ud}\Big(\af{0}{1}\af{0}{0}+\af{0}{1}\af{1}{0}+\af{0}{0}\af{1}{1}\Big)\,,
\end{split}
\end{flalign}
where we utilize Notation \ref{t3.1.16}.
\end{lem}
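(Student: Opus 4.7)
The plan is to mirror the argument of Lemma \ref{t3.2.9}: expand $[\B_j,\A_j]_{2n,2p} = \sum_{q=0}^\infty\big(\comps{\B_j}{2n}{q}\comps{\A_j}{q}{2p} - \comps{\A_j}{2n}{q}\comps{\B_j}{q}{2p}\big)$, split the inner index $q$ into even and odd parts in analogy with \eqref{3.2.5}, and exploit the very sparse support of both matrices. Concretely, $\A_j$ is block lower triangular by Proposition \ref{t3.2.3}, whilst $\B_j$ has at most finitely many non-zero entries per row with small bandwidth, as can be read off Proposition \ref{t3.2.7}.

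For the assertion $[\B_j,\A_j]_{2n,2p}\big|_{p>n}=0$, these support constraints force, after a case analysis on the four summand types, the inner index $q$ to lie in a narrow window. A direct inspection shows that for $p > n+1$ no index $q$ can simultaneously satisfy the constraints coming from $\A_j$ and $\B_j$, while for $p = n+1$ only two terms remain: $\comps{\B_j}{2n}{2(n+1)}\comps{\A_j}{2(n+1)}{2(n+1)}$ and $\comps{\A_j}{2n}{2n}\comps{\B_j}{2n}{2(n+1)}$. These cancel because the diagonal blocks $\comps{\A_j}{2s}{2s}$ are independent of $s$ by \eqref{3.2.23}.

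For the diagonal $p=n$, the scalar product $\comps{\B_j}{2n}{2n}\comps{\A_j}{2n}{2n}$ appears symmetrically in both $\B_j\A_j$ and $\A_j\B_j$ and hence drops out of the commutator, leaving four nontrivial summands: two indexed by $q\in\{n-1,n+1\}$ (from the sub/super-diagonal entries of $\B_j$ proportional to $\udd$) and two coming from the odd inner index $q=n$, namely the couplings with $\comps{\B_j}{2n}{2n+1}=1$ and $\comps{\B_j}{2n+1}{2n}=\frac{\gamma}{\ud^2}\big(\gamma(v_0+\tj)-2\ud\au{0}{1}\big)$. Using Proposition \ref{t3.2.3} one readily checks that $\comps{\A_j}{2(n+1)}{2n} = \comps{\A_j}{2n}{2(n-1)} = -\udf(-1)^j\big(\af{1}{1}\af{0}{0}+\af{0}{1}\af{0}{0}+\af{0}{1}\af{1}{0}\big)$, so the $\udd$-proportional summands combine with coefficients $(n+1)-n=1$, yielding an $n$-independent $\udd/\ud$ factor; the $(2n+1)$-coupling terms produce the $\udf\af{0}{1}^2$ and $\big(\frac{\gamma}{\ud}(v_0+\tj)-2\au{0}{1}\big)\af{0}{0}^2$ contributions. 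Multiplication by $-(-1)^j$ then gives \eqref{3.2.25} directly.

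The main obstacle is essentially clerical: tracking carefully the signs $(-1)^{j+\beta}$ from Proposition \ref{t3.2.3} and verifying the telescoping cancellation of the $n$ and $n+1$ coefficients. The boundary case $n=0$ deserves a brief check, as the $q=n-1$ summand is absent; however its coefficient is proportional to $n$ and hence vanishes automatically, so the formula holds uniformly in $n\in\Z_{\geq0}$.
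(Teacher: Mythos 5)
Your proposal is correct and follows essentially the same route as the paper: expand the commutator componentwise, use the block lower-triangular structure of $\A_j$ and the banded structure of $\B_j$ from Propositions \ref{t3.2.3} and \ref{t3.2.7} to isolate the few surviving summands, cancel the scalar diagonal product, combine the $\udd$-proportional terms via $(n+1)-n=1$ together with the shift identity\eref{3.2.30}, and insert the explicit components. Your explicit remark on the $n=0$ boundary case is a small additional check that the paper leaves implicit, but the argument is otherwise the same.
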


\begin{proof}
Set $j\in\{1,...,2m\}$ and $n,p\in\Z_{\geq0}$. Proceeding similarly to\eref{3.2.21}, employing the zeros of $\B_j$ and $\A_j$ given by Propositions \ref{t3.2.3} and \ref{t3.2.7}, we are led to
\begin{flalign}\label{3.2.26}
\begin{split}
[\B_j,\A_j]_{2n,2p}&=\sum_{q=0}^\infty\Bigg(\comps{\B_j}{2n}{2q}\comps{\A_j}{2q}{2p}+\comps{\B_j}{2n}{2q+1}\comps{\A_j}{2q+1}{2p}-\comps{\A_j}{2n}{2q}\comps{\B_j}{2q}{2p}-\comps{\A_j}{2n}{2q+1}\comps{\B_j}{2q+1}{2p}\Bigg) \\
&=\sum_{q=p}^{n+1}\comps{\B_j}{2n}{2q}\comps{\A_j}{2q}{2p}+\comps{\A_j}{2n+1}{2p}-\sum_{q=p-1}^n\comps{\A_j}{2n}{2q}\comps{\B_j}{2q}{2p}-\sum_{q=p}^n\comps{\A_j}{2n}{2q+1}\comps{\B_j}{2q+1}{2p} \\
&=\comps{\A_j}{2n+1}{2p}+\comps{\B_j}{2n}{2(n+1)}\comps{\A_j}{2(n+1)}{2p}-\comps{\A_j}{2n}{2(p-1)}\comps{\B_j}{2(p-1)}{2p} \\
&\,\,\,\,\,\,\,\,\,\,\,\,\,\,\,\,\,\,\,\,\,\,\,\,\,\,\,\,\,\,\,\,\,\,\,\,\,\,\,\,\,\,\,\,\,\,\,\,\,\,\,\,\,\,\,\,\,\,\,\,\,\,\,\,\,\,\,\,\,\,+\sum_{q=p}^n\Bigg(\comps{\B_j}{2n}{2q}\comps{\A_j}{2q}{2p}-\comps{\A_j}{2n}{2q}\comps{\B_j}{2q}{2p}-\comps{\A_j}{2n}{2q+1}\comps{\B_j}{2q+1}{2p}\Bigg)\,,
\end{split}
\end{flalign}
where, in the second line, the relation $\comps{\B_j}{2n}{2p+1}=\delta_{n,p}$ of Proposition \ref{t3.2.7} was utilized in the second term. To begin with we treat
\begin{flalign}\label{3.2.27}
\begin{split}
[\B_j,\A_j]_{2n,2p}\big|_{p>n}&=\comps{\B_j}{2n}{2(n+1)}\comps{\A_j}{2(n+1)}{2p}-\comps{\A_j}{2n}{2(p-1)}\comps{\B_j}{2(p-1)}{2p} \\
&=-\frac{\gamma}{\ud}\bigg(\frac{\udd}{\ud}(n+1)\comps{\A_j}{2(n+1)}{2p}-\frac{\udd}{\ud}p\comps{\A_j}{2n}{2(p-1)}\bigg)\,.
\end{split}
\end{flalign}
Thus we immediately have $[\B_j,\A_j]_{2n,2p}\big|_{p>n+1}=0$, whilst
\begin{flalign}\label{3.2.28}
\begin{split}
[\B_j,\A_j]_{2n,2(n+1)}=-\frac{\gamma}{\ud}\bigg(\frac{\udd}{\ud}(n+1)\comps{\A_j}{2(n+1)}{2(n+1)}-\frac{\udd}{\ud}(n+1)\comps{\A_j}{2n}{2n}\bigg)=0
\end{split}
\end{flalign}
is ensured by\eref{3.2.23}. Thereby $[\B_j,\A_j]_{2n,2p}\big|_{p>n}=0$. Besides, Proposition \ref{t3.2.7} together with\eref{3.2.23} also yields
\begin{flalign}\label{3.2.29}
\begin{split}
[\B_j,\A_j]_{2n,2n}&=\comps{\A_j}{2n+1}{2n}-\frac{\gamma}{\ud}\bigg(\frac{\udd}{\ud}(n+1)\comps{\A_j}{2(n+1)}{2n}-\frac{\udd}{\ud}n\comps{\A_j}{2n}{2(n-1)}\bigg)-\comps{\A_j}{2n}{2n+1}\comps{\B_j}{2n+1}{2n} \\
&=\comps{\A_j}{1}{0}-\frac{\gamma}{\ud}\frac{\udd}{\ud}\comps{\A_j}{2(n+1)}{2n}-\frac{\gamma}{\ud}\bigg(\frac{\gamma}{\ud}\big(v_0+\tj\big)-2\au{0}{1}\bigg)\comps{\A_j}{0}{1}\,,
\end{split}
\end{flalign}
as a consequence of\eref{3.2.26}. Note that, in order to obtain this expression, we noticed that Proposition \ref{t3.2.3} entails
\begin{flalign}\label{3.2.30}
\begin{split}
\comps{\A_j}{2(n+1)}{2n}=-\udf(-1)^j\Big(\af{0}{1}\af{0}{0}+\af{0}{1}\af{1}{0}+\af{0}{0}\af{1}{1}\Big)=\comps{\A_j}{2n}{2(n-1)}\,.
\end{split}
\end{flalign}
Finally, injecting the $\A_j$ components accordingly to Proposition \ref{t3.2.3} leads to the result.
\end{proof}

\begin{lem}\label{t3.2.11}
Once again set $j\in\{1,...,2m\}$ and let $n,p\in\Z_{\geq0}\,$. The matrices $\B_j$ and $\A_j$ obey $[\B_j,\A_j]_{2n+1,2p}\big|_{p>n}=0\,$, and
\begin{flalign}\label{3.2.31}
\begin{split}
-(-1)^j[\B_j,\A_j]_{2n+1,2n}&=2\bigg(\frac{\gamma}{\ud}\big(v_0+\tj\big)-2\au{0}{1}\bigg)\af{0}{1}\af{0}{0}+\bigg(2\au{0}{0}-\frac{\udd}{\ud}\bigg)\af{0}{1}^2-\frac{\udd}{\ud}\Big(\af{0}{1}^2+2\af{0}{1}\af{1}{1}\Big) \\
&=2\Bigg(\bigg(\frac{\gamma}{\ud}\big(v_0+\tj\big)-2\au{0}{1}\bigg)\af{0}{1}\af{0}{0}+\bigg(\au{0}{0}-\frac{\udd}{\ud}\bigg)\af{0}{1}^2-\frac{\udd}{\ud}\af{0}{1}\af{1}{1}\Bigg)\,.
\end{split}
\end{flalign}
\end{lem}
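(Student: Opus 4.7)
The plan is to mirror the strategy of the proof of Lemma~\ref{t3.2.10}, now applied to the odd row $2n+1$ of the commutator. For fixed $j\in\{1,\dots,2m\}$ and $n,p\in\Z_{\geq0}$, I would start from the tautological expansion
\begin{flalign*}
[\B_j,\A_j]_{2n+1,2p}=\sum_{q=0}^\infty\Bigl(&\comps{\B_j}{2n+1}{2q}\comps{\A_j}{2q}{2p}+\comps{\B_j}{2n+1}{2q+1}\comps{\A_j}{2q+1}{2p}\\
&-\comps{\A_j}{2n+1}{2q}\comps{\B_j}{2q}{2p}-\comps{\A_j}{2n+1}{2q+1}\comps{\B_j}{2q+1}{2p}\Bigr)
\end{flalign*}
and truncate each of the four sums using the block-triangular structure given by Propositions~\ref{t3.2.3} and~\ref{t3.2.7}: $\comps{\A_j}{2n+1}{2q+\beta}=0$ for $q>n$, $\comps{\B_j}{2n+1}{2q}=0$ for $q>n$, $\comps{\B_j}{2n+1}{2q+1}=0$ for $q>n+1$, $\comps{\B_j}{2q}{2p}=0$ for $q<p-1$, and $\comps{\B_j}{2q+1}{2p}=0$ for $q<p$. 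The four ranges thereby collapse respectively to $q\in\{p,\dots,n\}$, $\{p,\dots,n+1\}$, $\{\max(p-1,0),\dots,n\}$, and $\{p,\dots,n\}$. Unlike for the even row treated in Lemma~\ref{t3.2.10}, the simplification $\comps{\B_j}{2n}{2p+1}=\delta_{n,p}$ is no longer available, so genuinely more terms survive.

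For the vanishing part, if $p>n+1$ every range is empty and the commutator is trivially zero. If $p=n+1$ only the $q=n+1$ slot in the second sum and the $q=n$ slot in the third survive; both evaluate to $\tfrac{\gamma}{\ud^2}\bigl(-\udd(n+1)\bigr)\cdot\bigl(-\udf(-1)^j(\af{0}{1}^2+2\af{0}{1}\af{1}{1})\bigr)$ after invoking the coincidence $\comps{\A_j}{2(n+1)+1}{2n}=\comps{\A_j}{2n+1}{2(n-1)}$ implied by \eqref{3.2.3}, and they enter the commutator with opposite signs and cancel.

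For the explicit value at $p=n$ the remaining contributions group naturally by the component of $\B_j$ involved. The $(\gamma(v_0+\tj)-2\ud\au{0}{1})$-factor from $\comps{\B_j}{2n+1}{2n}$ appears against $\comps{\A_j}{0}{0}$ and $-\comps{\A_j}{1}{1}$, whose combination evaluates to $-2\udf(-1)^j\af{0}{1}\af{0}{0}$, producing the leading term of \eqref{3.2.31}. The $\ud\au{0}{0}$-pieces from $\comps{\B_j}{2n+1}{2n+1}$ and $\comps{\B_j}{2n}{2n}$ both multiply $\comps{\A_j}{1}{0}=-\udf(-1)^j\af{0}{1}^2$ and add up to produce the $2\au{0}{0}\af{0}{1}^2$ contribution. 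Finally the $\udd$-terms telescope in two pairs: $(\comps{\B_j}{2n+1}{2(n+1)+1},\comps{\B_j}{2(n-1)}{2n})$ multiply $\af{0}{1}^2+2\af{0}{1}\af{1}{1}$ and collapse by $(n+1)-n$ to a single term, while $(\comps{\B_j}{2n+1}{2n+1},\comps{\B_j}{2n}{2n})$ multiply $\af{0}{1}^2$ and collapse similarly. The second equality of \eqref{3.2.31} is then the algebraic rearrangement $(2\au{0}{0}-\frac{\udd}{\ud})\af{0}{1}^2-\frac{\udd}{\ud}(\af{0}{1}^2+2\af{0}{1}\af{1}{1})=2(\au{0}{0}-\frac{\udd}{\ud})\af{0}{1}^2-2\frac{\udd}{\ud}\af{0}{1}\af{1}{1}$. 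The main obstacle is not conceptual, but rather the sign bookkeeping through Notation~\ref{t3.2.2} when evaluating $\af{0}{-\beta}$ in \eqref{3.2.23} and the factors $(-1)^{j+\beta}$ in \eqref{3.2.3}; once these are tracked consistently and the edge case $n=0$ is handled (where the $q=-1$ slot of the third sum is absent but its accompanying coefficient $\udd n$ vanishes, so the formula remains valid), the computation is essentially mechanical.
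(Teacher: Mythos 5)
Your proposal is correct and follows essentially the same route as the paper's own proof: the same four-fold expansion of $[\B_j,\A_j]_{2n+1,2p}$, the same truncation ranges coming from the block-triangular structure of Propositions \ref{t3.2.3} and \ref{t3.2.7}, the same isolation of the two boundary terms, and the same grouping of the $p=n$ contributions by the components of $\B_j$, exactly as in \eqref{3.2.32}--\eqref{3.2.35}. One detail in your $p=n+1$ cancellation is misstated: the two surviving $\A_j$-entries there are $\comps{\A_j}{2(n+1)+1}{2(n+1)}$ and $\comps{\A_j}{2n+1}{2n}$, which by the repetition of the diagonal blocks, \eqref{3.2.23}, both equal $\comps{\A_j}{1}{0}=-\udf(-1)^j\af{0}{1}^2$; they are not the sub-diagonal entries proportional to $\af{0}{1}^2+2\af{0}{1}\af{1}{1}$ of \eqref{3.2.35}, which enter only in the $p=n$ evaluation, so the coincidence you invoke is the wrong one. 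Since the two terms nevertheless carry a common $\A_j$-factor against equal $\B_j$-coefficients $-\frac{\gamma\udd(n+1)}{\ud^2}$ and opposite signs, the cancellation, and hence your argument, goes through unchanged once this identification is corrected.
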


\begin{proof}
We reach these results in a similar fashion to Lemma \ref{t3.2.10}, beginning with
\begin{flalign}\label{3.2.32}
\begin{split}
[\B_j,\A_j]_{2n+1,2p}&=\sum_{q=0}^\infty\Bigg(\comps{\B_j}{2n+1}{2q}\comps{\A_j}{2q}{2p}+\comps{\B_j}{2n+1}{2q+1}\comps{\A_j}{2q+1}{2p} \\
&\,\,\,\,\,\,\,\,\,\,\,\,\,\,\,\,\,\,\,\,\,\,\,\,\,\,\,\,\,\,\,\,\,\,\,\,\,\,\,\,\,\,\,\,\,\,\,\,\,\,\,\,\,\,\,\,\,\,\,\,\,\,\,\,\,\,\,\,\,\,-\comps{\A_j}{2n+1}{2q}\comps{\B_j}{2q}{2p}-\comps{\A_j}{2n+1}{2q+1}\comps{\B_j}{2q+1}{2p}\Bigg) \\
&=\sum_{q=p}^n\comps{\B_j}{2n+1}{2q}\comps{\A_j}{2q}{2p}+\sum_{q=p}^{n+1}\comps{\B_j}{2n+1}{2q+1}\comps{\A_j}{2q+1}{2p} \\
&\,\,\,\,\,\,\,\,\,\,\,\,\,\,\,\,\,\,\,\,\,\,\,\,\,\,\,\,\,\,\,\,\,\,\,\,\,\,\,\,\,\,\,\,\,\,\,\,\,\,\,\,\,\,\,\,\,\,\,\,\,\,\,\,\,\,\,\,\,\,-\sum_{q=p-1}^n\comps{\A_j}{2n+1}{2q}\comps{\B_j}{2q}{2p}-\sum_{q=p}^n\comps{\A_j}{2n+1}{2q+1}\comps{\B_j}{2q+1}{2p} \\
&=\comps{\B_j}{2n+1}{2(n+1)+1}\comps{\A_j}{2(n+1)+1}{2p}-\comps{\A_j}{2n+1}{2(p-1)}\comps{\B_j}{2(p-1)}{2p} \\
&\,\,\,\,\,\,\,\,\,\,\,\,\,\,\,\,\,\,\,\,\,\,\,\,\,\,\,\,\,\,\,\,\,\,\,\,\,\,\,\,\,\,\,\,\,\,\,\,\,\,\,\,\,\,\,\,\,\,\,\,\,\,\,\,\,\,\,\,\,\,+\sum_{q=p}^n\Bigg(\comps{\B_j}{2n+1}{2q}\comps{\A_j}{2q}{2p}+\comps{\B_j}{2n+1}{2q+1}\comps{\A_j}{2q+1}{2p} \\
&\,\,\,\,\,\,\,\,\,\,\,\,\,\,\,\,\,\,\,\,\,\,\,\,\,\,\,\,\,\,\,\,\,\,\,\,\,\,\,\,\,\,\,\,\,\,\,\,\,\,\,\,\,\,\,\,\,\,\,\,\,\,\,\,\,\,\,\,\,\,\,\,\,\,\,\,\,\,\,\,\,\,\,\,\,\,\,\,\,\,\,\,\,\,\,\,\,\,\,\,\,-\comps{\A_j}{2n+1}{2q}\comps{\B_j}{2q}{2p}-\comps{\A_j}{2n+1}{2q+1}\comps{\B_j}{2q+1}{2p}\Bigg)\,.
\end{split}
\end{flalign}
In particular this leads to
\begin{flalign}\label{3.2.33}
\begin{split}
[\B_j,\A_j]_{2n+1,2p}\big|_{p>n}=-\frac{\gamma}{\ud}\bigg(\frac{\udd}{\ud}(n+1)\comps{\A_j}{2(n+1)+1}{2p}-\frac{\udd}{\ud}p\comps{\A_j}{2n+1}{2(p-1)}\bigg)\,,
\end{split}
\end{flalign}
ensuring $[\B_j,\A_j]_{2n+1,2p}\big|_{p>n}=0$, and
\begin{flalign}\label{3.2.34}
\begin{split}
[\B_j,\A_j]_{2n+1,2n}&=-\frac{\gamma}{\ud}\frac{\udd}{\ud}\comps{\A_j}{2(n+1)+1}{2n}+\comps{\B_j}{2n+1}{2n}\bigg(\comps{\A_j}{2n}{2n}-\comps{\A_j}{2n+1}{2n+1}\bigg) \\
&,\,\,\,\,\,\,\,\,\,\,\,\,\,\,\,\,\,\,\,\,\,\,\,\,\,\,\,\,\,\,\,\,\,\,\,\,\,\,\,\,\,\,\,\,\,\,\,\,\,\,\,\,\,\,\,\,\,\,\,\,\,\,\,\,\,\,\,\,\,\,\,\,\,\,\,\,\,\,\,\,\,\,\,\,\,\,\,+\comps{\A_j}{2n+1}{2n}\bigg(\comps{\B_j}{2n+1}{2n+1}-\comps{\B_j}{2n}{2n}\bigg) \\
&=\frac{\gamma}{\ud}\Bigg[\bigg(\frac{\gamma}{\ud}\big(v_0+\tj\big)-2\au{0}{1}\bigg)\bigg(\comps{\A_j}{0}{0}-\comps{\A_j}{1}{1}\bigg)+\bigg(2\au{0}{0}-\frac{\udd}{\ud}\bigg)\comps{\A_j}{1}{0}-\frac{\udd}{\ud}\comps{\A_j}{2(n+1)+1}{2n}\Bigg]\,,
\end{split}
\end{flalign}
where we also utilized
\begin{flalign}\label{3.2.35}
\begin{split}
\comps{\A_j}{2(n+1)+1}{2n}=-\udf(-1)^j\Big(\af{0}{1}^2+2\af{0}{1}\af{1}{1}\Big)=\comps{\A_j}{2n+1}{2(n-1)}\,,
\end{split}
\end{flalign}
following from Proposition \ref{t3.2.3}. It then only remains to inject the $\A_j$ components.
\end{proof}

\begin{lem}\label{t3.2.12}
The following relations are satisfied for any $j\in\{1,...,2m\}$ and $n,p\in\Z_{\geq0}\,$, $[\B_j,\A_j]_{2n,2p+1}\big|_{p>n}=0\,$, and
\begin{flalign}\label{3.2.36}
\begin{split}
-(-1)^j[\B_j,\A_j]_{2n,2n+1}&=\bigg(2\au{0}{0}-\frac{\udd}{\ud}\bigg)\af{0}{0}^2+\frac{\udd}{\ud}\Big(\af{0}{0}^2+2\af{0}{0}\af{1}{0}\Big)-2\udf\af{0}{1}\af{0}{0} \\
&=2\Bigg(\au{0}{0}\af{0}{0}^2+\frac{\udd}{\ud}\af{0}{0}\af{1}{0}-\udf\af{0}{1}\af{0}{0}\Bigg)\,.
\end{split}
\end{flalign}
\end{lem}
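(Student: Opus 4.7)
The proof proceeds in close analogy with Lemmas \ref{t3.2.10} and \ref{t3.2.11}, the only change being that the free column parity is now odd rather than even. First I would expand
\begin{flalign*}
[\B_j,\A_j]_{2n,2p+1}=\sum_{q=0}^\infty\!\Bigg(\!\comps{\B_j}{2n}{2q}\comps{\A_j}{2q}{2p+1}+\comps{\B_j}{2n}{2q+1}\comps{\A_j}{2q+1}{2p+1}-\comps{\A_j}{2n}{2q}\comps{\B_j}{2q}{2p+1}-\comps{\A_j}{2n}{2q+1}\comps{\B_j}{2q+1}{2p+1}\!\Bigg).
\end{flalign*}
Propositions \ref{t3.2.3} and \ref{t3.2.7} immediately truncate this infinite sum. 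Using $\comps{\B_j}{2n}{2q+1}=\delta_{n,q}$ collapses the second term to $\comps{\A_j}{2n+1}{2p+1}$ (nonzero only when $p\leq n$), and $\comps{\B_j}{2q}{2p+1}=\delta_{q,p}$ collapses the third to $\comps{\A_j}{2n}{2p}$ (nonzero only for $p\leq n$). The remaining two sums are cut off using $\comps{\B_j}{2n}{2q}|_{q>n+1}=0$, $\comps{\B_j}{2q+1}{2p+1}|_{p>q+1}=0$, and $\comps{\A_j}{2n+\alpha}{2p+\beta}|_{p>n}=0$, leaving finitely many terms.

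For $p>n$ only the "shift by one block" contributions survive, namely $\comps{\B_j}{2n}{2(n+1)}\comps{\A_j}{2(n+1)}{2p+1}$ in the first sum and $\comps{\A_j}{2n}{2p-1}\comps{\B_j}{2p-1}{2p+1}$ in the last sum (the intermediate cases forcing $p=n+1$). The block-repeating structure of Proposition \ref{t3.2.3} — specifically, that $\comps{\A_j}{2n+\alpha}{2p+\beta}$ depends on $(n,p)$ only through $n-p$ — implies $\comps{\A_j}{2(n+1)}{2p+1}=\comps{\A_j}{2n}{2(p-1)+1}$, and from Proposition \ref{t3.2.7} the two $\B_j$ coefficients coincide, $\comps{\B_j}{2n}{2(n+1)}=-\frac{\gamma\udd}{\ud^2}(n+1)=\comps{\B_j}{2p-1}{2p+1}$ at $p=n+1$ (and both vanish when $p>n+1$). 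These pair cancellations yield $[\B_j,\A_j]_{2n,2p+1}\big|_{p>n}=0$.

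For the main case $p=n$, the surviving terms are $\comps{\B_j}{2n}{2n}\comps{\A_j}{2n}{2n+1}$, $\comps{\B_j}{2n}{2(n+1)}\comps{\A_j}{2(n+1)}{2n+1}$, $\comps{\A_j}{2n+1}{2n+1}-\comps{\A_j}{2n}{2n}$, $-\comps{\A_j}{2n}{2n-1}\comps{\B_j}{2n-1}{2n+1}$, and $-\comps{\A_j}{2n}{2n+1}\comps{\B_j}{2n+1}{2n+1}$. One substitutes each explicitly: by (3.2.23), $\comps{\A_j}{2n+1}{2n+1}=-\comps{\A_j}{2n}{2n}=\udf(-1)^j\af{0}{0}\af{0}{1}$; by the block identity, $\comps{\A_j}{2(n+1)}{2n+1}=\comps{\A_j}{2n}{2n-1}=\udf(-1)^j\bigl(\af{0}{0}^2+2\af{0}{0}\af{1}{0}\bigr)$; and the relevant $\B_j$ entries are read off Proposition \ref{t3.2.7}. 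The $n$-dependent pieces combine through $-(\ud\au{0}{0}+\udd n)-(\ud\au{0}{0}-\udd(n+1))=-\ud(2\au{0}{0})+\udd$ (an $n$-independent result — the crucial consistency check), while the off-diagonal $\udd$ terms combine to produce $-\frac{\udd}{\ud}\bigl(\af{0}{0}^2+2\af{0}{0}\af{1}{0}\bigr)$ after sign bookkeeping. Grouping as in (3.2.36) gives the first equality, and the further simplification $(2\au{0}{0}-\udd/\ud)\af{0}{0}^2+\frac{\udd}{\ud}\af{0}{0}^2=2\au{0}{0}\af{0}{0}^2$ produces the second, factorized form.

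The main obstacle is purely bookkeeping: correctly identifying which shifted $\A_j$ entries coincide via the block-repeating structure, and verifying that the $n$-dependent $\udd n$ and $\udd(n+1)$ contributions from $\comps{\B_j}{2n}{2n}$ and $\comps{\B_j}{2n+1}{2n+1}$ telescope into an $n$-independent coefficient (consistent with the fact that the commutator $[\B_j,\A_j]$ should not depend on the block row). No new ingredients beyond Propositions \ref{t3.2.3} and \ref{t3.2.7} and Notation \ref{t3.2.2} are required.
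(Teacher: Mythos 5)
Your proof is correct and follows essentially the same route as the paper's: expand the $(2n,2p+1)$ component, truncate using the explicit entries of Propositions \ref{t3.2.3} and \ref{t3.2.7} (with the Kronecker-delta columns of $\B_j$ producing $\comps{\A_j}{2n+1}{2p+1}-\comps{\A_j}{2n}{2p}$), cancel the shifted $\ddot u_0$ terms for $p>n$ via the block-repetition of $\A_j$, and for $p=n$ combine $\comps{\B_j}{2n}{2n}-\comps{\B_j}{2n+1}{2n+1}$ into the $n$-independent coefficient $-\frac{\gamma}{\ud}\big(2\au{0}{0}-\frac{\udd}{\ud}\big)$. As a side note, your value $\comps{\A_j}{2(n+1)}{2n+1}=+\udf(-1)^j\big(\af{0}{0}^2+2\af{0}{0}\af{1}{0}\big)$ is the one consistent with\eref{3.2.9} and with the final formula\eref{3.2.36}; the sign written in the paper's intermediate display\eref{3.2.40} is a typo.
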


\begin{proof}
As before we first compute
\begin{flalign}\label{3.2.37}
\begin{split}
[\B_j,\A_j]_{2n,2p+1}&=\sum_{q=0}^\infty\Bigg(\comps{\B_j}{2n}{2q}\comps{\A_j}{2q}{2p+1}+\comps{\B_j}{2n}{2q+1}\comps{\A_j}{2q+1}{2p+1} \\
&\,\,\,\,\,\,\,\,\,\,\,\,\,\,\,\,\,\,\,\,\,\,\,\,\,\,\,\,\,\,\,\,\,\,\,\,\,\,\,\,\,\,\,\,\,\,\,\,\,\,\,\,\,\,\,\,\,\,\,\,\,\,\,\,\,\,\,\,\,\,\,\,\,\,\,\,\,\,\,\,\,\,\,-\comps{\A_j}{2n}{2q}\comps{\B_j}{2q}{2p+1}-\comps{\A_j}{2n}{2q+1}\comps{\B_j}{2q+1}{2p+1}\Bigg) \\
&=\sum_{q=p}^{n+1}\comps{\B_j}{2n}{2q}\comps{\A_j}{2q}{2p+1}+\comps{\A_j}{2n+1}{2p+1}-\comps{\A_j}{2n}{2p}-\sum_{q=p-1}^n\comps{\A_j}{2n}{2q+1}\comps{\B_j}{2q+1}{2p+1} \\
&=\comps{\A_j}{2n+1}{2p+1}-\comps{\A_j}{2n}{2p}+\comps{\B_j}{2n}{2(n+1)}\comps{\A_j}{2(n+1)}{2p+1}-\comps{\A_j}{2n}{2(p-1)+1}\comps{\B_j}{2(p-1)+1}{2p+1} \\
&\,\,\,\,\,\,\,\,\,\,\,\,\,\,\,\,\,\,\,\,\,\,\,\,\,\,\,\,\,\,\,\,\,\,\,\,\,\,\,\,\,\,\,\,\,\,\,\,\,\,\,\,\,\,\,\,\,\,\,\,\,\,\,\,\,\,\,\,\,\,\,\,\,\,\,\,\,\,\,\,\,\,\,\,\,\,\,\,\,\,\,\,\,\,\,\,\,\,+\sum_{q=p}^n\Bigg(\comps{\B_j}{2n}{2q}\comps{\A_j}{2q}{2p+1}-\comps{\A_j}{2n}{2q+1}\comps{\B_j}{2q+1}{2p+1}\Bigg)\,,
\end{split}
\end{flalign}
which is yielding
\begin{flalign}\label{3.2.38}
\begin{split}
[\B_j,\A_j]_{2n,2p+1}\big|_{p>n}=-\frac{\gamma}{\ud}\bigg(\frac{\udd}{\ud}(n+1)\comps{\A_j}{2(n+1)}{2p+1}-\frac{\udd}{\ud}p\comps{\A_j}{2n}{2(p-1)+1}\bigg)\,,
\end{split}
\end{flalign}
so that we indeed have $[\B_j,\A_j]_{2n,2p+1}\big|_{p>n}=0$. Whilst
\begin{flalign}\label{3.2.39}
\begin{split}
[\B_j,\A_j]_{2n,2n+1}&=\comps{\A_j}{2n+1}{2n+1}-\comps{\A_j}{2n}{2n}-\frac{\gamma}{\ud}\frac{\udd}{\ud}\comps{\A_j}{2(n+1)}{2n+1}+\comps{\A_j}{2n}{2n+1}\bigg(\comps{\B_j}{2n}{2n}-\comps{\B_j}{2n+1}{2n+1}\bigg) \\
&=\comps{\A_j}{1}{1}-\comps{\A_j}{0}{0}-\frac{\gamma}{\ud}\frac{\udd}{\ud}\comps{\A_j}{2(n+1)}{2n+1}-\frac{\gamma}{\ud}\bigg(2\au{0}{0}-\frac{\udd}{\ud}\bigg)\comps{\A_j}{0}{1}\,,
\end{split}
\end{flalign}
where we employed the following consequence of Proposition \ref{t3.2.3},
\begin{flalign}\label{3.2.40}
\begin{split}
\comps{\A_j}{2(n+1)}{2n+1}=-\udf(-1)^j\Big(\af{0}{0}^2+2\af{0}{0}\af{1}{0}\Big)=\comps{\A_j}{2n}{2(n-1)+1}\,.
\end{split}
\end{flalign}
And the components of $\A_j$ give the result.
\end{proof}

\begin{lem}\label{t3.2.13}
The commutator of $\B_j$ and $\A_j$ also satisfies, $\forall j\in\{1,...,2m\}$ and $\forall n,p\in\Z_{\geq0}\,$, $[\B_j,\A_j]_{2n+1,2p+1}\big|_{p>n}=0\,$, and
\begin{flalign}\label{3.2.41}
\begin{split}
-(-1)^j[\B_j,\A_j]_{2n+1,2n+1}=\frac{\udd}{\ud}\Big(\af{0}{1}\af{0}{0}+\af{0}{1}\af{1}{0}+\af{0}{0}\af{1}{1}\Big)-\bigg(\frac{\gamma}{\ud}\big(v_0+\tj\big)-2\au{0}{1}\bigg)\af{0}{0}^2-\udf\af{0}{1}^2\,.
\end{split}
\end{flalign}
\end{lem}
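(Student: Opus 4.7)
The plan is to mirror exactly the strategy used in Lemmas \ref{t3.2.10}, \ref{t3.2.11}, and \ref{t3.2.12}: expand the infinite-sum matrix product, truncate it using the sparsity patterns given by Propositions \ref{t3.2.3} and \ref{t3.2.7}, and then handle the generic off-diagonal ($p>n$) case and the diagonal ($p=n$) case separately. Concretely I would start by writing
\begin{flalign*}
[\B_j,\A_j]_{2n+1,2p+1}=\sum_{q=0}^\infty\Bigg(\comps{\B_j}{2n+1}{2q}\comps{\A_j}{2q}{2p+1}+\comps{\B_j}{2n+1}{2q+1}\comps{\A_j}{2q+1}{2p+1}-\comps{\A_j}{2n+1}{2q}\comps{\B_j}{2q}{2p+1}-\comps{\A_j}{2n+1}{2q+1}\comps{\B_j}{2q+1}{2p+1}\Bigg)\,,
\end{flalign*}
and then invoke the zero patterns $\comps{\A_j}{2n+\alpha}{2p+\beta}|_{p>n}=0$ together with the $\comps{\B_j}{\cdot}{\cdot}=0$ relations of Proposition \ref{t3.2.7} (and $\comps{\B_j}{2q}{2p+1}=\delta_{q,p}$) to collapse each of the four sums to a bounded range. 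This is routine but bookkeeping-heavy.

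For the $p>n$ case, the truncation should leave only the two ``nearest-neighbor'' terms coming from the single super-diagonal of $\B_j$, namely $\comps{\B_j}{2n+1}{2(n+1)+1}\comps{\A_j}{2(n+1)+1}{2p+1}$ and $-\comps{\A_j}{2n+1}{2(p-1)+1}\comps{\B_j}{2(p-1)+1}{2p+1}$, both proportional to $\udd/\ud$. In the previous lemmas these only failed to vanish identically for $p=n+1$, and there they cancelled because two components of $\A_j$ along a diagonal coincide (see \eqref{3.2.30}, \eqref{3.2.35}, \eqref{3.2.40}). I expect the same mechanism here: one should verify the identity $\comps{\A_j}{2(n+1)+1}{2n+1}=\comps{\A_j}{2n+1}{2(n-1)+1}$ directly from the explicit formula \eqref{3.2.3}, which gives $-\udf(-1)^j\af{0}{0}\af{0}{1}$ in both cases (with a shift of the summation range by one not affecting the value because the extra $k=n-p$ summand in \eqref{3.2.3} also vanishes). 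This yields $[\B_j,\A_j]_{2n+1,2p+1}|_{p>n}=0$.

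For $p=n$ the remaining nonzero contributions are (i) the off-diagonal $\B_j$ terms at $q=n+1$ and $q=n-1$, which by the same component equality collapse to a single term $-\tfrac{\gamma}{\ud}\tfrac{\udd}{\ud}\comps{\A_j}{2(n+1)+1}{2n+1}$; (ii) the $q=n$ pair of the form $\comps{\A_j}{2n+1}{2n}\comps{\B_j}{2n}{2n+1}-\comps{\B_j}{2n+1}{2n}\comps{\A_j}{2n}{2n+1}$, which by $\comps{\B_j}{2n}{2n+1}=1$ produces $\comps{\A_j}{2n+1}{2n}-\comps{\B_j}{2n+1}{2n}\comps{\A_j}{2n}{2n+1}$; and (iii) the diagonal pair which contributes $\bigl(\comps{\B_j}{2n+1}{2n+1}-\comps{\B_j}{2n}{2n}\bigr)\comps{\A_j}{2n+1}{2n+1}$ via $\comps{\B_j}{2q+1}{2n+1}$ at $q=n$. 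At this point I would substitute the closed-form entries from Propositions \ref{t3.2.3} and \ref{t3.2.7} — in particular $\comps{\A_j}{2n+1}{2n+1}=-\udf(-1)^j\af{0}{1}\af{0}{0}$, $\comps{\A_j}{2n+1}{2n}=-\udf(-1)^j\af{0}{1}^2$, $\comps{\A_j}{2n}{2n+1}=\udf(-1)^j\af{0}{0}^2$, and the $\B_j$ diagonal difference $\comps{\B_j}{2n+1}{2n+1}-\comps{\B_j}{2n}{2n}=\tfrac{\gamma}{\ud^2}(2\ud\au{0}{0}-\udd)$ — and collect terms to obtain \eqref{3.2.41}.

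The main obstacle will be purely organizational: managing the many index shifts and sign conventions without error, and in particular tracking that the $n$-independence of the relevant diagonal/super-diagonal entries of $\A_j$ (established via Proposition \ref{t3.2.3} for all required shifted pairs) really does produce the needed cancellations in the $p\geq n+1$ range. Once this alignment is verified, assembling the final expression in \eqref{3.2.41} is a direct substitution.
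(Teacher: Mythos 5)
Your overall strategy is exactly the paper's (expand the entry, truncate with the sparsity patterns of Propositions \ref{t3.2.3} and \ref{t3.2.7}, use the repetition of $\A_j$-blocks, then substitute components), but your bookkeeping of the $(2n+1,2n+1)$ entry is wrong and, carried out as written, would not assemble into\eref{3.2.41}. Your item (iii) is spurious: since both the row and column index of this entry are odd, the only diagonal entry of $\B_j$ that can occur is $\comps{\B_j}{2n+1}{2n+1}$, once with $+$ (from $\B_j\A_j$) and once with $-$ (from $\A_j\B_j$), so these cancel identically and $\comps{\B_j}{2n}{2n}$ never appears; the combination $\big(\comps{\B_j}{2n+1}{2n+1}-\comps{\B_j}{2n}{2n}\big)\comps{\A_j}{2n+1}{2n+1}$ is the structure of the $(2n+1,2n)$ entry of Lemma \ref{t3.2.11} (cf.\eref{3.2.34}), and here it equals $\frac{\gamma}{\ud^2}\big(2\ud\au{0}{0}-\udd\big)\,\udf(-1)^j\af{0}{0}\af{0}{1}\neq0$ in general, so keeping it destroys the identity. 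Your item (ii) also has the pair reversed: with $[\B_j,\A_j]=\B_j\A_j-\A_j\B_j$ the $q=2n$ contribution is $\comps{\B_j}{2n+1}{2n}\comps{\A_j}{2n}{2n+1}-\comps{\A_j}{2n+1}{2n}$, not its negative; with your sign the $\af{0}{1}^2$ and $(v_0+\tj)$ terms of\eref{3.2.41} come out with the wrong sign. (You also quote $\comps{\A_j}{2n+1}{2n+1}=-\udf(-1)^j\af{0}{1}\af{0}{0}$, whereas\eref{3.2.3} and\eref{3.2.9} give $+\udf(-1)^j\af{0}{1}\af{0}{0}$.)

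In the $p>n$ part the two surviving terms are the right ones, but the cancellation at $p=n+1$ rests on the equality of diagonal blocks, $\comps{\A_j}{2(n+1)+1}{2(n+1)+1}=\comps{\A_j}{2n+1}{2n+1}$ (both equal $\udf(-1)^j\af{0}{0}\af{0}{1}$ by\eref{3.2.23}), not on the sub-diagonal identity $\comps{\A_j}{2(n+1)+1}{2n+1}=\comps{\A_j}{2n+1}{2(n-1)+1}$ that you invoke; the common value of the latter is moreover the three-term expression\eref{3.2.44}, not $-\udf(-1)^j\af{0}{0}\af{0}{1}$, and it is precisely what you need (and implicitly use correctly) in your item (i) of the diagonal case. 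Once you drop (iii), fix the sign in (ii), and use the correct identities and values, the computation coincides with the paper's proof, i.e. with\eref{3.2.43},\eref{3.2.45} and a direct substitution of the $\A_j$ components.
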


\begin{proof}
Once more we begin with
\begin{flalign}\label{3.2.42}
\begin{split}
[\B_j,\A_j]_{2n+1,2p+1}&=\sum_{q=0}^\infty\Bigg(\comps{\B_j}{2n+1}{2q}\comps{\A_j}{2q}{2p+1}+\comps{\B_j}{2n+1}{2q+1}\comps{\A_j}{2q+1}{2p+1} \\
&\,\,\,\,\,\,\,\,\,\,\,\,\,\,\,\,\,\,\,\,\,\,\,\,\,\,\,\,\,\,\,\,\,\,\,\,\,\,\,\,\,\,\,\,\,\,\,-\comps{\A_j}{2n+1}{2q}\comps{\B_j}{2q}{2p+1}-\comps{\A_j}{2n+1}{2q+1}\comps{\B_j}{2q+1}{2p+1}\Bigg) \\
&=\comps{\B_j}{2n+1}{2(n+1)+1}\comps{\A_j}{2(n+1)+1}{2p+1}-\comps{\A_j}{2n+1}{2p}-\comps{\A_j}{2n+1}{2(p-1)+1}\comps{\B_j}{2(p-1)+1}{2p+1} \\
&\,\,\,\,\,\,\,\,\,\,\,\,\,\,\,\,+\sum_{q=p}^n\Bigg(\comps{\B_j}{2n+1}{2q}\comps{\A_j}{2q}{2p+1}+\comps{\B_j}{2n+1}{2q+1}\comps{\A_j}{2q+1}{2p+1}-\comps{\A_j}{2n+1}{2q+1}\comps{\B_j}{2q+1}{2p+1}\Bigg)\,,
\end{split}
\end{flalign}
which entails
\begin{flalign}\label{3.2.43}
\begin{split}
[\B_j,\A_j]_{2n+1,2p+1}\big|_{p>n}=-\comps{\A_j}{2n+1}{2p}-\frac{\gamma}{\ud}\bigg(\frac{\udd}{\ud}(n+1)\comps{\A_j}{2(n+1)+1}{2p+1}-\frac{\udd}{\ud}p\comps{\A_j}{2n+1}{2(p-1)+1}\bigg)=0\,.
\end{split}
\end{flalign}
Besides, since
\begin{flalign}\label{3.2.44}
\begin{split}
\comps{\A_j}{2(n+1)+1}{2n+1}=\udf(-1)^j\Big(\af{0}{1}\af{0}{0}+\af{0}{1}\af{1}{0}+\af{0}{0}\af{1}{1}\Big)=\comps{\A_j}{2n+1}{2(n-1)+1}\,,
\end{split}
\end{flalign}
it follows that
\begin{flalign}\label{3.2.45}
\begin{split}
[\B_j,\A_j]_{2n+1,2n+1}&=-\comps{\A_j}{1}{0}-\frac{\gamma}{\ud}\frac{\udd}{\ud}\comps{\A_j}{2(n+1)+1}{2n+1}+\comps{\B_j}{2n+1}{2n}\comps{\A_j}{2n}{2n+1} \\
&=\frac{\gamma}{\ud}\bigg(\frac{\gamma}{\ud}\big(v_0+\tj\big)-2\au{0}{1}\bigg)\comps{\A_j}{0}{1}-\frac{\gamma}{\ud}\frac{\udd}{\ud}\comps{\A_j}{2(n+1)+1}{2n+1}-\comps{\A_j}{1}{0}\,.
\end{split}
\end{flalign}
And whence Proposition \ref{t3.2.3} completes the proof.
\end{proof}

Henceforth we can tackle the Schlesinger equations.

\begin{prop}\label{t3.2.14}
Let $i,j\in\{1,...,2m\}\subset\N$ such that $i\neq j$, and $n,p\in\Z_{\geq0}\,$. The infinite size matrices $\A_j$ and $\B_j\coloneqq\B_{\tj}$, that solve\eref{3.2.2} thereby describing the dynamics of the AWF, satisfy the following set of equations,
\begin{flalign}\label{3.2.46}
\begin{split}
\comps{\partial_j\A_i}{2n+\alpha}{2p+\beta}\big|_{p\geq n}=\comps{\frac{[\A_i,\A_j]}{\tau_i-\tj}}{2n+\alpha}{2p+\beta}\,,\,\,\,\,\,\,\,\,\,\,\comps{\partial_j\A_j}{2n+\alpha}{2p+\beta}\big|_{p\geq n}=\comps{\sum_{\substack{i=1\\i\neq j}}^{2m}\frac{[\A_i,\A_j]}{\tj-\tau_i}-[\A_j,\B_j]}{2n+\alpha}{2p+\beta}\,.
\end{split}
\end{flalign}
\end{prop}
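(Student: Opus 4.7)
The plan is to verify the identities of Proposition \ref{t3.2.14} componentwise, relying on the explicit formulas for $\A_j$ from Proposition \ref{t3.2.3} and for $\B_\xi$ from Proposition \ref{t3.2.7}, together with the commutator evaluations already collected in Lemmas \ref{t3.2.9}--\ref{t3.2.13}. The key simplification is the block-lower-triangular structure of $\A_j$ exhibited in Remark \ref{t3.2.4}: it forces $\comps{\A_j}{2n+\alpha}{2p+\beta} = 0$ for $p > n$ and makes the diagonal blocks $\comps{\A_j}{2n+\alpha}{2n+\beta} = \comps{\A_j}{\alpha}{\beta}$ independent of $n$, so that for each pair $(i,j)$ only finitely many scalar identities remain to be checked.

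For components with $p > n$, both sides of\eref{3.2.46} vanish identically: the left-hand side because $\comps{\A_i}{2n+\alpha}{2p+\beta} = 0$ by Proposition \ref{t3.2.3}, and the right-hand side by the zero assertions of Lemma \ref{t3.2.9} for $[\A_i,\A_j]$ and Lemmas \ref{t3.2.10}--\ref{t3.2.13} for $[\A_j,\B_j]$. Hence only the diagonal case $p = n$ needs further argument.

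For $i \neq j$ with $p = n$, I would differentiate $\comps{\A_i}{\alpha}{\beta} = -\udf(-1)^{i+\beta}\af{0}{-\beta}(\tau_i)\af{0}{\alpha}(\tau_i)$ via the Leibniz rule. Since $\tau_i$ does not depend on $\tj$, the derivative reduces to evaluating $(\partial_j \af{0}{a})(\tau_i) = (-1)^j \Ri_1(\tau_i,\tj)\af{0}{a}(\tj)$, as furnished by Lemma \ref{t2.2.23}. Substituting the Christoffel--Darboux formula of Lemma \ref{t2.2.14} for $\Ri_1(\tau_i,\tj)$ and combining the two Leibniz terms then reproduces, up to the factor $(\tau_i - \tj)^{-1}$, the factorized commutator expression from Lemma \ref{t3.2.9}.

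The main obstacle is the $i = j$, $p = n$ case, in which the differentiation of $\comps{\A_j}{\alpha}{\beta}$ must be interpreted as a total derivative because $\tj$ enters both the spectral point and the parameter dependence of the AWF. Corollary \ref{t2.2.28} supplies this total derivative and splits it into three classes of contributions: closure-type terms proportional to $\af{0}{\alpha+1}(\tj)$, which are rewritten via Lemma \ref{t2.2.34} to produce the factor $v_0 + \tj$ together with bilinears of the form $\au{0}{a}\af{0}{b}$; terms proportional to $\udd/\ud$ multiplied by $\af{1}{a}(\tj)$; and a residual sum $\sum_{i \neq j}\partial_i \af{0}{\alpha}(\tj)$. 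The residual sum reconstructs $\sum_{i \neq j}\comps{[\A_i,\A_j]}{\alpha}{\beta}/(\tj - \tau_i)$ by the same Christoffel--Darboux manipulation as in the previous paragraph applied in reverse, whereas the first two classes are expected to reproduce, after applying the Leibniz rule and collecting the $\af{0}{a}\af{1}{b}$ cross-terms, precisely the explicit formulas for $-\comps{[\A_j,\B_j]}{\alpha}{\beta}$ recorded in Lemmas \ref{t3.2.10}--\ref{t3.2.13}. The remaining work is algebraic bookkeeping over the four parities $(\alpha,\beta) \in \{0,1\}^2$, without requiring any new identity beyond those already established in sections \sref{s21} and \sref{s22}.
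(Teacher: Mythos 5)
Your proposal follows essentially the same route as the paper: both sides vanish for $p>n$ by the zero assertions of Proposition \ref{t3.2.3} and Lemmas \ref{t3.2.9}--\ref{t3.2.13}; the $i\neq j$, $p=n$ case is handled by Leibniz differentiation of the diagonal block using Lemma \ref{t2.2.23} and the Christoffel--Darboux form of Lemma \ref{t2.2.14}; and the $i=j$, $p=n$ case uses the total derivative from Corollary \ref{t2.2.28}, with the residual $\sum_{i\neq j}\partial_i$ terms reassembling the commutator sum and the closure relation Lemma \ref{t2.2.34} reducing the rest to the explicit $[\B_j,\A_j]$ components of Lemmas \ref{t3.2.10}--\ref{t3.2.13}. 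This is precisely the paper's argument, up to the final parity-by-parity bookkeeping which the paper carries out explicitly and you correctly identify as routine.
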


\begin{proof}
Throughout this proof $i\neq j$, whilst $i,j\in\{1,...,2m\}$, and $n,p\in\Z_{\geq0}$. First notice that, according to Proposition \ref{t3.2.3},
\begin{flalign}\label{3.2.47}
\begin{split}
\comps{\partial_j\A_i}{2n+\alpha}{2p+\beta}\big|_{p>n}=0=\comps{\partial_j\A_j}{2n+\alpha}{2p+\beta}\big|_{p>n}\,,
\end{split}
\end{flalign}
which coincides with Lemmas \ref{t3.2.9}, \ref{t3.2.10}, \ref{t3.2.11}, \ref{t3.2.12}, and \ref{t3.2.13} for $p>n$. Therefore it only remains to check these relations for $p=n$, to which we turn next, beginning with the first equality. Using Proposition \ref{t3.2.3} and Lemmas \ref{t2.2.23}, \ref{t2.2.14}, and \ref{t3.2.9}, we evaluate
\begin{flalign}\label{3.2.48}
\begin{split}
\comps{\partial_j\A_i}{2n+\alpha}{2n+\beta}&=-\frac{\ud}{\gamma}(-1)^{i+\beta}\partial_j\af{0}{-\beta}(\tau_i)\af{0}{\alpha}(\tau_i) \\
&=-\frac{\ud}{\gamma}(-1)^{i+j+\beta}\bigg(\Ri_1(\tau_i,\tj)\af{0}{-\beta}(\tj)\af{0}{\alpha}(\tau_i)+\Ri_1(\tau_i,\tj)\af{0}{-\beta}(\tau_i)\af{0}{\alpha}(\tj)\bigg) \\
&=-\frac{\ud^2}{\gamma^2}\frac{(-1)^{i+j+\beta}}{\tau_i-\tj}\bigg(\af{0}{0}(\tau_i)\af{0}{1}(\tj)-\af{0}{1}(\tau_i)\af{0}{0}(\tj)\bigg)\bigg(\af{0}{\alpha}(\tau_i)\af{0}{-\beta}(\tj)+\af{0}{-\beta}(\tau_i)\af{0}{\alpha}(\tj)\bigg) \\
&=\comps{\frac{[\A_i,\A_j]}{\tau_i-\tj}}{2n+\alpha}{2n+\beta}\,.
\end{split}
\end{flalign}
Turning to the second equality, the following observation follows from Corollary \ref{t2.2.28} and\eref{3.2.48},
\begin{flalign}\label{3.2.49}
\begin{split}
\comps{\partial_j\A_j}{2n+\alpha}{2n+\beta}&=-\frac{\ud}{\gamma}(-1)^{j+\beta}\frac{d}{d\tj}\af{0}{-\beta}\af{0}{\alpha} \\
&=-\frac{\ud}{\gamma}(-1)^{j+\beta}\Bigg[\Bigg(\af{0}{-\beta+1}-\frac{\gamma}{\dot{u}_0}\bigg(\au{0}{-\beta}\af{0}{0}+\frac{\ddot{u}_0}{\dot{u}_0}\af{1}{-\beta}\bigg)-\sum_{\substack{i=1\\i\neq j}}^{2m}\partial_i\af{0}{-\beta}\Bigg)\af{0}{\alpha} \\
&\,\,\,\,\,\,\,\,\,\,\,\,\,\,\,\,\,\,\,\,\,\,\,\,\,\,\,\,\,\,\,\,\,\,\,\,\,\,\,\,\,\,\,\,\,\,\,\,\,\,\,\,\,\,\,\,\,\,\,\,\,\,\,\,\,\,\,\,\,\,\,\,+\af{0}{-\beta}\Bigg(\af{0}{\alpha+1}-\frac{\gamma}{\dot{u}_0}\bigg(\au{0}{\alpha}\af{0}{0}+\frac{\ddot{u}_0}{\dot{u}_0}\af{1}{\alpha}\bigg)-\sum_{\substack{i=1\\i\neq j}}^{2m}\partial_i\af{0}{\alpha}\Bigg)\Bigg] \\
&=-\frac{\ud}{\gamma}(-1)^{j+\beta}\Bigg[\Bigg(\af{0}{-\beta+1}-\frac{\gamma}{\dot{u}_0}\bigg(\au{0}{-\beta}\af{0}{0}+\frac{\ddot{u}_0}{\dot{u}_0}\af{1}{-\beta}\bigg)\Bigg)\af{0}{\alpha} \\
&\,\,\,\,\,\,\,\,\,\,\,\,\,\,\,\,\,\,\,\,\,\,\,\,\,\,\,\,\,\,\,\,\,\,\,\,\,\,\,\,\,\,\,\,\,\,\,\,\,\,\,\,\,\,+\af{0}{-\beta}\Bigg(\af{0}{\alpha+1}-\frac{\gamma}{\dot{u}_0}\bigg(\au{0}{\alpha}\af{0}{0}+\frac{\ddot{u}_0}{\dot{u}_0}\af{1}{\alpha}\bigg)\Bigg)\Bigg]+\frac{\ud}{\gamma}(-1)^{j+\beta}\sum_{\substack{i=1\\i\neq j}}^{2m}\partial_i\af{0}{-\beta}\af{0}{\alpha} \\
&=-\frac{\ud}{\gamma}(-1)^{j+\beta}\Bigg[\Bigg(\af{0}{-\beta+1}-\frac{\gamma}{\dot{u}_0}\bigg(\au{0}{-\beta}\af{0}{0}+\frac{\ddot{u}_0}{\dot{u}_0}\af{1}{-\beta}\bigg)\Bigg)\af{0}{\alpha} \\
&\,\,\,\,\,\,\,\,\,\,\,\,\,\,\,\,\,\,\,\,\,\,\,\,\,\,\,\,\,\,\,\,\,\,\,\,\,\,\,\,\,\,\,\,\,\,\,\,\,\,\,\,\,\,\,\,\,\,\,\,\,\,\,\,\,\,\,\,\,\,\,\,\,+\af{0}{-\beta}\Bigg(\af{0}{\alpha+1}-\frac{\gamma}{\dot{u}_0}\bigg(\au{0}{\alpha}\af{0}{0}+\frac{\ddot{u}_0}{\dot{u}_0}\af{1}{\alpha}\bigg)\Bigg)\Bigg]+\comps{\sum_{\substack{i=1\\i\neq j}}^{2m}\frac{[\A_i,\A_j]}{\tj-\tau_i}}{2n+\alpha}{2n+\beta}\,.
\end{split}
\end{flalign}
Therefore, in order to complete the proof it remains to show
\begin{flalign}\label{3.2.50}
\begin{split}
[\B_j,\A_j]_{2n+\alpha,2n+\beta}=(-1)^{j+\beta}\Bigg[\af{0}{\alpha}\Bigg(\au{0}{-\beta}\af{0}{0}+\frac{\ddot{u}_0}{\dot{u}_0}\af{1}{-\beta}-&\udf\af{0}{-\beta+1}\Bigg) \\
&+\af{0}{-\beta}\Bigg(\au{0}{\alpha}\af{0}{0}+\frac{\ddot{u}_0}{\dot{u}_0}\af{1}{\alpha}-\udf\af{0}{\alpha+1}\Bigg)\Bigg]\,.
\end{split}
\end{flalign}
For $(\alpha,\beta)=(0,1)$ this does coincide with Lemma \ref{t3.2.12}. But the three other cases are more involved since we need to employ the closure relation Lemma \ref{t2.2.34}. Namely, for $(\alpha,\beta)=(0,0)$,\eref{3.2.50} is satisfied if and only if
\begin{flalign}\label{3.2.51}
\begin{split}
[\B_j,\A_j]_{2n,2n}&=(-1)^j\Bigg[\af{0}{0}\Bigg(\au{0}{1}\af{0}{0}+\frac{\ddot{u}_0}{\dot{u}_0}\af{1}{1}-\udf\bigg(\frac{\gamma^2}{\dot{u}_0^2}\big(v_0+\tj\big)\af{0}{0}-\frac{\gamma\ddot{u}_0}{\dot{u}_0^2}\af{0}{1} \\
&\,\,\,\,\,\,\,\,\,\,\,\,\,\,\,\,\,\,\,\,\,\,\,\,\,\,\,\,\,\,\,\,\,\,\,\,\,\,\,\,\,\,\,\,\,\,\,\,\,\,\,\,\,\,\,\,\,\,\,\,\,\,\,\,+\frac{\gamma}{\dot{u}_0}\Big(\au{0}{0}\af{0}{1}-\au{0}{1}\af{0}{0}\Big)\bigg)\Bigg)+\af{0}{1}\Bigg(\au{0}{0}\af{0}{0}+\frac{\ddot{u}_0}{\dot{u}_0}\af{1}{0}-\udf\af{0}{1}\Bigg)\Bigg] \\
&=(-1)^j\Bigg(2\au{0}{1}\af{0}{0}^2+\frac{\ddot{u}_0}{\dot{u}_0}\af{0}{0}\af{1}{1}-\frac{\gamma}{\dot{u}_0}\big(v_0+\tj\big)\af{0}{0}^2+\frac{\ddot{u}_0}{\dot{u}_0}\af{0}{0}\af{0}{1}+\frac{\ddot{u}_0}{\dot{u}_0}\af{0}{1}\af{1}{0}-\udf\af{0}{1}^2\Bigg)\,,
\end{split}
\end{flalign}
which indeed agrees with Lemma \ref{t3.2.10}. The same argument for $(\alpha,\beta)=(1,0)$ leads to
\begin{flalign}\label{3.2.52}
\begin{split}
[\B_j,\A_j]_{2n+1,2n}&=2(-1)^j\Bigg[\af{0}{1}\Bigg(\au{0}{1}\af{0}{0}+\frac{\ddot{u}_0}{\dot{u}_0}\af{1}{1}-\udf\bigg(\frac{\gamma^2}{\dot{u}_0^2}\big(v_0+\tj\big)\af{0}{0}-\frac{\gamma\ddot{u}_0}{\dot{u}_0^2}\af{0}{1}+\frac{\gamma}{\dot{u}_0}\Big(\au{0}{0}\af{0}{1}-\au{0}{1}\af{0}{0}\Big)\bigg)\Bigg)\Bigg] \\
&=2(-1)^j\Bigg(2\au{0}{1}\af{0}{0}\af{0}{1}+\frac{\ddot{u}_0}{\dot{u}_0}\af{0}{1}\af{1}{1}-\frac{\gamma}{\dot{u}_0}\big(v_0+\tj\big)\af{0}{0}\af{0}{1}+\frac{\ddot{u}_0}{\dot{u}_0}\af{0}{1}^2-\au{0}{0}\af{0}{1}^2\Bigg)\,,
\end{split}
\end{flalign}
in accordance with Lemma \ref{t3.2.11}. Finally, in order to ensure that\eref{3.2.50} holds for $(\alpha,\beta)=(1,1)$, we need
\begin{flalign}\label{3.2.53}
\begin{split}
[\B_j,\A_j]_{2n+1,2n+1}&=-(-1)^j\Bigg[\af{0}{1}\Bigg(\au{0}{0}\af{0}{0}+\frac{\ddot{u}_0}{\dot{u}_0}\af{1}{0}-\udf\af{0}{1}\Bigg) \\
&\,\,\,\,\,\,\,\,\,\,\,\,\,\,\,\,\,\,\,+\af{0}{0}\Bigg(\au{0}{1}\af{0}{0}+\frac{\ddot{u}_0}{\dot{u}_0}\af{1}{1}-\udf\bigg(\frac{\gamma^2}{\dot{u}_0^2}\big(v_0+\tj\big)\af{0}{0}-\frac{\gamma\ddot{u}_0}{\dot{u}_0^2}\af{0}{1}+\frac{\gamma}{\dot{u}_0}\Big(\au{0}{0}\af{0}{1}-\au{0}{1}\af{0}{0}\Big)\bigg)\Bigg)\Bigg] \\
&=-(-1)^j\Bigg(\frac{\ddot{u}_0}{\dot{u}_0}\af{0}{1}\af{1}{0}-\udf\af{0}{1}^2+2\au{0}{1}\af{0}{0}^2+\frac{\ddot{u}_0}{\dot{u}_0}\af{0}{0}\af{1}{1}-\frac{\gamma}{\dot{u}_0}\big(v_0+\tj\big)\af{0}{0}^2+\frac{\ddot{u}_0}{\dot{u}_0}\af{0}{0}\af{0}{1}\Bigg)\,,
\end{split}
\end{flalign}
and this is consistent with Lemma \ref{t3.2.13}, finalizing the argument.
\end{proof}

\begin{rem}\label{t3.2.15}
Henceforth, we formally consider the zero curvature equation $[D_j,D_\xi]\Af(\xi)=0$, which follows from\eref{zce}, enabling us to derive the following upshot.
\end{rem}

\begin{prop}[Schlesinger Equations]\label{t3.2.16}
Let $i,j\in\{1,...,2m\}\subset\N$ such that $i\neq j\,$. The infinite size matrices $\A_j$ and $\B_j\coloneqq\B_{\tj}\,,\,$ that solve\eref{3.2.2} thereby describing the dynamics of the AWF, formally satisfy the following Schlesinger equations,
\begin{flalign}\label{3.2.54}
\begin{split}
\partial_j\A_i=\frac{[\A_i,\A_j]}{\tau_i-\tj}\,,\,\,\,\,\,\,\,\,\,\,\partial_j\A_j=\sum_{\substack{i=1\\i\neq j}}^{2m}\frac{[\A_i,\A_j]}{\tj-\tau_i}-[\A_j,\B_j]\,.
\end{split}
\end{flalign}
\end{prop}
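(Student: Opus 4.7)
The plan is to derive both identities in \eqref{3.2.54} by extracting residues from the formal zero curvature equation $[D_j,D_\xi]\Af(\xi)=0$ highlighted in Remark \ref{t3.2.15}, which in turn follows from $D_j\Af=0$ and $D_\xi\Af=0$ of \eqref{zce}. Since every $\A_i$ is independent of $\xi$ and $\B_\xi$ depends on $\xi$ only through its $w(\xi)=v_0+\xi$ entry, one directly expands the commutator; the two $(\xi-\tj)^{-2}$ contributions (one from $[\partial_j,-\A_j/(\xi-\tj)]$ inside $[\partial_j,-\sum_i\A_i/(\xi-\tau_i)]$, the other from $[\A_j/(\xi-\tj),\partial_\xi]$) cancel, leaving
\begin{flalign*}
[D_j,D_\xi]=-\partial_j\B_\xi-\sum_{i=1}^{2m}\frac{\partial_j\A_i}{\xi-\tau_i}-\frac{[\A_j,\B_\xi]}{\xi-\tj}-\sum_{\substack{i=1\\i\neq j}}^{2m}\frac{[\A_j,\A_i]}{(\xi-\tj)(\xi-\tau_i)}\,.
\end{flalign*}

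Next I would apply the partial fraction identity $\big[(\xi-\tj)(\xi-\tau_i)\big]^{-1}=(\tj-\tau_i)^{-1}\big[(\xi-\tj)^{-1}-(\xi-\tau_i)^{-1}\big]$ to the last sum, so that $[D_j,D_\xi]=0$ becomes an equation between meromorphic functions of $\xi$ whose singular part consists only of simple poles at the $\tau_i$. Equating the residue at $\xi=\tau_i$ for $i\neq j$ yields $-\partial_j\A_i+[\A_j,\A_i]/(\tj-\tau_i)=0$, which is the first identity of \eqref{3.2.54}. Equating the residue at $\xi=\tj$, noting that $[\A_j,\B_\xi]/(\xi-\tj)$ has residue $[\A_j,\B_j]$ there since $\B_\xi$ is regular at $\xi=\tj$, yields $-\partial_j\A_j-[\A_j,\B_j]-\sum_{i\neq j}[\A_j,\A_i]/(\tj-\tau_i)=0$, which is the second identity of \eqref{3.2.54} after using $-[\A_j,\A_i]=[\A_i,\A_j]$.

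The main obstacle, and precisely the reason the statement reads \emph{formally satisfy}, is that $\A_j$ and $\B_\xi$ are of infinite size, so the products $\A_i\A_j$ and $\A_j\B_\xi$ appearing in the commutators are defined only as formal matrix series, and the residue calculus above carries no a priori entry-wise convergence guarantee. A component-by-component rigorous proof would need to combine the twelve sub-cases of Proposition \ref{t3.2.7} with the closure relation of Proposition \ref{t2.2.36}, and rapidly becomes intractable without symbolic computation, as noted in the discussion following Proposition \ref{t1.3.2}. Proposition \ref{t3.2.14} supplies the block-lower-triangular verification for $p\geq n$ and is then invoked as supporting evidence for the formal derivation covering the remaining entries.
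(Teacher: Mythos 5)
Your proposal is correct and follows essentially the same route as the paper: both derive\eref{3.2.54} formally from the zero curvature equation of Remark \ref{t3.2.15} by isolating the simple-pole structure at the $\tau_i$, the paper doing so by rearranging the relation (kept applied to $\Af$) and taking the limits $\xi\to\tau_i$ and $\xi\to\tj$, which is exactly your partial-fraction/residue extraction in disguise. The only cosmetic differences are that you retain the term $\partial_j\B_\xi$ (regular in $\xi$, hence irrelevant to the residues) which the paper's computation drops, and that you pass from $[D_j,D_\xi]\Af=0$ to the operator identity $[D_j,D_\xi]=0$ before matching residues --- an additional step that is harmless at the purely formal level at which the proposition is stated.
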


\begin{proof}
All along this proof, $\,\Af\coloneqq\Af(\xi)$. We consider the covariant derivatives defined in\eref{zce} and we demonstrate this formal statement by proving that $[D_j,D_\xi]\Af=0$ $\Leftrightarrow$\eref{3.2.54}. Thus we shall begin with the following computations, $\forall\xi\in\I$,
\begin{flalign}\label{sep1}
\begin{split}
D_jD_\xi\Af&=\left(\partial_j+\cdj{j}\right)\left(\partial_\xi-\cdx\right)\Af \\
&=\partial_j\partial_\xi\Af-\B_\xi\partial_j\Af-\sum_{i=1}^{2m}\left(\frac{\big(\partial_j\A_i\big)}{\xi-\tau_i}\Af+\cdj{i}\partial_j\Af\right)-\frac{\A_j}{\big(\xi-\tj\big)^2}\Af+\cdj{j}\partial_\xi\Af-\cdj{j}\cdx\Af
\end{split}
\end{flalign}
and, still for all $\xi\in\I$, we also have
\begin{flalign}\label{sep2}
\begin{split}
D_\xi D_j\Af&=\left(\partial_\xi-\cdx\right)\left(\partial_j+\cdj{j}\right)\Af \\
&=\partial_\xi\partial_j\Af+\cdj{j}\partial_\xi\Af-\frac{\A_j}{\big(\xi-\tj\big)^2}\Af-\B_\xi\partial_j\Af-\sum_{i=1}^{2m}\cdj{i}\partial_j\Af-\cdx\cdj{j}\Af\,.
\end{split}
\end{flalign}
Enabling us to evaluate the commutator of the covariant derivatives, $\forall\xi\in\I$,
\begin{flalign}\label{sep3}
\begin{split}
[D_j,D_\xi]\Af&=-\sum_{i=1}^{2m}\frac{\big(\partial_j\A_i\big)}{\xi-\tau_i}\Af-\cdj{j}\cdx\Af+\cdx\cdj{j}\Af \\
&=\left(\rule{0cm}{0.75cm}\frac{1}{\xi-\tj}\left([\B_\xi,\A_j]+\sum_{i=1}^{2m}\frac{[\A_i,\A_j]}{\xi-\tau_i}\right)-\sum_{i=1}^{2m}\frac{\big(\partial_j\A_i\big)}{\xi-\tau_i}\right)\Af \\
&=\left(\frac{1}{\xi-\tj}\left([\B_\xi,\A_j]+\sum_{\substack{i=1\\i\neq j}}^{2m}\frac{[\A_i,\A_j]}{\xi-\tau_i}-\partial_j\A_j\right)-\sum_{\substack{i=1\\i\neq j}}^{2m}\frac{\big(\partial_j\A_i\big)}{\xi-\tau_i}\right)\Af=0\,.
\end{split}
\end{flalign}
Whence\eref{3.2.54} amounts to two different equations which together are equivalent to this relation, the former being
\begin{flalign}\label{sep4}
\begin{split}
\partial_j\A_j=[\B_\xi,\A_j]+\sum_{\substack{i=1\\i\neq j}}^{2m}\frac{[\A_i,\A_j]}{\xi-\tau_i}-\big(\xi-\tj\big)\sum_{\substack{i=1\\i\neq j}}^{2m}\frac{\big(\partial_j\A_i\big)}{\xi-\tau_i}\,,\quad\forall\xi\in\I\,,
\end{split}
\end{flalign}
whilst the latter one is, for any $i\in\{1,...,2m\}\,,\,\,i\neq j$,
\begin{flalign}\label{sep5}
\begin{split}
\partial_j\A_i=\frac{[\A_i,\A_j]}{\xi-\tau_j}+\frac{\xi-\tau_i}{\xi-\tj}\left([\B_\xi,\A_j]+\sum_{\substack{k=1\\k\neq i,j}}^{2m}\frac{[\A_k,\A_j]}{\xi-\tau_k}-\partial_j\A_j\right)-\big(\xi-\tau_i\big)\sum_{\substack{k=1\\k\neq i,j}}^{2m}\frac{\big(\partial_j\A_k\big)}{\xi-\tau_k}\,,\quad\forall\xi\in\I\,.
\end{split}
\end{flalign}
Ultimately, the $\xi\rightarrow\tj$ limit of\eref{sep4} leads to the second equation of\eref{3.2.54}. Meanwhile, the $\xi\rightarrow\tau_i$ limit of\eref{sep5} reads $\big(\tau_i-\tj\big)\partial_j\A_i=[\A_i,\A_j]$, ensuring our upshot.
\end{proof}

\begin{rem}\label{t3.2.17}
According to Proposition \ref{t3.2.14}, in order to rigorously prove Proposition \ref{t3.2.16} for an arbitrary component, it suffices to show that
\begin{flalign}\label{3.2.55}
\begin{split}
\comps{\partial_j\A_i}{2n+\alpha}{2p+\beta}\big|_{p<n}=\comps{\frac{[\A_i,\A_j]}{\tau_i-\tj}}{2n+\alpha}{2p+\beta}\,,\,\,\,\,\,\,\,\,\,\,\comps{\partial_j\A_j}{2n+\alpha}{2p+\beta}\big|_{p<n}=\comps{\sum_{\substack{i=1\\i\neq j}}^{2m}\frac{[\A_i,\A_j]}{\tj-\tau_i}-[\A_j,\B_j]}{2n+\alpha}{2p+\beta}\,.
\end{split}
\end{flalign}
It is worth pointing out that, since Propositions \ref{t3.2.3} and \ref{t3.2.7} completely determine these matrices, we have all the information necessary to verify\eref{3.2.55}. But the computations rapidly become rather lengthy so that it becomes quite difficult to compare expressions. Whence $p<n$ components remain a (purely) technical challenge.
\end{rem}

\section{Deformation of Tracy-Widom Distribution}
\label{s4}

This section is dedicated to the finalization of the proof for our main result, which we formulate as follows in terms of the quantities defined in Definitions \ref{t1.1.1}.

\begin{tm}\label{t4.0.1}
If Condition \ref{t1.1.4} as well as Conditions \ref{t1.1.2}, \ref{t1.1.3}, 
\ref{t1.1.5}, \ref{t1.1.6} and \ref{t3.1.6} are satisfied, then the Fredholm determinant for $\I=[\tau,\infty)$ is explicitly given only in terms of $\psi(\xi)\coloneqq\varphi_\xi(0)$ and $\qcv(\xi)\coloneqq\big(\id-\Ko\big)^{-1}\psi(\xi)$ as follows,
\begin{flalign}\label{4.0.1}
\begin{split}
\F\Big([\tau,\infty)\Big)=\ex\left(\rule{0cm}{0.6444cm}\int_\tau^\infty\Bigg[\qcv_\sigma\Bigg(\udf\qcv''_\sigma-\qcv^3_\sigma+\frac{\udd}{\ud}\qcv_\sigma\bigg(\frac{\qcv'_\sigma}{\psi_\sigma}-\frac{\psi'_\sigma\qcv_\sigma}{\psi^2_\sigma}\bigg)\Bigg)-\udf\qcv'^2_\sigma\Bigg]d\sigma\right)\,,
\end{split}
\end{flalign}
where $f_\sigma\coloneqq f(\sigma)$, $f'_\sigma\coloneqq\frac{df}{d\sigma}(\sigma)$ and $f''_\sigma\coloneqq\frac{d^2f}{d\sigma^2}(\sigma)$ with $f$ representing either $\qcv$ or $\psi$. Moreover the auxiliary function $\qcv$ satisfies the following second order integro-differential equation,
\begin{flalign}\label{4.0.2}
\begin{split}
\qcv''_\tau=\frac{\gamma^2}{\ud^2}\big(v_0+\tau\big)\qcv_\tau+\frac{2\gamma}{\ud}\Bigg(\qcv^3_\tau-\frac{\gamma\udd}{\ud^2}\qcv_\tau\int_\tau^\infty\qcv^2_\sigma\,d\sigma\Bigg)-\frac{2\gamma^2\udd^2}{\ud^4}\Bigg(\frac{\qcv^3_\tau}{\psi^2_\tau}-\frac{\qcv^2_\tau}{\psi_\tau}\Bigg)+\frac{\gamma\udd}{\ud^2}\Bigg(\qcv'_\tau+2\frac{\qcv^2_\tau}{\psi^2_\tau}\psi'_\tau-4\frac{\qcv_\tau}{\psi_\tau}\qcv'_\tau\Bigg)\,.
\end{split}
\end{flalign}
Whenceforth, provided with $\psi$, $u$ and $\gamma$, the Fredholm determinant $\F\Big([\tau,\infty)\Big)$ is completely determined by the solution of\eref{4.0.2} whose behaviour as $\tau\rightarrow\infty$ coincides with $\psi$, more precisely $\qcv_\tau\xrightarrow{\tau\rightarrow\infty}\psi_\tau$.

Alternatively, the Fredholm determinant also admits the following explicit representation,
\begin{flalign}\label{4.0.3}
\begin{split}
\F\Big([\tau,\infty)\Big)=\ex\left(\rule{0cm}{0.65cm}-\frac{\gamma}{\ud}\int_\tau^\infty\big(\sigma-\tau\big)\left(\rule{0cm}{0.6444cm}\qcv^2_\sigma+\frac{\udd}{\gamma}\bigg(\frac{2\,\qcv_\sigma}{\psi_\sigma}-1\bigg)\Bigg[\qcv'^2_\sigma-\qcv_\sigma\Bigg(\qcv''_\sigma-\frac{\gamma}{\ud}\qcv^3_\sigma+\frac{\gamma\udd}{\ud^2}\qcv_\sigma\bigg(\frac{\qcv'_\sigma}{\psi_\sigma}-\frac{\psi'_\sigma\qcv_\sigma}{\psi^2_\sigma}\bigg)\Bigg)\Bigg]\right)d\sigma\right)\,.
\end{split}
\end{flalign}
\end{tm}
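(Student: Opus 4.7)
The starting point is Proposition~\ref{t3.1.2}, which for $\I=[\tau,\infty)$ (so $j=1$, $\tau_1=\tau$) yields $\tfrac{d}{d\tau}\log\F([\tau,\infty))=\tfrac{\ud}{\gamma}\Ha_1(\tau)$. Since $\Ko\to 0$ as $\tau\to\infty$, $\log\F\to 0$ in that limit, whence
\[
\log\F([\tau,\infty))=-\int_\tau^\infty\frac{\ud}{\gamma}\Ha_1(\sigma)\,d\sigma,
\]
where $\Ha_1(\sigma)$ denotes the first-order Hamiltonian associated with $[\sigma,\infty)$ evaluated at its moving endpoint. Lemma~\ref{t3.1.4} then identifies $\Ha_1(\sigma)=\qcv'_\sigma\pcv_\sigma-\pcv'_\sigma\qcv_\sigma$, primes denoting total $\sigma$-derivatives; throughout, $\qcv(\sigma)$ and $\pcv(\sigma)$ are treated as functions of a single variable, namely $\qcv(\sigma)\coloneqq\qcv_{[\sigma,\infty)}(\sigma)$ and similarly for $\pcv$.

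To eliminate $\pcv$, I would apply Proposition~\ref{t2.2.33} with $(n,\alpha)=(0,0)$ at $\xi=\tau$, combine with Corollary~\ref{t2.2.28} to fuse $\partial_\xi+\partial_1$ into the total $\tau$-derivative, and invoke Lemma~\ref{t3.1.9} at the boundary to replace $\af{1}{0}(\tau)=(\qcv_\tau/\psi_\tau-1)\qcv_\tau$. Solving for $\pcv_\tau$ yields the pivotal identity
\begin{equation*}
\pcv_\tau=\qcv'_\tau+\frac{\gamma}{\ud}\au{0}{0}\qcv_\tau+\frac{\gamma\udd}{\ud^2}\bigg(\frac{\qcv_\tau}{\psi_\tau}-1\bigg)\qcv_\tau.
\end{equation*}

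Equation~\eqref{4.0.2} then emerges by computing $\pcv'_\tau$ in two ways and matching. Differentiating the pivotal identity gives one expression in terms of $\qcv''_\tau,\qcv'_\tau,\qcv_\tau,\psi_\tau$ and $(\au{0}{0})'$. Independently, Proposition~\ref{t2.2.33} at $(n,\alpha)=(0,1)$ with $\xi=\tau$, together with the closure relation Proposition~\ref{t2.2.36} at $n=0$ (to eliminate $\af{0}{2}(\tau)$) and Lemma~\ref{t3.1.9} (to reduce $\af{1}{1}(\tau)=(\qcv_\tau/\psi_\tau-1)\pcv_\tau$), furnishes a second expression for $\pcv'_\tau$ involving $\qcv_\tau,\pcv_\tau,\au{0}{0},\au{0}{1}$. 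Equating both, using the pivotal identity once more to remove $\pcv_\tau$, and solving for $\qcv''_\tau$ produces an equation whose non-local content is carried by $\au{0}{0},\au{0}{1}$ and $(\au{0}{0})'$. The key ingredient converting these to the integral form of~\eqref{4.0.2} is the pair of generalisations of the Tracy-Widom identities
\[
(\au{0}{0})'=-\qcv_\tau^2,\qquad (\au{0}{1})'=-\qcv_\tau\pcv_\tau,
\]
which follow from Lemma~\ref{t2.2.23} combined with the symmetry of $\Pi_\I\Ro\Pi_\I$ on $\I$ and the resolvent relation $\Ko\qcv=\qcv-\psi$; integration then delivers $\au{0}{0}(\tau)=\int_\tau^\infty\qcv^2_\sigma\,d\sigma$, precisely the integral in~\eqref{4.0.2}. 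Organising all the resulting terms into the closed form~\eqref{4.0.2} constitutes the principal algebraic obstacle of the proof.

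With~\eqref{4.0.2} in hand, substituting the pivotal identity into $-\tfrac{\ud}{\gamma}(\qcv'_\sigma\pcv_\sigma-\pcv'_\sigma\qcv_\sigma)$, recognising $\qcv'/\psi-\psi'\qcv/\psi^2=(\qcv/\psi)'$, and invoking $(\au{0}{0})'=-\qcv^2$ reproduces exactly the integrand of~\eqref{4.0.1}. For~\eqref{4.0.3}, a double integration by parts establishes
\[
\log\F([\tau,\infty))=\int_\tau^\infty(\sigma-\tau)\frac{d^2}{d\sigma^2}\log\F([\sigma,\infty))\,d\sigma,
\]
valid because $\log\F$ and its first derivative both vanish at infinity. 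The second derivative equals $\tfrac{\ud}{\gamma}(\qcv''_\sigma\pcv_\sigma-\pcv''_\sigma\qcv_\sigma)$, and eliminating $\pcv_\sigma$ and $\pcv''_\sigma$ via the pivotal identity and~\eqref{4.0.2} rearranges the integrand into the characteristic $(2\qcv/\psi-1)[\,\cdot\,]$ pattern of~\eqref{4.0.3}. Finally, the boundary behaviour $\qcv_\tau\to\psi_\tau$ as $\tau\to\infty$ is immediate from $\rho=\id+\Ro$ together with $\Ro\to 0$ in the same limit.
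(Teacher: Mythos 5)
Your skeleton matches the paper's: Proposition \ref{t3.1.2} plus the boundary value at infinity, Lemma \ref{t3.1.4}, your pivotal identity is exactly the paper's (4.2.2), the route you give for\eref{4.0.1} is the paper's Lemma \ref{t4.2.1}, and the asymptotics $\qcv_\tau\to\psi_\tau$ is its Lemma \ref{t4.2.5}. The genuine gap is in the derivation of\eref{4.0.2}, namely in how the non-local quantities are disposed of. After equating your two expressions for $\pcv'_\tau$ and removing $\pcv_\tau$ with the pivotal identity, the equation still contains the \emph{algebraic} terms $-\tfrac{2\gamma}{\ud}\au{0}{1}\qcv_\tau$ and $+\tfrac{\gamma^2}{\ud^2}\au{0}{0}^2\qcv_\tau$. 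Your plan of integrating $(\au{0}{0})'=-\qcv_\tau^2$ and $(\au{0}{1})'=-\qcv_\tau\pcv_\tau$ does not remove them: it would leave an integral $\int\qcv\,\pcv$ (so $\pcv$ is not eliminated after all) and the $\au{0}{0}^2$ term intact, and neither feature appears in\eref{4.0.2}. What is actually needed is the algebraic identity $2\au{0}{1}=\tfrac{\gamma}{\ud}\big(\au{0}{0}^2+\tfrac{\udd}{\ud}\au{1}{0}\big)-\qcv_\tau^2$ (the paper's Lemma \ref{t4.1.1}, proved by a separate adjointness/boundary-term computation), which cancels the $\au{0}{0}^2$ contribution and leaves the combination $\au{0}{0}+\au{1}{0}$. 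It is this combination, not $\au{0}{0}$, that becomes the integral in\eref{4.0.2}, via $\au{0}{0}+\au{1}{0}=\ipr{\rho\psi}{\Pi_{\I}\rho\psi}=\int_\tau^\infty\qcv^2_\sigma\,d\sigma$ (the paper's (4.1.12)); your statement that integrating $(\au{0}{0})'=-\qcv^2$ delivers ``precisely the integral in\eref{4.0.2}'' conflates $\au{0}{0}$ with $\au{0}{0}+\au{1}{0}$, which differ by $\au{1}{0}\neq0$ in general.

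A second, related defect: you reduce $\af{1}{1}(\tau)$ by Lemma \ref{t3.1.9}, whereas the paper eliminates it through Proposition \ref{t2.2.33} at $(n,\alpha)=(1,0)$ (its (4.1.9)). Both identities are valid, but carried out literally your choice produces an equation with $-\tfrac{\gamma^2\udd}{\ud^3}\big(\au{0}{0}+\au{1}{0}\big)\qcv$, $-\tfrac{\gamma^2\udd^2}{\ud^4}\big(\tfrac{\qcv^3}{\psi^2}-\tfrac{\qcv^2}{\psi}\big)$ and $\tfrac{\gamma\udd}{\ud^2}\big(\qcv'+\tfrac{\qcv^2}{\psi^2}\psi'-3\tfrac{\qcv}{\psi}\qcv'\big)$, i.e.\ half the non-local coefficient of\eref{4.0.2} and different first-order terms; it agrees with\eref{4.0.2} only modulo the further identity $\tfrac{d}{d\tau}\big(\tfrac{\qcv}{\psi}\big)=-\tfrac{\gamma}{\ud}\big(\au{0}{0}+\au{1}{0}\big)-\tfrac{\gamma\udd}{\ud^2}\tfrac{\qcv}{\psi}\big(\tfrac{\qcv}{\psi}-1\big)$, which is exactly what Proposition \ref{t2.2.33} at $(1,0)$ combined with Lemma \ref{t3.1.9} encodes and which your sketch never supplies. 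Finally, for\eref{4.0.3} your brute-force elimination of $\pcv''$ is only asserted; the paper instead derives $\tfrac{d\Ha_1}{d\tau}=-\tfrac{\gamma^2}{\ud^2}\big(\qcv^2+\tfrac{\udd}{\gamma}(\tfrac{2\qcv}{\psi}-1)\Ha_1\big)$ from the operator computation of Lemma \ref{t4.2.2} together with $\Ha_2=2\big(\tfrac{\qcv}{\psi}-1\big)\Ha_1$ (Lemma \ref{t3.1.15}), an ingredient you would still need to reconstruct for the stated form of\eref{4.0.3}.
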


The proof has been greatly initiated in the previous sections, now we bring together the already established results and derive several new ones in order to yield Theorem \ref{t4.0.1}. Throughout \sref{s4}, we leave implicit $\I=[\tau,\infty)$.

\subsection{Integro-Differential Equation for the Auxiliary Wave Function}\label{s41}

We begin with the derivation of the integro-differential equation \eref{4.0.2}.

\begin{lem}\label{t4.1.1}
We can express $\au{0}{1}$, most notably in terms of $\au{0}{0}$ and $\au{1}{0}$, as
\begin{flalign}\label{4.1.1}
\begin{split}
2\au{0}{1}=\frac{\gamma}{\ud}\bigg(\au{0}{0}^2+\frac{\udd}{\ud}\au{1}{0}\bigg)-\qcv^2(\tau)\,.
\end{split}
\end{flalign}
\end{lem}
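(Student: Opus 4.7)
The plan is to derive the claim by an integration by parts on $\au{0}{1}$, to then substitute for $\qcv'$ via Lemma~\ref{t2.2.27}, and finally to identify a single residual boundary integral by exploiting the symmetry of $\Ki$.

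Using the self-adjointness relation $\Pi_\I\rho=\rho^*\Pi_\I$ of Lemma~\ref{t2.2.12}, I would rewrite $\au{0}{1}=\ipr{\psi}{\Pi_\I\rho\psi'}=\ipr{\qcv}{\Pi_\I\psi'}=\int_\tau^\infty\qcv(\sigma)\psi'(\sigma)\,d\sigma$, and then integrate by parts (licit thanks to the $\leb^2(\R_+)$ decay ensured by Condition~\ref{t1.1.3}), which produces
\begin{flalign*}
2\au{0}{1}=-2\qcv(\tau)\psi(\tau)-2\int_\tau^\infty\qcv'(\sigma)\psi(\sigma)\,d\sigma\,.
\end{flalign*}
Specializing Lemma~\ref{t2.2.27} (with $\alpha=0$) to $\I=[\tau,\infty)$, where the boundary sum collapses to the single term $\partial_\tau\qcv$, one may substitute
\begin{flalign*}
\qcv'(\sigma)=\pcv(\sigma)-\frac{\gamma}{\ud}\au{0}{0}\,\qcv(\sigma)-\frac{\gamma\udd}{\ud^2}\af{1}{0}(\sigma)-\partial_\tau\qcv(\sigma)\,.
\end{flalign*}
Pairing each term with $\psi$ and integrating over $[\tau,\infty)$, three of the resulting integrals are identified as $\au{0}{1}$, $\frac{\gamma}{\ud}\au{0}{0}^2$ and $\frac{\gamma\udd}{\ud^2}\au{1}{0}$ directly from Definition~\ref{t2.2.25}, while by Lemma~\ref{t2.2.23} the last one equals $-\qcv(\tau)\,A$ with
$A\coloneqq\int_\tau^\infty\psi(\sigma)\,\Ri_1(\sigma,\tau)\,d\sigma$. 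Collecting,
\begin{flalign*}
4\au{0}{1}=-2\qcv(\tau)\psi(\tau)+\frac{2\gamma}{\ud}\au{0}{0}^2+\frac{2\gamma\udd}{\ud^2}\au{1}{0}-2\qcv(\tau)\,A\,.
\end{flalign*}

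The only non-routine step is the evaluation of $A$: here $\tau$ sits in the \emph{second} argument of $\Ri_1$ rather than the first, so Lemma~\ref{t2.2.14} does not apply immediately. Writing $\Ro=\rho\Ko_0\Pi_\I$ one sees that $\Ri_1(\sigma,\tau)=\big(\rho\,\Ki(\cdot,\tau)\big)(\sigma)$ for $\tau\in\I$, with $\rho$ acting on the first variable; Lemma~\ref{t2.2.12} then yields
\begin{flalign*}
A=\ipr{\Pi_\I\psi}{\rho\,\Ki(\cdot,\tau)}=\ipr{\rho^*\Pi_\I\psi}{\Ki(\cdot,\tau)}=\ipr{\Pi_\I\qcv}{\Ki(\cdot,\tau)}\,,
\end{flalign*}
and the symmetry $\Ki(\sigma,\tau)=\Ki(\tau,\sigma)$ inherent in the definition~\eref{1.1.3} identifies this last inner product with $(\Ko\qcv)(\tau)$. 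Finally, $\Ko\qcv=\Ko\rho\psi=\Ro\psi=(\rho-\id)\psi$ yields $A=\qcv(\tau)-\psi(\tau)$. Inserting this back, the cross terms $\pm\,2\qcv(\tau)\psi(\tau)$ cancel and one obtains precisely $2\au{0}{1}=\frac{\gamma}{\ud}\big(\au{0}{0}^2+\frac{\udd}{\ud}\au{1}{0}\big)-\qcv^2(\tau)$, as claimed.
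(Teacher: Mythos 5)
Your argument is correct and reaches\eref{4.1.1}, but it is organized differently from the paper's proof. The paper works entirely at the operator level: it writes $\au{0}{1}=\ipr{\psi}{\Pi_{\I}\rho\Po\psi}$, uses $\Po^*=-\Po$ together with the commutators $[\Po,\Pi_{\I}]$ and $[\Po,\rho]$ (the latter via Lemma \ref{t2.2.32} for $n=0$ and Lemma \ref{t2.2.22}), so that the quadratic terms $\frac{\gamma}{\ud}\big(\au{0}{0}^2+\frac{\udd}{\ud}\au{1}{0}\big)$ and the boundary term $-\qcv^2(\tau)$ appear in one stroke as $\ipr{\rho\psi}{\Pi_{\I}\dl{\tau}\rho\psi}$. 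You instead work with scalar integrals: integration by parts on $\int_\tau^\infty\qcv\,\psi'$ (the pointwise counterpart of $\Po^*=-\Po$ plus $[\Po,\Pi_{\I}]$), substitution of the derivative formula of Lemma \ref{t2.2.27} (which already encodes $[\Po,\rho]$), and then a separate evaluation of $A=\int_\tau^\infty\psi(\sigma)\Ri_1(\sigma,\tau)\,d\sigma$ through $\Ri_1(\cdot,\tau)=\rho\,\Ki(\cdot,\tau)$, Lemma \ref{t2.2.12} and the symmetry of $\Ki$, yielding $A=\qcv(\tau)-\psi(\tau)$; this last step, which the paper never needs, is carried out correctly and the cancellation of the $\qcv(\tau)\psi(\tau)$ cross terms is exactly what restores the paper's compact boundary contribution. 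Two small points: the decay needed to discard the boundary term at $\sigma\rightarrow\infty$ in your integration by parts (and, implicitly, to collapse the sum in Lemma \ref{t2.2.27} to the single term $\partial_\tau$) is the $\xi$-decay of Condition \ref{t1.1.2} (and the regularity of $\psi$ there), not the $x$-integrability of Condition \ref{t1.1.3} that you cite; and one should note that the paper's own proof relies on the same implicit suppression of the contribution from the endpoint at infinity, so your proof is at the same level of rigor on this point.
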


\begin{proof}
According to Definition \ref{t2.2.25}, we have
\begin{flalign}\label{4.1.2}
\begin{split}
\au{0}{1}&=\ipr{\psi}{\Pi_{\I}\Big([\rho,\Po]+\Po\rho\Big)\psi}=\ipr{\psi}{\Big(\Pi_{\I}[\rho,\Po]+[\Pi_{\I},\Po]\rho\Big)\psi}+\ipr{\psi}{\Po\Pi_{\I}\rho\psi} \\
&=\ipr{\psi}{\Big(\Pi_{\I}[\rho,\Po]+[\Pi_{\I},\Po]\rho\Big)\psi}-\ipr{\Po\psi}{\Pi_{\I}\rho\psi}=\ipr{\psi}{\Big(\Pi_{\I}[\rho,\Po]+[\Pi_{\I},\Po]\rho\Big)\psi}-\au{0}{1}\,,
\end{split}
\end{flalign}
where we used Properties \ref{t2.1.3} and \ref{t2.2.26}. Besides, notice that, for all $f\in\sob^1(\R)$ and $g\in\leb^2(\R)$, Properties \ref{t2.1.4} and \ref{t2.2.21} together entail
\begin{flalign}\label{4.1.3}
\begin{split}
\ipr{g}{[\Po,\Pi_{\I}]f}=-\ipr{g}{\dl{\tau}f}=g(\tau)f(\tau)\,.
\end{split}
\end{flalign}
Whilst Lemmas \ref{t2.2.32} for $n=0$ and \ref{t2.2.22} bring forth
\begin{flalign}\label{4.1.4}
\begin{split}
[\Po,\rho]=-\frac{\gamma}{\ud}\rho\Gamma\rho-\partial_\tau\rho=-\frac{\gamma}{\ud}\rho\Gamma\rho-\Ro\dl{\tau}\rho\,,
\end{split}
\end{flalign}
where Notation \ref{t2.2.29} was utilized. Thus we obtain
\begin{flalign}\label{4.1.5}
\begin{split}
2\au{0}{1}&=\frac{\gamma}{\ud}\ipr{\psi}{\Pi_{\I}\rho\Gamma\rho\psi}+\ipr{\psi}{\Pi_{\I}\Ro\dl{\tau}\rho\psi}+\ipr{\psi}{\dl{\tau}\rho\psi} \\
&=\frac{\gamma}{\ud}\ipr{\psi}{\Pi_{\I}\rho\bigg(\T_{\psi}\Pi_{\I}+\frac{\udd}{\ud}\Ko\bigg)\rho\psi}+\ipr{\psi}{\Pi_{\I}\Ro\dl{\tau}\rho\psi}+\ipr{\psi}{\Pi_{\I}\dl{\tau}\rho\psi} \\
&=\frac{\gamma}{\ud}\bigg(\ipr{\psi}{\Pi_{\I}\rho\psi}\ipr{\psi}{\Pi_{\I}\rho\psi}+\frac{\udd}{\ud}\ipr{\psi}{\Pi_{\I}\Ro\rho\psi}\bigg)+\ipr{\psi}{\Pi_{\I}\rho\dl{\tau}\rho\psi} \\
&=\frac{\gamma}{\ud}\bigg(\au{0}{0}^2+\frac{\udd}{\ud}\au{1}{0}\bigg)+\ipr{\rho\psi}{\Pi_{\I}\dl{\tau}\rho\psi}=\frac{\gamma}{\ud}\bigg(\au{0}{0}^2+\frac{\udd}{\ud}\au{1}{0}\bigg)-\big(\rho\psi\big)(\tau)\big(\rho\psi\big)(\tau)\,,
\end{split}
\end{flalign}
where\eref{2.2.32}, Property \ref{t2.2.9} and Lemma \ref{t2.2.12} were employed.
\end{proof}

\begin{lem}\label{t4.1.2}
We have $\partial_\tau\au{0}{0}=-\qcv^2(\tau)$.
\end{lem}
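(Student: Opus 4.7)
The plan is to differentiate the defining expression $\au{0}{0}=\ipr{\psi}{\Pi_{\I}\rho\psi}$ directly with respect to $\tau$, exploiting the fact that on the interval $\I=[\tau,\infty)$ only the endpoint $\tau_1=\tau$ contributes, so that $\partial_\tau$ plays the role of $\partial_1$ in Properties \ref{t2.2.21} and \ref{t2.2.22}. Since $\psi$ does not depend on $\tau$, only $\Pi_{\I}$ and $\rho$ give nontrivial terms, yielding
\begin{flalign*}
\partial_\tau\au{0}{0}=\ipr{\psi}{\big(\partial_\tau\Pi_{\I}\big)\rho\psi}+\ipr{\psi}{\Pi_{\I}\big(\partial_\tau\rho\big)\psi}\,.
\end{flalign*}

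For the first term, I would apply Property \ref{t2.2.21} with $j=1$ to get $\ipr{\psi}{\big(\partial_\tau\Pi_{\I}\big)\rho\psi}=-\psi(\tau)\,\qcv(\tau)$, using that $(\rho\psi)(\tau)=\qcv(\tau)$ by the definition of $\qcv$. For the second term, I would invoke Lemma \ref{t2.2.22} to substitute $\partial_\tau\rho=\Ro\big(\partial_\tau\Pi_{\I}\big)\rho$, then move $\Pi_{\I}\Ro$ onto the left slot of the pairing via the adjoint relation $\Pi_{\I}\Ro=\Ro^*\Pi_{\I}$ of Lemma \ref{t2.2.12}. Combined with Property \ref{t2.2.9}, namely $\rho=\id+\Ro$ (so $\Ro\psi=\qcv-\psi$), this rewrites the second term as $\ipr{\Pi_{\I}(\qcv-\psi)}{\big(\partial_\tau\Pi_{\I}\big)\rho\psi}$. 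A second application of Property \ref{t2.2.21} with $j=1$, together with $\ind{\I}(\tau)=1$, produces $-\big(\qcv(\tau)-\psi(\tau)\big)\qcv(\tau)$.

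Summing the two contributions, the cross term $\psi(\tau)\qcv(\tau)$ cancels and only $-\qcv^2(\tau)$ remains, which is the claim. There is no real obstacle here: the computation follows the exact same template as the proof of Lemma \ref{t4.1.1}, just with one less commutator manipulation. The only point requiring mild care is the sign conventions in Property \ref{t2.2.21} (the factor $(-1)^{j}$ evaluated at $j=1$) and the fact that the value $\ind{\I}(\tau)=1$ at the closed endpoint is what makes the cancellation work cleanly.
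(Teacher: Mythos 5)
Your proposal is correct and follows essentially the same route as the paper: differentiate $\ipr{\psi}{\Pi_{\I}\rho\psi}$, use Lemma \ref{t2.2.22} for $\partial_\tau\rho$, the adjoint relation of Lemma \ref{t2.2.12}, and Property \ref{t2.2.21} with $\ind{\I}(\tau)=1$. The only cosmetic difference is that the paper first recombines the two contributions into $\ipr{\rho\psi}{\Pi_{\I}\big(\partial_\tau\Pi_{\I}\big)\rho\psi}$ via $\rho=\id+\Ro$ and evaluates once, whereas you evaluate each term and let the cross term $\psi(\tau)\qcv(\tau)$ cancel.
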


\begin{proof}
We directly compute
\begin{flalign}\label{4.1.6}
\begin{split}
\partial_\tau\ipr{\psi}{\Pi_{\I}\rho\psi}=\ipr{\psi}{\dl{\tau}\rho\psi}+\ipr{\psi}{\Pi_{\I}\Ro\dl{\tau}\rho\psi}=\ipr{\rho\psi}{\Pi_{\I}\dl{\tau}\rho\psi}\,,
\end{split}
\end{flalign}
so that\eref{4.1.3} leads to the result.
\end{proof}

Henceforth we are equipped to discuss $\frac{d^2\qcv}{d\tau^2}(\tau)\eqqcolon\qcv''(\tau)\coloneqq\af{0}{0}''(\tau)$.

\begin{lem}\label{t4.1.3}
Using Notation \ref{t3.1.16}, the following differential equation is obeyed by the AWF,
\begin{flalign}\label{4.1.7}
\begin{split}
\qcv''=\frac{\gamma^2}{\ud^2}\big(v_0+\tau\big)\qcv+\frac{2\gamma}{\ud}\bigg(\qcv^3-\frac{\gamma\udd}{\ud^2}\qcv\big(\au{0}{0}+\au{1}{0}\big)\bigg)-\frac{2\gamma^2\udd^2}{\ud^4}\bigg(\af{1}{0}+\af{2}{0}\bigg)-\frac{\gamma\udd}{\ud^2}\bigg(\qcv'+2\af{1}{0}'\bigg)\,.
\end{split}
\end{flalign}
\end{lem}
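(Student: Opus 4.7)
The plan is a direct, if lengthy, computation that chains together all the machinery already built in Sections 2 and 4.1. Because $\I = [\tau,\infty)$ has a single boundary point, the sums over $j \in \{1,\dots,2m\}$ appearing in Corollary \ref{t2.2.28} and Proposition \ref{t2.2.33} collapse to a single term, and the total derivative $f' \coloneqq \frac{d}{d\tau} f(\tau)$ is just $(\partial_\xi + \partial_\tau)|_{\xi=\tau} f(\xi)$. I would start by specializing Corollary \ref{t2.2.28} at $(n,\alpha)=(0,0)$ to obtain
\begin{equation*}
\qcv' \;=\; \pcv - \frac{\gamma}{\ud}\au{0}{0}\qcv - \frac{\gamma\udd}{\ud^2}\af{1}{0},
\end{equation*}
and then differentiate once more in $\tau$, using Lemma \ref{t4.1.2} to replace $\partial_\tau\au{0}{0}$ by $-\qcv^2$. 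This yields
\begin{equation*}
\qcv'' \;=\; \pcv' + \frac{\gamma}{\ud}\qcv^3 - \frac{\gamma}{\ud}\au{0}{0}\qcv' - \frac{\gamma\udd}{\ud^2}\af{1}{0}'.
\end{equation*}

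Next I would compute $\pcv' = \af{0}{1}'(\tau)$ from the same Corollary at $\alpha=1$; this introduces $\af{0}{2}(\tau)$, which I eliminate through the $n=0$ closure relation of Lemma \ref{t2.2.34} in favour of $\qcv$, $\pcv$ and the moments $\au{0}{0}, \au{0}{1}$. Substituting back into the expression for $\qcv''$ and re-expanding the lonely factor $\qcv'$ on the right-hand side via the Step-1 formula would cancel the $\pcv$ and $\au{0}{0}\pcv$ contributions. Lemma \ref{t4.1.1} is then invoked to rewrite $\au{0}{1}$ as $\frac{\gamma}{2\ud}\big(\au{0}{0}^2 + \frac{\udd}{\ud}\au{1}{0}\big) - \frac{1}{2}\qcv^2$, so that the $\frac{\gamma^2}{\ud^2}\au{0}{0}^2\qcv$ term also cancels and an extra $\frac{\gamma}{\ud}\qcv^3$ is released. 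At this stage the expression has the structure of the target, but it still contains $\af{1}{1}(\tau)$ rather than the pair $\big(\af{1}{0}', \af{2}{0}\big)$ appearing in the statement.

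The final step, and the only actual technical point, is to exchange $\af{1}{1}$ for the desired combination. Applying Corollary \ref{t2.2.28} at $(n,\alpha)=(1,0)$ furnishes
\begin{equation*}
\af{1}{1} - \af{1}{0}' \;=\; \frac{\gamma}{\ud}\au{0}{0}\af{1}{0} + \frac{\gamma}{\ud}\qcv\big(\au{0}{0}+\au{1}{0}\big) + \frac{\gamma\udd}{\ud^2}\big(\af{1}{0} + 2\af{2}{0}\big),
\end{equation*}
and inserting this into the remaining $-\frac{\gamma\udd}{\ud^2}\af{1}{1}$ contribution converts it precisely into the $-\frac{2\gamma\udd}{\ud^2}\af{1}{0}'$ piece of the target together with the coefficients on $\qcv(\au{0}{0}+\au{1}{0})$ and $\af{1}{0}+2\af{2}{0}$ with the correct signs and prefactors. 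Grouping the resulting terms then reproduces \eqref{4.1.7} verbatim.

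The main obstacle is purely the bookkeeping: one has to track the eight quantities $\qcv$, $\pcv$, $\af{1}{0}$, $\af{1}{1}$, $\af{2}{0}$, $\au{0}{0}$, $\au{0}{1}$, $\au{1}{0}$ through four different identities (Lemmas \ref{t2.2.34}, \ref{t4.1.1}, \ref{t4.1.2} and two specializations of Corollary \ref{t2.2.28}), and then verify that every intermediate term either cancels or regroups into the specific combinations appearing on the right of \eqref{4.1.7}. No new analytic input beyond the already established results is needed; the content of the lemma is that the three closure-type identities assemble into a single second-order relation for $\qcv$.
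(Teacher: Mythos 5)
Your proposal is correct and is essentially the computation in the paper: both routes combine the diagonal derivative relation (Corollary \ref{t2.2.28} at $\alpha=0,1$), the closure relation of Lemma \ref{t2.2.34}, the order-one relation\eref{4.1.9} coming from Proposition \ref{t2.2.33}, and Lemmas \ref{t4.1.1} and \ref{t4.1.2} to eliminate $\af{0}{1}$, $\af{0}{2}$, $\af{1}{1}$, $\au{0}{1}$ and $\partial_\tau\au{0}{0}$, differing only in whether one solves for $\qcv''$ directly or first equates the two expressions for $\af{0}{2}$. The only slip is a citation: your third-step identity is Proposition \ref{t2.2.33} at $(n,\alpha)=(1,0)$ evaluated on the diagonal (the paper's\eref{4.1.9}), not Corollary \ref{t2.2.28}, which is stated only for the zeroth order AWF.
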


\begin{proof}
To begin with, we recall that Corollary \ref{t2.2.28} states
\begin{flalign}\label{4.1.8}
\begin{split}
\af{0}{\alpha}'=\af{0}{\alpha+1}-\frac{\gamma}{\ud}\bigg(\au{0}{\alpha}\qcv+\frac{\udd}{\ud}\af{1}{\alpha}\bigg)\,,
\end{split}
\end{flalign}
meanwhile Proposition \ref{t2.2.33} further entails
\begin{flalign}\label{4.1.9}
\begin{split}
\af{1}{0}'=\af{1}{1}-\frac{\gamma}{\ud}\bigg(\au{0}{0}\af{1}{0}+\big(\au{1}{0}+\au{0}{0}\big)\qcv+\frac{\udd}{\ud}\big(\af{1}{0}+2\af{2}{0}\big)\bigg)\,.
\end{split}
\end{flalign}
Now the idea is to use the closure relation given by Lemma \ref{t2.2.34} in order to obtain a differential equation for $\qcv''$. To be more precise, we equate\eref{2.2.53}, Lemma \ref{t2.2.34}, and\eref{4.1.8} for $\alpha=1$, and then we inject\eref{4.1.9} as well as\eref{4.1.8} for $\alpha=0$ in order to obtain a relation only in terms of the $\af{p}{0}$, for $p=0,1,2$, resulting in\eref{4.1.7}. Namely
\begin{flalign}\label{4.1.10}
\begin{split}
\af{0}{2}&=\frac{\gamma^2}{\dot{u}_0^2}\big(v_0+\tau\big)\qcv-\frac{\gamma\ddot{u}_0}{\dot{u}_0^2}\af{0}{1}+\frac{\gamma}{\dot{u}_0}\Big(\au{0}{0}\af{0}{1}-\au{0}{1}\qcv\Big)=\af{0}{1}'+\frac{\gamma}{\ud}\bigg(\au{0}{1}\qcv+\frac{\udd}{\ud}\af{1}{1}\bigg) \\
&=\frac{\gamma^2}{\dot{u}_0^2}\big(v_0+\tau\big)\qcv-\frac{\gamma\ddot{u}_0}{\dot{u}_0^2}\Bigg(\qcv'+\frac{\gamma}{\ud}\bigg(\au{0}{0}\qcv+\frac{\udd}{\ud}\af{1}{0}\bigg)\Bigg)+\frac{\gamma}{\dot{u}_0}\Bigg[\au{0}{0}\Bigg(\qcv'+\frac{\gamma}{\ud}\bigg(\au{0}{0}\qcv+\frac{\udd}{\ud}\af{1}{0}\bigg)\Bigg)-\au{0}{1}\qcv\Bigg] \\
&=\qcv''+\frac{\gamma}{\ud}\Bigg(\big(\partial_\tau\au{0}{0}\big)\qcv+\au{0}{0}\qcv'+\frac{\udd}{\ud}\af{1}{0}'\Bigg) \\
&\,\,\,\,\,\,\,\,\,\,\,\,\,\,\,\,\,\,\,\,\,\,\,\,\,\,\,\,\,\,\,\,\,\,\,\,\,\,\,\,\,\,\,\,\,\,\,\,\,\,\,\,\,\,\,\,\,\,\,\,\,\,\,\,+\frac{\gamma}{\ud}\Bigg[\au{0}{1}\qcv+\frac{\udd}{\ud}\Bigg(\af{1}{0}'+\frac{\gamma}{\ud}\bigg(\au{0}{0}\af{1}{0}+\big(\au{1}{0}+\au{0}{0}\big)\qcv+\frac{\udd}{\ud}\big(\af{1}{0}+2\af{2}{0}\big)\bigg)\Bigg)\Bigg]\,,
\end{split}
\end{flalign}
which is indeed leading to
\begin{flalign}\label{4.1.11}
\begin{split}
\qcv''=\frac{\gamma^2}{\dot{u}_0^2}\big(v_0+\tau+\au{0}{0}\big)\qcv+\frac{\gamma}{\ud}\Bigg(-2\au{0}{1}\qcv-\big(\partial_\tau\au{0}{0}\big)\qcv-&\frac{\gamma\udd}{\ud^2}\qcv\big(2\au{0}{0}+\au{1}{0}\big)\Bigg) \\
&-\frac{2\gamma^2\udd^2}{\ud^4}\bigg(\af{1}{0}+\af{2}{0}\bigg)-\frac{\gamma\udd}{\ud^2}\bigg(\qcv'+2\af{1}{0}'\bigg)\,.
\end{split}
\end{flalign}
Finally Lemmas \ref{t4.1.1} and \ref{t4.1.2} yield the result.
\end{proof}

Notice that we began with Proposition \ref{t2.2.33}, (Corollary \ref{t2.2.28} is essentially a special case of this proposition), which provides us with a differential equation relating the AWF. Then, in Lemma \ref{t4.1.3}, we used our first closure relation, Lemma \ref{t2.2.34}, in order to obtain a second order differential equation involving solely the $\af{n}{0}$, $n\in\Z_{\geq0}$, and not the $\af{n}{1}$ or $\af{n}{2}$ anymore. Next we finish closing this equation for $\qcv$ using our second closure relation Lemma \ref{t3.1.9}.

One may observe that this procedure is generic, for any $n\in\Z_{\geq0}$ and $a\in\{0,1,2\}$, one may convert $\af{n}{a}$ into $\af{p}{0}$, with $p=n,n+1,n+2$, using Propositions \ref{t2.2.33} and \ref{t2.2.36} (of which Lemma \ref{t2.2.34} is a special case). And one may in turn express the $\af{p}{0}$ in terms of $\af{0}{0}$ and $\psi$ only through Lemma \ref{t3.1.9}.

Actually, to completely close\eref{4.1.7}, it will remain to express $\au{0}{0}+\au{1}{0}$ in terms of $\qcv$, and this will be what amounts to the appearance of the integral in\eref{4.0.2}.

\begin{lem}\label{t4.1.4}
The second order integro-differential equation\eref{4.0.2} is indeed satisfied by $\qcv\coloneqq\af{0}{0}$.
\end{lem}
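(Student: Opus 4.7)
The plan is to rewrite the three objects still appearing in Lemma \ref{t4.1.3} — namely $\au{0}{0}+\au{1}{0}$, $\af{1}{0}(\tau)+\af{2}{0}(\tau)$, and $\af{1}{0}'(\tau)$ — as closed-form expressions involving only $\qcv$, $\psi$, their $\tau$-derivatives, and a single integral of $\qcv^2$ on $[\tau,\infty)$, then read off (4.0.2) by direct substitution. With $\I=[\tau,\infty)$ this corresponds to $m=1$, $\tau_1=\tau$, and since $\Pi_\I$ and $\Pi_{\operatorname{J}}$ (with $\operatorname{J}=(\tau,\infty)$) coincide almost everywhere, we have $\au{n}{a}=\ipro{n}{a}$ for all $n,a$.

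First I evaluate Lemma \ref{t3.1.8} for $n=0$ and $n=1$ with $a=0$, which gives $\au{0}{0}=\int_\tau^\infty\psi\,\qcv\,d\sigma$ and $\au{1}{0}=\int_\tau^\infty(\qcv^2-\psi\,\qcv)\,d\sigma$. The $\psi\,\qcv$ cross-terms cancel and
\begin{equation*}
\au{0}{0}+\au{1}{0}=\int_\tau^\infty\qcv^2_\sigma\,d\sigma\,,
\end{equation*}
reproducing the integral term of (4.0.2). Next I invoke Lemma \ref{t3.1.9} at $\tj=\tau$ with $p=0$, which states $\af{n}{a}(\tau)=\gao^n\af{0}{a}(\tau)$, yielding
\begin{equation*}
\af{1}{0}(\tau)+\af{2}{0}(\tau)=\qcv\,\gao\Big[\,1+\gao\,\Big]=\frac{\qcv^3}{\psi^2}-\frac{\qcv^2}{\psi}\,,
\end{equation*}
which matches the $\tfrac{2\gamma^2\udd^2}{\ud^4}$-bracket of (4.0.2).

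Finally, differentiating the identity $\af{1}{0}(\tau)=\qcv^2/\psi-\qcv$ with respect to $\tau$ gives
\begin{equation*}
\af{1}{0}'(\tau)=\frac{2\,\qcv\,\qcv'}{\psi}-\frac{\qcv^2\,\psi'}{\psi^2}-\qcv'\,,
\end{equation*}
and substituting into $-\tfrac{\gamma\udd}{\ud^2}\bigl(\qcv'+2\af{1}{0}'\bigr)$ from Lemma \ref{t4.1.3} collapses to exactly $\tfrac{\gamma\udd}{\ud^2}\bigl(\qcv'+2\tfrac{\qcv^2}{\psi^2}\psi'-4\tfrac{\qcv}{\psi}\qcv'\bigr)$, i.e.\ the closing bracket of (4.0.2). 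Assembling the three substitutions in Lemma \ref{t4.1.3} thus reproduces (4.0.2) term by term. The main technical point to check is the applicability of Lemma \ref{t3.1.9} in the semi-infinite setting: its proof differentiates the identity between Definition \ref{t3.1.7} and Lemma \ref{t3.1.8} at an endpoint of $\operatorname{J}$, and here the $+\infty$ endpoint contributes nothing thanks to Conditions \ref{t1.1.2}--\ref{t1.1.3}, so only the boundary term at $\tau$ survives and (3.1.13) remains valid at $\tj=\tau$; beyond this, the argument is straightforward algebraic bookkeeping.
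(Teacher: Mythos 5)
Your proof is correct and follows essentially the same route as the paper: it substitutes closed forms for $\au{0}{0}+\au{1}{0}$, $\af{1}{0}+\af{2}{0}$ and $\af{1}{0}'$ into Lemma \ref{t4.1.3}, using Lemma \ref{t3.1.9} at $p=0$ for the last two, exactly as the paper does. The only cosmetic difference is that the paper gets $\au{0}{0}+\au{1}{0}=\int_\tau^\infty\qcv^2_\sigma\,d\sigma$ in one line via $\ipr{\psi}{\Pi_{\I}\big(\id+\Ro\big)\rho\psi}=\ipr{\rho\psi}{\Pi_{\I}\rho\psi}$ (Property \ref{t2.2.9} and Lemma \ref{t2.2.12}) rather than by adding the $n=0,1$ instances of Lemma \ref{t3.1.8} as you do, and your closing remark on the validity of Lemma \ref{t3.1.9} for $\I=[\tau,\infty)$ is extra care the paper leaves implicit.
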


\begin{proof}
Firstly we note that Property \ref{t2.2.9} and Lemma \ref{t2.2.12} bring forth
\begin{flalign}\label{4.1.12}
\begin{split}
\au{0}{0}+\au{1}{0}=\ipr{\psi}{\Pi_{\I}\big(\id+\Ro\big)\rho\psi}=\ipr{\rho\psi}{\Pi_{\I}\rho\psi}=\int_\R\qcv(\sigma)\,\ind{\I}(\sigma)\,\qcv(\sigma)\,d\sigma=\int_\tau^\infty\qcv^2(\sigma)\,d\sigma\,.
\end{split}
\end{flalign}
Whilst Lemma \ref{t3.1.9}, for $p=0$, allows us to evaluate
\begin{flalign}\label{4.1.13}
\begin{split}
\af{1}{0}+\af{2}{0}=\gao\qcv+\ga{2}\qcv=\frac{\qcv^3}{\psi^2}-\frac{\qcv^2}{\psi}\,,
\end{split}
\end{flalign}
and
\begin{flalign}\label{4.1.14}
\begin{split}
\af{1}{0}'=2\frac{\qcv}{\psi}\qcv'-\frac{\qcv^2}{\psi^2}\psi'-\qcv'\,.
\end{split}
\end{flalign}
Whence, injecting\eref{4.1.12} in the second term of\eref{4.1.7},\eref{4.1.13} in the the third, and\eref{4.1.14} in the fourth one yields
\begin{flalign}\label{4.1.15}
\begin{split}
\qcv''=\frac{\gamma^2}{\ud^2}\big(v_0+\tau\big)\qcv+\frac{2\gamma}{\ud}\bigg(\qcv^3-\frac{\gamma\udd}{\ud^2}\qcv\int_\tau^\infty\qcv^2(\sigma)\,d\sigma\bigg)-\frac{2\gamma^2\udd^2}{\ud^4}\bigg(\frac{\qcv^3}{\psi^2}-\frac{\qcv^2}{\psi}\bigg)+\frac{\gamma\udd}{\ud^2}\bigg(\qcv'+2\frac{\qcv^2}{\psi^2}\psi'-4\frac{\qcv}{\psi}\qcv'\bigg)\,,
\end{split}
\end{flalign}
so that one recognizes\eref{4.0.2}.
\end{proof}

\subsection{Resulting Fredholm Determinant}\label{s42}

Now we turn to the establishment of\eref{4.0.1} and\eref{4.0.3}, and justify the behaviour of $\qcv$ as $\tau\rightarrow\infty$ stated in Theorem \ref{t4.0.1}. We achieve this with five Lemmas, the two first ones are dedicated to the evaluation of $\Ha_1(\tau)$ and $\frac{d\Ha_1}{d\tau}(\tau)$ in terms of $\qcv$. The third Lemma relates $\F\big([\tau,\infty)\big)$ to $\Ha_1(\tau)$ which yields\eref{4.0.1}, whilst the fourth one establishes\eref{4.0.3} through the evaluation of $\F\big([\tau,\infty)\big)$ in terms of $\frac{d\Ha_1}{d\tau}(\tau)$. Finally the last Lemma comments the asymptotic behaviour of $\qcv$.

\begin{lem}\label{t4.2.1}
The first order Hamiltonian can be expressed solely in terms of $\qcv$, $\psi$, $u$ and $\gamma$ as follows,
\begin{flalign}\label{4.2.1}
\begin{split}
\Ha_\tau\coloneqq\Ha_1(\tau)=\qcv'^2-\qcv\Bigg(\qcv''-\frac{\gamma}{\ud}\qcv^3+\frac{\gamma\udd}{\ud^2}\qcv\bigg(\frac{\qcv'}{\psi}-\frac{\psi'\qcv}{\psi^2}\bigg)\Bigg)\,.
\end{split}
\end{flalign}
\end{lem}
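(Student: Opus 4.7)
The plan is to use the Hamiltonian formula from Lemma \ref{t3.1.4} applied to $n=1$, which gives $\Ha_1(\tau)=\qcv'\pcv-\pcv'\qcv$ with $\pcv=\af{0}{1}$ evaluated at $\tau$. The task reduces to expressing $\pcv$ and $\pcv'$ in terms of $\qcv$, $\psi$, $u$, $\gamma$ only, and then verifying algebraically that the resulting combination collapses to the claimed expression.

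First I would invoke Corollary \ref{t2.2.28} with $\alpha=0$ (specialized to the single endpoint $\tau$, so only the $\partial_i$ sum term is empty), yielding
\begin{equation*}
\pcv=\qcv'+\frac{\gamma}{\ud}\au{0}{0}\qcv+\frac{\gamma\udd}{\ud^2}\af{1}{0}\,.
\end{equation*}
Then I would apply the closure relation Lemma \ref{t3.1.9} at $n=1,p=0$ to convert $\af{1}{0}=\bigl(\qcv/\psi-1\bigr)\qcv=\qcv^2/\psi-\qcv$, giving $\pcv$ purely in terms of $\qcv,\qcv',\psi,\au{0}{0}$.

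Next I would differentiate this expression in $\tau$ to obtain $\pcv'$, using Lemma \ref{t4.1.2} to substitute $\partial_\tau\au{0}{0}=-\qcv^2(\tau)$ and the ordinary chain rule on $\qcv^2/\psi$. This produces
\begin{equation*}
\pcv'=\qcv''-\frac{\gamma}{\ud}\qcv^3+\frac{\gamma}{\ud}\au{0}{0}\qcv'+\frac{\gamma\udd}{\ud^2}\Bigl(\frac{2\qcv\qcv'}{\psi}-\frac{\qcv^2\psi'}{\psi^2}-\qcv'\Bigr)\,.
\end{equation*}
Forming $\qcv'\pcv-\pcv'\qcv$ I expect the $\au{0}{0}$-terms to cancel identically (both contribute $\pm\frac{\gamma}{\ud}\au{0}{0}\qcv\qcv'$), and after collecting the $\frac{\gamma\udd}{\ud^2}$-terms the pieces $(\qcv^2/\psi)\qcv'$ and $\qcv\qcv'$ combine with their counterparts in $\pcv'\qcv$ to leave precisely $-\frac{\gamma\udd}{\ud^2}\qcv^2\bigl(\qcv'/\psi-\qcv\psi'/\psi^2\bigr)$. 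The $\frac{\gamma}{\ud}\qcv^4$ term survives intact, producing the announced identity.

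The main obstacle is purely bookkeeping: the $\au{0}{0}$-dependence must cancel for the final formula to be closed in $(\qcv,\psi)$ alone, so one has to track the signs carefully and invoke Lemma \ref{t4.1.2} at exactly the right moment; if these cancellations failed, one would need an independent expression for $\au{0}{0}$ (not available at this stage, in contrast with the integrated combination $\au{0}{0}+\au{1}{0}$ used in Lemma \ref{t4.1.4}). Once this cancellation is observed, the remaining simplification is straightforward factoring, and no new analytic input beyond Corollary \ref{t2.2.28}, Lemmas \ref{t3.1.4}, \ref{t3.1.9}, \ref{t4.1.2} is needed.
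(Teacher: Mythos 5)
Your proposal is correct and follows essentially the same route as the paper: express $\af{0}{1}$ via the derivative formula (Corollary \ref{t2.2.28}/Proposition \ref{t2.2.33} with the boundary sum empty for $\I=[\tau,\infty)$) together with the closure relation Lemma \ref{t3.1.9}, differentiate using Lemma \ref{t4.1.2}, and insert into $\Ha_\tau=\qcv'\af{0}{1}-\af{0}{1}'\qcv$ from Lemma \ref{t3.1.4}. The cancellation of the $\au{0}{0}$ terms that you flag is exactly what happens in the paper's computation, so no additional input is needed.
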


\begin{proof}
We follow the general procedure described in the second paragraph below the proof of Lemma \ref{t4.1.3}, namely, using Proposition \ref{t2.2.33} as well as Lemma \ref{t3.1.9}, we are led to
\begin{flalign}\label{4.2.2}
\begin{split}
\af{0}{1}=\qcv'+\frac{\gamma}{\ud}\au{0}{0}\qcv+\frac{\gamma\udd}{\ud^2}\af{1}{0}=\qcv'+\frac{\gamma}{\ud}\au{0}{0}\qcv+\frac{\gamma\udd}{\ud^2}\gao\qcv\,,
\end{split}
\end{flalign}
which is in turn entailing
\begin{flalign}\label{4.2.3}
\begin{split}
\af{0}{1}'=\qcv''+\frac{\gamma}{\ud}\Big(\au{0}{0}\qcv'-\qcv^3\Big)+\frac{\gamma\udd}{\ud^2}\bigg(2\frac{\qcv}{\psi}\qcv'-\frac{\qcv^2}{\psi^2}\psi'-\qcv'\bigg)\,,
\end{split}
\end{flalign}
where we utilized Lemma \ref{t4.1.2}. Hence, injecting these relations in Lemma \ref{t3.1.4}, $\Ha_\tau=\qcv'\af{0}{1}-\af{0}{1}'\qcv$, leads to\eref{4.2.1}.
\end{proof}

\begin{lem}\label{t4.2.2}
The first order Hamiltonian also satisfies
\begin{flalign}\label{4.2.4}
\begin{split}
\frac{d\Ha_\tau}{d\tau}=-\frac{\gamma^2}{\ud^2}\left(\rule{0cm}{0.6444cm}\qcv^2+\frac{\udd}{\gamma}\bigg(\frac{2\qcv}{\psi}-1\bigg)\Bigg[\qcv'^2-\qcv\Bigg(\qcv''-\frac{\gamma}{\ud}\qcv^3+\frac{\gamma\udd}{\ud^2}\qcv\bigg(\frac{\qcv'}{\psi}-\frac{\psi'\qcv}{\psi^2}\bigg)\Bigg)\Bigg]\right)\,.
\end{split}
\end{flalign}
\end{lem}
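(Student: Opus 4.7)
The natural starting point is the compact form of $\Ha_\tau$ provided by Lemma~\ref{t3.1.4} at $n=1$, namely $\Ha_\tau = \qcv'\af{0}{1} - \af{0}{1}'\qcv$. A direct differentiation in $\tau$ causes the cross terms to cancel and gives
\begin{flalign*}
\frac{d\Ha_\tau}{d\tau} = \qcv''\af{0}{1} - \af{0}{1}''\qcv \, ,
\end{flalign*}
so the proof reduces to computing $\af{0}{1}''$ in a form compatible with the expression for $\Ha_\tau$ obtained in Lemma~\ref{t4.2.1} and with the integro-differential equation~\eqref{4.0.2} established in Lemma~\ref{t4.1.4}.

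To get $\af{0}{1}'$, I would apply Corollary~\ref{t2.2.28} with $\alpha=1$, substitute the closure relation Lemma~\ref{t2.2.34} to eliminate $\af{0}{2}$, and use Lemma~\ref{t3.1.9} (with $n=1$, $p=0$) to replace $\af{1}{1}$ by $(\qcv/\psi-1)\af{0}{1}$. After cancellation this yields
\begin{flalign*}
\af{0}{1}' = \frac{\gamma^2}{\ud^2}(v_0+\tau)\qcv - \frac{\gamma\udd}{\ud^2}\frac{\qcv}{\psi}\af{0}{1} + \frac{\gamma}{\ud}\au{0}{0}\af{0}{1} - \frac{2\gamma}{\ud}\au{0}{1}\qcv \, .
\end{flalign*}
Differentiating once more requires $\partial_\tau\au{0}{0}$ and $\partial_\tau\au{0}{1}$. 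The former is provided by Lemma~\ref{t4.1.2}, $\partial_\tau\au{0}{0}=-\qcv^2$; combining this with the identity $\au{0}{0}+\au{1}{0}=\int_\tau^\infty\qcv^2_\sigma\,d\sigma$ from~\eqref{4.1.12} gives $\partial_\tau\au{1}{0}=0$, and then differentiating Lemma~\ref{t4.1.1} gives $\partial_\tau\au{0}{1} = -\frac{\gamma}{\ud}\au{0}{0}\qcv^2 - \qcv\qcv'$. With these ingredients, $\af{0}{1}''$ becomes fully explicit in $\qcv,\qcv',\psi,\psi',\au{0}{0},\au{0}{1}$.

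The final step is to assemble $\qcv''\af{0}{1} - \af{0}{1}''\qcv$, substitute \eqref{4.0.2} for $\qcv''$ on the first piece, eliminate the remaining $\au{0}{1}$'s via Lemma~\ref{t4.1.1}, and rewrite $\au{0}{0}+\au{1}{0}$ as the integral $\int_\tau^\infty\qcv^2_\sigma\,d\sigma$ appearing in \eqref{4.0.2} so that these non-local contributions cancel. The explicit $(v_0+\tau)\qcv$ terms recombine to produce the isolated $-\frac{\gamma^2}{\ud^2}\qcv^2$ contribution (it is the $\partial_\tau[(v_0+\tau)\qcv]$ that carries the naked $\qcv$), while the $\udd$-proportional terms regroup, using the expression for $\Ha_\tau$ from Lemma~\ref{t4.2.1}, into $-\frac{\gamma\udd}{\ud^2}\bigl(\frac{2\qcv}{\psi}-1\bigr)\Ha_\tau$. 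The factor $\frac{2\qcv}{\psi}-1$ arises from combining the $\frac{\qcv}{\psi}$ produced by Lemma~\ref{t3.1.9} on $\af{1}{1}$ with the $\frac{\qcv}{\psi}-1$ already present in $\af{0}{1}$ through~\eqref{4.2.2}.

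The main obstacle is the sheer algebraic bookkeeping: many cross terms involving $\au{0}{0},\au{0}{1},\psi,\psi'$ and ratios $\qcv/\psi$ must cancel precisely, and the $\Ha_\tau$ factor on the right-hand side has to be recognised only after a non-obvious regrouping. A useful consistency check at every stage is the Airy specialisation $\udd=0$ (with $\ud=\gamma=1$): the claimed identity collapses to $\frac{d\Ha_\tau}{d\tau} = -\qcv^2$, which is the classical relation underlying the Tracy-Widom distribution.
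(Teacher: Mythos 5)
Your reduction $\frac{d\Ha_\tau}{d\tau}=\qcv''\af{0}{1}-\af{0}{1}''\qcv$ and your expression for $\af{0}{1}'$ (via Corollary \ref{t2.2.28}, Lemma \ref{t2.2.34} and Lemma \ref{t3.1.9}) are fine, but the proof breaks at the claim $\partial_\tau\au{1}{0}=0$. You obtained it by differentiating $\au{0}{0}+\au{1}{0}=\int_\tau^\infty\qcv^2_\sigma\,d\sigma$ as if the only $\tau$-dependence were the lower limit of integration; however $\qcv_\sigma=\big(\id-\Ko\big)^{-1}\psi(\sigma)$ depends on $\tau$ through the resolvent, and Lemma \ref{t2.2.23} gives $\partial_\tau\qcv(\sigma)=-\Ri_1(\sigma,\tau)\qcv(\tau)$, so the integral contributes an extra $-2\qcv(\tau)\,\af{1}{0}(\tau)$. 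The correct value is
\begin{flalign*}
\partial_\tau\au{1}{0}=-2\,\qcv\,\af{1}{0}=-2\bigg(\frac{\qcv}{\psi}-1\bigg)\qcv^2
\end{flalign*}
(this is Lemma \ref{tc1} at $n=1$, and it also follows directly from Lemma \ref{t2.2.30} with Property \ref{t2.2.21} and Lemma \ref{t3.1.9}). Consequently your $\partial_\tau\au{0}{1}$ is missing the term $-\frac{\gamma\udd}{\ud^2}\big(\frac{\qcv}{\psi}-1\big)\qcv^2$, and the assembly you describe cannot close onto\eref{4.2.4}: the discrepancy is proportional to $\udd$, which is precisely why your Airy consistency check ($\udd=0$, $\ud=\gamma=1$) is blind to it.

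With the corrected derivative the strategy can in principle be salvaged, but note that the paper sidesteps this bookkeeping entirely by working at the kernel level: it differentiates the resolvent kernel to get $\frac{d}{d\tau}\Ri_1(\tau,\tau)=-\frac{\gamma}{\ud}\big(\qcv^2+\frac{\udd}{\ud}(\Ri_1(\tau,\tau)+\Ri_2(\tau,\tau))\big)$, i.e. $\frac{d\Ha_\tau}{d\tau}=-\frac{\gamma^2}{\ud^2}\big(\qcv^2+\frac{\udd}{\gamma}(\Ha_\tau+\Ha_2(\tau))\big)$, and then closes the relation with Lemma \ref{t3.1.15}, $\Ha_2=2\big(\frac{\qcv}{\psi}-1\big)\Ha_1$, so that $\Ha_\tau+\Ha_2=\big(\frac{2\qcv}{\psi}-1\big)\Ha_\tau$, before inserting Lemma \ref{t4.2.1}. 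If you want to keep your componentwise route, you must use Lemma \ref{tc1} (or recompute $\partial_\tau\au{1}{0}$ as above) rather than the naive Leibniz rule on\eref{4.1.12}, and then verify the full cancellation explicitly rather than asserting it.
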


\begin{proof}
Firstly we observe that, $\forall f\in\sob^1(\R)$,
\begin{flalign}\label{4.2.5}
\begin{split}
\frac{d}{d\tau}\Ro f(\tau)&=\big(\partial_\tau\Ro\big)f(\tau)+\Po\Ro f(\tau)=\Ro\dl{\tau}\rho f(\tau)+[\Po,\Ro]f(\tau)+\Ro\Po f(\tau) \\
&=\Ro\dl{\tau}\rho f(\tau)-\frac{\gamma}{\ud}\Bigg(\rho\psi(\tau)\ipr{\psi}{\Pi_{\I}\rho f}+\frac{\udd}{\ud}\Ro\rho f(\tau)\Bigg)-\Ro\dl{\tau}\rho f(\tau)+\Ro\Po f(\tau) \\
&=-\frac{\gamma}{\ud}\Bigg(\rho\psi(\tau)\ipr{\rho\psi}{\Pi_{\I}f}+\frac{\udd}{\ud}\big(\Ro+\Ro^2\big)f(\tau)\Bigg)+\Ro\Po f(\tau) \\
&=\int_\R\Bigg[\Ri_1(\tau,\xi)\,\partial_\xi f(\xi)-\frac{\gamma}{\ud}\Bigg(\qcv(\tau)\,\qcv(\xi)\,\ind{\I}(\xi)\,f(\xi)+\frac{\udd}{\ud}\Big(\Ri_1(\tau,\xi)+\Ri_2(\tau,\xi)\Big)f(\xi)\Bigg)\Bigg]\,d\xi \\
&=\int_\R\Bigg[-\partial_\xi\Ri_1(\tau,\xi)-\frac{\gamma}{\ud}\Bigg(\qcv(\tau)\,\qcv(\xi)\,\ind{\I}(\xi)+\frac{\udd}{\ud}\Big(\Ri_1(\tau,\xi)+\Ri_2(\tau,\xi)\Big)\Bigg)\Bigg]\,f(\xi)\,d\xi\,.
\end{split}
\end{flalign}
where we employed Lemma \ref{t2.2.22}, Lemma \ref{t2.2.11} together with Lemma \ref{t2.2.24} as well as Property \ref{t2.2.9} and Lemma \ref{t2.2.12}. Thus, this provides us with
\begin{flalign}\label{4.2.6}
\begin{split}
\frac{d}{d\tau}\Ri_1(\tau,\xi)=-\frac{\gamma}{\ud}\Bigg(\qcv(\tau)\,\qcv(\xi)\,\ind{\I}(\xi)+\frac{\udd}{\ud}\Big(\Ri_1(\tau,\xi)+\Ri_2(\tau,\xi)\Big)\Bigg)-\partial_\xi\Ri_1(\tau,\xi)\,,
\end{split}
\end{flalign}
In turn leading to
\begin{flalign}\label{4.2.7}
\begin{split}
\frac{d}{d\tau}\Ri_1(\tau,\tau)=-\frac{\gamma}{\ud}\Bigg(\qcv(\tau)\,\qcv(\tau)\,\ind{\I}(\tau)+\frac{\udd}{\ud}\Big(\Ri_1(\tau,\tau)+\Ri_2(\tau,\tau)\Big)\Bigg)\,.
\end{split}
\end{flalign}
Thereby, according to Definition \ref{t3.1.1}, the first order Hamiltonian is related to the second order one by
\begin{flalign}\label{4.2.8}
\begin{split}
\frac{d\Ha_\tau}{d\tau}=-\frac{\gamma^2}{\ud^2}\Bigg(\qcv^2(\tau)+\frac{\udd}{\gamma}\Big(\Ha_\tau+\Ha_2(\tau)\Big)\Bigg)\,.
\end{split}
\end{flalign}
Then Lemma \ref{t3.1.15} enables us to close this relation for $\Ha_\tau$, namely
\begin{flalign}\label{4.2.9}
\begin{split}
\frac{d\Ha_\tau}{d\tau}=-\frac{\gamma^2}{\ud^2}\Bigg[\qcv^2(\tau)+\frac{\udd}{\gamma}\Bigg(\Ha_\tau+2\gao\Ha_\tau\Bigg)\Bigg]=-\frac{\gamma^2}{\ud^2}\Bigg[\qcv^2(\tau)+\frac{\udd}{\gamma}\bigg(\frac{2\qcv}{\psi}-1\bigg)\Ha_\tau\Bigg]\,,
\end{split}
\end{flalign}
so that writing $\Ha_\tau$ explicitly through Lemma \ref{t4.2.1} yields the result.
\end{proof}

\begin{lem}\label{t4.2.3}
The Fredholm determinant $\F\big([\tau,\infty)\big)$ is given by\eref{4.0.1}.
\end{lem}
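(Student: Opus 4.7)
The plan is to obtain \eref{4.0.1} by integrating the logarithmic derivative provided by Proposition \ref{t3.1.2} and then substituting the closed-form expression of the first order Hamiltonian given by Lemma \ref{t4.2.1}. Specializing Proposition \ref{t3.1.2} to the interval $\I=[\tau,\infty)$, for which there is only one finite endpoint $\tau_1=\tau$, yields

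\begin{flalign*}
\frac{d}{d\tau}\lo\F\big([\tau,\infty)\big)=\udf\,\Ha_\tau\,,
\end{flalign*}

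where $\Ha_\tau\coloneqq\Ha_1(\tau)$. I would then integrate this identity between $\tau$ and a large parameter $T$, and let $T\to\infty$, which is the main analytical step of the proof.

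The next step is to justify the boundary condition $\lim_{T\to\infty}\F\big([T,\infty)\big)=1$, thereby ensuring that $\lo\F$ tends to $0$. This follows from the fact that $\Pi_{[T,\infty)}\to0$ strongly as $T\to\infty$, so that $\Ko=\Ko_0\Pi_{[T,\infty)}\to 0$ in trace norm under Condition \ref{t1.1.2} (which guarantees trace-class boundedness of $\Ko$), and hence $\det(\id-\Ko)_{\leb^2(\R)}\to1$ by continuity of the Fredholm determinant on trace-class operators. With this at hand, the fundamental theorem of calculus provides

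\begin{flalign*}
\lo\F\big([\tau,\infty)\big)=-\udf\int_\tau^\infty\Ha_\sigma\,d\sigma\,,
\end{flalign*}

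provided the integral converges, which I would address by invoking the asymptotic behaviour $\qcv_\tau\sim\psi_\tau$ as $\tau\to\infty$ (Condition \ref{t1.1.2} ensures $\psi$ is sufficiently regular and decaying). This asymptotic behaviour is to be established in the fifth Lemma of this subsection, upon which the present Lemma may thereby rely.

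To conclude the proof I would substitute the expression for $\Ha_\tau$ obtained in Lemma \ref{t4.2.1},

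\begin{flalign*}
\Ha_\tau=\qcv'^2_\tau-\qcv_\tau\Bigg(\qcv''_\tau-\frac{\gamma}{\ud}\qcv^3_\tau+\frac{\gamma\udd}{\ud^2}\qcv_\tau\bigg(\frac{\qcv'_\tau}{\psi_\tau}-\frac{\psi'_\tau\qcv_\tau}{\psi^2_\tau}\bigg)\Bigg)\,,
\end{flalign*}

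into the exponent. Multiplying through by $-\udf$ and collecting the terms then reproduces precisely the integrand appearing in\eref{4.0.1}, completing the argument. The only genuine obstacle is ensuring the vanishing of $\lo\F$ at infinity together with the absolute convergence of the integral, which is why the full statement of Theorem \ref{t4.0.1} couples\eref{4.0.1} with the asymptotic prescription $\qcv_\tau\xrightarrow{\tau\to\infty}\psi_\tau$; all the remaining manipulations are direct algebraic substitutions from previously established results.
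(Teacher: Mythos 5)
Your proposal is correct and follows essentially the same route as the paper: apply Proposition \ref{t3.1.2}, fix the constant of integration with the boundary condition $\F\big([\tau,\infty)\big)\xrightarrow{\tau\rightarrow\infty}1$ (the paper justifies it simply by $\I\rightarrow\emptyset$, hence $\id-\Ko\rightarrow\id$), and then substitute the closed form of $\Ha_\tau$ from Lemma \ref{t4.2.1}. Your extra remarks on convergence and on the asymptotics $\qcv_\tau\rightarrow\psi_\tau$ are compatible elaborations rather than a different argument.
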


\begin{proof}
We start from Proposition \ref{t3.1.2} which brings forth
\begin{flalign}\label{4.2.10}
\begin{split}
\F\Big([\tau,\infty)\Big)=\ex\left(h(t)-\udf\int_\tau^t\Ha_\sigma \,d\sigma\right)\,,
\end{split}
\end{flalign}
for some parameter $t$ and function $h(t)$. Noticing that
\begin{flalign}\label{4.2.11}
\begin{split}
\lim_{\tau\rightarrow\infty}\F\Big([\tau,\infty)\Big)=1\,,
\end{split}
\end{flalign}
we choose $t\rightarrow\infty$ and obtain
\begin{flalign}\label{4.2.12}
\begin{split}
\F\Big([\tau,\infty)\Big)=\ex\left(-\udf\int_\tau^\infty\Ha_\sigma\,d\sigma\right)\,.
\end{split}
\end{flalign}
As a short comment,\eref{4.2.11} is ensured by $\I\xrightarrow{\tau\rightarrow\infty}\emptyset$, which gives $\id-\Ko\xrightarrow{\tau\rightarrow\infty}\id$. Our final result is then directly brought by Lemma \ref{t4.2.1}.
\end{proof}

\begin{lem}\label{t4.2.4}
The Fredholm determinant $\F\big([\tau,\infty)\big)$ is expressed as\eref{4.0.3}.
\end{lem}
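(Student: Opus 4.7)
The key observation is that the integrand appearing in \eref{4.0.3} is, up to an explicit prefactor, exactly the expression provided by Lemma \ref{t4.2.2} for $\frac{d\Ha_\sigma}{d\sigma}$. Specifically, comparing the brackets in \eref{4.0.3} and \eref{4.2.4}, one directly reads off
\begin{flalign*}
\begin{split}
\qcv^2_\sigma+\frac{\udd}{\gamma}\bigg(\frac{2\,\qcv_\sigma}{\psi_\sigma}-1\bigg)\Bigg[\qcv'^2_\sigma-\qcv_\sigma\Bigg(\qcv''_\sigma-\frac{\gamma}{\ud}\qcv^3_\sigma+\frac{\gamma\udd}{\ud^2}\qcv_\sigma\bigg(\frac{\qcv'_\sigma}{\psi_\sigma}-\frac{\psi'_\sigma\qcv_\sigma}{\psi^2_\sigma}\bigg)\Bigg)\Bigg]=-\frac{\ud^2}{\gamma^2}\frac{d\Ha_\sigma}{d\sigma}\,,
\end{split}
\end{flalign*}
so that the exponent in \eref{4.0.3} simplifies to $\frac{\ud}{\gamma}\int_\tau^\infty(\sigma-\tau)\frac{d\Ha_\sigma}{d\sigma}\,d\sigma$.

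Next I would perform an integration by parts on this expression, with $u=\sigma-\tau$ and $dv=\frac{d\Ha_\sigma}{d\sigma}\,d\sigma$, yielding
\begin{flalign*}
\begin{split}
\int_\tau^\infty(\sigma-\tau)\frac{d\Ha_\sigma}{d\sigma}\,d\sigma=\Big[(\sigma-\tau)\Ha_\sigma\Big]_\tau^\infty-\int_\tau^\infty\Ha_\sigma\,d\sigma\,.
\end{split}
\end{flalign*}
The boundary contribution at $\sigma=\tau$ vanishes trivially. Combining this with Lemma \ref{t4.2.3}, which states $\lo\F\big([\tau,\infty)\big)=-\udf\int_\tau^\infty\Ha_\sigma\,d\sigma$, we directly obtain \eref{4.0.3}, provided the boundary term at infinity vanishes.

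The principal task is therefore to justify $\lim_{\sigma\to\infty}(\sigma-\tau)\Ha_\sigma=0$. Using Lemma \ref{t4.2.1}, the Hamiltonian is a polynomial in $\qcv_\sigma,\qcv'_\sigma,\qcv''_\sigma,\psi_\sigma,\psi'_\sigma$ with $1/\psi_\sigma$ weights, so it suffices to control the asymptotics of $\qcv$ and $\psi$ as $\sigma\to\infty$. Since $\I=[\tau,\infty)$ shrinks to the empty set in this limit, $\Ko\to0$ and hence $\rho\to\id$, which gives $\qcv_\sigma\to\psi_\sigma$. Condition \ref{t1.1.2} requires $\psi\in\sob^2(\R)\cap\D(\Qo)$, so that $\psi_\sigma$, $\psi'_\sigma$ and $\psi''_\sigma$ must decay sufficiently fast; pushing this further, one derives that $\Ha_\sigma=\OO(\qcv^2_\sigma)$ to leading order (the higher order terms being subleading thanks to the $\qcv/\psi\to1$ cancellation occurring in the prefactor $\frac{2\qcv_\sigma}{\psi_\sigma}-1$ through the closure relations of Lemma \ref{t3.1.9}). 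This decay is indeed consistent with the fact that $\Ha_\sigma$ has already been integrated over $[\tau,\infty)$ in the proof of Lemma \ref{t4.2.3}, so the marginally stronger requirement $\sigma\Ha_\sigma\to0$ follows from the same regularity assumptions. The technical subtlety is ensuring that no spurious terms from the $\frac{\qcv_\sigma}{\psi_\sigma}$ factor survive at infinity; this will be the main hurdle, and can be handled by expanding $\qcv=\psi+\Ro\psi$ and bounding $\|\Ro\psi\|$ using the vanishing of $\Pi_{\I}$ in the relevant limit.
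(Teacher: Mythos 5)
Your proposal is correct and follows essentially the same route as the paper: identify the integrand of\eref{4.0.3} with $-\frac{\ud^2}{\gamma^2}\frac{d\Ha_\sigma}{d\sigma}$ via Lemma \ref{t4.2.2}, integrate by parts, and combine with the representation of $\lo\F\big([\tau,\infty)\big)$ from the proof of Lemma \ref{t4.2.3} (i.e.\eref{4.2.12}). The paper likewise discards the boundary term at infinity by simply invoking Condition \ref{t1.1.2}, so your extra (somewhat heuristic) discussion of $\lim_{\sigma\to\infty}(\sigma-\tau)\Ha_\sigma=0$ is no less rigorous than the published argument, though note that integrability of $\Ha_\sigma$ alone would not imply this limit.
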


\begin{proof}
Integrating by parts, we may observe, under Condition \ref{t1.1.2},
\begin{flalign}\label{4.2.13}
\begin{split}
\int_\tau^\infty\big(\sigma-\tau\big)\,\frac{d\Ha_\sigma}{d\sigma}\,d\sigma=-\int_\tau^\infty\Ha_\sigma\,d\sigma\,.
\end{split}
\end{flalign}
Thus, combining this with\eref{4.2.12} entails
\begin{flalign}\label{4.2.14}
\begin{split}
\F\Big([\tau,\infty)\Big)=\ex\left(\udf\int_\tau^\infty\big(\sigma-\tau\big)\,\frac{d\Ha_\sigma}{d\sigma}\,d\sigma\right)\,.
\end{split}
\end{flalign}
Henceforth\eref{4.0.3} amounts to Lemma \ref{t4.2.2}.
\end{proof}

\begin{lem}\label{t4.2.5}
The zeroth order AWF admit the following asymptotics, $\forall a\in\{0,1,2\}\,$,
\begin{flalign}\label{4.2.15}
\begin{split}
\lim_{\tau\rightarrow\infty}\af{0}{a}(\xi)=\psi^{(a)}(\xi)\,.
\end{split}
\end{flalign}
\end{lem}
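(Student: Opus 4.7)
The plan is to exploit that the interval $\I = [\tau,\infty)$ shrinks to the empty set as $\tau \to \infty$, so that $\Pi_\I \to 0$ strongly, forcing $\Ko = \Ko_0 \Pi_\I \to 0$ and consequently $\rho = (\id - \Ko)^{-1} \to \id$. Since $\af{0}{a} = \rho\,\psi^{(a)}$ by Definition \ref{t2.2.2}, this yields the claim. More concretely, I would invoke Property \ref{t2.2.9} to write
\begin{flalign*}
\af{0}{a}(\xi) - \psi^{(a)}(\xi) = \Ro\,\psi^{(a)}(\xi) = \Ko\,\af{0}{a}(\xi) = \int_\tau^\infty \Ki(\xi,\zeta)\,\af{0}{a}(\zeta)\,d\zeta\,,
\end{flalign*}
and then show that this last integral vanishes in the limit $\tau\to\infty$.

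For the quantitative control, I would use the Christoffel-Darboux representation of $\Ki$ obtained in Proposition \ref{t2.1.8}, together with Conditions \ref{t1.1.2} and \ref{t1.1.3}, which ensure that $\Ki \in \leb^2(\R \times \R)$. In particular,
\begin{flalign*}
\|\Ko\|_{\mathrm{HS}}^2 = \int_\R \int_\tau^\infty |\Ki(\xi,\zeta)|^2\,d\zeta\,d\xi \xrightarrow[\tau\to\infty]{} 0
\end{flalign*}
by monotone/dominated convergence applied to the integrable integrand. Since $\|\rho\| \leq (1 - \|\Ko\|)^{-1}$ stays uniformly bounded (in fact tends to 1) as $\tau \to \infty$, the AWF $\af{0}{a}$ has uniformly bounded $\leb^2$-norm in $\tau$, so $\Ro\,\psi^{(a)} \to 0$ in $\leb^2(\R)$. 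To upgrade to the pointwise convergence asserted, I would apply Cauchy-Schwarz to the integral representation above: $|\af{0}{a}(\xi) - \psi^{(a)}(\xi)| \leq \|\Ki(\xi,\cdot)\|_{\leb^2([\tau,\infty))} \cdot \|\af{0}{a}\|_{\leb^2(\R)}$, where the first factor tends to zero pointwise in $\xi$ by Proposition \ref{t2.1.8} (as $\psi, \psi' \in \sob^2(\R)$ and hence their tails vanish) and the second stays bounded.

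For $a = 2$, the result follows either by the same argument (since $\psi'' \in \leb^2(\R)$ as a consequence of Lemma \ref{t2.1.9}, which expresses $\psi''$ algebraically in terms of $\psi$ and $\psi'$ both lying in $\sob^2(\R)$ by Condition \ref{t1.1.2}), or directly by applying the closure relation of Lemma \ref{t2.2.34} which expresses $\af{0}{2}$ as a linear combination of $\af{0}{0}$ and $\af{0}{1}$ plus inner product corrections $\au{0}{\alpha}$, each of which vanishes as $\tau\to\infty$ since $\au{0}{\alpha} = \ipr{\psi}{\Pi_\I \rho\, \psi^{(\alpha)}} \to 0$ by the same argument (the projector kills everything in the limit).

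The main obstacle is the passage from $\leb^2$-convergence to the pointwise statement, which requires a uniform bound on $\|\af{0}{a}\|_{\leb^2(\R)}$ independent of $\tau$. This is ensured by the bound $\|\af{0}{a}\|_{\leb^2} \leq (1-\|\Ko\|)^{-1} \|\psi^{(a)}\|_{\leb^2}$, together with the fact that $\|\Ko\| \to 0$, so the prefactor is bounded for $\tau$ large enough. Outside of this technicality, everything else is standard functional-analytic bookkeeping.
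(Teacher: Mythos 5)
Your proposal is correct and rests on the same mechanism as the paper's proof: as $\tau\rightarrow\infty$ the projector $\Pi_{\I}$ vanishes, so $\Ko=\Ko_0\Pi_{\I}\rightarrow0$ and the Neumann series $\rho=\sum_{r\geq0}\big(\Ko_0\Pi_{\I}\big)^r$ collapses to the identity, giving $\af{0}{a}=\rho\,\psi^{(a)}\rightarrow\psi^{(a)}$. The paper records this in one line; your decomposition $\af{0}{a}-\psi^{(a)}=\Ko\,\af{0}{a}$ together with the Cauchy--Schwarz and uniform-resolvent bounds merely supplies the quantitative justification for the pointwise limit that the paper leaves implicit.
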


\begin{proof}
Notice that, since $\rho=\sum_{r=0}^\infty\big(\Ko_0\Pi_{\I}\big)^r$, it follows that $\rho\xrightarrow{\tau\rightarrow\infty}\id$, yielding $\af{0}{a}(\xi)\coloneqq\rho\psi^{(a)}(\xi)\xrightarrow{\tau\rightarrow\infty}\psi^{(a)}(\xi)$, which stands $\forall a\in\{0,1,2\}$.
\end{proof}

Lemmas \ref{t4.1.4}, \ref{t4.2.3}, \ref{t4.2.4} and \ref{t4.2.5} all together conclude the proof of Theorem \ref{t4.0.1}, hereby finalizing the central part of the present paper.


\appendix

\section{On the Characterization of Condition \ref{t1.1.4}}\label{sA}

In this appendix we try to characterize Condition \ref{t1.1.4} with three examples of generic cases in which it is systematically satisfied. More precisely, we give three conditions, each of which is sufficient to ensure Condition \ref{t1.1.4}.

\begin{cond}\label{ta1}
When $v(x;\xi)=v(x)$ does not depend on $\xi$, defining $\omega\coloneqq v(x)+\xi$, we require that there exists a solution to
\begin{flalign}\label{a1}
\begin{split}
\Big(\big(\partial_xv(x)\big)^2\partial_\omega^2+\big(\partial_x^2v(x)\big)\partial_\omega-\omega\Big)\phi(\omega)=0\,.
\end{split}
\end{flalign}
\end{cond}

\begin{cond}\label{ta2}
We require that $\exists f,\phi$ any two functions such that :
\begin{enumerate}
\item $f(x)\neq0\,$, $\,\forall x\in\R\,$;
\item $\exists\phi^{-1}\,$ any function satisfying $\,\phi\circ\phi^{-1}:x\mapsto x\,$;
\item $\varphi_\xi(x)=f(x)\phi(\xi)\,$, $\,\forall x,\xi\in\R\,$;
\item The mapping $\,\varrho_\xi:f(x)\mapsto f(x)\phi(\xi)\,$ defines a flow.
\end{enumerate}
\end{cond}

\begin{cond}\label{ta3}
We demand that $\exists f,\phi$ any two functions such that :
\begin{enumerate}
\item $\exists\phi^{-1}\,$ any function satisfying $\,\phi\circ\phi^{-1}:x\mapsto x\,$;
\item $\varphi_\xi(x)=f(x)+\phi(\xi)\,$, $\,\forall x,\xi\in\R\,$;
\item The function $\phi$ is linear.
\end{enumerate}
\end{cond}

\begin{lem}\label{ta4}
If Condition \ref{ta1} is satisfied, then Condition \ref{t1.1.4} is satisfied too with $u=v$ and $\gamma=1$.
\end{lem}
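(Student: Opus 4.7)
The plan is a direct verification. Condition \ref{t1.1.4} with $u=v$ and $\gamma=1$ asserts the existence of a function $\phi$ such that $\varphi_\xi(x)=\phi\bigl(v(x)+\xi\bigr)$. So I would take $\phi$ to be the function furnished by Condition \ref{ta1}, define $\varphi_\xi(x)\coloneqq\phi\bigl(v(x)+\xi\bigr)$, and simply check that this candidate satisfies the Schrödinger type equation\eref{1.1.1}, which in this case reads
\begin{flalign*}
\begin{split}
\partial_x^2\varphi_\xi(x)=\bigl(v(x)+\xi\bigr)\varphi_\xi(x)=\omega\,\varphi_\xi(x)\,,
\end{split}
\end{flalign*}
with $\omega\coloneqq v(x)+\xi$ as introduced in the statement of Condition \ref{ta1}.

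The key step is the chain rule. Writing $\omega=v(x)+\xi$ and differentiating twice with respect to $x$, one obtains $\partial_x\varphi_\xi(x)=\bigl(\partial_xv(x)\bigr)\phi'(\omega)$ and
\begin{flalign*}
\begin{split}
\partial_x^2\varphi_\xi(x)=\bigl(\partial_x^2v(x)\bigr)\phi'(\omega)+\bigl(\partial_xv(x)\bigr)^2\phi''(\omega)\,.
\end{split}
\end{flalign*}
Substituting this into the Schrödinger equation and rearranging yields precisely
\begin{flalign*}
\begin{split}
\Bigl(\bigl(\partial_xv(x)\bigr)^2\partial_\omega^2+\bigl(\partial_x^2v(x)\bigr)\partial_\omega-\omega\Bigr)\phi(\omega)=0\,,
\end{split}
\end{flalign*}
which is exactly the ODE that $\phi$ is assumed to solve by Condition \ref{ta1}. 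Hence $\varphi_\xi$ solves\eref{1.1.1} and, by construction, it is of the form required by Condition \ref{t1.1.4} with $u=v$, $\gamma=1$.

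There is no real obstacle in this argument: the ODE in Condition \ref{ta1} was engineered precisely so that the ansatz $\phi(v(x)+\xi)$ converts the Schrödinger equation into an equation in the single variable $\omega$. The only subtlety worth a brief comment is that twice differentiability of $u=v$ and $\phi$, as required by Condition \ref{t1.1.4}, is implicit in the very writing of Condition \ref{ta1}, so no additional regularity hypothesis needs to be imposed.
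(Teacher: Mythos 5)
Your proof is correct and follows essentially the same route as the paper: apply the chain rule to $\phi\big(v(x)+\xi\big)$ and invoke the ODE of Condition \ref{ta1} to recover\eref{1.1.1}, which is exactly the paper's verification.
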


\begin{proof}
If there exists a solution to\eref{a1}, then it satisfies
\begin{flalign}\label{a2}
\begin{split}
\partial_x^2\phi\big(v(x)+\xi\big)=\partial_x\big(\partial_xv(x)\big)\partial_\omega\phi(\omega)=\big(\partial_xv(x)\big)^2\partial_\omega^2\phi(\omega)+\big(\partial_x^2v(x)\big)\partial_\omega\phi(\omega)=\omega\phi(\omega)=\Big(v(x)+\xi\Big)\phi\big(v(x)+\xi\big)\,,
\end{split}
\end{flalign}
and hence $\varphi_\xi(x)=\phi\big(v(x)+\xi\big)$ indeed solves\eref{1.1.1} too.
\end{proof}

\begin{lem}\label{ta5}
If Condition \ref{ta2} is satisfied, then Condition \ref{t1.1.4} is satisfied too with $u=\phi^{-1}\circ f$ and $\gamma=1$.
\end{lem}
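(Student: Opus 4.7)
The plan is to exploit the flow property (item 4 of Condition \ref{ta2}) to derive a multiplicative functional equation for $\phi$, and then use this equation together with the invertibility of $\phi$ (item 2) to turn the product $f(x)\phi(\xi)$ into $\phi$ evaluated at a sum, with $u(x) \coloneqq \phi^{-1}(f(x))$ and $\gamma = 1$ as suggested.

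First I would unpack the flow property. A (one-parameter) flow $\varrho_\xi$ satisfies $\varrho_0 = \id$ and $\varrho_{\xi_1}\circ\varrho_{\xi_2} = \varrho_{\xi_1+\xi_2}$. Applying this to $f(x)$ and using $\varrho_\xi : f(x) \mapsto f(x)\phi(\xi)$ yields
\begin{flalign*}
\begin{split}
f(x)\phi(\xi_1+\xi_2) = \varrho_{\xi_1+\xi_2}\big(f(x)\big) = \varrho_{\xi_1}\big(f(x)\phi(\xi_2)\big) = f(x)\phi(\xi_1)\phi(\xi_2)\,, \quad \forall\, x,\xi_1,\xi_2 \in \R\,.
\end{split}
\end{flalign*}
Since $f(x)\neq 0$ by item 1 of Condition \ref{ta2}, dividing through gives the multiplicative Cauchy equation $\phi(\xi_1+\xi_2) = \phi(\xi_1)\phi(\xi_2)$ on all of $\R$, and setting $\xi_1 = \xi_2 = 0$ yields $\phi(0)=1$ (note $\phi(0)\neq 0$ since $\phi$ is invertible by item 2).

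Next I would define $u \coloneqq \phi^{-1}\circ f$. This is well defined by items 1 and 2 of Condition \ref{ta2}, and it satisfies $f(x) = \phi(u(x))$. Combining this with the multiplicativity established above, the representation of $\varphi_\xi$ provided by item 3 becomes
\begin{flalign*}
\begin{split}
\varphi_\xi(x) = f(x)\phi(\xi) = \phi\big(u(x)\big)\phi(\xi) = \phi\big(u(x) + \xi\big)\,,
\end{split}
\end{flalign*}
which is precisely the form required by Condition \ref{t1.1.4} with $\gamma = 1$.

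The only remaining point is the $C^2$-regularity of $u$ and $\phi$. Since $\varphi_\xi$ must satisfy the Schrödinger-type equation\eref{1.1.1}, it is $C^2$ in $x$ for each $\xi$ (and twice differentiable in $\xi$ by the regularity assumptions on $\psi$), so both $f$ and $\phi$ inherit the requisite smoothness; inversion of $\phi$ then transfers this regularity to $u = \phi^{-1}\circ f$ in the usual way via the inverse function theorem, provided $\phi'$ does not vanish (which follows from the multiplicative law $\phi'(\xi)=\phi'(0)\phi(\xi)$ once $\phi$ is known to be differentiable at $0$, since $\phi$ never vanishes). I expect the main (minor) obstacle to be exactly this regularity bookkeeping: the functional equation $\phi(\xi_1+\xi_2)=\phi(\xi_1)\phi(\xi_2)$ alone only guarantees exponential behavior under mild regularity, so one must invoke either the measurability of $\phi$ coming from the fact that $\varphi_\xi$ solves a differential equation, or simply note that all smoothness assumed on $\varphi_\xi$ in Condition \ref{t1.1.4} is automatically satisfied under the setup of Condition \ref{ta2}.
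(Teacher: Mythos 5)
Your proposal is correct and follows essentially the same route as the paper: derive the multiplicative law $\phi(\xi+\zeta)=\phi(\xi)\phi(\zeta)$ from the flow property and $f\neq 0$, then write $\varphi_\xi(x)=f(x)\phi(\xi)=\phi\big(\phi^{-1}\circ f(x)+\xi\big)$, identifying $u=\phi^{-1}\circ f$ and $\gamma=1$. The extra regularity bookkeeping you add is fine but goes beyond what the paper's proof records, which simply treats the $C^2$ requirement as part of the hypotheses carried along with Condition \ref{t1.1.4}.
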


\begin{proof}
Since $\varrho_\xi$ defines a flow, it follows that $\varrho_{\xi+\zeta}=\varrho_\xi\circ\varrho_\zeta$ and, since $f(x)\neq0\,$, $\,\forall x\in\R\,$, this entails $\phi(\xi+\zeta)=\phi(\xi)\phi(\zeta)$. Thus
\begin{flalign}\label{a3}
\begin{split}
\phi\big(\phi^{-1}\circ f(x)+\xi\big)=f(x)\phi(\xi)=\varphi_\xi(x)\,,
\end{split}
\end{flalign}
so that Condition \ref{t1.1.4} is indeed ensured with $u=\phi^{-1}\circ f$ and $\gamma=1$.
\end{proof}

\begin{lem}\label{ta6}
If Condition \ref{ta3} is satisfied, then Condition \ref{t1.1.4} is satisfied too with $u=\phi^{-1}\circ f$ and $\gamma=1$.
\end{lem}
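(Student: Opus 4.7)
The plan is to mirror the argument of Lemma \ref{ta5}, trading the multiplicative identity $\phi(\xi+\zeta)=\phi(\xi)\phi(\zeta)$ forced by the flow property for the additive identity $\phi(\xi+\zeta)=\phi(\xi)+\phi(\zeta)$ forced by linearity. In particular, linearity gives $\phi(0)=0$, which together with the hypothesis that a right inverse $\phi^{-1}$ exists ensures that $\phi^{-1}$ is itself linear and that $\phi\big(\phi^{-1}(y)\big)=y$ for every $y\in\R$, not merely pointwise on the image.

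First I would set $u\coloneqq\phi^{-1}\circ f$ and $\gamma\coloneqq 1$, and then compute directly, using additivity of $\phi$,
\begin{flalign*}
\phi\big(u(x)+\gamma\xi\big)=\phi\big(\phi^{-1}(f(x))+\xi\big)=\phi\big(\phi^{-1}(f(x))\big)+\phi(\xi)=f(x)+\phi(\xi)=\varphi_\xi(x)\,.
\end{flalign*}
This displays $\varphi_\xi$ in the form demanded by Condition \ref{t1.1.4}, with the announced data $u=\phi^{-1}\circ f$ and $\gamma=1$.

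What remains is to check the $\operatorname{C}^2$ regularity built into Condition \ref{t1.1.4}. Since $\phi$ is linear, both $\phi$ and $\phi^{-1}$ are $C^\infty$, so this reduces to $f\in\operatorname{C}^2(\R)$, which in turn follows from $\varphi_\xi(x)=f(x)+\phi(\xi)$ solving \eqref{1.1.1}, as this forces $\partial_x^2 f(x)=\big(v(x;\xi)+\xi\big)\big(f(x)+\phi(\xi)\big)$ and thereby twice differentiability of $f$. The hard part, in other words, is entirely packaged into the hypothesis of linearity, and the argument collapses to a single line of algebra once additivity is unpacked — in marked contrast to Lemmas \ref{ta4} and \ref{ta5}, where an ODE in $\omega$ or a genuine semigroup identity on $\xi$ had to be invoked.
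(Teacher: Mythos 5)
Your proof is correct and follows essentially the same route as the paper: the paper's argument is precisely the one-line computation $\phi\big(\phi^{-1}\circ f(x)+\xi\big)=f(x)+\phi(\xi)=\varphi_\xi(x)$, which rests on the additivity supplied by linearity of $\phi$ together with the right-inverse property $\phi\circ\phi^{-1}=\operatorname{id}$, exactly as you unpack it. Your additional remarks on the $\operatorname{C}^2$ regularity of $u=\phi^{-1}\circ f$ are a harmless refinement that the paper leaves implicit.
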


\begin{proof}
Indeed, under Condition \ref{ta3}, we have
\begin{flalign}\label{a4}
\begin{split}
\phi\big(\phi^{-1}\circ f(x)+\xi\big)=f(x)+\phi(\xi)=\varphi_\xi(x)\,,
\end{split}
\end{flalign}
which ensures Condition \ref{t1.1.4} for $u=\phi^{-1}\circ f$ and $\gamma=1$.
\end{proof}

\section{An Application to the KPZ Equation}\label{sB}

We now briefly lay out a precise topic for which our endeavors may directly be applied. The following discussion revolves around the key observation that, substituting\eref{1.1.8} for $\varphi_\xi(x)$ in the kernel\eref{1.1.3} yields
\begin{flalign}\label{b1}
\begin{split}
\Ki(\xi,\zeta)=\int_{\R_+}\phi\big(u(x)+\gamma\xi\big)\phi\big(u(x)+\gamma\zeta\big)\,dx=\lim_{a\rightarrow\infty}\int_{u(0)}^{u(a)}\frac{\phi\big(u+\gamma\xi\big)\phi\big(u+\gamma\zeta\big)}{\dot{u}}\,du\,,
\end{split}
\end{flalign}
with $\dot{u}\coloneqq\frac{du}{dx}$, this is slightly formal because we do not relabel the integration variable, but this will be made more precise below. One may recognize the generic form of the so-called finite temperature kernel in the context of the KPZ equation. Hereby motivating us to formulate the following result.

\begin{prop}\label{tb1}
Let $\phi$ and $u$ be any twice differentiable functions, i.e. $u,\phi\in\operatorname{C}^2(\R)$, and $\gamma\in\R\char`\\\{0\}$. Upon probing the regularity Conditions \ref{t1.1.2}, \ref{t1.1.3}, \ref{t1.1.5}, \ref{t1.1.6} and \ref{t3.1.6} for $\varphi_\xi(x)\coloneqq\phi\big(u(x)+\gamma\xi\big)$ satisfying\eref{1.1.1}, Theorem \ref{t4.0.1} may be applied for any Fredholm determinant arising from a kernel of the following generic form,
\begin{flalign}\label{b2}
\begin{split}
\Ki(\xi,\zeta)=\lim_{a\rightarrow\infty}\int_{u(0)}^{u(a)}\frac{\phi\big(\lambda+\gamma\xi\big)\phi\big(\lambda+\gamma\zeta\big)}{f(\lambda)}\,d\lambda\,,
\end{split}
\end{flalign}
where $f(u)\coloneqq\frac{du}{dx}(x)$ depends solely on $u$ and parameters not related to the integration.
\end{prop}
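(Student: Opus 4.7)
The plan is to observe that any kernel of the form \eref{b2} is identically the kernel \eref{1.1.3} associated to the wave function $\varphi_\xi(x) := \phi\big(u(x) + \gamma\xi\big)$, after which Theorem \ref{t4.0.1} applies directly. The core ingredient is the change of variables $\lambda = u(x)$: the hypothesis that $f$ depends solely on its argument $u$ (and parameters unrelated to the integration) is precisely the statement that $u$ obeys the autonomous first order ODE $\dot{u}(x) = f(u(x))$. When $f$ has constant sign (the physically natural setting, consistent with Condition \ref{t1.1.5} near $x=0$), $u$ is thus a $\operatorname{C}^2$ diffeomorphism of $\R_+$ onto $[u(0),\lim_{a\to\infty} u(a))$, and $d\lambda = f(u(x))\,dx$ yields
\begin{flalign*}
\lim_{a\to\infty}\int_{u(0)}^{u(a)} \frac{\phi\big(\lambda + \gamma\xi\big)\,\phi\big(\lambda + \gamma\zeta\big)}{f(\lambda)}\,d\lambda = \int_{\R_+} \phi\big(u(x) + \gamma\xi\big)\,\phi\big(u(x) + \gamma\zeta\big)\,dx\,,
\end{flalign*}
which is exactly \eref{1.1.3} with the advertised $\varphi_\xi$, as sketched informally in \eref{b1}.

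With this identification established, Condition \ref{t1.1.4} is satisfied by construction with the given $u$, $\phi$ and $\gamma$. The remaining requirements of Theorem \ref{t4.0.1}, namely that $\varphi_\xi$ solve the Schrödinger type equation \eref{1.1.1} and satisfy the regularity Conditions \ref{t1.1.2}, \ref{t1.1.3}, \ref{t1.1.5}, \ref{t1.1.6} and \ref{t3.1.6}, are precisely the statements being "probed" in the hypothesis of the proposition. Thus Theorem \ref{t4.0.1} applies and yields the functional representations \eref{1.3.1} and \eref{1.3.3}, along with the integro-differential equation \eref{1.3.2} for the associated $\qcv$, for the finite temperature kernel \eref{b2}.

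The only delicate point is the behaviour of $u$ at the boundaries of $\R_+$: if $f$ vanishes or changes sign, the global diffeomorphism property can fail and one would need either to restrict to intervals of monotonicity and sum the resulting contributions, or to assume that $u$ is monotone with $u(\R_+) = [u(0),\lim_{a\to\infty} u(a))$ as an implicit part of the setup. Under Condition \ref{t1.1.5}, at least local invertibility near $x=0$ is automatic, and for the finite temperature kernels from the KPZ literature that motivate this result (as illustrated in \sref{sB}) $u$ is naturally monotone, so this potential obstacle does not arise.
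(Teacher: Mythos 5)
Your proposal is correct and follows essentially the same route as the paper: the change of variables $\lambda=u(x)$ (requiring $u$ invertible, with $\dot{u}$ expressible solely in terms of $u$) identifies the kernel \eref{b2} with \eref{1.1.3}, after which Condition \ref{t1.1.4} holds by construction and the remaining regularity conditions are exactly what the hypothesis asks to be probed, so Theorem \ref{t4.0.1} applies. Your extra discussion of when the invertibility of $u$ can fail is a slightly more careful treatment of the point the paper dispatches with ``if $u$ is invertible, then this procedure is direct,'' but it is not a different argument.
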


\begin{proof}
Formulating\eref{b1} more carefully, we are led to
\begin{flalign}\label{b3}
\begin{split}
\Ki(\xi,\zeta)=\lim_{a\rightarrow\infty}\int_{u(0)}^{u(a)}\frac{\phi\big(\lambda+\gamma\xi\big)\phi\big(\lambda+\gamma\zeta\big)}{\dot{u}|_{u=\lambda}}\,d\lambda\,,
\end{split}
\end{flalign}
where we need to express $\dot{u}$ only in terms of $u$, which we then set to $\lambda$, in order to perform the integration. And, if $u$ is invertible, then this procedure is direct as we may write $x$ in terms of $u$. Henceforth this coincides with \ref{b2}, and Condition \ref{t1.1.4} is ensured by construction so that we are able to apply Theorem \ref{t4.0.1}.
\end{proof}

\paragraph{Specific Example.} We turn to a specific instance for which our results may be applied through Proposition \ref{tb1}.

Fredholm determinants are known to arise in the context of the KPZ equation \cite{28,29,Calabrese:2010EPL,6}, here we focus on the solution to this equation for the height distribution at a single point, which is expressed in terms of the Fredholm determinant whose related kernel is
\begin{flalign}\label{b4}
\begin{split}
\Ki(\xi,\zeta)=\int_{-\infty}^{+\infty}\frac{\Ai\big(\lambda+\xi\big)\Ai\big(\lambda+\zeta\big)}{c_1\ex\big(-c_2\lambda\big)+1}\,d\lambda\,,
\end{split}
\end{flalign}
where $c_1,c_2\in\R_+$ are two positive constants. This expression with the constants $c_1$ and $c_2$ written explicitly as well as a discussion with an emphasis on the relevance of the corresponding Fredholm determinant may be found, for instance, in \cite[eq.~(5.3)]{32}.

\begin{lem}\label{tb2}
The Fredholm determinant related to the kernel\eref{b4} belongs to the range of applications of Proposition \ref{tb1}, with $\phi=\Ai$ the Airy function, $\,\gamma=1\,$, as well as $u$ solving
\begin{flalign}\label{b5}
\begin{split}
\frac{du}{dx}(x)=-c_1\ex\Big(-c_2u(x)\Big)-1\,,\,\,\,\,\,\,\,\,\,\,\,\,\lim_{x\rightarrow0}u(x)=+\infty\,,\,\,\,\,\,\,\,\,\,\,\lim_{x\rightarrow+\infty}u(x)=-\infty\,,
\end{split}
\end{flalign}
and $v$ given by
\begin{flalign}\label{b6}
\begin{split}
v(x;\xi)=\frac{\partial_x^2\Ai\big(u(x)+\xi\big)}{\Ai\big(u(x)+\xi\big)}-\xi\,.
\end{split}
\end{flalign}
\end{lem}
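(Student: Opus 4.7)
The plan is to verify three items: (i) $\varphi_\xi(x) \coloneqq \Ai\big(u(x)+\xi\big)$ is of the form $\phi(u(x)+\gamma\xi)$ required by Condition \ref{t1.1.4}, with $\phi = \Ai$ and $\gamma = 1$; (ii) the Schrödinger equation \eref{1.1.1} holds with the potential $v$ prescribed by \eref{b6}; and (iii) the kernel \eref{b4} agrees with $\int_{\R_+}\varphi_\xi(x)\varphi_\zeta(x)\,dx$, so that Proposition \ref{tb1} applies directly. Item (i) is manifest from the definitions, and item (ii) is tautological: \eref{b6} is simply the rearrangement $v(x;\xi)+\xi = \partial_x^2\varphi_\xi(x)/\varphi_\xi(x)$ of \eref{1.1.1}, valid wherever $\Ai\big(u(x)+\xi\big)$ does not vanish.

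The substantive step is (iii), which I would handle by the change of variables $\lambda = u(x)$. From \eref{b5}, upon identifying $\lambda = u(x)$, one has $d\lambda = \dot{u}(x)\,dx = -\big(c_1\,\ex(-c_2\lambda)+1\big)\,dx$, so the denominator of \eref{b4} precisely cancels the Jacobian. The boundary asymptotics $\lim_{x\to 0}u(x) = +\infty$ and $\lim_{x\to+\infty}u(x) = -\infty$ map the range $x\in(0,+\infty)$ to $\lambda\in(+\infty,-\infty)$, and flipping bounds absorbs the overall minus sign, yielding
\[
\int_{-\infty}^{+\infty}\frac{\Ai(\lambda+\xi)\,\Ai(\lambda+\zeta)}{c_1\,\ex(-c_2\lambda)+1}\,d\lambda \;=\; \int_0^\infty \Ai\big(u(x)+\xi\big)\,\Ai\big(u(x)+\zeta\big)\,dx\,.
\]
The right-hand side is exactly the kernel \eref{1.1.3} applied to our $\varphi_\xi$, and the same computation identifies the form \eref{b2} with $f(\lambda) = -\big(c_1\,\ex(-c_2\lambda)+1\big)$, which depends solely on $u$ and the parameters $c_1,c_2$ unrelated to the integration variable. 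This supplies all the data demanded by Proposition \ref{tb1}.

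The hard part will be giving precise analytical content to the boundary conditions in \eref{b5}. Indeed, explicit integration of the autonomous ODE shows that the monotone trajectory connecting $u=+\infty$ to $u=-\infty$ naturally lives on a bounded interval of its argument, so the identification with $\R_+$ must be read as taking $u$ to be the strictly decreasing bijection $\R_+\to\R$ implicitly dictated by the change of variables above, rather than as a literal initial-value solution. Once this convention is fixed, the remaining work is to probe the regularity Conditions \ref{t1.1.2}, \ref{t1.1.3}, \ref{t1.1.5}, \ref{t1.1.6} and \ref{t3.1.6} for $\varphi_\xi = \Ai(u+\xi)$, which should reduce to applying the classical Airy asymptotics near the endpoints together with the non-vanishing of $\dot{u}(x)$ on $\R_+$ (immediate from \eref{b5}) for Condition \ref{t1.1.5}, and an explicit inspection of $\partial_\xi v(0;\xi)$ via \eref{b6} for Condition \ref{t1.1.6}.
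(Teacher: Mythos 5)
Your items (i)--(iii) reproduce the first half of the paper's proof: the change of variables $\lambda=u(x)$, the identification $\phi=\Ai$, $\gamma=1$, $f(\lambda)=-c_1\ex\big(-c_2\lambda\big)-1$, and the matching of the integration bounds that produces the boundary conditions in\eref{b5}. The genuine gap is Condition \ref{t1.1.6}, which you defer to ``an explicit inspection of $\partial_\xi v(0;\xi)$ via\eref{b6}''. In the paper this is precisely the step singled out as subtle, and it occupies most of the proof (\eref{b8}--\eref{b12}); it cannot be read off from\eref{b6} by inspection. Writing $\omega=u(x)+\xi$ and using $\partial_\omega^2\Ai(\omega)=\omega\Ai(\omega)$, one has $v(x;\xi)=\dot{u}(x)^2\big(u(x)+\xi\big)-\xi+\ddot{u}(x)\,\Ai'(\omega)/\Ai(\omega)$, which for any fixed $x>0$ genuinely depends on $\xi$; the condition $\partial_\xi v(0;\xi)=0$ only emerges in the limit $x\to0$, because\eref{b5} gives $\dot{u}^2=c_1^2\ex\big(-2c_2u\big)+2c_1\ex\big(-c_2u\big)+1$ and $\ddot{u}=-c_1c_2\big(c_1\ex\big(-2c_2u\big)+\ex\big(-c_2u\big)\big)$, so that $u(x)\to+\infty$ together with $c_2>0$ forces $\dot{u}^2\to1$ and $\ddot{u}\to0$, killing the $\xi$-dependence in the limit. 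Your proposal neither performs nor identifies this argument (``classical Airy asymptotics near the endpoints'' is not the mechanism), and a literal evaluation at $x=0$ is not even defined except as this limit.

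A secondary point: your parenthetical analysis of\eref{b5} is off. With $\dot{u}=-\big(c_1\ex\big(-c_2u\big)+1\big)$ one has $\dot{u}\to-1$ as $u\to+\infty$, so the approach to $u=+\infty$ costs an infinite amount of backward $x$-time while the escape to $-\infty$ happens at finite $x$; the maximal interval of a genuine solution is a half-line, not a bounded interval. Your proposed reinterpretation (keep the ODE but declare $u$ a decreasing bijection $\R_+\to\R$) is then self-defeating, since the Jacobian cancellation you rely on requires $|\dot{u}|$ to equal $c_1\ex\big(-c_2u\big)+1$ as a function of $u$, and no such bijection of $\R_+$ onto $\R$ exists. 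The paper itself does not engage with this issue (it only remarks that $\dot{u}<0$ is compatible with the boundary conditions), so you are right to sense a tension, but your diagnosis and patch do not resolve it; and independently of that discussion, the missing verification of Condition \ref{t1.1.6} is what separates your sketch from the paper's proof.
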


\begin{proof}
The first step is to identify\eref{b4} with\eref{b2}, and to do so we rewrite\eref{b2} as follows,
\begin{flalign}\label{b7}
\begin{split}
\Ki(\xi,\zeta)=\lim_{a\rightarrow+\infty}\int_{u(a)}^{u(0)}\frac{\phi\big(\lambda+\gamma\xi\big)\phi\big(\lambda+\gamma\zeta\big)}{-f(\lambda)}\,d\lambda\,,
\end{split}
\end{flalign}
so that setting $\phi=\Ai$, $\gamma=1$ and $f(\lambda)=-c_1\ex\big(-c_2\lambda\big)-1$, we indeed retrieve the integrand of\eref{b4}. Then, to fully recover\eref{b4}, we identify the bounds of the integrals, which lead to the second and third relations of\eref{b5}. Provided with this identification, $\frac{du}{dx}=f(u)$ yields the first equation of\eref{b5}, and one notices that $\frac{du}{dx}(x)<0\,,\,\,\forall x\,$, is compatible with the boundary conditions. \\
Finally, the last subtle step is to find a function $v$ satisfying both\eref{1.1.1} and Condition \ref{t1.1.6}. Since $\varphi_\xi(x)=\Ai\big(u(x)+\xi\big)$ with $u$ satisfying\eref{b5},\eref{1.1.1} amounts to\eref{b6}, thus it remains to justify that such a $v$ does obey Condition \ref{t1.1.6}. And in order to achieve this, we shall be guided by the following observation, with $\dot{u}\coloneqq\frac{du}{dx}$ and $\omega\coloneqq u(x)+\xi$,
\begin{flalign}\label{b8}
\begin{split}
\partial_x^2\Ai\big(u(x)+\xi\big)=\dot{u}^2(x)\partial^2_\omega\Ai(\omega)+\frac{d^2u}{dx^2}(x)\partial_\omega\Ai(\omega)=\dot{u}^2(x)\Big(u(x)+\xi\Big)\Ai\big(u(x)+\xi\big)+\frac{d^2u}{dx^2}(x)\partial_\omega\Ai(\omega)\,,
\end{split}
\end{flalign}
where we used the defining relation, $\partial^2_\omega\Ai(\omega)=\omega\Ai(\omega)$, of the Airy function. Additionally, we may observe that, according to the first relation of\eref{b5},
\begin{flalign}\label{b9}
\begin{split}
\dot{u}^2(x)=c_1^2\ex\Big(-2c_2u(x)\Big)+2c_1\ex\Big(-c_2u(x)\Big)+1\,,
\end{split}
\end{flalign}
whilst, still using the same expression but twice,
\begin{flalign}\label{b10}
\begin{split}
\frac{d^2u}{dx^2}(x)=c_1c_2\dot{u}(x)\ex\Big(-c_2u(x)\Big)=-c_1c_2\Bigg(c_1\ex\Big(-2c_2u(x)\Big)+\ex\Big(-c_2u(x)\Big)\Bigg)\,.
\end{split}
\end{flalign}
Whenceforth we may employ $u(x)\xrightarrow{x\rightarrow0}+\infty$ together with $c_2>0$ in order to compute
\begin{flalign}\label{b11}
\begin{split}
&\lim_{x\rightarrow0}\,\frac{d^2u}{dx^2}(x)=0\,, \\
&\lim_{x\rightarrow0}\dot{u}^2(x)\Big(u(x)+\xi\Big)=\lim_{x\rightarrow0}u(x)+\xi\,.
\end{split}
\end{flalign}
Implying the following asymptotic behaviour for\eref{b8} as $x\rightarrow0$,
\begin{flalign}\label{b12}
\begin{split}
\lim_{x\rightarrow0}\partial_x^2\Ai\big(u(x)+\xi\big)=\lim_{x\rightarrow0}\Big(u(x)+\xi\Big)\Ai\big(u(x)+\xi\big)\,,
\end{split}
\end{flalign}
ultimately, injecting this in\eref{b6} does ensure $\partial_\xi v(x;\xi)\xrightarrow{x\rightarrow0}0$.
\end{proof}

\section{On the Integro-Differential Equation for any AWFs}\label{sC}

We derive a second order integro-differential equation for the $\af{k}{0}\,,\,\,k\in\{0,...,n\}\subset\Nz\,,$ with $n\in\N$. This is analogous to\eref{4.0.2} but for higher order AWFs, and we proceed with a method somewhat similar to \sref{s41}. Throughout appendix \sref{sC}, we leave implicit $\I=[\tau,\infty)$.

\begin{lem}\label{tc1}
Employing Notations \ref{t3.1.14} and \ref{t3.1.16}, we have $\partial_\tau\au{n}{0}=-(n+1)\fa{n}\qcv^2\,$ for any $n\in\Nz$.
\end{lem}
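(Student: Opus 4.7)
The plan is to extend Lemma~\ref{t4.1.2} (the $n=0$ case) to all $n\in\Nz$ by differentiating $\au{n}{0}=\ipr{\psi}{\Pi_\I\Ro^n\rho\psi}$ directly and reducing every piece to values at $\tau$. The $\tau$-derivative splits as
\[
\partial_\tau\au{n}{0}=\ipr{\psi}{(\partial_\tau\Pi_\I)\Ro^n\rho\psi}+\ipr{\psi}{\Pi_\I\,\partial_\tau(\Ro^n\rho)\,\psi}.
\]
Property~\ref{t2.2.21} (with $j=1$, so the sign is $-1$) evaluates the first piece as $-\psi(\tau)\af{n}{0}(\tau)$, while Lemma~\ref{t2.2.30} expands the second piece into one term $\Ro(\partial_\tau\Pi_\I)\Ro^n\rho$ plus the $2n$ terms $\Ro^{n-k+1}(\partial_\tau\Pi_\I)\Ro^k\rho$ and $\Ro^{n-k}(\partial_\tau\Pi_\I)\Ro^k\rho$ for $k=0,\dots,n-1$.

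The pivotal intermediate identity is
\[
\Ro^l\psi(\tau)=\psi(\tau)\,\fa{l}(\tau)\qquad\forall\,l\in\Nz,
\]
which I would prove by induction on $l$. Property~\ref{t2.2.9} combined with the definition of $\af{l}{0}$ gives $\af{l}{0}=\Ro^l\psi+\Ro^{l+1}\psi$, hence at $\tau$ the recursion $\Ro^{l+1}\psi(\tau)=\af{l}{0}(\tau)-\Ro^l\psi(\tau)$. Assuming inductively $\Ro^l\psi(\tau)=\psi\fa{l}$ and invoking Lemma~\ref{t3.1.9} in the form $\af{l}{0}(\tau)=\fa{l}\qcv$, the recursion produces $\Ro^{l+1}\psi(\tau)=\fa{l}(\qcv-\psi)=\psi\fa{l}\fa{1}=\psi\fa{l+1}$, closing the step. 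With this identity, Lemma~\ref{t2.2.12} yields $\int_\tau^\infty\psi(x)\Ri_l(x,\tau)\,dx=\Ro^l\psi(\tau)$; combining with the distributional action $(\partial_\tau\Pi_\I)\Ro^k\rho\psi=-\af{k}{0}(\tau)\delta_\tau$, Lemma~\ref{t3.1.9}, and $\fa{k}\fa{l}=\fa{k+l}$ gives the uniform evaluation
\[
\ipr{\psi}{\Pi_\I\Ro^{l}(\partial_\tau\Pi_\I)\Ro^{k}\rho\psi}=-\af{k}{0}(\tau)\,\Ro^l\psi(\tau)=-\psi(\tau)\qcv(\tau)\,\fa{k+l}(\tau).
\]

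Counting the Lemma~\ref{t2.2.30} contributions then finishes the proof: exactly $n+1$ of them satisfy $l+k=n+1$ (the isolated $\Ro(\partial_\tau\Pi_\I)\Ro^n\rho$ plus the $n$ terms $\Ro^{n-k+1}(\partial_\tau\Pi_\I)\Ro^k\rho$), while the remaining $n$ terms $\Ro^{n-k}(\partial_\tau\Pi_\I)\Ro^k\rho$ together with the boundary piece $-\psi(\tau)\af{n}{0}(\tau)$ give $n+1$ contributions of $-\psi\qcv\fa{n}$. Summing and using $\fa{n}+\fa{n+1}=\fa{n}(1+\fa{1})=\fa{n}\,\qcv/\psi$,
\[
\partial_\tau\au{n}{0}=-(n+1)\psi\qcv\bigl(\fa{n}+\fa{n+1}\bigr)=-(n+1)\fa{n}\qcv^{2},
\]
which is the stated equality. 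The principal obstacle is the telescoping identity $\Ro^l\psi(\tau)=\psi\fa{l}$: this is precisely where the closure relation Lemma~\ref{t3.1.9} enters essentially, since without it the separate contributions would not collapse to the single clean factor $\fa{k+l}$; everything else is routine bookkeeping with kernel manipulations already established in Section~\ref{s2}.
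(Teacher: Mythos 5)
Your proposal is correct and follows essentially the same route as the paper: differentiate $\au{n}{0}$, expand $\partial_\tau\big(\Ro^n\rho\big)$ via Lemma~\ref{t2.2.30}, move resolvent powers onto $\psi$ with Lemma~\ref{t2.2.12}, evaluate the boundary contributions with Property~\ref{t2.2.21}, and close with Lemma~\ref{t3.1.9}. The only difference is organizational: you collapse $\Ro^l\psi(\tau)$ to $\psi\,\eta_l$ by induction and count the $\eta_n$ and $\eta_{n+1}$ contributions, whereas the paper writes $\Ro^l\psi$ as an alternating sum of AWFs, telescopes $\sum_{k}\big(\Ro^{n-k+1}+\Ro^{n-k}\big)\psi=\sum_{k}\af{n-k}{0}$, reaches $-\sum_{k=0}^n\af{n-k}{0}\af{k}{0}$, and only then invokes Lemma~\ref{t3.1.9}; both reductions yield $-(n+1)\,\eta_n\,\qcv^2$.
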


\begin{proof}
Lemma \ref{t2.2.30} together with\eref{4.1.3} are yielding, $\forall n\in\Nz\,$,
\begin{flalign}\label{c1}
\begin{split}
\partial_\tau\au{n}{0}&=\partial_\tau\ipr{\psi}{\Pi_{\I}\Ro^n\rho\psi}=\ipr{\psi}{\dl{\tau}\Ro^n\rho\psi}+\ipr{\psi}{\Pi_{\I}\Ro\dl{\tau}\Ro^n\rho\psi}+\sum_{k=0}^{n-1}\ipr{\psi}{\Pi_{\I}\big(\Ro^{n-k+1}+\Ro^{n-k}\big)\dl{\tau}\Ro^k\rho\psi} \\
&=\ipr{\rho\psi}{\Pi_{\I}\dl{\tau}\Ro^n\rho\psi}+\sum_{k=0}^{n-1}\ipr{\big(\Ro^{n-k+1}+\Ro^{n-k}\big)\psi}{\Pi_{\I}\dl{\tau}\Ro^k\rho\psi}=-\qcv\af{n}{0}-\sum_{k=0}^{n-1}\Big(\big(\Ro^{n-k+1}+\Ro^{n-k}\big)\psi\Big)\af{k}{0}\,.
\end{split}
\end{flalign}
Meanwhile one may notice that, $\forall n\in\N\,$, $\,\,\Ro^n\psi=\Ro^{n-1}\rho\psi-\Ro^{n-1}\psi=\af{n-1}{0}-\Ro^{n-1}\psi\,$, so that one may obtain by iteration, $\forall n\in\Nz$,
\begin{flalign}\label{c2}
\begin{split}
\Ro^n\psi=-\sum_{k=1}^n(-1)^k\af{n-k}{0}+(-1)^n\psi\,.
\end{split}
\end{flalign}
Whenceforth we are able to evaluate the sum
\begin{flalign}\label{c3}
\begin{split}
\sum_{k=0}^{n-1}\big(\Ro^{n-k+1}+\Ro^{n-k}\big)\psi&=\sum_{k=0}^{n-1}\Bigg((-1)^{n-k+1}\psi-\sum_{l=1}^{n-k+1}(-1)^l\af{n-k-l+1}{0}+(-1)^{n-k}\psi-\sum_{l=1}^{n-k}(-1)^l\af{n-k-l}{0}\Bigg) \\
&=\sum_{k=0}^{n-1}\Bigg(-\sum_{l=0}^{n-k}(-1)^{l+1}\af{n-k-l}{0}-\sum_{l=1}^{n-k}(-1)^l\af{n-k-l}{0}\Bigg)=\sum_{k=0}^{n-1}\af{n-k}{0}\,.
\end{split}
\end{flalign}
And, when injected in\eref{c1}, this is leading to
\begin{flalign}\label{c4}
\begin{split}
\partial_\tau\au{n}{0}=-\qcv\af{n}{0}-\sum_{k=0}^{n-1}\af{n-k}{0}\af{k}{0}=-\sum_{k=0}^n\af{n-k}{0}\af{k}{0}\,.
\end{split}
\end{flalign}
Ultimately, Lemma \ref{t3.1.9} together with the straightforward property $\eta_n\fa{k}=\eta_{n+k}\,$, $\,\,\forall (n,k)\in\Nz\times\Nz\,$, are leading to
\begin{flalign}\label{c5}
\begin{split}
\sum_{k=0}^n\af{n-k}{0}\af{k}{0}=\qcv^2\sum_{k=0}^n\fa{n-k+k}=(n+1)\fa{n}\qcv^2\,,
\end{split}
\end{flalign}
hereby completing the proof.
\end{proof}

\begin{lem}\label{tc2}
We also have, $\forall n\in\N\,$,
\begin{flalign}\label{c6}
\begin{split}
\au{n}{\alpha}+\au{n-1}{\alpha}=\int_{\I}\af{n-1}{\alpha}(\xi)\qcv(\xi)\,d\xi\,.
\end{split}
\end{flalign}
\end{lem}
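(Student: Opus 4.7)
The plan is to read both sides as $\leb^2(\R)$-inner products, rewrite the left one using Definition \ref{t2.2.25} and the Neumann identity $\rho = \id+\Ro$ of Property \ref{t2.2.9}, and then shuttle a single copy of $\rho$ across the projector via the self-adjointness relation of Lemma \ref{t2.2.12}. No differential or closure identity is needed; everything is purely algebraic in $(\Ko,\rho,\Ro,\Pi_\I)$ at the level of the operators introduced in \sref{s22}.

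Concretely, I would start from
\[
\au{n}{\alpha}+\au{n-1}{\alpha}=\ipr{\psi}{\Pi_{\I}\bigl(\Ro^{n}+\Ro^{n-1}\bigr)\rho\,\Po^{\alpha}\psi}
=\ipr{\psi}{\Pi_{\I}\Ro^{n-1}\bigl(\id+\Ro\bigr)\rho\,\Po^{\alpha}\psi}
=\ipr{\psi}{\Pi_{\I}\Ro^{n-1}\rho^{2}\,\Po^{\alpha}\psi},
\]
where the factorization $\Ro^{n}+\Ro^{n-1}=\Ro^{n-1}\rho$ uses Property \ref{t2.2.9}. Next, since $\Ro=\Ko\rho=\rho\Ko$ and $\rho$ commutes with $\Ko$, the operators $\rho$ and $\Ro^{n-1}$ commute, so one copy of $\rho$ can be moved to the far left of the operator chain and then through $\Pi_\I$ by Lemma \ref{t2.2.12}:
\[
\ipr{\psi}{\Pi_{\I}\Ro^{n-1}\rho^{2}\,\Po^{\alpha}\psi}
=\ipr{\psi}{\rho^{*}\Pi_{\I}\Ro^{n-1}\rho\,\Po^{\alpha}\psi}
=\ipr{\rho\psi}{\Pi_{\I}\Ro^{n-1}\rho\,\Po^{\alpha}\psi}.
\]
Recognizing $\rho\psi=\qcv$ and $\Ro^{n-1}\rho\,\Po^{\alpha}\psi=\af{n-1}{\alpha}$ via Definition \ref{t2.2.2} turns the last expression into $\int_{\I}\qcv(\xi)\,\af{n-1}{\alpha}(\xi)\,d\xi$, which is the asserted identity.

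There is no genuine obstacle here: the argument is a four-line chain of already-established identities, and the only thing to keep in mind is that all of these manipulations are legitimate on the relevant domains, which is ensured by Condition \ref{t1.1.2} together with Property \ref{t2.2.4}. This is the same flavor of derivation that produced\eref{4.1.12} (the $n=1$ specialization with $\alpha=0$), so the present lemma is simply its natural generalization to arbitrary $n\in\N$ and $\alpha\in\{0,1\}$.
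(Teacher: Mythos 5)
Your proof is correct and follows essentially the same route as the paper: rewrite $\Ro^{n}+\Ro^{n-1}=\rho\,\Ro^{n-1}$ via $\rho=\id+\Ro$ and then move the extra $\rho$ through $\Pi_{\I}$ with Lemma \ref{t2.2.12} to land on $\ipr{\rho\psi}{\Pi_{\I}\Ro^{n-1}\rho\Po^{\alpha}\psi}$. The only cosmetic difference is that you factor $\rho$ on the right and invoke commutativity of $\rho$ with $\Ro^{n-1}$, whereas the paper factors it directly on the left; the content is identical.
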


\begin{proof}
The following computation, $\forall n\in\N\,$,
\begin{flalign}\label{c7}
\begin{split}
\au{n}{\alpha}+\au{n-1}{\alpha}=\ipr{\psi}{\Pi_{\I}\Big(\Ro^n+\Ro^{n-1}\Big)\rho\Po^\alpha\psi}=\ipr{\psi}{\Pi_{\I}\Big(\big(\rho-\id\big)\Ro^{n-1}+\Ro^{n-1}\Big)\rho\Po^\alpha\psi}=\ipr{\rho\psi}{\Pi_{\I}\Ro^{n-1}\rho\Po^\alpha\psi}\,,
\end{split}
\end{flalign}
ensures the result.
\end{proof}

\begin{lem}\label{tc3}
The AWFs satisfy, $\forall n\in\N\,$,
\begin{flalign}\label{c8}
\begin{split}
\chi_{n,0}''=\frac{\gamma^2}{\ud^2}\big(v_0+&\tau\big)\chi_{n,0}-\frac{\gamma\udd}{\ud^2}\Big((2n+1)\chi_{n,0}'+2(n+1)\chi_{n+1,0}'\Big) \\
&-\frac{\gamma^2\udd^2}{\ud^4}\Big((n^2+n)\chi_{n,0}+2(n+1)^2\chi_{n+1,0}+(n+1)(n+2)\chi_{n+2,0}\Big) \\
&-\frac{\gamma^2\udd}{\ud^3}(n+1)\mu_{0,0}\chi_{n+1,0}+\sum_{k=0}^n\bigg[\frac{-\gamma}{\ud}\Big((\mu_{n-k,0}'+\mu_{n-k-1,0}')+2(\mu_{n-k,1}+\mu_{n-k-1,1})\Big)\chi_{k,0} \\
&+\frac{\gamma^2}{\ud^2}\sum_{l=0}^k(\mu_{n-k,0}+\mu_{n-k-1,0})(\mu_{k-l,0}+\mu_{k-l-1,0})\chi_{l,0} \\
&-\frac{\gamma^2\udd}{\ud^3}\Big((\mu_{n-k,0}+\mu_{n-k-1,0})\big((n-k+1)\chi_{k,0}-(k+1)\chi_{k+1,0}\big)+(n+1)(\mu_{n-k+1,0}+\mu_{n-k,0})\chi_{k,0}\Big)\bigg]\,,
\end{split}
\end{flalign}
where $\au{k}{0}'\coloneqq\partial_\tau\au{k}{0}\,$, $\,\,\forall k\in\Nz\,$.
\end{lem}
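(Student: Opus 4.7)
The plan is to mimic the scheme that yielded Lemma~\ref{t4.1.3} for $n=0$, but carried out at arbitrary order $n$. I will obtain $\chi_{n,0}''$ by two successive uses of Proposition~\ref{t2.2.33}, apply the closure relation Proposition~\ref{t2.2.36} to discard the $\chi_{n,2}$ that appears en route, and then invert the first-derivative formula to trade every surviving $\chi_{k,1}$ for $\chi_{k,0}'$ plus $\chi_{l,0}$-terms. Because $\I=[\tau,\infty)$ has a single endpoint, the parameter sums $\sum_{i\neq j}\partial_i$ of Corollary~\ref{t2.2.28} and Proposition~\ref{t2.2.33} are empty, so the $\tau$-derivative $\chi_{n,0}'(\tau)$ agrees with the total derivative $(\partial_\xi+\partial_1)\chi_{n,0}(\xi)\big|_{\xi=\tau}$ with no residual cross terms.

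\textbf{Step-by-step plan.} First, Proposition~\ref{t2.2.33} with $\alpha=0$ yields
\begin{equation*}
\chi_{n,0}'=\chi_{n,1}-\frac{\gamma}{\dot u_0}\mu_{0,0}\,\chi_{n,0}-\frac{\gamma}{\dot u_0}\sum_{k=0}^{n-1}\nu_{n-k,0}\,\chi_{k,0}-\frac{\gamma\ddot u_0}{\dot u_0^2}\bigl(n\chi_{n,0}+(n+1)\chi_{n+1,0}\bigr).
\end{equation*}
Differentiating once more in $\tau$ produces $\chi_{n,1}'$, a $\mu_{0,0}'\chi_{n,0}$ piece, the combined derivatives $\chi_{n,0}'$, $\chi_{n+1,0}'$ weighted by $\gamma\ddot u_0/\dot u_0^2$, the coefficients $\nu_{n-k,0}'$, and the lower-order derivatives $\chi_{k,0}'$ for $0\le k\le n-1$. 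Next I apply Proposition~\ref{t2.2.33} at $\alpha=1$ to rewrite $\chi_{n,1}'$ as a combination of $\chi_{n,2}$, $\chi_{n,1}$, $\chi_{n+1,1}$ and lower $\chi_{k,0}$'s, and I eliminate $\chi_{n,2}$ through Proposition~\ref{t2.2.36}: this substitutes $(\gamma^2/\dot u_0^2)(v_0+\tau)\chi_{n,0}$ plus a combination involving $\chi_{k,0}$ and $\chi_{k,1}$ for $k\le n$. Inverting the $\alpha=0$ relation provides
\begin{equation*}
\chi_{k,1}=\chi_{k,0}'+\frac{\gamma}{\dot u_0}\mu_{0,0}\chi_{k,0}+\frac{\gamma}{\dot u_0}\sum_{l=0}^{k-1}\nu_{k-l,0}\chi_{l,0}+\frac{\gamma\ddot u_0}{\dot u_0^2}\bigl(k\chi_{k,0}+(k+1)\chi_{k+1,0}\bigr),
\end{equation*}
which I insert for every surviving $\chi_{k,1}$, in particular for $\chi_{n,1}$ and $\chi_{n+1,1}$ coming out of $\chi_{n,1}'$. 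The substitution for $\chi_{n,1}$ produces the $\chi_{n,0}'$ contribution, and the one for $\chi_{n+1,1}$ produces $\chi_{n+1,0}'$; together with the corresponding $AB$-weightings this reconstitutes the $-\frac{\gamma\ddot u_0}{\dot u_0^2}\bigl((2n{+}1)\chi_{n,0}'+2(n{+}1)\chi_{n+1,0}'\bigr)$ block of \eqref{c8}, while the $\chi_{n+2,0}$ piece arises from the $(n{+}2)$ term on the second line of the $\chi_{n+1,1}$ substitution.

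\textbf{Main obstacle.} The delicate point is a bookkeeping cancellation. When I substitute $\chi_{k,1}$ for $k\le n$ inside the closure contribution $\frac{\gamma}{\dot u_0}\sum_{k=0}^n\nu_{n-k,0}\chi_{k,1}$ of Proposition~\ref{t2.2.36}, I obtain a sum $\frac{\gamma}{\dot u_0}\sum_{k=0}^n\nu_{n-k,0}\chi_{k,0}'$ that must exactly offset the $-\frac{\gamma}{\dot u_0}\sum_{k=0}^{n-1}\nu_{n-k,0}\chi_{k,0}'$ generated in Step~1 by the direct $\tau$-differentiation of the corresponding sum in $\chi_{n,0}'$. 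The overlap of the two ranges leaves only a clean $k=n$ remainder $\frac{\gamma}{\dot u_0}\mu_{0,0}\chi_{n,0}'$ which in turn cancels the $-\frac{\gamma}{\dot u_0}\mu_{0,0}\chi_{n,0}'$ coming from differentiating the first $\mu_{0,0}\chi_{n,0}$ contribution; this is what causes the final coefficient of $\chi_{n,0}'$ to reduce to $-\frac{\gamma\ddot u_0}{\dot u_0^2}(2n+1)$, with no remaining $\mu_{0,0}$ attached. After this cancellation, the remaining terms regroup according to whether they carry $\chi_{k,0}$, $\chi_{k+1,0}$, $\nu_{n-k,0}\nu_{k-l,0}$, or the $\mu'_{n-k,0}+\mu'_{n-k-1,0}$ and $\nu_{n-k,1}$ coefficients of $\chi_{k,0}$, and a careful matching with \eqref{c8} finishes the derivation. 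As a sanity check, the $n=0$ specialisation recovers Lemma~\ref{t4.1.3} upon using Lemmas~\ref{t4.1.1}, \ref{t4.1.2} and the convention $\mu_{-1,a}=0$, which is the regime in which I have verified all signs and index shifts.
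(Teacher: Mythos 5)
Your proposal is correct and follows essentially the same route as the paper: the paper likewise specializes Proposition \ref{t2.2.33} to $\I=[\tau,\infty)$ to get the total-derivative relation \eqref{c9}, writes $\af{n}{2}$ via the $\alpha=1$ case, equates with the closure relation of Proposition \ref{t2.2.36}, and then eliminates every $\af{k}{1}$ by substituting \eqref{c9} at $\alpha=0$. Your reorganization (differentiating the $\alpha=0$ relation first to expose $\af{n}{0}''$) and the cancellation bookkeeping you spell out — the offset of the $\nu_{n-k,0}\af{k}{0}'$ sums leaving the $(2n+1)$ and $2(n+1)$ derivative coefficients, and the $\af{n+2}{0}$ term arising from the $\af{n+1}{1}$ substitution — are precisely the algebra the paper's terse proof leaves implicit, and they check out.
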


\begin{proof}
For $\I=[\tau,\infty)$, Proposition \ref{t2.2.33} brings forth, $\forall n\in\Nz\,$,
\begin{flalign}\label{c9}
\begin{split}
\af{n}{\alpha}'=\af{n}{\alpha+1}-\frac{\gamma}{\dot{u}_0}\Bigg(\au{0}{\alpha}\af{n}{0}+\sum_{k=0}^{n-1}\big(\au{n-k}{\alpha}+\au{n-k-1}{\alpha}\big)\af{k}{0}\Bigg)-\frac{\gamma\ddot{u}_0}{\dot{u}_0^2}\bigg(n\af{n}{\alpha}+(n+1)\af{n+1}{\alpha}\bigg)\,.
\end{split}
\end{flalign}
Then we first set $\alpha=1$ in order to arrive at
\begin{flalign}\label{c10}
\begin{split}
\af{n}{2}=\af{n}{1}'+\frac{\gamma}{\dot{u}_0}\Bigg(\au{0}{1}\af{n}{0}+\sum_{k=0}^{n-1}\big(\au{n-k}{1}+\au{n-k-1}{1}\big)\af{k}{0}\Bigg)+\frac{\gamma\ddot{u}_0}{\dot{u}_0^2}\bigg(n\af{n}{1}+(n+1)\af{n+1}{1}\bigg)\,.
\end{split}
\end{flalign}
At last, equating this with Proposition \ref{t2.2.36} and noting that all of the $\af{k}{1}\,$, $\,\forall k\in\Nz\,$, in the resulting expression can be substituted with\eref{c9} for $\alpha=0$, we are led to\eref{c8}.
\end{proof}

\begin{prop}\label{tc4}
For any $n\in\N$, one may derive the following second-order integro-differential equation for the $\af{k}{0}\,,\,\,k\in\{0,...,n\}\subset\Nz\,$,
\begin{flalign}\label{c11}
\begin{split}
\af{n}{0}''=&\frac{\gamma}{\ud^2}\big(v_0+\tau\big)\af{n}{0}-\frac{\gamma\udd}{\ud^2}\Bigg((2n+1)\af{n}{0}'+2(n+1)\gao\af{n}{0}'+2(n+1)\bigg(\frac{\qcv'}{\psi}-\frac{\psi'\qcv}{\psi^2}\bigg)\af{n}{0}\Bigg) \\
&-\frac{\gamma^2\udd^2}{\ud^4}\Bigg((n^2+n)\af{n}{0}+2(n+1)^2\gao\af{n}{0}+(n+1)(n+2)\ga{2}\af{n}{0}\Bigg) \\
&-\frac{\gamma^2\udd}{\ud^3}(n+1)\gao\af{n}{0}\int_\tau^\infty\psi(\xi)\qcv(\xi)\,d\xi+\sum_{k=0}^n\Bigg[\frac{\gamma}{\ud}\Bigg(\bigg((n+1-k)\gao+(n-k)\bigg)\ga{n-1}\qcv^3 \\
&-2\qcv\int_\tau^\infty\af{n-k-1}{0}(\xi)\bigg(\qcv'(\xi)+\frac{\gamma}{\ud}\qcv(\xi)\int_\tau^\infty\psi(\zeta)\qcv(\zeta)\,d\zeta+\frac{\gamma\udd}{\ud^2}\af{1}{0}(\xi)\bigg)\,d\xi\Bigg) \\
&+\frac{\gamma^2}{\ud^2}\sum_{l=0}^k\ga{l}\qcv\bigg(\int_\tau^\infty\af{n-k-1}{0}(\xi)\qcv(\xi)\,d\xi\bigg)\bigg(\int_\tau^\infty\af{k-l-1}{0}(\xi)\qcv(\xi)\,d\xi\bigg) \\
&-\frac{\gamma^2\udd}{\ud^3}\Bigg(\bigg((n+1-k)\ga{k}\qcv-(k+1)\ga{k+1}\qcv\bigg)\int_\tau^\infty\af{n-k-1}{0}(\xi)\qcv(\xi)\,d\xi \\
&+(n+1)\ga{k}\qcv\int_\tau^\infty\af{n-k}{0}(\xi)\qcv(\xi)\,d\xi\Bigg)\Bigg]\,,
\end{split}
\end{flalign}
where we kept on employing Notation \ref{t3.1.16}.
\end{prop}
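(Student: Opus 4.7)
The plan is to mirror the strategy of \sref{s41} at higher order. The starting point, Lemma \ref{tc3}, already expresses $\af{n}{0}''$ as a linear combination of the higher order AWF $\af{n+1}{0}$ and $\af{n+2}{0}$, their $\tau$-derivatives, the moments $\au{k}{0}$ and $\au{k}{1}$, the derivatives $\au{k}{0}'$, and products of these quantities with $\af{k}{0}$ or $\af{k+1}{0}$. The goal is to systematically eliminate each non-$\qcv$ ingredient in favor of $\qcv$, $\psi$, their total $\tau$-derivatives, and integrals over $[\tau,\infty)$ of products of lower order AWF with $\qcv$.

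First I would apply Lemma \ref{t3.1.9} at the single boundary point $\tau$ to convert every AWF that appears at $\tau$ into a power of $\gao$ times $\qcv$: in particular $\af{n+1}{0}=\gao\af{n}{0}$, $\af{n+2}{0}=\ga{2}\af{n}{0}$, and $\af{k}{0}=\fa{k}\qcv$, $\af{k+1}{0}=\fa{k+1}\qcv$. Differentiating $\af{n+1}{0}=\gao\af{n}{0}$ with respect to $\tau$ produces the $(\qcv'/\psi-\psi'\qcv/\psi^2)\af{n}{0}$ contribution of \eref{c11}. Next, Lemma \ref{tc2} rewrites $\au{n-k}{0}+\au{n-k-1}{0}$ as $\int_\tau^\infty\af{n-k-1}{0}(\xi)\qcv(\xi)\,d\xi$, while the isolated $\au{0}{0}$ that multiplies $\af{n+1}{0}$ becomes $\int_\tau^\infty\psi(\xi)\qcv(\xi)\,d\xi$ directly from its defining inner product. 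Then Lemma \ref{tc1} gives $\au{n-k}{0}'+\au{n-k-1}{0}'=-\fa{n-k-1}\bigl[(n-k+1)\gao+(n-k)\bigr]\qcv^2$ after factoring $\fa{n-k}=\gao\,\fa{n-k-1}$, and multiplying by $\af{k}{0}=\fa{k}\qcv$ produces the $\qcv^3$ contribution of \eref{c11} with the correct overall $\fa{n-1}$ factor via $\fa{n-k-1}\fa{k}=\fa{n-1}$.

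The most delicate step is rewriting $\au{n-k}{1}+\au{n-k-1}{1}$. The idea is to apply Lemma \ref{t2.2.12} twice to shift both $\rho$ and $\Ro^{n-k-1}$ across $\Pi_{\I}$: starting from $\au{n-k}{1}+\au{n-k-1}{1}=\ipr{\psi}{\Pi_{\I}\rho\Ro^{n-k-1}\rho\psi'}$ (which uses $\rho=\id+\Ro$ to combine the two terms) one arrives at the symmetric form $\ipr{\Ro^{n-k-1}\rho\psi}{\Pi_{\I}\rho\psi'}=\int_\tau^\infty\af{n-k-1}{0}(\xi)\af{0}{1}(\xi)\,d\xi$. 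The inner factor $\af{0}{1}(\xi)$ is then substituted from Proposition \ref{t2.2.33} at $n=0,\alpha=0$ (the very identity used in Lemma \ref{t4.2.1}), namely $\af{0}{1}=\qcv'+\frac{\gamma}{\ud}\au{0}{0}\qcv+\frac{\gamma\udd}{\ud^2}\af{1}{0}$, and once more $\au{0}{0}=\int_\tau^\infty\psi\qcv\,d\zeta$. Collecting all contributions, reindexing the $k$-sums, and possibly interchanging summation orders as in \eref{3.1.6}, then reproduces \eref{c11}.

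The main obstacle is purely algebraic: Lemma \ref{tc3} contains about a dozen distinct term types, and matching each to the correct piece of \eref{c11} requires careful tracking of $\fa{k}$ powers and repeated reindexing, together with fixing a consistent convention for the boundary values $\af{-1}{a}$ and $\au{-1}{a}$ that renders the $k=n$ summand well defined. No new analytic input is required beyond Propositions \ref{t2.2.33} and \ref{t2.2.36}, Lemma \ref{t3.1.9}, and the moment identities of Lemmas \ref{tc1}--\ref{tc2}; the derivation is a lengthy but mechanical bookkeeping exercise, which is why the appendix only sketches it.
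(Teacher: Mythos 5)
Your proposal is correct and follows essentially the same route as the paper, whose proof is precisely the substitution of Lemmas \ref{t3.1.9}, \ref{tc1} and \ref{tc2} into Lemma \ref{tc3} with $\eta_m$ written out as $\ga{m}$. Your extra step of using Lemma \ref{t2.2.12} to recast $\au{n-k}{1}+\au{n-k-1}{1}$ as $\int_\tau^\infty\af{n-k-1}{0}\,\af{0}{1}\,d\xi$ and then expanding $\af{0}{1}$ via Proposition \ref{t2.2.33} is exactly the (implicit) manipulation needed to reach the displayed form of\eref{c11}, so it fills in rather than departs from the paper's sketch.
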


\begin{proof}
This is obtained by injecting Lemmas \ref{t3.1.9}, \ref{tc1} as well as \ref{tc2} into \ref{tc3}, explicitly writing $\fa{n}\coloneqq\ga{n}\,,\,\,\forall n\in\Nz\,$, instead of using Notation \ref{t3.1.14}.
\end{proof}


\bibliographystyle{ytamsalpha}
\bibliography{DraftRefs}

\end{document}